\documentclass[12pt]{article}
\usepackage[left=1in, right=1in, top=1in, bottom=1in]{geometry}
\usepackage{makeidx}         % allows index generation
\usepackage{graphicx}        % standard LaTeX graphics tool
\usepackage{multicol}        % used for the two-column index
\usepackage[bottom]{footmisc}% places footnotes at page bottom

\usepackage[dvipsnames]{xcolor}

\usepackage{tabularx}

\usepackage{amsmath}
\usepackage{bm}
\usepackage{amssymb}
\usepackage{mathrsfs}
\usepackage{amsfonts}
\usepackage{enumerate}
\usepackage{hyperref}
\hypersetup{
    colorlinks=true,        % Enable colored links
    linkcolor=blue,         % Set the color of cross-references (\ref{})
    citecolor=blue,        % Set the color of citation links (\cite{})
    filecolor=black,      % Set the color of file links
    urlcolor=cyan           % Set the color of URLs
}

\makeindex             % used for the subject index

\usepackage[authoryear]{natbib}

\newtheorem{theorem}{Theorem}[section]
\newtheorem{remark}[theorem]{Remark}
\newtheorem{corollary}[theorem]{Corollary}
\newtheorem{lemma}[theorem]{Lemma}
\newtheorem{assumption}{Assumption}
\newtheorem{proposition}[theorem]{Proposition}

\newtheorem{algorithm}[theorem]{Algorithm}

\newtheorem{example}[theorem]{Example}

\usepackage{float}
\usepackage{subfiles}

\usepackage[utf8]{inputenc}

\usepackage{fancyhdr}
\usepackage{setspace}
\usepackage{amsmath}

\usepackage[most]{tcolorbox}

\usepackage[font=small,labelfont=bf,labelsep=period]{caption}
\newcommand{\prob}{\mathbb{P}}

\usepackage{colortbl}
\usepackage{xcolor}

\definecolor{siggray}{gray}{0.85}
\usepackage{import}
\usepackage{booktabs}

\begin{document}

\title{Heavy Tails and Predictive Ability Testing
\thanks{We thank Mikkel Bennedsen, Bent Jesper Christensen, Michael Jansson, Søren Johansen, Thomas Mikosch, Anders Rahbek, Frederik Vilandt Rasmussen, Mikkel Sølvsten, Peter N. Sørensen, and seminar participants at Aarhus University for comments and suggestions. Frederiksen and Pedersen are partly supported by the Independent Research Fund Denmark (DFF Grant 7015-00028). Support from the Danish Finance Institute (DFI) is gratefully acknowledged by Pedersen. Matsui is partly supported by the JSPS Grant-in-Aid for Scientific Research C (19K11868).} \\
}

\author{Jonas F. Frederiksen\footnote{University of Copenhagen; email: \tt{jff@econ.ku.dk}} \and Muneya Matsui\footnote{University of Osaka; email: \tt{mmuneya@gmail.com}} \and Rasmus S. Pedersen\footnote{University of Copenhagen and Danish Finance Institute; email: \tt{rsp@econ.ku.dk} }}

\maketitle

\date{}

\begin{abstract}
We study the asymptotic behavior of widely used tests for evaluating and comparing predictive accuracy when forecast errors exhibit heavy tails. In particular, when loss differentials have infinite variance, the Diebold–Mariano test statistic converges to a nonstandard limit involving non-Gaussian stable random variables. As a consequence, conventional critical values can yield severely distorted inference: a nominal 5\% test may reject a true null as often as 70\% of the time.
To establish these results, we develop a new stable limit theorem for strongly mixing, infinite-variance  time series processes. Building on this theory, we consider  subsampling-based inference that remains valid irrespective of tail-heaviness and requires no estimation of long-run variances or tail indices. An application to risk forecasts for emerging-market exchange rates shows that accounting for heavy tails can substantially alter conclusions about predictive performance relative to standard procedures.

\end{abstract}

\medskip\noindent\textbf{Keywords:} Predictive ability, hypothesis testing, heavy tails, stable limit theory, subsampling.

\newpage
\numberwithin{equation}{section}

\newpage

\section{Introduction}

The Diebold--Mariano (DM) test of \cite{diebold1995paring} and the 
superior predictive ability (SPA) tests of \cite{white2000snooping} and 
\cite{hansen2005SPA} are workhorse tools for forecast comparison in 
economics and finance. Their asymptotic justification rests on the 
central limit theorem (CLT) of \cite{ibragimov1962clt} for strongly mixing processes with \textit{finite} variance.

This finite-variance requirement is at odds with a basic stylized fact 
about the data. Many of the underlying variables -- asset returns, 
exchange rate fluctuations, and measures of financial volatility -- 
have power-law (heavy) tails with index $\kappa>0$, and the empirically 
relevant range routinely includes $\kappa<2$, corresponding to infinite 
(or undefined) variance; see, e.g., \cite{gabaix2009power} and 
\citet[Section 1.2]{ibragimov2015heavytails}. If extreme tail 
realizations are not fully captured by a forecast series (or if the forecast series itself is heavy-tailed) the 
corresponding forecast errors and loss differentials inherit these 
heavy tails.

The mismatch has substantial consequences for inference. In a 
simulation experiment, a standard DM test at the 5\% nominal level 
rejects a true null hypothesis as often as 70\% of the time, 
irrespective of the sample size, under empirically plausible 
heavy-tailed data-generating processes. In our empirical application to 
risk forecasts for emerging-market foreign exchange rates -- where the 
loss differentials appear to have infinite variance -- the standard DM 
test would lead a researcher to conclude that a 
GARCH-based risk model statistically dominates each of nine competing 
forecasts. A heavy-tail-robust alternative reaches a substantively 
different conclusion: simple non-parametric rolling-window risk 
forecasts cannot be rejected as inferior to any of the more 
sophisticated GARCH-based or score-driven competitors.

A central problem in the forecast evaluation literature is to conduct 
inference about the mean of a time series $(X_t)_{t=1,\dots,n}$, where 
$X_t$ is a forecast error, a forecast loss, a loss differential, or a 
transformation thereof. We focus first on testing
\begin{equation}\label{eq:zero_mean}
    \mathsf{H}_0:\mathbb{E}[X_t]=0,
\end{equation}
against $\mathsf{H}_\mathrm{A}:\mathbb{E}[X_t]\neq 0$ with 
$\mathbb{E}[|X_t|]<\infty$. Depending on how $X_t$ is constructed, 
\eqref{eq:zero_mean} covers hypotheses such as forecast unbiasedness, 
forecast optimality, forecast encompassing, and, most prominently, the 
equal predictive ability (EPA) hypothesis of \cite{diebold1995paring}\footnote{We focus on the unconditional predictive ability hypothesis; the conditional version of \cite{GiacominiWhite2006} is beyond the scope of our analysis. We also abstract from parameter estimation error in the forecasts; see \cite{west1996asymptotic} for asymptotic theory under estimation uncertainty.}. 
Standard practice is to base inference on
\begin{equation}\label{DM_test}
    T_n=\dfrac{\overline{X}_n}{\hat{\sigma}_n/\sqrt{n}},
\end{equation}
where $\overline{X}_n$ is the sample average and $\hat{\sigma}_n^2$ is 
typically a heteroskedasticity- and autocorrelation consistent (HAC) 
estimator of the long-run variance of $\overline{X}_n$, and to compare 
$T_n$ against standard normal critical values. The \cite{ibragimov1962clt} CLT 
provides the foundational justification for this practice, and 
underlies a large body of work on forecast evaluation, including 
\cite{white2000snooping}, \cite{hansen2005SPA}, \cite{hansen_lunde2005}, \cite{GiacominiWhite2006}, \cite{hansen_lunde_nason2011MCS}, \cite{Patton_timmermann_2012rationality}, 
\cite{BarendsePatton2022}, and \cite{odendahl2023forecast}.

This paper makes three main contributions. \textit{First}, we develop a 
new stable limit theorem for infinite-variance strongly mixing time series. The theorem parallels the structure of the 
\cite{ibragimov1962clt} CLT: it imposes regular variation ($\kappa<2$), a mild 
anti-clustering condition on extremes, and weak rates on mixing 
coefficients -- conditions that hold for a range of standard linear and 
nonlinear processes, including autoregressive, GARCH, and stochastic 
volatility models. The result thus provides the foundational analog to 
\cite{ibragimov1962clt}, that the existing forecast-evaluation literature 
requires, when loss differentials have infinite variance.

\textit{Second}, we use this result to characterize the limiting 
distribution of the statistic $T_n$ in \eqref{DM_test} under 
$\mathsf{H}_0$ when $(X_t)_{t=1,\dots,n}$ has tail index $\kappa<2$. We 
show that $T_n$ converges in distribution to a ratio of two dependent 
non-Gaussian stable random variables, whose joint distribution is fully 
characterized by $\kappa$ and the extremal cluster process of 
$(X_t)_{t=1,\dots,n}$. The resulting ratio may be skewed, and its 
quantiles can differ substantially from those of the standard normal 
distribution used in practice.

\textit{Third}, we consider an inferential procedure that is valid under 
\textit{both} finite- and infinite-variance regimes. We construct 
critical values by subsampling \citep{politis2012subsampling} and 
establish their asymptotic validity from primitives. The procedure is 
fully data-driven: it does not require knowledge of $\kappa$, the 
convergence rate of the sample mean, or consistent estimation of 
long-run variances. The same approach yields confidence intervals for 
$\mathbb{E}[X_t]$, which may itself be of interest in forecast 
evaluation.

We extend the analysis in two further directions. We turn first to the 
SPA tests of \cite{white2000snooping} and \cite{hansen2005SPA}, which 
address the more demanding problem of deciding whether a benchmark 
forecast is dominated by any one of $m$ competing forecasts. We provide 
a multivariate version of our limit theory and derive the limiting 
distribution of the SPA test statistic under heavy tails, which lays 
the groundwork for a heavy-tail-robust SPA test based on subsampling. 
This complements the conditional SPA test of \cite{Li2021CSPA}, which, 
like the original, assumes finite higher-order moments. 

We then consider alternatives under which $\mathbb{E}[|X_t|]$ is infinite, so that $\mathbb{E}[X_t]$ does not exist or equals $\pm\infty$. We show that the statistic $T_n$ in \eqref{DM_test} does not diverge under such alternatives, so that any test based on it -- whether using standard 
normal or subsampled critical values -- is inconsistent. We propose a modified test statistic that diverges whenever $\mathbb{E}[X_t]$ differs from zero or is undefined, together with an accompanying subsampling-based critical value construction. To our knowledge, such alternatives have not been addressed in the existing literature on inference about the mean of heavy-tailed time series.

Our work is most closely related to the working paper by 
\cite{kim2021forecast}, who also consider the DM test under heavy 
tails. \cite{kim2021forecast} obtain a non-Gaussian stable limit for 
the DM statistic by direct application of \cite{Davis1983stablelimit} 
and \cite{davis1995point} and propose subsampling for inference, 
invoking the validity result of \cite{kokoszka2004subsampling}. They 
also provide an analytical treatment of how the tail index of the loss 
differential depends on the tail indices of the target variable, the 
forecasts, and the loss function, and they illustrate empirically the 
relevance of subsampling for realized volatility forecast comparisons. 
Our analysis differs in three respects: (i) building upon 
\cite{matsui2025self}, we provide a new stable limit theorem for 
strongly mixing processes that explicitly accounts for the extremal dependence of the underlying time series process. Our assumptions are 
strictly more general than those used in \cite{kim2021forecast}, 
\cite{McElroy_Politis_2002_subsampling}, and 
\cite{kokoszka2004subsampling}, and we verify them for a range of 
standard linear and nonlinear time series processes including 
autoregressive, GARCH, and stochastic volatility models; (ii) our 
analysis extends beyond the EPA setup to forecast encompassing, 
optimality, and SPA testing, as well as alternatives under which the 
mean may fail to exist; and (iii) we provide rigorous arguments for the 
asymptotic validity of the subsampling-based tests.

The remainder of the paper is organized as follows. 
Section \ref{sec:limit_theory} presents the stable limit theory and the 
resulting limiting distribution of $T_n$ (Proposition \ref{Master_Lemma}). 
Section \ref{sec:subsampling} presents the subsampling algorithm and 
establishes its asymptotic validity. Section \ref{sec:extensions} 
contains the extensions to alternatives with undefined mean and to the 
SPA test. Sections \ref{sec:Monte_Carlo} and \ref{sec:empirical} report 
the simulation experiments and the empirical illustration. The proof of the main Theorem \ref{theo:inf_var_jointCLT} in the Appendix; additional proofs and simulation results are in 
the Supplemental Appendix.

\subsection*{Notation}
For a random variable $X$, unless stated otherwise, $\mathbb{E}[X] \neq 0$ 
if either $\mathbb{E}[|X|]=\infty$ or that $\mathbb{E}[|X|]<\infty$ with 
$\mathbb{E}[X]$ non-zero. For two real-valued functions $f$ and $g$ we 
write $f(x)\sim g(x)$ as $x\to \infty$ if $\lim_{x\to \infty}f(x)/g(x)=1$. 
We generically denote real-valued, \textit{slowly varying} functions by 
$L$, that is, the functions satisfy that for any constant $c>0$, 
$L(cx)\sim L(x)$ as $x\to\infty$. For short, a strictly stationary process 
is referred to as a stationary process.
\newpage

\section{Set-up and limit theory}\label{sec:limit_theory}

This section develops the limit theory underlying our inferential procedure. Section \ref{sec:setup} provides some motivating examples and previews the main result. Sections \ref{sec:limit_partial_sums}--\ref{sec:limit_partial_sums_hv} state the partial-sum limit theory under finite variance and heavy tails, respectively. Section \ref{sec:limit_T_n} combines these into the limiting distribution of $T_n$ in \eqref{DM_test}, and Section \ref{sec:AR_details} illustrates the theory in the context of linear autoregressive processes.

\subsection{Set-up and preview of main result}\label{sec:setup}
Given a random sample $(X_t)_{t=1,\dots,n}$ with a time-invariant mean, we consider testing the hypothesis $\mathsf{H}_0$ in \eqref{eq:zero_mean} against $\mathsf{H}_\textrm{A}:\mathbb{E}[X_t]\neq0$ with $\mathbb{E}[|X_t|]<\infty$. Following much of the existing literature on forecast evaluation and comparisons, we throughout take $X_t$ as a \textit{primitive}. Depending on the construction of $X_t$, the hypothesis test resembles those often encountered in the forecasting literature. We begin with three motivating examples.

\begin{example}[Forecast Optimality]
Let $(f_{t})_{t=1,\dots,n}$ denote a forecast series of a time series $(Y_t)_{t=1,\dots,n}$, and let $e_{t}=Y_t-f_{t}$ denote the forecast error. The forecast is said to be \textit{unbiased}, if $\mathbb{E}[e_t]=0$, and it is said to be \textit{efficient} if $\mathbb{E}[e_{t}f_{t}]=0$; see, e.g., \citet[p.~6]{odendahl2023forecast}. Hence, forecast unbiasedness or forecast efficiency corresponds to the hypothesis $\mathsf{H}_0$ in \eqref{eq:zero_mean} when $X_t$ is given by $e_t$ or $e_{t}f_{t}$, respectively. In both cases, $X_t$ may be heavy-tailed if the forecast errors are heavy-tailed. \hfill $\blacklozenge$
\end{example}

\begin{example}[Forecast Encompassing]
Let $(f_{1,t})_{t=1,\dots,n}$ and $(f_{2,t})_{t=1,\dots,n}$ denote two forecast series of a time series $(Y_t)_{t=1,\dots,n}$, and let $e_{i,t}=Y_t-f_{i,t}$ denote the forecast error of series $i=1,2$. Following \cite{harvey1998encompassing}, forecast series 1 is said to \textit{encompass} forecast series 2 if $\mathbb{E}[(e_{1,t}-e_{2,t})\,e_{1,t}]=0$. Hence, forecast encompassing means that  $\mathsf{H}_0$ holds with $X_t=(e_{1,t}-e_{2,t})\,e_{1,t}$,
which may be heavy-tailed whenever at least one of the forecast errors are. \hfill $\blacklozenge$
\end{example}

\begin{example}[Equal Predictive Ability (EPA)]\label{ex:EPA}
Consider two forecast series $(f_{1,t})_{t=1,\dots,n}$ and $(f_{2,t})_{t=1,\dots,n}$ of a time series $(Y_t)_{t=1,\dots,n}$. For a given loss function $\mathcal{L}:\mathbb{R}^2\to\mathbb{R}_+$, the two forecast series are said to have equal predictive ability (EPA) if
\begin{equation*}
    \mathbb{E}[\mathcal{L}(Y_t,f_{1,t})-\mathcal{L}(Y_t,f_{2,t})]=0;
\end{equation*}
see \cite{diebold1995paring}. Hence, the EPA hypothesis corresponds to the hypothesis $\mathsf{H}_0$ in \eqref{eq:zero_mean} with $X_t=\mathcal{L}(Y_t,f_{1,t})-\mathcal{L}(Y_t,f_{2,t})$. 

A widely used loss function is the squared error, $\mathcal{L}(Y_t,f_{1,t})=(Y_t-f_{1,t})^2=e_{1,t}^2$. Here, if the forecast error $e_{1,t}$ has tail index $\kappa$, then -- by construction -- the squared error $e_{1,t}^2$ has tail index $\kappa/2$, so that $X_t$ may have infinite variance even when $e_{1,t}$ or $e_{2,t}$ themselves have finite variance. 
\medskip

Heavy-tailed losses may also occur when evaluating quantile forecasts as considered in the empirical application in Section \ref{sec:empirical}. Let $f_{1,t}$ denote a $\tau$-level (conditional) quantile forecast of $Y_t$. The associated \textit{tick loss} \citep{giacomini_komunjer2005quantile} is
\begin{equation*}
    \mathcal{L}_{\mathrm{tick},\tau}(Y_t,f_{1,t})=(\tau-\mathbf{1}\{Y_t-f_{1,t}<0\})(Y_t-f_{1,t})=(\tau-\mathbf{1}\{e_{1,t}<0\})e_{1,t}.
\end{equation*}
A common way to construct quantile forecasts for financial returns (or losses) is by means of GARCH-type models. Under mild conditions, if $Y_t$ follows a GARCH process, then $Y_t$ has a tail index $\kappa>0$ and the tick losses inherit this tail index; see Section \ref{sec:empirical} for further details and an empirical illustration with emerging-market foreign exchange rates, for which the range $\kappa\in(1,2)$ is empirically relevant. \hfill $\blacklozenge$
\end{example}

 In order to test $\mathsf{H}_0$, we consider the statistic $T_n$ in \eqref{DM_test} with $\hat{\sigma}_n^2=n^{-1}\sum_{t=1}^nX_t^2$. In contrast to \cite{diebold1995paring}, we do not use a HAC estimator. This is deliberate: under heavy tails the sample mean has infinite (or undefined) long-run variance, so consistent estimation of this quantity is neither possible nor needed for our purposes. The test statistic $T_n$ is, hence, given by the studentized sum
\begin{equation}\label{self_sum}
    T_n=\dfrac{S_n}{\gamma_{n}},
\end{equation}
with
\begin{equation} \label{eq:def_S_gamma}
    S_n := \sum_{t=1}^nX_t\qquad\text{and}\qquad \gamma_{n}:=\sqrt{\sum_{t=1}^nX_t^2}\;.
\end{equation}
The normalization of $S_n$ by $\gamma_n$ is important, since these quantities are of the same stochastic order irrespective of the tail-heaviness of $X_t$; in particular, the statistic does not require knowledge about the rate of convergence of $S_n/n$.\smallskip

To preview the main results of this section, we briefly summarize the limiting behavior of $T_n$ under $\mathsf{H}_0$. Given appropriate regularity conditions,
\begin{itemize}
    \item as predominantly exploited in the existing body of literature on forecast evaluation and comparisons, if $X_t$ has \textit{finite} variance, then $T_n$ has a \textit{Gaussian} limit;
    \item if $X_t$ is heavy-tailed with tail index $\kappa\in(1,2)$, so that $X_t$ has infinite variance, then $T_n$ converges in distribution to the ratio $\xi_\kappa/\zeta_{\kappa/2}$, where $\xi_\kappa$ is a \textit{non-Gaussian} $\kappa$-stable random variable and $\zeta_{\kappa/2}^2$ is a $(\kappa/2)$-stable random variable. 
\end{itemize}
The formal statement is given in Proposition \ref{Master_Lemma} in Section \ref{sec:limit_T_n}, which is a direct consequence of the partial-sum limit theorems in Sections \ref{sec:limit_partial_sums}--\ref{sec:limit_partial_sums_hv}. Complementing this result, in Proposition \ref{prop:AR_HAC} we show -- in the context of heavy-tailed linear autoregressive processes -- that the same non-Gaussian limit governs the HAC-based variant of $T_n$ that is routinely used in applied forecast comparisons, so that standard normal critical values are potentially unreliable also for this statistic.

\begin{remark}
    Throughout, we focus on testing $\mathsf{H}_0$ in \eqref{eq:zero_mean}, but our results also enable the construction of confidence intervals for $\mathbb{E}[X_t]$ when $X_t$ may have infinite variance. This is useful, e.g., for assessing the accuracy of a forecast series: setting $X_t = \mathcal{L}(Y_t, f_t)$, the quantity $\mathbb{E}[X_t]$ represents the expected loss, such as the mean squared forecast error. See Remark \ref{rem:confidence_interval} for details.
\end{remark}

\subsection{Limit theory for partial sums under finite variance}\label{sec:limit_partial_sums}
We assume throughout that the data-generating process (DGP) for $X_t$ is \textit{stationary}, and we write $(X_t)_{t\in\mathbb{Z}}$ to signify that the process is indexed over all integers.

We start by considering the finite variance case. Before stating our assumptions, we recall the definition of strong mixing coefficients,  characterizing the time-dependence of a real-valued stationary process $(X_t)_{t\in\mathbb{Z}}$. Let $g_1$ and $g_2$ be real-valued, bounded functions acting on $(X_t)_{t\in\mathbb{Z}}$ and define
\begin{equation}\label{mix_coef}
\alpha_k:=\sup_{|g_1|,|g_2|\leqslant 1}\left|\mathrm{cov}\left[g_1(\dots,X_{-1},X_{0}),g_2(X_{k},X_{k+1},\dots)\right]\right|,\qquad k \geqslant 0.
\end{equation}
Following \citet[p.3]{Doukhan1994mixing} the numbers $(\alpha_k)_{k\geqslant0}$ are the strong mixing coefficients of $(X_t)_{t\in\mathbb{Z}}$, and the process is said to be strongly mixing (or $\alpha$-mixing) if $\alpha_k\to 0$ as $k\to \infty$.
In order to derive the asymptotic properties of $T_n$ in \eqref{self_sum} under finite variance of $X_t$, we consider the following assumption.
\begin{assumption}\label{mix_coef_ass}
    There exists $\varepsilon>0$ such that $\mathbb{E}[|X_t|^{2+\varepsilon}]<\infty$ and the  $\alpha$-mixing coefficients of $(X_t)_{t\in\mathbb{Z}}$ satisfy 
    \begin{equation}\label{eq:mix_summability}
        \sum_{k=1}^\infty\alpha_k^{\varepsilon/(2+\varepsilon)}<\infty.
    \end{equation}
\end{assumption}

Assumption \ref{mix_coef_ass} is standard in the literature on forecast evaluations. The summability condition in \eqref{eq:mix_summability}  holds for any $\varepsilon>0$, if the mixing coefficients $(\alpha_k)_{k\geqslant0}$ decay at a geometric rate. Therefore, a wide range of well-known time series processes, including causal stationary solutions to linear autoregressive, GARCH-type and stochastic volatility processes, satisfy the summability condition; see Section \ref{sec:DGPs} in the Supplemental Appendix for details. On the other hand, the moment condition $\mathbb{E}[|X_t|^{2+\varepsilon}]<\infty$ typically imposes additional restrictions on the process, particularly that $X_t$ has finite variance. With $\overline{\varepsilon}:=\inf\{\varepsilon \in(0,\infty]:\mathbb{E}[|X_t|^{2+\varepsilon}]=\infty\}$, we note that a smaller $\overline{\varepsilon}$ imposes a faster decay on $\alpha_k$ for \eqref{eq:mix_summability} to hold. Consequently, Assumption \ref{mix_coef_ass} balances the order of finite moments and the rate of decay of the mixing coefficients. The assumption implies that a CLT and the ergodic theorem apply to $(X_t - \mathbb{E}[X_t])$ and $(X_t - \mathbb{E}[X_t])^2$, respectively: 
\begin{theorem}[\cite{ibragimov1962clt}]\label{theo:rio}
    Let $(X_t)_{t\in\mathbb{Z}}$ be a stationary process satisfying Assumption \ref{mix_coef_ass}. Then
    \begin{equation*}
        \dfrac{1}{\sqrt{n}}\left(\sum\nolimits_{t=1}^n(X_t-\mathbb{E}[X_t]),\sqrt{\sum\nolimits_{t=1}^n{(X_t-\mathbb{E}[X_t])}^2}\right)\xrightarrow{d}\left(Z,\sqrt{\mathrm{var}(X_t)}\right),\qquad n\rightarrow\infty,
    \end{equation*}
    where $Z$ is a $\mathrm{N}(0,\sigma^2)$-distributed random variable with 
     \begin{equation}\label{eq:def_sigma2}
        \sigma^2:= \mathrm{var}(X_t)+2\sum_{t=1}^\infty\;\mathrm{cov}(X_0,X_t),\qquad 0 <\sigma^2<\infty.
    \end{equation}
\end{theorem}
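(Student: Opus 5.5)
Set $Y_t := X_t-\mathbb{E}[X_t]$, so $(Y_t)_{t\in\mathbb{Z}}$ is stationary, centred, has the same mixing coefficients as $(X_t)$ (it is a measurable function of $X_t$ alone), and $\mathbb{E}[|Y_t|^{2+\varepsilon}]<\infty$. We treat the two coordinates separately and then combine them. The second coordinate converges to a constant, so convergence of the first coordinate plus convergence in probability of the second gives joint convergence in distribution by Slutsky's lemma.

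\emph{Step 1 (second coordinate).} Mixing implies ergodicity: $\alpha_k\to0$ gives mixing in the ergodic-theoretic sense, hence ergodicity of the shift. Since $(Y_t^2)$ is a measurable function of the same stationary ergodic process and $\mathbb{E}[Y_t^2]=\mathrm{var}(X_t)<\infty$, Birkhoff's ergodic theorem gives $n^{-1}\sum_{t=1}^n Y_t^2\to\mathrm{var}(X_t)$ almost surely. By continuity of the square root, the second coordinate converges almost surely to $\sqrt{\mathrm{var}(X_t)}$.

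\emph{Step 2 (covariances and $\sigma^2$).} Apply Davydov's covariance inequality with $p=q=2+\varepsilon$:
\begin{equation*}
|\mathrm{cov}(Y_0,Y_k)|\leqslant C\,\alpha_k^{\varepsilon/(2+\varepsilon)}\,\|Y_0\|_{2+\varepsilon}^2 .
\end{equation*}
By \eqref{eq:mix_summability} the series in \eqref{eq:def_sigma2} converges absolutely. A Ces\`aro argument then gives $\mathrm{var}(n^{-1/2}\sum_{t=1}^n Y_t)\to\sigma^2$. Positivity $\sigma^2>0$ is part of the statement's hypothesis; if $\sigma^2=0$, the same computation still gives convergence to the degenerate $\mathrm{N}(0,0)$ law.

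\emph{Step 3 (CLT for the first coordinate).} Use Bernstein's big-block/small-block method. Choose block lengths $p_n\to\infty$ and gap lengths $q_n\to\infty$ with $q_n=o(p_n)$, $p_n=o(n)$ and $(n/p_n)\alpha_{q_n}\to0$. This is possible because summability of $\alpha_k^{\varepsilon/(2+\varepsilon)}$ forces $k\,\alpha_k^{\varepsilon/(2+\varepsilon)}\to0$ along a subsequence; more simply, $\alpha_k$ is monotone, so $k\alpha_k\to0$. The argument then has three parts.
\begin{itemize}
\item Small blocks: by the variance bound of Step 2 applied to sums over the gaps, they contribute $O(q_n/p_n)=o(1)$ in $L^2$ after the $n^{-1/2}$ scaling.
\item Big blocks: iterating the $\alpha$-mixing bound on characteristic functions (Volkonskii--Rozanov) shows they are asymptotically independent. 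Their characteristic function differs from the product one by at most $16(n/p_n)\alpha_{q_n}\to0$.
\item Independent copies: Lindeberg's condition is checked using the $2+\varepsilon$ moment. One needs a bound $\mathbb{E}|\sum_{t\leqslant p}Y_t|^{2+\delta}\lesssim p^{1+\delta/2}$ for some $\delta\in(0,\varepsilon)$, or alternatively a truncation argument showing uniform integrability of $(p_n^{-1/2}\sum_{t\leqslant p_n}Y_t)^2$.
\end{itemize}

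The main obstacle is this last uniform integrability/Lindeberg step. Convergence of variances alone does not give it, and moment inequalities for mixing sums (Yokoyama/Rio) under only $\sum\alpha_k^{\varepsilon/(2+\varepsilon)}<\infty$ are delicate. I would handle it by truncating $Y_t$ at a level $M$. The bounded part has all moments and a Rosenthal-type bound. The remainder has long-run variance bounded, via Davydov's inequality, by $C\sum_k\alpha_k^{\varepsilon/(2+\varepsilon)}\|Y_0\mathbf 1\{|Y_0|>M\}\|_{2+\varepsilon}^2$, which tends to zero as $M\to\infty$ uniformly in $n$. A standard approximation argument (Billingsley, Theorem 3.2) then transfers the CLT from the truncated sums to the original ones.
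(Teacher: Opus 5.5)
The paper gives no proof of this statement. It quotes it from \cite{ibragimov1962clt} and uses it as a black box, so your proposal is a self-contained alternative rather than a reconstruction.

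Your outline is sound.
\begin{itemize}
\item The ergodic-theorem step handles the second coordinate.
\item Davydov's inequality gives absolute summability of the autocovariances and $\mathrm{var}(n^{-1/2}\sum_{t\leqslant n}Y_t)\to\sigma^2$.
\item The block lengths are realisable. For example, $q_n=\lfloor n^{1/2}\rfloor$ and $p_n=\lfloor n^{3/4}\rfloor$ give $(n/p_n)\alpha_{q_n}=o(n^{1/4}\cdot n^{-1/2})\to0$ once $k\alpha_k\to0$.
\item Volkonskii--Rozanov decouples the big blocks, and Slutsky assembles the pair.
\end{itemize}
Structurally this is the finite-variance counterpart of the paper's Appendix~A argument for Theorem~\ref{theo:inf_var_jointCLT}. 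There the same large/small-block decoupling is done via Lemmas~\ref{lem:mixing_bound}--\ref{lem:product_inequality}. The step corresponding to your Lindeberg verification is supplied by the anti-clustering condition through \cite{matsui2025self}.

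What your route buys, compared with citing the classical result, is that the delicate uniform-integrability step is only needed for bounded summands. Davydov's inequality makes the truncation remainder small in $L^2$ uniformly in $n$. Billingsley's Theorem~3.2 then transfers the CLT. For that you also need $\sigma_M^2\to\sigma^2$, which follows by dominated convergence in the Davydov-bounded covariance series.

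The one step you should make explicit is the moment bound for the bounded part. A Rosenthal/Yokoyama-type estimate $\mathbb{E}|S^{(M)}_p|^{2+\delta}\leqslant C_M\,p^{1+\delta/2}$ for bounded strongly mixing summands is not automatic. It requires a rate such as $\sum_k k^{\delta/2}\alpha_k<\infty$, and the property $k\alpha_k\to0$ that you recorded does not provide this.

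The rate does hold here. Since $\alpha_k$ is non-increasing and $\sum_k\alpha_k^{\varepsilon/(2+\varepsilon)}<\infty$, you get $k\,\alpha_k^{\varepsilon/(2+\varepsilon)}\to0$ along the whole sequence, not just a subsequence. That is, $\alpha_k=o(k^{-1-2/\varepsilon})$, so any $\delta\in(0,4/\varepsilon)$ works. With that sentence added, the argument is a complete proof.
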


\smallskip

\subsection{Limit theory for partial sums under heavy tails}\label{sec:limit_partial_sums_hv}

We now turn to the case where $X_t$ has infinite second moment. To formalize what we mean by heavy tails, we recall the notion of regular variation: a real-valued random variable $X$ is said to be regularly varying with tail index $\kappa>0$ if
\begin{equation}\label{RV_marg}
    \mathbb{P}(X>x)\sim p_+\dfrac{L(x)}{x^\kappa}\quad\quad\text{and}\quad\quad \mathbb{P}(X<-x)\sim p_-\dfrac{L(x)}{x^\kappa},\quad\quad x\to\infty,
\end{equation}
where $L$ is a slowly varying function and the tail-balance coefficients $p_+,p_-\geqslant0$ satisfy $p_++p_-=1$. Importantly, under \eqref{RV_marg}, $\mathbb{E}[{|X|}^{p}]=\infty$ $(<\infty)$ for every $p>\kappa$ $(p<\kappa)$, and, hence, a smaller $\kappa$ implies that $X$ has heavier tails.

Following \cite{basrak2009regularly}, a stationary real-valued process $(X_t)_{t\in\mathbb{Z}}$ is said to be regularly varying with index $\kappa>0$, if $X_0$ is regularly varying with index $\kappa>0$ in the sense of \eqref{RV_marg} and there exists a real-valued stochastic process $(\mathit{\Theta}_t)_{t\in\mathbb{Z}}$ such that for all integers $h\geqslant0$,
\begin{equation}\label{RV_TS}
    \mathbb{P}(|X_0|^{-1}(X_{-h},\dots,X_{h})\in\cdot\;|\;|X_0|>x)\xrightarrow{w}\mathbb{P}((\mathit{\Theta}_{-h},\dots,\mathit{\Theta}_{h})\in\cdot),\quad\quad x\to\infty,
\end{equation}
where $\overset{w}{\to}$ denotes weak convergence of probability measures. The process $(\mathit{\Theta}_t)_{t\in\mathbb{Z}}$ is called the \emph{spectral tail process} of $(X_t)_{t\in\mathbb{Z}}$ and characterizes the extremal dependence structure of $(X_t)_{t\in\mathbb{Z}}$. Informally, conditional on a very large excursion at time $0$, $(\mathit{\Theta}_t)_{t\in\mathbb{Z}}$ records the shape of the surrounding cluster of large values, normalized by $|X_0|$. An iid\ process has $\mathit{\Theta}_t=0$ almost surely for all $t\neq0$, while persistent processes give rise to non-degenerate $\mathit{\Theta}_t$. By construction, $\mathbb{P}(|\mathit{\Theta}_0|=1)=1$ with $\mathbb{P}(\mathit{\Theta}_0=\pm1)=p_{\pm}\geqslant 0$. For more details on regularly varying time series we refer to \citet[Chapter 4]{mikosch2024extreme}.

With this notion in hand, we impose the following assumption.
\begin{assumption}\label{ass_RV}
    The process $(X_t)_{t\in\mathbb{Z}}$ is regularly varying with index $\kappa\in(0,1)\cup(1,2)$ in the sense of \eqref{RV_TS}.
\end{assumption}
We note that for $\kappa\in(1,2)$, $X_t$ has infinite variance, whereas the variance is not defined for $\kappa\in(0,1)$.\footnote{As is standard in the literature on stable limit theory for dependent processes, we rule out the case $\kappa=1$. The case $\kappa\in(0,1)$ is covered both for completeness and in view of the undefined-mean alternatives studied in Section \ref{sec:extensions}.} Under Assumption \ref{ass_RV} there exists a deterministic sequence $(a_n)_{n\geqslant1}$ satisfying $a_n\to\infty$ and
\begin{equation} \label{eq:def_a_n}
    n\prob(|X_t|>a_n)\rightarrow1,\qquad n\rightarrow\infty.
\end{equation}
In particular, by regular variation of $X_t$, it holds that $a_n\sim L(n)n^{1/\kappa}$ as $n\to\infty$ for some slowly varying function $L$.
Given Assumption \ref{ass_RV}, we  make the following assumption about the dependence structure of $(X_t)_{t\in\mathbb{Z}}$.  
\begin{assumption}\label{ass:AC+MX}
    Let $(a_n)_{n\geqslant1}$ satisfy \eqref{eq:def_a_n}. There exists a deterministic sequence $(r_n)_{n\geqslant1}$ with $r_n\to\infty$ and $r_n=o(n)$ as $n\to\infty$ such that
    \begin{equation}\label{AC}
\lim_{k\rightarrow\infty}\limsup_{n\rightarrow\infty}n\sum_{t=k}^{r_n}\mathbb{E}[\min\{|a_n^{-1}X_t|,1\}\;\min\{|a_n^{-1}X_0|,1\}]=0.
    \end{equation}
    Furthermore, the strong mixing coefficients $(\alpha_k)_{k\geqslant0}$ of $(X_t)_{t\in\mathbb{Z}}$ satisfy
    \begin{equation}\label{alpha_coefs_req}
        \dfrac{n}{r_n}\alpha_{\ell_n}\rightarrow0,\qquad n\to\infty,
    \end{equation}
    for some deterministic sequence $(\ell_n)_{n\geqslant1}$ with $\ell_n\rightarrow\infty$ and $\ell_n=o(r_n)$ as $n\to\infty$.
\end{assumption}

The condition in \eqref{AC} is a so-called anti-clustering condition; see e.g. \citet[Section 2.3]{bartkiewicz2011stable}. To provide some intuition for this condition, note that \eqref{AC} implies
\begin{equation}\label{eq:AC_implication}
    \lim_{k\rightarrow\infty}\limsup_{n\rightarrow\infty}n\sum_{t=k}^{r_n} \mathbb{P}\left( \vert X_t\vert >a_n \ , \ \vert X_0 \vert > a_n  \right) = 0.
\end{equation}
Hence, informally, if we observe a very large $|X_0| > a_n$, then $(X_t)_{t\geqslant 1}$ cannot stay above this threshold $a_n$ for too long with non-negligible probability; the extremes may cluster, but long clusters become increasingly rare.  

The two parts of Assumption \ref{ass:AC+MX} play complementary roles in the limit theory, and it is useful to separate them conceptually. Regular variation (Assumption \ref{ass_RV}) together with the anti-clustering condition \eqref{AC} delivers a stable limit for the sum of $k_n = n/r_n$ independent block sums of $X_t$  with each block having length $r_n$. The mixing condition \eqref{alpha_coefs_req} then ensures that this sum of independent block sums is a good approximation of the full partial sum $S_n=\sum_{t=1}^nX_t$; we refer to Appendix \ref{sec:appendix_main_theorem} for additional details about this approximation argument.

In addition to playing complementary roles, note that the anti-clustering condition \eqref{AC} and the mixing condition \eqref{alpha_coefs_req} depend on the same deterministic sequence $(r_n)_{n\geqslant1}$. Similar to the role of the constant $\varepsilon>0$ in Assumption \ref{mix_coef_ass}, balancing the tail-heaviness and the rate of decay of the mixing coefficients in the finite variance case, the sequence $(r_n)_{n\geqslant1}$ balances the clustering of extremes against the mixing coefficients: if $(X_t)_{t\in\mathbb{Z}}$ is likely to exhibit long clusters of extreme observations such that $\mathbb{P}\left( \vert X_t\vert >a_n \ , \ \vert X_0 \vert > a_n  \right)>0$ for large $a_n$ and $t$, the anti-clustering condition requires that $r_n$ grows at a sufficiently slow rate. This slow rate must be compensated for by a faster decay of $\alpha_k$ in order for the mixing condition in \eqref{alpha_coefs_req} to hold. 

Assumption \ref{ass:AC+MX} can be shown to hold under mild conditions for a wide range of well-known time series processes, including linear autoregressive processes (as considered in Section \ref{sec:AR_details}), GARCH processes, and other processes given by affine stochastic recurrence equations, as well as stochastic volatility processes; see Section \ref{sec:DGPs} of the Supplemental Appendix for additional details and technical conditions.

Assumptions \ref{ass_RV}--\ref{ass:AC+MX} ensure that $(X_t)_{t\in\mathbb{Z}}$ obeys the following stable limit theorem.
\begin{theorem}\label{theo:inf_var_jointCLT}
    Let $(X_t)_{t\in\mathbb{Z}}$ satisfy Assumptions \ref{ass_RV}--\ref{ass:AC+MX}, and define $\mu:=\mathbb{E}[X_t]\mathbf{1}(\kappa>1)$. With $(a_n)_{n\geqslant1}$ defined by \eqref{eq:def_a_n}, it holds that
    \begin{equation}\label{eq:inf_var_jointCLT}
        a_n^{-1}\left( \sum\nolimits_{t=1}^{n}(X_t-\mu), \sqrt{\sum\nolimits_{t=1}^n(X_t-\mu)^2}  \right) \overset{d}{\to} (\xi_\kappa,\;\zeta_{\kappa/2}),\qquad n\to\infty,
    \end{equation}
    where $\xi_\kappa$ is a $\kappa$-stable random variable, $\mathbb{P}(\zeta_{\kappa/2}>0)=1$, and $\zeta_{\kappa/2}^2$ is a  $\kappa/2$-stable random variable.
\end{theorem}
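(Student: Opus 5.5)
The plan is to apply the classical blocking technique to the bivariate process $\mathbf{Y}_t := (X_t-\mu, (X_t-\mu)^2)$ and to prove joint convergence of $a_n^{-1}S_n$ and $a_n^{-2}\gamma_n^2$, where now $S_n=\sum_{t\le n}(X_t-\mu)$ and $\gamma_n^2=\sum_{t\le n}(X_t-\mu)^2$. The theorem then follows from the continuous mapping theorem applied to $(x,y)\mapsto(x,\sqrt{y})$.

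First, I note that $(X_t-\mu)_t$ inherits regular variation with the same index $\kappa$, normalizing sequence $a_n$, and spectral tail process $(\mathit{\Theta}_t)$, since subtracting a constant does not affect tails. Moreover, $(\mathbf{Y}_t)$ is a regularly varying $\mathbb{R}^2$-valued process in the vague sense on $\{|x|>0\}$, with limiting cluster $(\mathit{\Theta}_t,\mathit{\Theta}_t^2)$ under the scaling $(a_n,a_n^2)$. The second coordinate is nonnegative with tail index $\kappa/2\in(0,1)$, so no centering is needed for it.

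Next I split $\{1,\dots,n\}$ into $k_n=\lfloor n/r_n\rfloor$ blocks of length $r_n$. From the end of each block I remove $\ell_n$ observations. Using \eqref{alpha_coefs_req}, the standard coupling/characteristic-function argument gives
\[
\bigl|\mathbb{E}e^{i(uS_n+v\gamma_n^2)/\cdot}-\prod_{j}\mathbb{E}e^{i(\cdot)_j}\bigr|\le k_n\alpha_{\ell_n}\to0 ,
\]
so the joint characteristic function of the block sums factorizes asymptotically. The removed short pieces are negligible: they contain $k_n\ell_n=o(n)$ terms, and their contribution is bounded by the same truncation bounds used below, applied with $o(n)$ terms.

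For the independent block sums I would compute the limit of
\[
k_n\bigl(\mathbb{E}\exp\{i(u a_n^{-1}B_{1}+v a_n^{-2}B_{2})\}-1\bigr),
\]
where $(B_1,B_2)$ are the sums over one block. I truncate at level $\varepsilon a_n$, writing $X_t=X_t\mathbf{1}(|X_t|>\varepsilon a_n)+X_t\mathbf{1}(|X_t|\le\varepsilon a_n)$. For the large parts, the anti-clustering condition \eqref{AC}, in its implication \eqref{eq:AC_implication}, together with regular variation yields the cluster-based Lévy measure via the Basrak–Segers representation: $k_n\mathbb{P}(a_n^{-1}(B_1,B_2)\in\cdot)$ converges vaguely to the measure of $y\,(\sum_t \mathit{\Theta}_t,\sum_t\mathit{\Theta}_t^2)$ under the anchoring $\sup_{t<0}|\mathit{\Theta}_t|<1$, with $y$ distributed as $\kappa y^{-\kappa-1}dy$. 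This is the route of \cite{matsui2025self} and \cite{bartkiewicz2011stable}. For the small parts I need
\[
\lim_{\varepsilon\to0}\limsup_n \mathbb{P}\Bigl(a_n^{-1}\Bigl|\sum_{t\le n}\bigl(X_t\mathbf{1}(|X_t|\le\varepsilon a_n)-c_n\bigr)\Bigr|>\delta\Bigr)=0,
\]
and the analogous statement for squares. The second-moment bound for the truncated variables is handled by \eqref{AC}, which controls the within-block cross terms $n\sum_{t=k}^{r_n}\mathbb{E}[\min\{|a_n^{-1}X_t|,1\}\min\{|a_n^{-1}X_0|,1\}]$, while for lags $t<k$ Karamata's theorem gives $n\,\mathbb{E}[(a_n^{-1}X_0)^2\mathbf{1}(|X_0|\le\varepsilon a_n)]\asymp\varepsilon^{2-\kappa}\to0$. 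For $\kappa\in(1,2)$ the truncated mean $c_n$ is compared with $\mu$ using $n a_n^{-1}\mathbb{E}[|X_0|\mathbf{1}(|X_0|>\varepsilon a_n)]\asymp\varepsilon^{1-\kappa}$, which produces the compensator that combines with the Lévy measure. For $\kappa<1$ no centering is needed. For the squares, $\kappa/2<1$, so they need no centering either.

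Combining these steps, the limiting characteristic function is that of an infinitely divisible vector with zero Gaussian part and a Lévy measure homogeneous of order $-\kappa$ under the scaling $(y,y^2)$. Hence $\xi_\kappa$ is $\kappa$-stable and $\zeta^2$ is $\kappa/2$-stable and positive, since its Lévy measure lives on $(0,\infty)$ and it has no drift; positivity requires that the cluster sum $\sum_t\mathit{\Theta}_t^2\ge1$ is nondegenerate. I expect the main obstacle to be the small-jump negligibility within blocks under only \eqref{AC}, which is weaker than the usual $\mathbb{P}(|X_t|>a_n,|X_0|>a_n)$-type conditions. In particular, controlling the covariances of truncated variables at lags between $k$ and $r_n$ by the mixed $\min\{\cdot,1\}$ moments requires a careful Cauchy–Schwarz-free bound, which I would take from \cite{matsui2025self}.
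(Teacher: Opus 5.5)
Your architecture is the paper's. You decouple the blocks by strong mixing at a cost of order $k_n\alpha_{\ell_n}$, and then identify the limit of $k_n(\phi_m-1)$, where $\phi_m$ is the joint transform of one normalized block of length $m$. The paper does not re-derive this block limit: it takes it from the proof of Theorem 3.3 in \cite{matsui2025self}, using the hybrid transform $\mathbb{E}[\mathrm{e}^{iua_n^{-1}S-\lambda a_n^{-2}\gamma^2}]$. Your joint characteristic function works equally well in the mixing step, since both integrands have modulus at most one. Re-deriving the block limit by truncation and the cluster L\'evy measure, as you propose, is legitimate and buys something. You can compute the limit directly at block length $r_n-\ell_n$ and absorb $\mu\neq0$ into the truncated mean, so you need neither Lemma~\ref{lem:large_small_blocks} nor a verification of \eqref{AC} for $X_t-\mu$. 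The price is that you are essentially reproving \cite{matsui2025self}.

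The step that does not go through as written is the negligibility of the removed short pieces. Put $\bar X_t:=X_t\mathbf{1}(|X_t|\leqslant\varepsilon a_n)$.
\begin{itemize}
\item Large jumps in the short pieces are harmless, by the union bound $k_n\ell_n\mathbb{P}(|X_0|>\varepsilon a_n)\to0$.
\item The small values when $\kappa<1$, and the squared terms, are also harmless: first-moment bounds need no dependence control.
\item For $\kappa\in(1,2)$, the centered part $a_n^{-1}\sum_{t\in\cup_s A_s^S}(\bar X_t-\mathbb{E}[\bar X_0])$ needs a second-moment bound. Indices in different short pieces lie at lags of at least $r_n-\ell_n$, mostly beyond $r_n$, where \eqref{AC} says nothing.
\end{itemize}
Mixing gives only $|\mathrm{cov}(\bar X_s,\bar X_t)|\leqslant4\varepsilon^2a_n^2\alpha_{r_n-\ell_n}$, hence a total of order $\varepsilon^2(k_n\ell_n)^2\alpha_{r_n-\ell_n}$. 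This is not $o(1)$ under the rate $k_n\alpha_{\ell_n}\to0$ alone. The same objection applies to your small-values display if it is read over $t\leqslant n$ for the original sequence. Read that way, it is a vanishing-small-values condition that does not follow from \eqref{AC}; it holds only for the sum of $k_n$ independent copies of one block.

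The remedy is the paper's \eqref{eq:small_blocks}--\eqref{eq:conv_indep_small_blocks}. First decouple the short pieces as well; they are $r_n-\ell_n$ apart, so the cost is $4k_n\alpha_{r_n-\ell_n}$. Then show $(\phi_{\ell_n})^{k_n}\to1$. One way is to use the block limit at scale $\ell_n$, since \eqref{AC} persists with $\ell_n\leqslant r_n$. With $\tilde k_n=n/\ell_n$, this gives $k_n(\phi_{\ell_n}-1)=(\ell_n/r_n)\,\tilde k_n(\phi_{\ell_n}-1)\to0$. A transform tending to one yields convergence in probability to $(0,0)$.

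You also misplace the difficulty. Condition \eqref{AC} is stronger, not weaker, than \eqref{eq:AC_implication}, and it makes the within-block small-value control immediate. For $\varepsilon\leqslant1$ one has $a_n^{-1}|\bar X_t|\leqslant\min\{|a_n^{-1}X_t|,1\}$. By stationarity, $\mathrm{cov}(\bar X_0,\bar X_h)\leqslant\mathbb{E}|\bar X_0\bar X_h|$, so lags $h\in[k,r_n]$ are bounded directly by the sum in \eqref{AC}, and no delicate inequality is needed. The real work lies across blocks, which is exactly where your argument is thin. As a minor point, the vague limit must be stated for $(a_n^{-1}B_1,a_n^{-2}B_2)$, with cluster $(y\sum_t\mathit{\Theta}_t,\,y^2\sum_t\mathit{\Theta}_t^2)$ rather than $y(\sum_t\mathit{\Theta}_t,\sum_t\mathit{\Theta}_t^2)$.
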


To the best of our knowledge, the above result is new. Whereas many existing stable limit results rely on point process convergence (e.g., \cite{davis1995point}), our result builds upon recent theoretical results of \cite{matsui2025self} relying on convergence of so-called hybrid characteristic function--Laplace transforms. For a real-valued random variable $V$ and a non-negative random variable $W$, the hybrid characteristic function--Laplace transform is the map
\begin{equation*}
    (u,\lambda)\mapsto \mathbb{E}\big[\mathrm{e}^{iuV-\lambda W}\big],\qquad u\in\mathbb{R},\ \lambda\geq0,
\end{equation*}
which fully characterizes the joint distribution of $(V,W)$ and reduces to the characteristic function of $V$ when $\lambda=0$ and to the Laplace transform of $W$ when $u=0$. This approach typically applies under milder conditions than those imposed to establish point process convergence, and we carefully adapt the results of \cite{matsui2025self} to strongly mixing processes and allow for $\mu\neq 0$. For further details about hybrid characteristic function-Laplace transforms, we refer to \citet[Appendix D.2]{mikosch2024extreme}, and we refer to the textbook by \cite{samorodnitsky1994stable} for a treatment of non-Gaussian stable distributions.
\medskip

The distribution of the limiting vector in \eqref{eq:inf_var_jointCLT} is fully characterized by the tail index and the extremal dependence structure of $(X_t)_{t\in\mathbb{Z}}$. To see this, recall that by Assumption \ref{ass_RV}, $(X_t)_{t\in\mathbb{Z}}$ is regularly varying and hence, by the definition in \eqref{RV_TS}, has a spectral tail process $(\mathit{\Theta_t})_{t\in\mathbb{Z}}$. Define the \textit{extremal cluster process} as
\begin{equation}\label{eq:extremal_cluster_process}
Q_t=\mathit{\Theta}_t/(\sum_{j\in\mathbb{Z}}|\mathit{\Theta}_j|^\kappa)^{1/\kappa},\qquad t\in\mathbb{Z},
\end{equation}
which is the spectral tail process re-normalized so that $\sum_{t\in\mathbb{Z}}|Q_t|^\kappa=1$.
Then it holds that $(\xi_\kappa,\zeta_{\kappa/2}^2)$ has hybrid characteristic function--Laplace transform given by
\begin{align}
        &\mathbb{E}\big[\mathrm{e}^{iu\xi_\kappa-\lambda\zeta_{\kappa/2}^2}\big] \nonumber \\
        &=\exp\Big(\int_0^\infty \mathbb{E}\Big[\mathrm{e}^{iyu\sum_{t\in\mathbb{Z}}Q_t-y^2\lambda\sum_{t\in\mathbb{Z}}|Q_t|^2}-1-iyu\sum_{t\in\mathbb{Z}}Q_t\boldsymbol{1}\big\{\kappa\in(1,2)\big\}\Big]d(-y^{-\kappa})\Big), \label{eq:hybrid_chf_Laplace}
\end{align}
with $u\in\mathbb{R}$ and $\lambda\geq0$. In particular, it holds that $\xi_\kappa$ and $\zeta_{\kappa/2}$ are dependent. Setting $\lambda=0$ yields the characteristic function of the $\kappa$-stable random variable $\xi_\kappa$,
\begin{equation}\label{eq:chr fct of xi}
    \mathbb{E}[\mathrm{e}^{iu\xi_\kappa}]=\exp\big(-|u|^\kappa\tilde{\sigma}^\kappa(1-i\;\beta\;\textrm{sign}(u)\tan(\kappa\pi/2))\big),\qquad u\in\mathbb{R},
\end{equation}
with \textit{scale} and \textit{skewness} parameters given, respectively, by
\begin{equation}\label{eq: scale and skew}
    \tilde{\sigma}=\left(c_\kappa\;\mathbb{E}\big[\big|\sum\nolimits_{t\in\mathbb{Z}}Q_t\big|^\kappa\big]\right)^{1/\kappa}\qquad\text{and}\qquad  \beta=\dfrac{\mathbb{E}\big[\big(\sum\nolimits_{t\in\mathbb{Z}}Q_t\big)_+^\kappa-\big(\sum\nolimits_{t\in\mathbb{Z}}Q_t\big)_-^\kappa\big]}{\mathbb{E}\big[\big|\sum\nolimits_{t\in\mathbb{Z}}Q_t\big|^\kappa\big]},
\end{equation}
where
 \begin{equation}\label{eq:def_c_kappa}
  c_\kappa:=\frac{\Gamma(2-\kappa)\cos(\kappa\pi/2)}{1-\kappa},   
 \end{equation}
 and $\Gamma(\cdot)$ the gamma function\footnote{A slightly different expression for the characteristic function for $\xi_\kappa$ can be found in \citet[Section 9.2]{mikosch2024extreme}. For the sake of completeness, we provide detailed derivations of \eqref{eq:chr fct of xi} and \eqref{eq: scale and skew} in Section \ref{sec:chf_scale_skew} of the Supplemental Appendix.}. The distribution of $\xi_\kappa$ is asymmetric whenever the imaginary part of its characteristic function is non-zero, that is, if $\beta\neq0$. Consequently, whether or not the distribution is symmetric depends on the spectral tail process $(\mathit{ \Theta}_t)_{t\in\mathbb{Z}}$ via \eqref{eq:extremal_cluster_process}. For certain processes, such as linear autoregressive processes considered in Section \ref{sec:AR_details} below, $\beta$ is determined entirely by the tail-balance coefficients $p_{\pm} = \mathbb{P}(\mathit{\Theta}_0=\pm1)$. Moreover, note that the imaginary part of the characteristic function tends to zero as $\kappa\uparrow2$. Intuitively, this property is due to the fact that a stable distribution with index of stability $\kappa = 2$ and a location parameter of zero is a normal distribution with zero mean and hence symmetric.

\subsection{Limiting behavior of $T_n$}\label{sec:limit_T_n}

Combining Theorems \ref{theo:rio} and \ref{theo:inf_var_jointCLT} yields the limiting behavior of the test statistic $T_n$ in \eqref{self_sum} under both finite and infinite variance:
\begin{proposition}\label{Master_Lemma}
    Let $(X_t)_{t\in\mathbb{Z}}$ be a stationary process of real-valued random variables, and let $T_n$ be given by \eqref{self_sum}.
    \begin{itemize}
        \item [$(1)$] Suppose that $(X_t)_{t\in\mathbb{Z}}$ satisfies Assumption \ref{mix_coef_ass} and that $\mathbb{E}[X_t]=0$. Then 
        \begin{equation*}
            T_n\xrightarrow{d}\mathrm{N}(0,\sigma^2/\mathbb{E}[X_t^2]),\qquad n\to\infty,
        \end{equation*}
        where $\sigma^2>0$ is given in \eqref{eq:def_sigma2}.
        \item [$(2)$] Suppose that $(X_t)_{t\in\mathbb{Z}}$ satisfies Assumptions \ref{ass_RV}--\ref{ass:AC+MX} with index $\kappa\in(1,2)$ and $\mathbb{E}[X_t]=0$. Then 
        \begin{equation*}
            T_n\xrightarrow{d}\xi_\kappa\;/\zeta_{\kappa/2},\qquad n\to\infty,
        \end{equation*}
        with $\xi_\kappa$ and $\zeta_{\kappa/2}^2$ non-Gaussian stable random variables whose joint distribution is characterized by the hybrid characteristic function--Laplace transform given in \eqref{eq:hybrid_chf_Laplace}.
        \item [$(3)$] Suppose that $(X_t)_{t\in\mathbb{Z}}$ satisfies Assumption \ref{mix_coef_ass} or Assumptions  \ref{ass_RV}--\ref{ass:AC+MX} with index $\kappa\in(1,2)$. If $\mathbb{E}[X_t]\neq0$, then $|T_n|\xrightarrow{\mathbb{P}}\infty$ as $n\to\infty$.
    \end{itemize}
\end{proposition}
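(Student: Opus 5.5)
Parts (1) and (2) follow from the joint limit results by the continuous mapping theorem, and part (3) from writing $T_n$ as a centered part plus a drift term. For part (1), under $\mathbb{E}[X_t]=0$ Theorem \ref{theo:rio} gives $n^{-1/2}(S_n,\gamma_n)\xrightarrow{d}(Z,\sqrt{\mathbb{E}[X_t^2]})$ with $Z\sim\mathrm{N}(0,\sigma^2)$. The second coordinate has a constant limit $\sqrt{\mathbb{E}[X_t^2]}$, and this is strictly positive: otherwise $X_t=0$ a.s., which contradicts $\sigma^2>0$. The map $(s,g)\mapsto s/g$ is continuous on $\mathbb{R}\times(0,\infty)$, so the continuous mapping theorem yields $T_n=(n^{-1/2}S_n)/(n^{-1/2}\gamma_n)\xrightarrow{d} Z/\sqrt{\mathbb{E}[X_t^2]}\sim\mathrm{N}(0,\sigma^2/\mathbb{E}[X_t^2])$. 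For part (2), $\kappa\in(1,2)$ and $\mathbb{E}[X_t]=0$ give $\mu=0$, so Theorem \ref{theo:inf_var_jointCLT} states $a_n^{-1}(S_n,\gamma_n)\xrightarrow{d}(\xi_\kappa,\zeta_{\kappa/2})$ with $\mathbb{P}(\zeta_{\kappa/2}>0)=1$. The limit therefore lies a.s. in the continuity set $\mathbb{R}\times(0,\infty)$ of the ratio map, and $T_n=(a_n^{-1}S_n)/(a_n^{-1}\gamma_n)\xrightarrow{d}\xi_\kappa/\zeta_{\kappa/2}$. The joint law is the one in \eqref{eq:hybrid_chf_Laplace}, and both components are non-Gaussian stable because $\kappa<2$.

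For part (3), let $\mu=\mathbb{E}[X_t]\neq0$; under both sets of assumptions the mean is finite ($\kappa>1$ in the heavy-tailed case). Write
\[
T_n=\frac{\sum_{t=1}^n(X_t-\mu)}{\gamma_n}+\frac{n\mu}{\gamma_n}.
\]
The key step is to show that $\gamma_n$ has the same order as the centered quantities, i.e. $\gamma_n/c_n=O_{\mathbb{P}}(1)$ and $c_n/\gamma_n=O_{\mathbb{P}}(1)$, where $c_n=\sqrt n$ in the finite-variance case and $c_n=a_n$ in the heavy-tailed case. Expanding gives
\[
\gamma_n^2=\sum_{t=1}^n(X_t-\mu)^2+2\mu\sum_{t=1}^n(X_t-\mu)+n\mu^2 .
\]

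Under Assumption \ref{mix_coef_ass}, Theorem \ref{theo:rio} makes the first term $n\,\mathrm{var}(X_t)(1+o_{\mathbb{P}}(1))$ and the middle term $O_{\mathbb{P}}(\sqrt n)$, so $\gamma_n^2/n\xrightarrow{\mathbb{P}}\mathrm{var}(X_t)+\mu^2=\mathbb{E}[X_t^2]>0$. Then $T_n=O_{\mathbb{P}}(1)+\sqrt n\,\mu/\sqrt{\mathbb{E}[X_t^2]}+o_{\mathbb{P}}(\sqrt n)$, hence $|T_n|\xrightarrow{\mathbb{P}}\infty$.

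In the heavy-tailed case, Theorem \ref{theo:inf_var_jointCLT} gives $a_n^{-2}\sum_{t=1}^n(X_t-\mu)^2\xrightarrow{d}\zeta_{\kappa/2}^2$ and $a_n^{-1}\sum_{t=1}^n(X_t-\mu)=O_{\mathbb{P}}(1)$. Since $a_n\sim L(n)n^{1/\kappa}$ with $1/\kappa\in(1/2,1)$, we have $a_n/n\to0$ and $n/a_n^2\to0$. Hence the middle term satisfies $a_n^{-2}\cdot 2\mu\sum_{t=1}^n(X_t-\mu)=O_{\mathbb{P}}(1/a_n)\to0$, and $a_n^{-2}n\mu^2\to0$, so $a_n^{-2}\gamma_n^2\xrightarrow{d}\zeta_{\kappa/2}^2$. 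Because $\zeta_{\kappa/2}>0$ a.s., the continuous mapping theorem gives $a_n/\gamma_n=O_{\mathbb{P}}(1)$ and also $\gamma_n/a_n=O_{\mathbb{P}}(1)$. Consequently the first term of $T_n$ is $O_{\mathbb{P}}(1)$, while
\[
\frac{n|\mu|}{\gamma_n}=\frac{n}{a_n}\cdot\frac{|\mu|}{\gamma_n/a_n}.
\]
Here $n/a_n\to\infty$, and $\gamma_n/a_n$ is tight, so $|\mu|/(\gamma_n/a_n)$ is bounded away from zero in probability. Hence $|T_n|\xrightarrow{\mathbb{P}}\infty$.

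The only delicate point is this last step, controlling $\gamma_n$ from above when $\mu\neq0$. It relies on the rate comparison $n\ll a_n^2$, which requires $\kappa<2$ together with the slowly varying factor being negligible; this is immediate from regular variation. Otherwise everything reduces to Theorems \ref{theo:rio} and \ref{theo:inf_var_jointCLT} plus continuous mapping and Slutsky's lemma.
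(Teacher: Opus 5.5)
Your proposal is correct and follows essentially the paper's own argument: parts (1) and (2) come from Theorems \ref{theo:rio} and \ref{theo:inf_var_jointCLT} via the ratio map, and part (3) from centering the numerator and expanding $\gamma_n^2$ around $\mu$, using $n/a_n^2\to0$, $n/a_n\to\infty$, $\zeta_{\kappa/2}>0$ and Slutsky's lemma. The only cosmetic difference is in the finite-variance case of (3): the paper gets $n^{-1}\gamma_n^2\to\mathbb{E}[X_t^2]$ directly from the ergodic theorem, whereas you derive it from Theorem \ref{theo:rio} plus the same expansion.
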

Cases (1) and (2) give the limiting distribution of $T_n$ under $\mathsf{H}_0$ for different tail regimes, and the two limits may differ substantially in shape. Because $\zeta_{\kappa/2}$ is strictly positive (as discussed in Section \ref{sec:limit_partial_sums_hv}), the limit in case (2) is asymmetric whenever the skewness parameter $\beta$ in \eqref{eq: scale and skew} is non-zero, whereas the Gaussian limit in case (1) is always symmetric. This is investigated in more detail in Section \ref{sec:AR_details} in the context of linear autoregressive processes. 
Despite being non-Gaussian, the ratio $\xi_\kappa/\zeta_{\kappa/2}$ can be shown to have all moments finite for a wide class of processes, see \cite{matsui2025moments}; nevertheless, as documented in the simulation study in Section \ref{sec:Monte_Carlo}, its quantiles can depart markedly from those of a standard normal distribution, so that comparing $T_n$ against Gaussian critical values can produce a substantially distorted test. %\textcolor{red}{CURRENT: Moreover, in the simulation study in Section \ref{sec:Monte_Carlo}, we find that the quantiles of the case-(2) limit can depart markedly from those of a standard normal distribution, so that comparing $T_n$ against Gaussian critical values can produce a substantially distorted test.} %To obtain a test based on $T_n$ that is valid in both cases (1) and (2), we consider in Section \ref{sec:subsampling} a subsampling scheme for critical value construction. 

\subsection{Example: Linear autoregressive processes}\label{sec:AR_details}
We end this section by considering additional details about the results in Theorem \ref{theo:inf_var_jointCLT} and Proposition \ref{Master_Lemma} in the case where $(X_t)_{t\in\mathbb{Z}}$ follows an autoregressive process of order one. This example also provides some supporting intuition for the results of the simulation experiment in Section \ref{sec:Monte_Carlo}.
Let the real-valued $X_t$ obey the recursion
\begin{equation}\label{eq:AR1_sec2}
    X_t=\delta +\varphi X_{t-1}+Z_t,\qquad t\in\mathbb{Z},
\end{equation}
where $\delta\in\mathbb{R},\;\varphi\in[0,1)$ and $(Z_t)_{t\in\mathbb{Z}}$ is an iid sequence of random variables with strictly positive Lebesgue density on $\mathbb{R}$ and $\mathbb{E}[{|Z_t|}^\varepsilon]<\infty$ for some $\varepsilon>0$.\footnote{The non-negativity restriction on the autoregressive coefficient $\varphi$ is made entirely for the ease of exposition.} We also assume that $\mathbb{E}[Z_t]=0$ whenever $\mathbb{E}[|Z_t|]<\infty$. These conditions ensure that there exists a  stationary causal solution $(X_t)_{t\in\mathbb{Z}}$ to \eqref{eq:AR1_sec2}, which is strongly mixing with geometric rate; see, e.g., Proposition 2.2.4 in \cite{buraczewski2016stochastic}. Consequently, if $\mathbb{E}[{|Z_t|}^{2+\varepsilon}]<\infty$ for some $\varepsilon>0$, then $(X_t)_{t\in\mathbb{Z}}$ satisfies Assumption \ref{mix_coef_ass}. On the other hand, if $Z_t$ is regularly varying with some tail index $\kappa>0$, then $(X_t)_{t\in\mathbb{Z}}$ is regularly varying with the same index $\kappa$: 
\begin{proposition}\label{prop:AR1 - RV}
    Assume that $Z_t$ is regularly varying with tail index $\kappa>0$ and tail-balance coefficients $p_\pm\geqslant0$ with $p_++p_-=1$. Then there exists a stationary and regularly varying solution $(X_t)_{t\in\mathbb{Z}}$ to \eqref{eq:AR1_sec2} with the same index $\kappa$ and has spectral tail process given by
    \begin{align}\label{AR_spec_process}
        \mathit{\Theta}_t=\mathit{\Theta}_0\;\varphi^t\mathbf{1}(t\geqslant -J),\qquad t\in\mathbb{Z},
    \end{align}
    where $J$ is a non-negative integer-valued random variable satisfying
    \begin{equation}
        \mathbb{P}(J=j)=\varphi^{\kappa j}(1-\varphi^\kappa),\qquad j=0,1,\dots,
    \end{equation}
    and $\mathbb{P}(\mathit{\Theta}_0=\pm1)=p_\pm$. If $\kappa\in(0,2)$ then $(X_t)_{t\in\mathbb{Z}}$ satisfies Assumption \ref{ass:AC+MX}.
\end{proposition}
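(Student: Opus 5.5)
We split the proof into three parts: existence and regular variation, the spectral tail process, and Assumption \ref{ass:AC+MX}.

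\emph{Existence and regular variation.} Since $\varphi\in[0,1)$ and $\mathbb{E}[|Z_t|^\varepsilon]<\infty$, the causal series $X_t=\delta/(1-\varphi)+\sum_{j\geqslant0}\varphi^jZ_{t-j}$ converges almost surely and defines the stationary solution of \eqref{eq:AR1_sec2}. Regular variation of $X_0$ with index $\kappa$ follows from the classical result for infinite moving averages (Mikosch--Samorodnitsky; Davis--Resnick). Because $\sum_j\varphi^{j\min\{\kappa,1\}-\delta'}<\infty$ for small $\delta'>0$, we obtain
\[
\mathbb{P}(X_0>x)\sim\Big(\sum_{j\geqslant0}\varphi^{j\kappa}\Big)\,p_+\,\mathbb{P}(|Z_0|>x),
\]
and the analogous relation for the left tail with $p_-$. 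The constant $\delta/(1-\varphi)$ does not affect tails. In particular $X_0$ has the same tail-balance coefficients $p_\pm$.

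\emph{Spectral tail process.} For the tail process we use the single big jump heuristic, made rigorous by the same moving-average argument applied to finite-dimensional vectors. Conditional on $|X_0|>x$, asymptotically exactly one innovation $Z_{-J}$, $J\geqslant0$, is large. The probability that it is $Z_{-j}$ is proportional to $\varphi^{j\kappa}$, which gives $\mathbb{P}(J=j)=\varphi^{\kappa j}(1-\varphi^\kappa)$. Given $J$, we have $X_t\approx\varphi^{t+J}Z_{-J}$ for $t\geqslant-J$, while $X_t$ is negligible relative to $x$ for $t<-J$. Dividing by $|X_0|\approx\varphi^J|Z_{-J}|$ yields $\Theta_t=\mathrm{sign}(Z_{-J})\varphi^t\mathbf{1}(t\geqslant-J)$. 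Since $\mathrm{sign}(Z_{-J})=\Theta_0$ with law $p_\pm$, this is \eqref{AR_spec_process}. Formally, we would apply the multivariate Breiman/linear-process lemma to $(X_{-h},\dots,X_h)=\sum_j A_jZ_{-j}$ with $A_j$ deterministic vectors. This gives a limit measure $\sum_j\mathbb{P}(A_j\Theta^Z\in\cdot)\,\|A_j\|^\kappa$, which we then read off.

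\emph{Assumption \ref{ass:AC+MX}.} Geometric $\alpha$-mixing, via Proposition 2.2.4 of \cite{buraczewski2016stochastic} (which uses the positive density), gives $\alpha_k\leqslant C\rho^k$. Take $\ell_n=\lfloor C_0\log n\rfloor$ with $C_0$ large, so that $n\alpha_{\ell_n}\to0$, and take $r_n=\lfloor(\log n)^2\rfloor$. Then $r_n\to\infty$, $r_n=o(n)$, $\ell_n=o(r_n)$ and $(n/r_n)\alpha_{\ell_n}\to0$, which verifies \eqref{alpha_coefs_req}.

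The main obstacle is the anti-clustering condition \eqref{AC}. We centre first; the shift by $\delta/(1-\varphi)$ is harmless, since changing $X_t$ by a constant changes $\min\{|a_n^{-1}X_t|,1\}$ by $O(a_n^{-1})$, and this is absorbed using $n\,r_n\,a_n^{-1}\,\mathbb{E}\min\{|a_n^{-1}X_0|,1\}\to0$ for $\kappa<2$ (here $r_n$ is polylogarithmic). Next decompose $X_t=\varphi^tX_0+\sum_{j=0}^{t-1}\varphi^jZ_{t-j}=:\varphi^tX_0+Y_t$, where $Y_t$ is independent of $X_0$ and has tails uniformly dominated by those of $X_0$. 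Using $\min\{|a+b|,1\}\leqslant\min\{|a|,1\}+\min\{|b|,1\}$, the summand splits into two terms.

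The first term is $\varphi^t\,n\,\mathbb{E}[\min\{|a_n^{-1}X_0|,1\}^2]$. By Karamata's theorem, $n\,\mathbb{E}[\min\{|X_0/a_n|,1\}^2]\to c<\infty$ because $\kappa<2$, and $\sum_{t\geqslant k}\varphi^t\to0$ as $k\to\infty$.

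The second term is $n\,\mathbb{E}[\min\{|Y_t/a_n|,1\}]\,\mathbb{E}[\min\{|X_0/a_n|,1\}]$, which is of order
\[
n\,\bigl(\mathbb{E}\min\{|X_0/a_n|,1\}\bigr)^2\asymp n\cdot\bigl(n^{-1}\ell(n)\bigr)^2
\]
when $\kappa<1$, and of order $n\,a_n^{-2}$ up to slowly varying factors when $\kappa\in(1,2)$, using uniform Potter bounds for the $Y_t$ tails. Summed over $t\leqslant r_n=(\log n)^2$, this tends to zero since $n\,a_n^{-2}\sim n^{1-2/\kappa}$ and $n^{-1}$ decay polynomially. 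This completes \eqref{AC}.
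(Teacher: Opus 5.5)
Your route is valid in outline and more self-contained than the paper's. The paper obtains regular variation and the spectral tail process of the centred AR(1) by citing \citet[p.~192]{mikosch2024extreme}. Most of its proof then goes to checking by hand that the shift $\delta/(1-\varphi)$ changes neither. For \eqref{AC} it simply invokes \citet[Proposition 10.1.14]{mikosch2024extreme}, valid for any $r_n=o(a_n^2/n)$. It verifies \eqref{alpha_coefs_req} exactly as you do, with geometric mixing, $\ell_n=d\log n$, and $\log n=o(r_n)$. You instead read $\Theta_t$ off the multivariate linear-process tail measure and verify \eqref{AC} directly through $\widetilde X_t=\varphi^t\widetilde X_0+Y_t$, with $Y_t$ independent of $\widetilde X_0$. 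That replaces both citations by an elementary argument, and fixing the single choice $r_n=(\log n)^2$ is enough for Assumption \ref{ass:AC+MX}. Your centring step and your estimate of the $Y_t$-term are fine; for $\kappa\in(1,2)$ even $\mathbb{E}\min\{|Y_t|/a_n,1\}\leqslant\mathbb{E}|Z_0|/((1-\varphi)a_n)$ suffices.

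The step that fails as written is the first term. You implicitly use $\min\{\varphi^t|x|,1\}\leqslant\varphi^t\min\{|x|,1\}$. For $c=\varphi^t<1$ and $|x|\geqslant c^{-1}$ the left side is $1$ and the right side is $c$, so the inequality goes the other way. The error is not just a constant. For fixed $t$,
\[
n\,\mathbb{E}\big[\min\{c|\widetilde X_0|/a_n,1\}\min\{|\widetilde X_0|/a_n,1\}\big]\to\frac{2c}{2-\kappa}+\frac{c^{\kappa}-c}{1-\kappa},
\]
which strictly exceeds $2c/(2-\kappa)$, the limit of your bound, for every $\kappa\in(0,2)\setminus\{1\}$. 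For $\kappa<1$ it is of exact order $\varphi^{t\kappa}\gg\varphi^{t}$ as a function of $t$. The lost mass sits on $\{|\widetilde X_0|>a_n\}$, where $\min\{\varphi^t|\widetilde X_0|/a_n,1\}$ can be as large as $1$ rather than $\varphi^t$.

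The conclusion survives with a corrected bound. Since $\min\{cy,1\}\leqslant c^{\beta}y^{\beta}$ for $y\geqslant0$ and $\beta\in(0,1]$, the first term is at most
\[
\varphi^{t\beta}\,n\,\mathbb{E}\big[(|\widetilde X_0|/a_n)^{\beta}\min\{|\widetilde X_0|/a_n,1\}\big].
\]
By Karamata's theorem this stays bounded in $n$ provided $\max\{\kappa-1,0\}<\beta<\min\{\kappa,1\}$. That interval is nonempty exactly because $\kappa<2$. Summing $\varphi^{t\beta}$ over $t\geqslant k$ then gives a bound that tends to $0$ as $k\to\infty$, uniformly in $r_n$, which is what \eqref{AC} requires.
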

By Proposition \ref{prop:AR1 - RV}, $(X_t)_{t\in\mathbb{Z}}$ satisfies Assumptions \ref{ass_RV}--\ref{ass:AC+MX} whenever $Z_t$ is regularly varying with tail index $\kappa\in(0,1)\cup(1,2)$, and we have the following result. 

\begin{proposition}\label{prop:AR_lim_moments}
    Suppose that $(X_t)_{t\in\mathbb{Z}}$ is given by \eqref{eq:AR1_sec2} and satisfies the conditions of Proposition \ref{prop:AR1 - RV} with tail index $\kappa\in(0,1)\cup(1,2)$ and tail-balance coefficients $p_\pm\in[0,1]$. Assume also that the constant term $\delta=0$ if $\kappa\in(1,2)$. Then
    \begin{equation}\label{eq:weak_AR1}
    T_n\xrightarrow{d}\xi_\kappa/\zeta_{\kappa/2},\qquad n\to\infty.
    \end{equation}
    The random variable $\xi_\kappa/\zeta_{\kappa/2}$ has all moments finite,  and  the skewness parameter of $\xi_\kappa$ is given by $\beta=p_+-p_-$. Moreover, if $\kappa\in(1,2)$, then
    \begin{align}
        \mathbb{E}\left[\dfrac{\xi_\kappa}{\zeta_{\kappa/2}}\right]&=(p_+-p_-)\;\dfrac{\sqrt{1+\varphi}}{\sqrt{1-\varphi}}\;\dfrac{\Gamma((1-\kappa)/2)}{\sqrt{\pi}\;\Gamma(1-\kappa/2)}, \label{eq:limit_mean} \\[0.1in]
        \mathbb{E}\left[\Big(\dfrac{\xi_\kappa}{\zeta_{\kappa/2}}\Big)^2\right]&=\dfrac{1+\varphi}{1-\varphi}\left(1+(p_+-p_-)^2\;\dfrac{\kappa}{2}\;\dfrac{\Gamma((1-\kappa)/2)}{\Gamma(1-\kappa/2)}\right), \label{eq:limit_second_moment}
    \end{align}
 and $\xi_\kappa/\zeta_{\kappa/2}$ has a continuous Lebesgue density.
\end{proposition}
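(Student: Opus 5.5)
For the convergence in \eqref{eq:weak_AR1}, I would combine Proposition \ref{prop:AR1 - RV} with Theorem \ref{theo:inf_var_jointCLT} and the continuous mapping theorem. By Proposition \ref{prop:AR1 - RV}, Assumptions \ref{ass_RV}--\ref{ass:AC+MX} hold.

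If $\kappa\in(1,2)$ and $\delta=0$, the stationary solution has mean $\mathbb{E}[X_t]=\delta/(1-\varphi)=0$, so $\mu=0$. If $\kappa\in(0,1)$, then $\mu=0$ by definition. In both cases Theorem \ref{theo:inf_var_jointCLT} gives $a_n^{-1}(S_n,\gamma_n)\overset{d}{\to}(\xi_\kappa,\zeta_{\kappa/2})$. Since $\mathbb{P}(\zeta_{\kappa/2}>0)=1$, the map $(x,y)\mapsto x/y$ is continuous on the support of the limit, which yields \eqref{eq:weak_AR1}.

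Next I would make the limit law explicit using \eqref{AR_spec_process}. Write $\mathit{\Theta}_t=\mathit{\Theta}_0\varphi^t$ for $t\ge -J$. Then
\[
\sum_j|\mathit{\Theta}_j|^\kappa=\varphi^{-\kappa J}/(1-\varphi^\kappa),
\qquad\text{so}\qquad
Q_t=\mathit{\Theta}_0(1-\varphi^\kappa)^{1/\kappa}\varphi^{t+J}\mathbf{1}(t\ge -J).
\]
After the shift $s=t+J$, the random variable $J$ drops out of all shift-invariant functionals, giving
\[
\sum_tQ_t=\mathit{\Theta}_0\,\frac{(1-\varphi^\kappa)^{1/\kappa}}{1-\varphi},\qquad
\sum_tQ_t^2=\frac{(1-\varphi^\kappa)^{2/\kappa}}{1-\varphi^2}.
\]
Hence the cluster is deterministic up to the sign $\mathit{\Theta}_0$. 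Substituting into \eqref{eq: scale and skew} gives $\beta=p_+-p_-$ immediately.

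The same substitution shows that \eqref{eq:hybrid_chf_Laplace} equals the hybrid transform of $\bigl(c_1 W,\,c_2 V\bigr)$ with $c_1=(1-\varphi^\kappa)^{1/\kappa}/(1-\varphi)$ and $c_2=(1-\varphi^\kappa)^{2/\kappa}/(1-\varphi^2)$. Here $(W,V)$ is the iid-case limit pair, that is, $W=\sum_i\varepsilon_i\Gamma_i^{-1/\kappa}$ compensated when $\kappa>1$, and $V=\sum_i\Gamma_i^{-2/\kappa}$, with $\Gamma_i$ Poisson arrival times and $\varepsilon_i$ iid signs with $\mathbb{P}(\varepsilon_i=\pm1)=p_\pm$. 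Therefore
\[
\frac{\xi_\kappa}{\zeta_{\kappa/2}}\overset{d}{=}\frac{c_1}{\sqrt{c_2}}\,\frac{W}{\sqrt V}=\sqrt{\frac{1+\varphi}{1-\varphi}}\;\frac{W}{\sqrt V}.
\]
This identity explains the factor $\sqrt{(1+\varphi)/(1-\varphi)}$ in \eqref{eq:limit_mean} and $(1+\varphi)/(1-\varphi)$ in \eqref{eq:limit_second_moment}.

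The remaining claims reduce to the iid self-normalized ratio $R=W/\sqrt V$. I would invoke the moment results of \cite{matsui2025moments}, or the classical Logan--Mallows--Rice--Shepp analysis, for the finiteness of all moments and the existence of a continuous density. For the explicit moments I would use $1/\sqrt V=\Gamma(1/2)^{-1}\int_0^\infty s^{-1/2}e^{-sV}ds$ and $1/V=\int_0^\infty e^{-sV}ds$. Then
\[
\mathbb{E}[R]=\pi^{-1/2}\int_0^\infty s^{-1/2}\,\mathbb{E}\bigl[We^{-sV}\bigr]\,ds,
\qquad
\mathbb{E}[R^2]=\int_0^\infty \mathbb{E}\bigl[W^2e^{-sV}\bigr]\,ds,
\]
where $\mathbb{E}[We^{-sV}]$ and $\mathbb{E}[W^2e^{-sV}]$ are obtained by differentiating \eqref{eq:hybrid_chf_Laplace} in $u$ at $u=0$, once and twice respectively. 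In the iid representation the Lévy measure is $d(-y^{-\kappa})$ with sign $\varepsilon$. For the mean,
\[
\mathbb{E}[We^{-sV}]=\mathbb{E}[e^{-sV}]\,(p_+-p_-)\int_0^\infty y\bigl(e^{-sy^2}-1\bigr)\,d(-y^{-\kappa}),
\qquad
\mathbb{E}[e^{-sV}]=\exp\bigl(-\Gamma(1-\kappa/2)s^{\kappa/2}\bigr).
\]
The inner integral is a multiple of $s^{(\kappa-1)/2}$. The outer $s$-integral is then a Gamma integral, which produces the ratio $\Gamma((1-\kappa)/2)/\Gamma(1-\kappa/2)$. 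The second moment is analogous: the second derivative contributes a term $\int y^2e^{-sy^2}d(-y^{-\kappa})$, which yields the leading $1$, plus the squared first-derivative term, which yields the $(p_+-p_-)^2$ term.

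I expect the main obstacle to be justifying the interchange of differentiation and integration and the integrability near $s=0$ and $s=\infty$ for $\kappa\in(1,2)$. This requires bounding $\mathbb{E}[|W|^2e^{-sV}]$ uniformly, which is the delicate part inherited from the moment theory. Keeping the Gamma-function constants correct is also error-prone.
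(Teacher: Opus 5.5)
Your proof is correct. For the moment and density claims it takes a genuinely different route from the paper.

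\textbf{Where the two proofs differ.} Your convergence step coincides with the paper's. After that, the paper simply cites \cite{matsui2025moments}: Corollary 4.2 there gives finiteness of all moments and the continuous density. Remark 5.3 there gives general formulas for the first two moments in terms of $W^{(2)}=(\Vert Q\Vert_2^\kappa/\mathbb{E}[\Vert Q\Vert_2^\kappa])\sum_tQ_t/\Vert Q\Vert_2$. The paper's only AR-specific work is to note that $\Vert Q\Vert_2$ is deterministic, so $W^{(2)}=\mathit{\Theta}_0\sqrt{(1+\varphi)/(1-\varphi)}$. It reads off $\beta$ from the forward representation in Lemma~\ref{lem:sigma_beta}.

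You use the same deterministic-cluster observation but push it to the identity in law $\xi_\kappa/\zeta_{\kappa/2}\overset{d}{=}\sqrt{(1+\varphi)/(1-\varphi)}\,W/\sqrt{V}$. This is stronger: every remaining claim becomes an iid fact, for which \cite{Logan1973self-norm} suffices, and the constants are derived rather than quoted. The price is some analysis.

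\textbf{The interchange is easier than you fear.} No uniform bound is needed.
\begin{itemize}
\item Tonelli gives $\int_0^\infty\mathbb{E}[W^2e^{-sV}]\,ds=\mathbb{E}[W^2/V]$ in $[0,\infty]$.
\item For fixed $s>0$, the map $u\mapsto e^{\psi(u,s)}=\mathbb{E}[e^{iuW-sV}]$ is $C^2$. Here $\psi$ is the exponent in \eqref{eq:hybrid_chf_Laplace} for the iid cluster. You may differentiate under the $dy$-integral: the second $u$-derivative of the integrand is dominated by $y^2e^{-sy^2}$, which is integrable against $\kappa y^{-\kappa-1}dy$ since $\kappa<2$.
\item A Fourier transform of the finite measure $e^{-sV}d\mathbb{P}$ that is twice differentiable at $0$ forces a finite second moment. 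That moment equals $-\partial_u^2e^{\psi(u,s)}|_{u=0}$.
\item The mean then follows by Fubini, since $\mathbb{E}|R|\le(\mathbb{E}R^2)^{1/2}$.
\end{itemize}
So your computation even establishes finiteness of the first two moments on its own.

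\textbf{The constant in the second moment.} When you carry out the constants, the squared first-derivative term produces $(p_+-p_-)^2\frac{\kappa}{2}\bigl(\Gamma((1-\kappa)/2)/\Gamma(1-\kappa/2)\bigr)^2$, with the Gamma ratio \emph{squared}. This agrees with the formula used in the paper's proof. It does not agree with the display \eqref{eq:limit_second_moment} as printed, which drops the square.

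The printed version cannot be right. $\Gamma((1-\kappa)/2)<0$ for $\kappa\in(1,2)$ and tends to $-\infty$ as $\kappa\downarrow1$, so for $p_+\neq p_-$ it would eventually give a negative second moment. Your assertion that the computation ``yields the $(p_+-p_-)^2$ term'' is therefore true only for the corrected formula.
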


Proposition \ref{prop:AR_lim_moments} states that (under suitable conditions) the distribution of $\xi_\kappa/\zeta_{\kappa/2}$ has all moments finite, suggesting that the distribution is thin-tailed. Note that, in contrast to the standard normal distribution, the mean of $\xi_\kappa/\zeta_{\kappa/2}$ may be non-zero, depending on the tail-balance coefficients. % This feature was investigated by \cite{matsui2025moments} who demonstrate that the tails of the density of $\xi_\kappa/\zeta_{\kappa/2}$ (for $\kappa\in(1,2)$) are (asymptotically) of the same order as the tails of a standard normal density, and that certain $\epsilon$-quantiles of $\xi_\kappa/\zeta_{\kappa/2}$ may be smaller (in absolute value) than the $\epsilon$-quantiles of the (suitably scaled) normal distribution.\medskip

In the simulation experiment in Section \ref{sec:Monte_Carlo}, we consider the properties of the standard \cite{diebold1995paring}  test when $X_t$ is heavy tailed. The Diebold--Mariano test is based on the test statistic in \eqref{DM_test} with $\hat{\sigma}_n^2$ some HAC estimator and with critical values given by quantiles from the standard normal distribution. Typically, $\hat{\sigma}_n^2$ is the estimator by \cite{newey_west_1987}, and for the sake of simplicity we here let
\begin{align}\label{eq:T_HAC} 
    T_n^{\mathrm{HAC}}=\frac{S_n}{\sqrt{n}\hat{\sigma}_n}, \qquad \hat{\sigma}^2_n=n^{-1}\sum_{t=1}^nX_t^2+2\sum_{j=1}^{q}w_j\left(n^{-1}\sum_{t=1}^{n-j}X_tX_{t+j}\right),
\end{align}
for a fixed integer $q>0$  and fixed weights $w_j>0$, $j=1,\dots,q$.
We have the following result.
\begin{proposition}\label{prop:AR_HAC}
Suppose that $(X_t)_{t\in\mathbb{Z}}$ is given by \eqref{eq:AR1_sec2} and satisfies the conditions of Proposition \ref{prop:AR1 - RV} with tail index $\kappa\in(0,1)\cup(1,2)$, and that $\delta=0$ if $\kappa\in(1,2)$. With $T_n^{\mathrm{HAC}}$ given by \eqref{eq:T_HAC}, it holds that
\begin{equation}\label{eq:sim_limit_ar}         T_n^{\mathrm{HAC}}\xrightarrow{d}\left(1+2\sum_{j=1}^qw_j\varphi^j\right)^{-1/2}\dfrac{\xi_\kappa}{\zeta_{\kappa/2}},\qquad n\to\infty,
\end{equation}
with $\xi_\kappa/\zeta_{\kappa/2}$ the limiting distribution in \eqref{eq:weak_AR1}.
\end{proposition}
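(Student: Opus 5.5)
The plan is to write $T_n^{\mathrm{HAC}} = T_n \cdot \big(\gamma_n^2/(n\hat\sigma_n^2)\big)^{1/2}$, where $T_n = S_n/\gamma_n$ is the self-normalized statistic of \eqref{self_sum}. By Proposition \ref{prop:AR_lim_moments} we already have $T_n \xrightarrow{d} \xi_\kappa/\zeta_{\kappa/2}$. By Slutsky's lemma it therefore suffices to show
\begin{equation*}
\frac{n\hat\sigma_n^2}{\gamma_n^2} = 1 + 2\sum_{j=1}^q w_j \frac{\sum_{t=1}^{n-j}X_tX_{t+j}}{\sum_{t=1}^n X_t^2} \xrightarrow{\mathbb{P}} 1+2\sum_{j=1}^q w_j\varphi^j .
\end{equation*}
Since $q$ is fixed, this reduces to showing that the sample autocorrelation at each fixed lag $j$ (without mean correction) converges in probability to $\varphi^j$. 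The limit $1+2\sum_j w_j\varphi^j$ is strictly positive because $w_j>0$ and $\varphi\geqslant 0$, so the continuous mapping theorem applies to $x\mapsto x^{-1/2}$.

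For the autocorrelation step I would use the AR(1) recursion directly rather than general heavy-tailed sample-ACF theory. Iterating \eqref{eq:AR1_sec2} gives $X_{t+j} = \varphi^j X_t + \sum_{i=0}^{j-1}\varphi^i(\delta + Z_{t+j-i})$. Hence
\begin{equation*}
\sum_{t=1}^{n-j}X_tX_{t+j} = \varphi^j\sum_{t=1}^{n-j}X_t^2 + \sum_{i=0}^{j-1}\varphi^i\sum_{t=1}^{n-j}X_t(\delta+Z_{t+j-i}).
\end{equation*}
The first term, divided by $\gamma_n^2$, tends to $\varphi^j$: the missing $j$ boundary summands $X_t^2$ are $O_{\mathbb{P}}(1)$ each, while $\gamma_n^2\to\infty$. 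In fact $\gamma_n^2/a_n^2 \xrightarrow{d} \zeta_{\kappa/2}^2>0$ by Theorem \ref{theo:inf_var_jointCLT}, since Assumptions \ref{ass_RV}--\ref{ass:AC+MX} hold by Proposition \ref{prop:AR1 - RV}. For $\kappa\in(0,1)$ with $\delta\neq 0$, note that $\mu=0$ in that theorem, so the result is stated for uncentered $X_t$, which is what we need.

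The main obstacle is showing that the cross terms are negligible, i.e. $a_n^{-2}\sum_t X_t Z_{t+k} \xrightarrow{\mathbb{P}} 0$ for $k\geqslant1$, and likewise for the $\delta$ term. Since $Z_{t+k}$ is independent of $X_t$, the product $X_tZ_{t+k}$ has tails of order roughly $x^{-\kappa}$ up to slowly varying factors; products of independent regularly varying variables are again regularly varying with index $\kappa$ (Breiman-type or convolution results). Consequently its partial sums are of order $n^{1/\kappa}L(n)$, which is $o_{\mathbb{P}}(a_n^2)$ with $a_n^2\approx n^{2/\kappa}$.

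To make this rigorous without a full stable limit for the products, I would take $p<\kappa$ with $p\leqslant1$. Then $\mathbb{E}|\sum_t X_tZ_{t+k}|^p \leqslant n\,\mathbb{E}|X_0|^p\,\mathbb{E}|Z_0|^p$ by subadditivity of $|\cdot|^p$ and independence. Markov's inequality gives $a_n^{-2p}n\to0$, since $2p/\kappa>1$ for $p$ chosen close to $\kappa$. This works for all $\kappa\in(0,2)$ because $p<\kappa$ and $2p>\kappa$ can be achieved simultaneously. The term $\delta\sum_t X_t$ is handled the same way: it is $O_{\mathbb{P}}(n\vee a_n)$, and both $n$ and $a_n$ are $o(a_n^2)$ because $\kappa<2$.

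Combining the pieces: the ratio converges to $\varphi^j$ for each $j$, then Slutsky's lemma and the continuous mapping theorem yield \eqref{eq:sim_limit_ar}.
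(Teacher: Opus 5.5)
Your proof is correct, and its skeleton matches the paper's. Both iterate the AR(1) recursion so that the lag-$j$ sample autocovariance splits into $\varphi^j\sum_t X_t^2$, $j$ boundary terms, and cross terms $\sum_t X_tZ_{t+k}$ with $k\geqslant1$. Both then use $a_n^{-2}\gamma_n^2\xrightarrow{d}\zeta_{\kappa/2}^2>0$ to show that only the first piece survives after dividing by $\gamma_n^2$.

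Where you differ is in how the cross terms are controlled. The paper splits by regime:
\begin{itemize}
\item For $\kappa\in(1,2)$ it uses the ergodic theorem: the products $X_tZ_{t+k}$ are integrable with mean zero, so their average vanishes, and $n/a_n^2\to0$ does the rest.
\item For $\kappa\in(0,1)$ it bounds the sum by $n\max_t|X_t||Z_{t+k}|$ and applies a union bound, together with regular variation of $|X_t||Z_{t+k}|$ obtained via Breiman's lemma. That union bound produces a rate of order $n^{\kappa-1}$, so it only works for $\kappa<1$, which is why the split is needed.
\end{itemize}

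Your subadditivity-plus-Markov bound with $p\in(\kappa/2,\kappa)\cap(0,1]$ covers both regimes in one stroke. It needs only the independence of $X_t$ and $Z_{t+k}$ and finite $p$-th moments, and it avoids tail asymptotics of products altogether. This matters a little: Breiman's lemma in its usual form requires one factor to have a moment of order strictly above $\kappa$, which is unavailable here because $|X_t|$ and $|Z_{t+k}|$ share the index $\kappa$. The paper's conclusion still holds, since only an upper bound on the product tail is needed, but your route makes the issue moot.

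Your decomposition also keeps the terms $\delta\sum_tX_t$. The paper's displayed identity $X_{t+j}=\varphi^jX_t+\sum_{i=1}^{j}\varphi^{j-i}Z_{t+i}$ drops them, even though $\delta\neq0$ is allowed when $\kappa\in(0,1)$. As you note, these terms are $O_{\mathbb{P}}(a_n)=o_{\mathbb{P}}(a_n^2)$ and hence harmless.

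The paper's route buys brevity in the finite-mean case, where the ergodic theorem is immediate. Yours is the more uniform and more complete argument.
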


\section{Critical value construction}\label{sec:subsampling}
In this section we consider subsampling-based techniques for testing the hypothesis $\mathsf{H}_0$ in \eqref{eq:zero_mean} based on the test statistic $T_n$. Subsampling-based inference about the mean of a heavy-tailed time series has been considered in, e.g., \cite{McElroy_Politis_2002_subsampling}, \cite{kokoszka2004subsampling} and \cite{bai2016unified}. While these papers consider the construction of confidence intervals for $\mathbb{E}[X_t]$, our emphasis is on hypothesis testing. In particular, we extend the setting of \citet[Chapter 3.5]{politis2012subsampling} to hypothesis testing for heavy-tailed time series with unknown rate of convergence of the sample mean, and we construct critical values based on subsample versions of the original test statistic. Consider the following algorithm.
\begin{algorithm} \label{algo}
    Given a sample $(X_t)_{t=1,\dots,n}$ of size $n > 1$, choose some integer $b_n\in (0,n)$ and some nominal level $\eta\in(0,1)$. 
    \begin{enumerate}
        \item Compute the statistic given in \eqref{self_sum},
        \begin{equation*}
        T_n=\dfrac{\sum_{t=1}^nX_t}{(\sum_{t=1}^nX_t^2)^{1/2}}.    
        \end{equation*}
        \item With $q_n := n-b_n+ 1$, compute the subsample statistics
        \begin{equation}\label{block_stat}
        T_{i,b_n} := \dfrac{\sum_{t=i}^{i+b_n-1} X_t}{(\sum_{t=i}^{i+b_n-1} X_t^2)^{1/2}},\qquad i=1,\dots,q_n.
        \end{equation}
        \item Define 
        \begin{equation}\label{eq:def_L_n_b}
        L_{n,b_n}(x):= q_n^{-1}\sum_{i=1}^{q_n}\mathbf{1}\big(T_{i,b_n}\leqslant x\big),
        \end{equation}
        and compute the $\eta/2$ and $(1-\eta/2)$ empirical quantiles of $(T_{i,b_n})_{i=1,\dots,q_n}$, i.e.,
        \begin{equation}\label{eq:algo_C_n_b_1}
            C_{n,b_n}(\eta/2)=\inf\{x:L_{n,b_n}(x)\geqslant \eta/2\}
        \end{equation}
        and
        \begin{equation}\label{eq:algo_C_n_b_2}
            C_{n,b_n}(1-\eta/2)=\inf\{x:L_{n,b_n}(x)\geqslant1-\eta/2\}.
        \end{equation}

    \end{enumerate}
    Then a two-sided equal tailed test of level $\eta$ rejects $\mathsf{H}_0$ if $T_n \notin [C_{n,b_n}(\eta/2), C_{n,b_n}(1-\eta/2)]$. 
\end{algorithm}
The algorithm is tailored to work irrespective of $X_t$ having finite variance (Assumption \ref{mix_coef_ass}) or being regularly varying with $\kappa\in(1,2)$ (Assumptions \ref{ass_RV}--\ref{ass:AC+MX}). In particular, the algorithm does not rely on estimation of the long-run variance $\sigma^2$ in Theorem \ref{theo:rio}, and it does not require any knowledge about (or estimation of) the scaling sequence $(a_n)_{n\geqslant1}$ in \eqref{eq:def_a_n}.

The asymptotic validity of the subsampling-based test in Algorithm \ref{algo} is ensured under the strong mixing conditions in Assumptions \ref{mix_coef_ass} and \ref{ass:AC+MX}. In particular, we have the following result.
\begin{theorem}\label{cor:validity_subsampling}
With $(b_n)_{n\geqslant 1}$ provided by Algorithm \ref{algo}, suppose that 
 $b_n\rightarrow\infty$ and $b_n=o(n)$ as $n\to\infty$.   
Given a level $\eta>0$, let $C_{n,b_n}(\eta/2)$ and $C_{n,b_n}(1-\eta/2)$ be given by \eqref{eq:algo_C_n_b_1} and \eqref{eq:algo_C_n_b_2}, respectively. 
With $(\xi_\kappa,\zeta_{\kappa/2})$ the limiting random variables in Theorem \ref{theo:inf_var_jointCLT}, define $C(y)=\inf\{x:\mathbb{P}(\xi_\kappa/\zeta_{\kappa/2}\leqslant x)\geqslant y)\}$ for $y\in(0,1)$.
Under $\mathsf{H}_0$, suppose either that
\begin{enumerate}
    \item   Assumption \ref{mix_coef_ass} holds, or
    \item   Assumptions  \ref{ass_RV}--\ref{ass:AC+MX} hold with $\kappa \in (1,2)$, and $\mathbb{P}(\xi_\kappa/\zeta_{\kappa/2}\leqslant x)$  is continuous in neighborhoods around $C(\eta/2)$ and  $C(1-\eta/2)$.
    
\end{enumerate}
Then  
        \begin{equation*}
            \mathbb{P}\big(T_n\notin [C_{n,b_n}(\eta/2), C_{n,b_n}(1-\eta/2)]\big)\rightarrow \eta,\qquad n\to\infty.
        \end{equation*}

\noindent Under either Assumption \ref{mix_coef_ass} or Assumptions  \ref{ass_RV}--\ref{ass:AC+MX} with $\kappa \in (1,2)$, if $\mathbb{E}[X_t]\neq0$ then
        \begin{equation*}
            \mathbb{P}\big(T_n \notin [C_{n,b_n}(\eta/2), C_{n,b_n}(1-\eta/2)]\big)\rightarrow 1,\qquad n\to\infty.
        \end{equation*}

\end{theorem}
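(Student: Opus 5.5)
The plan follows the standard subsampling template of \citet[Theorem 3.5.1]{politis2012subsampling}, adapted to the self-normalized statistic. Throughout, write $G(x):=\mathbb{P}(T_\infty\leqslant x)$, where $T_\infty$ is the limit from Proposition \ref{Master_Lemma}. In case 1 this is $\mathrm{N}(0,\sigma^2/\mathbb{E}[X_t^2])$, which is continuous everywhere; in case 2 it is $\xi_\kappa/\zeta_{\kappa/2}$, with continuity near the relevant quantiles assumed. Let $G_b(x):=\mathbb{P}(T_{1,b}\leqslant x)$. By stationarity every $T_{i,b_n}$ has the law of $T_{b_n}$, so Proposition \ref{Master_Lemma}(1)--(2) together with $b_n\to\infty$ give $G_{b_n}(x)\to G(x)$ at continuity points. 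The core step is to show that $L_{n,b_n}(x)-G_{b_n}(x)\xrightarrow{\mathbb{P}}0$ for each fixed $x$. Since $\mathbb{E}[L_{n,b_n}(x)]=G_{b_n}(x)$, it suffices to show $\mathrm{var}(L_{n,b_n}(x))\to0$. Write the variance as $q_n^{-2}\sum_{i,j}\mathrm{cov}(\mathbf{1}(T_{i,b_n}\leqslant x),\mathbf{1}(T_{j,b_n}\leqslant x))$. For $|i-j|\leqslant b_n+m$, bound each covariance by $1$; these terms contribute $O((b_n+m)/q_n)=o(1)$ because $b_n=o(n)$. For $|i-j|=b_n+h$ with $h\geqslant 1$, the two indicators are bounded functions of the process on index sets separated by a gap of $h$, so by \eqref{mix_coef} the covariance is at most $\alpha_h$. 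These terms therefore contribute at most $q_n^{-1}\sum_{h=m}^{q_n}\alpha_h$, which tends to $0$ because $\alpha_h\to0$ (Cesàro). Only $\alpha_h\to0$ is needed here, and this holds in case 1 by \eqref{eq:mix_summability}. In case 2 it follows from \eqref{alpha_coefs_req}: since $\alpha_k$ is nonincreasing and $\ell_n\to\infty$, we have $\alpha_{\ell_n}\leqslant (r_n/n)\cdot o(1)\to0$.

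Given pointwise convergence in probability of $L_{n,b_n}$ to $G$ at continuity points, the next step is to pass to quantiles. Under case 2, $G$ is continuous near $C(\eta/2)$ and $C(1-\eta/2)$. Choosing continuity points $C(y)\pm\epsilon$ and using the monotonicity of $L_{n,b_n}$, I would get $C_{n,b_n}(y)\xrightarrow{\mathbb{P}}C(y)$ for $y\in\{\eta/2,1-\eta/2\}$. One caveat is that this requires $G$ to be strictly increasing at those quantiles. I would try to extract that from the continuity assumption together with the fact that $C(y)$ is the generalized inverse; if this is not available, I would follow Politis et al.\ and state the conclusion through the bracketing $G(C(y)-\epsilon)<y\leqslant G(C(y)+\epsilon)$, which is what the argument actually uses. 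Then Slutsky's lemma and $T_n\xrightarrow{d}T_\infty$ give $\mathbb{P}(T_n<C_{n,b_n}(\eta/2))\to G(C(\eta/2)^-)=\eta/2$, and similarly for the upper tail, using continuity of $G$ at the two quantiles. This proves the size statement.

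For power, suppose $\mathbb{E}[X_t]=\mu\neq0$ with finite mean. Here the subsample statistics no longer have a nondegenerate limit, so the argument needs a different ingredient. Because $|T_{i,b}|\leqslant\sqrt{b}$ by Cauchy--Schwarz, the critical values satisfy $|C_{n,b_n}(\cdot)|\leqslant \sqrt{b_n}$. The plan is therefore to show that $|T_n|$ grows faster than $\sqrt{b_n}$. In case 1, $S_n/n\to\mu$ and $\gamma_n/\sqrt n\to(\mathbb{E}X_t^2)^{1/2}$ by the ergodic theorem, so $|T_n|\asymp\sqrt n$. In case 2, $S_n=n\mu+O_P(a_n)$ and $\gamma_n=O_P(a_n)$ by Theorem \ref{theo:inf_var_jointCLT}, so $|T_n|\gtrsim n/a_n$, which is regularly varying of index $1-1/\kappa$. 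The crude bound $\sqrt{b_n}$ is then not enough, since $n^{1-1/\kappa}$ can be smaller than $\sqrt n$. I would refine it: the same variance argument applied to $T_{i,b_n}/(b_n/a_{b_n})$ shows that the subsample quantiles are $O_P(b_n/a_{b_n})$ (or $O_P(\sqrt{b_n})$ in case 1), because the subsample statistics concentrate around $\mathrm{sign}(\mu)\,b_n/\gamma_{b_n}$, and $\gamma_{b_n}/a_{b_n}$ converges to $\zeta_{\kappa/2}$, which is tight. Since $b_n=o(n)$ and $x\mapsto x^{1-1/\kappa}L(x)$ is eventually increasing in ratio, $(b_n/a_{b_n})/(n/a_n)\to0$, so $|T_n|$ exceeds both critical values with probability tending to $1$.

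I expect this last comparison of rates to be the main obstacle. It requires handling the slowly varying factors carefully via Potter bounds, and also controlling the mixed event in which $T_n$ and the quantiles have the same sign.
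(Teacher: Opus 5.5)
Your proposal is correct and follows essentially the paper's route. The size part is the Politis et al.\ template also used in Proposition \ref{prop:validity_Null_and_alt}: the mean of $L_{n,b_n}(x)$ converges by stationarity and Proposition \ref{Master_Lemma}, its variance vanishes by mixing as in Lemma \ref{lem:sub_variance}, and one then passes to quantiles. The heavy-tailed power part is the same rate comparison $(b_n/a_{b_n})/(n/a_n)\to0$ via Potter bounds.

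Your differences are minor refinements rather than a new route. The Cauchy--Schwarz bound $|T_{i,b_n}|\leqslant\sqrt{b_n}$ replaces the paper's citation of Politis et al.\ for power under finite variance. Your rescaled empirical-distribution argument makes explicit the passage from individual subsample statistics to the empirical quantiles, which the paper handles only through the pointwise claim $|T_{i,b_n}|=o_{\mathbb{P}}(|T_n|)$.
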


\begin{remark}\label{rem:limit_T_n}
    Theorem \ref{cor:validity_subsampling} implies that the subsampling-based test in Algorithm \ref{algo} is asymptotically valid for a range of processes. In the infinite variance case ($\kappa\in(1,2)$), Assumptions \ref{ass_RV}--\ref{ass:AC+MX} cover a more general class of processes than those considered in existing work on subsampling-based inference about the mean of heavy-tailed time series. For instance, \cite{McElroy_Politis_2002_subsampling} focus entirely on heavy-tailed linear processes. \cite{kokoszka2004subsampling} establish \eqref{eq:inf_var_jointCLT}, but their result hinges on a technical assumption about point process convergence.\footnote{That is, Condition (15) in  \cite{kokoszka2004subsampling}. To show that this condition holds for a given process, one typically has to argue $(i)$ that $(X_t)_{t\in\mathbb{Z}}$ is reguarly varying (as in our Assumption \ref{ass_RV}), $(ii)$ that an anti-clustering condition similar to our Assumption \ref{ass:AC+MX} holds \citep[Condition (2.8) in Theorem 2.7]{davis1995point}, and $(iii)$ that the limiting point process is non-null.} In addition, they assume that $\mathbb{E}[X_t\boldsymbol{1}(|X_t|\leqslant x)X_s\boldsymbol{1}(|X_s|\leqslant x)]=0$ for all $x>0$, $t\neq s$.\footnote{Their condition (14). The assumption is made in order to prove a so-called small-vanishing-values condition for the case $\kappa\in(1,2)$, c.f. condition (3.2) of Theorem 3.1 in \cite{davis1995point}. This condition is typically hard to verify for a given process, see \citet[Section 2.4]{bartkiewicz2011stable} for a discussion.} This condition rules out a general class of processes, including non-trivial linear processes, and appears unnecessarily restrictive.
\end{remark}

\begin{remark}\label{rem:singularities}
    For $\kappa\in(1,2)$ and $\mathbb{E}[X_t]=0$, Theorem \ref{cor:validity_subsampling} imposes that $\mathbb{P}(\xi_\kappa/\zeta_{\kappa/2}\leqslant x)$ is continuous at $C(\eta/2)$ and  $C(1-\eta/2)$. This kind of assumption is standard in the  literature on subsampling for heavy-tailed time series. Whereas $\xi_\kappa$ is $\kappa$-stable and hence has a continuous distribution, the distribution of the ratio $\xi_\kappa/\zeta_{\kappa/2}$ may have singularities; see, e.g., \cite{Logan1973self-norm}. In the context of  heavy-tailed linear AR(1) processes, as considered in Section \ref{sec:AR_details}, it holds by Proposition \ref{prop:AR_lim_moments} that $\xi_\kappa/\zeta_{\kappa/2}$ does not have any singularity points whenever $\kappa\in(1,2)$.
\end{remark}
Theorem \ref{cor:validity_subsampling} states that the subsampling-based test in Algorithm \ref{algo} is asymptotically valid for \textit{any} block size $b_n=o(n)$ with $b_n\to\infty$. As pointed out by, e.g., \cite{kokoszka2004subsampling}, the finite sample properties of a subsampling procedure depend on $b_n$ chosen by the user. Several data-driven approaches have been considered.  If one is willing to assume a specific process (e.g., AR(1) or GARCH) for $X_t$, one may determine $b_n$ by means of simulations as in \cite[Section 9.4.1]{politis2012subsampling}. The so-called Minimum Volatility Method \citep[Section 9.4.2]{politis2012subsampling} avoids this assumption, but requires a user-chosen grid of candidate values. \cite{kim2021forecast} propose a  
rule depending on estimated tail-index and skewness parameters in the 
iid stable-distribution setting. We adopt the simple rule
\begin{equation}\label{eq:b_n_choice}
    b_n = \lfloor1.5n^{0.5}\rfloor,
\end{equation}
 similar to block sizes used in \citet[Section 5]{bai2016unified} and \citet[Section 3]{zhang2013blocksampling}, for simplicity and to 
maintain consistency with existing work on subsampling for heavy-tailed 
dependent data. The Monte Carlo experiments in Section \ref{sec:Monte_Carlo} indicate that this rule yields reasonable 
finite-sample rejection frequencies across a range of tail indices 
and sample sizes.

\begin{remark}
    Algorithm \ref{algo} provides critical values for an equal-tailed two-sided test. Alternatively, one could consider a symmetric two-sided test where the critical value is given by the $(1-\eta)$ empirical quantile of $|T_{i,b_n}|$, and the test rejects if $|T_n|$ exceeds this critical value. Likewise, one could alternatively consider one-sided alternatives. The asymptotic validity of both such tests follows directly from the results provided in the Supplemental Appendix. 
\end{remark}

\begin{remark}\label{rem:confidence_interval}
As mentioned, most of the existing literature on subsampling-based inference about the mean of a regularly varying time series considers confidence intervals for the mean. In particular, with $\overline{X}_n = n^{-1}\sum_{t=1}^nX_t$ and
    \begin{equation*}
    T_{i,b_n}^c= \dfrac{\sum_{t=i}^{i+b_n-1} (X_t-\overline{X}_n)}{\big(\sum_{t=i}^{i+b_n-1}(X_t-\overline{X}_n)^2\big)^{1/2}},\qquad i=1,\dots,q_n,
\end{equation*}
define 
\begin{equation*}
        L_{n,b_n}^c(x)= q_n^{-1}\sum_{i=1}^{q_n}\mathbf{1}\big(T_{i,b_n}^c\leqslant x\big),\qquad x\in\mathbb{R},
        \end{equation*}
        and
        \begin{equation*}
            C_{n,b_n}^c(y)=\inf\{x:L_{n,b_n}^c(x)\leqslant y\},\qquad y\in(0,1).
        \end{equation*}
   Then an equal-tailed two-sided confidence interval for $\mathbb{E}[X_t]$ is given by
   \begin{equation}\label{eq:subsamling_CI}
       \mathrm{CI}_{b_n,n}=[\overline{X}_n+(\gamma_n/n)C_{n,b_n}^c(\eta/2) \;, \; \overline{X}_n+(\gamma_n/n)C_{n,b_n}^c(1-\eta/2)].
   \end{equation}
   Under the same assumptions as in Theorem \ref{cor:validity_subsampling}, but with $\mathbb{E}[X_t]$ potentially non-zero,
   \begin{equation*}
       \mathbb{P}\left(\mathbb{E}[X_t]\in \mathrm{CI}_{b_n,n}\right)\to 1-\eta ,\qquad n\to\infty,
   \end{equation*}
   that is, the confidence interval has asymptotic correct coverage. %In the context of forecast evaluations, suppose, for instance, that $X_t =(Y_t-f_{t})^2$. Then $\mathrm{CI}_{b_n,n}$ yields a confidence interval for the mean squared forecast error. 
   Note that an alternative to the test in Algorithm \ref{algo} is achieved by rejecting $\mathsf{H}_0$ if $0\notin  \mathrm{CI}_{b_n,n}$ \citet[p.54]{politis2012subsampling}. The validity of such a test follows directly from the technical proofs in the Supplemental Appendix.
\end{remark}

\section{Extensions}\label{sec:extensions}
In this section we consider extensions of the limit theory and related subsampling-based inference considered in Sections \ref{sec:limit_theory}--\ref{sec:subsampling}. In Section \ref{sec:infinite_mean} we consider testing the hypothesis $\mathsf{H}_0$ allowing for alternatives where $\mathbb{E}[|X_t|]=\infty$, that is, the mean may be infinite or may not exist. In Section \ref{sec:multivariate} we consider testing a hypothesis about a mean of a vector with particular attention to tests for superior predictive ability (SPA).

\subsection{Allowing for alternatives with $\mathbb{E}[|X_t|]=\infty$}\label{sec:infinite_mean}
 A straightforward implication of Theorem \ref{theo:inf_var_jointCLT} is that when the tail index\footnote{The case $\kappa\in(0,1)$ may indeed be empirically relevant: \cite{kim2021forecast} find empirical evidence for such values of the tail index when considering loss differentials of realized volatility forecasts under squared loss, but do not address the inferential consequences. Their finding may also be motivated theoretically: \cite{gabaix2006_QJE} argue that stock market returns have tail index of $3$ (the ``cubic law of returns''). Consequently, realized volatility -- constructed by summing squared returns -- should have tail index $3/2$, so that squared realized volatility forecast errors may have tail index $\kappa=3/4<1$.} $\kappa\in(0,1)$,
   \begin{equation*}
            T_n\xrightarrow{d}\xi_\kappa\;/\zeta_{\kappa/2},\qquad n\to\infty.
   \end{equation*}
   Consequently, under alternatives where $\mathbb{E}[|X_t|]=\infty$, the test statistic $T_n$ does not diverge. Noting that the limiting distribution of the subsample statistics are identical to the one of $T_n$, with the critical values $C_{n,b_n}(\eta/2)$ and $C_{n,b_n}(1-\eta/2)$ determined according to Algorithm \ref{algo}, we have that
   \begin{equation}
       \mathbb{P}(T_n\notin[C_{n,b_n}(\eta/2),C_{n,b_n}(1-\eta/2)])\to \eta,\qquad n\to\infty;
   \end{equation}
   see Proposition \ref{subsample_special} in the Supplemental Appendix.
   That is, the two-sided equal-tailed test is inconsistent when $\kappa\in(0,1)$. To the best of our knowledge, this issue has not been considered in the existing body of literature on subsampling techniques for heavy-tailed observations.

To construct a test that delivers non-trivial rejection frequencies under alternatives where $\kappa\in(0,1)$, we consider a modification of the original test statistic $T_n$. Define 
\begin{equation}\label{eq:def_gamma_1}
    \overline{\gamma}_{n}=\sum_{t=1}^n|X_t|.
\end{equation}
Under Assumption \ref{mix_coef_ass} or Assumptions \ref{ass_RV}--\ref{ass:AC+MX}, if $\mathbb{E}[|X_t|]<\infty$, then by the ergodic theorem $\overline{\gamma}_{n}/n\overset{\mathbb{P}}{\to}\mathbb{E}[|X_t|]$, as $n\to \infty$. In Section \ref{sec:proof_infinite_mean} we show that, with $a_n$ given by \eqref{eq:def_a_n}, $\overline{\gamma}_n/a_n $ has a strictly positive $\kappa$-stable limit if $\kappa\in(0,1)$, while $a_n/n\to\infty$ as $n\to\infty$. This implies that $\overline{\gamma}_{n}/n$ diverges for $\kappa\in(0,1)$ and motivates the modified statistic
\begin{equation}\label{eq:modified_statistic}
    \widetilde{T}_n=(\overline{\gamma}_{n}/n)T_n.
\end{equation}
We have the following result.
\begin{proposition}
   
Let the statistic $\widetilde{T}_n$ be given by \eqref{eq:modified_statistic}.
\begin{enumerate}
    \item  Under Assumption \ref{mix_coef_ass} and $\mathbb{E}[X_0]=0$,
    \begin{equation*}
        \widetilde{T}_n\overset{d}{\to} \mathbb{E}[|X_t|]Z,\qquad n\to\infty,
    \end{equation*}
    with $Z$ a Gaussian random variable given in Theorem \ref{theo:rio}.
    \item Under Assumptions \ref{ass_RV}--\ref{ass:AC+MX} with $\kappa\in(1,2)$ and $\mathbb{E}[X_0]=0$,
    \begin{equation*}
        \widetilde{T}_n\overset{d}{\to} \mathbb{E}[|X_t|] \frac{\xi_\kappa}{\zeta_{\kappa/2}},\qquad n\to\infty,
    \end{equation*}
    with $(\xi_\kappa,\zeta_{\kappa/2})$ given in Theorem \ref{theo:inf_var_jointCLT}.
    \item Under Assumption \ref{mix_coef_ass} or Assumptions \ref{ass_RV}--\ref{ass:AC+MX}, if $\mathbb{E}[X_t]\neq0$,
    \begin{equation}
        |\widetilde{T}_n|\overset{\mathbb{P}}{\to}\infty,\qquad n\to\infty.
    \end{equation}
\end{enumerate}
\end{proposition}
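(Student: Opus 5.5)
Parts 1 and 2 follow from Proposition \ref{Master_Lemma} together with Slutsky's lemma. The only extra ingredient needed is $\overline{\gamma}_n/n \xrightarrow{\mathbb{P}} \mathbb{E}[|X_t|]$ whenever $\mathbb{E}[|X_t|]<\infty$.

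\textbf{The ergodic step.} Under either assumption set the process is stationary and strongly mixing, since $\alpha_k\to0$. In case 1 this is immediate from \eqref{eq:mix_summability}. In case 2 it follows from \eqref{alpha_coefs_req}, because $\alpha_k$ is nonincreasing and $\ell_n\to\infty$. Strong mixing implies mixing in the ergodic-theoretic sense, hence ergodicity. Birkhoff's theorem applied to $|X_t|$ then gives $\overline{\gamma}_n/n\to\mathbb{E}[|X_t|]$ almost surely, because $\mathbb{E}[|X_t|]<\infty$ both under finite variance and for $\kappa\in(1,2)$.

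\textbf{Parts 1 and 2.} In part 1, Theorem \ref{theo:rio} gives $n^{-1/2}S_n\xrightarrow{d}Z$ and $n^{-1/2}\gamma_n\xrightarrow{\mathbb{P}}\sqrt{\mathbb{E}[X_t^2]}$. Hence $\widetilde T_n\xrightarrow{d}\mathbb{E}[|X_t|]\,Z/\sqrt{\mathbb{E}[X_t^2]}$. This is the stated limit up to the normalizing constant, which I would reconcile with the statement's convention for $Z$; I read the intended limit as $\mathbb{E}[|X_t|]$ times the part-(1) limit of Proposition \ref{Master_Lemma}. In part 2, Proposition \ref{Master_Lemma}(2) gives $T_n\xrightarrow{d}\xi_\kappa/\zeta_{\kappa/2}$, and Slutsky's lemma finishes the argument.

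\textbf{Part 3, finite mean.} If $\mathbb{E}[|X_t|]<\infty$ and $\mathbb{E}[X_t]\ne0$, Proposition \ref{Master_Lemma}(3) gives $|T_n|\xrightarrow{\mathbb{P}}\infty$. Also $\overline{\gamma}_n/n\to\mathbb{E}[|X_t|]\ge|\mathbb{E}[X_t]|>0$, so the product diverges.

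\textbf{Part 3, $\mathbb{E}[|X_t|]=\infty$.} Under Assumption \ref{ass_RV} this means $\kappa\in(0,1)$. Here I would use Theorem \ref{theo:inf_var_jointCLT} with $\mu=0$. Its proof runs through hybrid characteristic function--Laplace transforms, and I would extend it to the triple
\[
a_n^{-1}\Big(S_n,\;\gamma_n,\;\overline{\gamma}_n\Big)\xrightarrow{d}(\xi_\kappa,\zeta_{\kappa/2},\bar\xi_\kappa).
\]
This is done by running the same blocking and mixing argument on the joint transform $\mathbb{E}[e^{iuS-\lambda_1\sum X_t^2-\lambda_2\sum|X_t|}]$. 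Regular variation and anti-clustering give the limit through the cluster process $Q_t$, with the extra term $-y\lambda_2\sum_t|Q_t|$ in the exponent of \eqref{eq:hybrid_chf_Laplace}, and no centering is needed since $\kappa<1$. The limit $\bar\xi_\kappa$ is positive $\kappa$-stable, so $\bar\xi_\kappa>0$ almost surely.

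It follows that $\widetilde T_n=(a_n/n)\,(\overline{\gamma}_n/a_n)\,T_n$. The middle factor converges jointly with $T_n$ to $(\bar\xi_\kappa,\xi_\kappa/\zeta_{\kappa/2})$. Also $a_n/n\to\infty$, because $a_n\sim L(n)n^{1/\kappa}$ with $1/\kappa>1$. Hence $|\widetilde T_n|\xrightarrow{\mathbb{P}}\infty$, provided $\mathbb{P}(\xi_\kappa=0)=0$. This holds because $\xi_\kappa$ is a nondegenerate stable variable and therefore has a continuous distribution. Nondegeneracy requires $\mathbb{E}|\sum_t Q_t|^\kappa>0$, which I would either verify or impose as implicit in the theorem.

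\textbf{Main obstacle.} The hardest step is the joint convergence including $\overline{\gamma}_n$. A cheaper route avoids the full joint law: bound $\mathbb{P}(|\widetilde T_n|\le M)$ by
\[
\mathbb{P}\big(\overline{\gamma}_n/a_n\le\epsilon\big)+\mathbb{P}\big(|T_n|\le M n/(\epsilon a_n)\big).
\]
This needs only two marginal facts. First, $\overline{\gamma}_n/a_n$ is tight away from zero; this is the same theorem applied to the regularly varying process $(|X_t|)$, which inherits Assumptions \ref{ass_RV}--\ref{ass:AC+MX}. Second, $T_n$ converges to a limit with no atom at $0$, so that $\mathbb{P}(|T_n|\le\delta_n)\to0$ for any $\delta_n\to0$. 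I would use this route.
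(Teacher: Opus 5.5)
Your proposal is correct. Parts 1, 2 and the finite-mean half of Part 3 go as in the paper: the ergodic theorem for $\overline{\gamma}_n/n$, then Proposition \ref{Master_Lemma} and Slutsky. Your reading of Part 1 is also right. The literal limit is $\mathbb{E}[|X_t|]\,Z/\sqrt{\mathbb{E}[X_t^2]}$, i.e.\ $\mathbb{E}[|X_t|]$ times the limit in Proposition \ref{Master_Lemma}(1).

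For $\kappa\in(0,1)$ the route you settle on differs from the paper's argument in Section \ref{sec:infinite_mean} and Lemma \ref{lem:gamma_1}. The paper proves a new joint limit for $a_n^{-1}(S_n,\overline{\gamma}_n)$ by rerunning the blocking and mixing argument of Theorem \ref{theo:inf_var_jointCLT} on $\mathbb{E}[\mathrm{e}^{iua_n^{-1}S_n-\lambda a_n^{-1}\overline{\gamma}_n}]$. From the strictly positive limit and $a_n/n\to\infty$ it deduces $\overline{\gamma}_n/n\xrightarrow{\mathbb{P}}\infty$, and it leaves the combination with $T_n$ implicit. Your union bound instead uses Theorem \ref{theo:inf_var_jointCLT} as a black box for $(|X_t|)$, which does inherit Assumptions \ref{ass_RV}--\ref{ass:AC+MX} with the same $a_n$. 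It needs no joint law of $(T_n,\overline{\gamma}_n)$, which Lemma \ref{lem:gamma_1} does not supply either, since it omits $\gamma_n$.

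You can shorten it further. Since $\overline{\gamma}_n\geqslant\gamma_n$ pathwise, $|\widetilde{T}_n|\geqslant|S_n|/n=(a_n/n)\,|a_n^{-1}S_n|$. So only the marginal $a_n^{-1}S_n\xrightarrow{d}\xi_\kappa$ and $\mathbb{P}(\xi_\kappa=0)=0$ are needed.

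On that last point: you cannot verify $\mathbb{P}(\xi_\kappa=0)=0$ from Assumptions \ref{ass_RV}--\ref{ass:AC+MX}; it must be imposed. Take $X_t=Z_t-Z_{t-1}$ with $(Z_t)$ iid symmetric $\kappa$-stable and $\kappa\in(0,1)$.
\begin{itemize}
\item The process is $1$-dependent and regularly varying.
\item It satisfies \eqref{AC} for any $r_n=o(n)$, because $\mathbb{E}[\min\{|X_0|/a_n,1\}]=O(1/n)$ when $\kappa<1$.
\item Yet $S_n=Z_n-Z_0$, so $\sum_tQ_t=0$ and $\xi_\kappa=0$.
\item Since $\gamma_n/a_n\xrightarrow{d}\zeta_{\kappa/2}>0$ and $\overline{\gamma}_n/a_n$ is tight, $\overline{\gamma}_n/\gamma_n=O_{\mathbb{P}}(1)$.
\end{itemize}
Hence $|\widetilde{T}_n|=(|S_n|/n)(\overline{\gamma}_n/\gamma_n)\xrightarrow{\mathbb{P}}0$, although $\mathbb{E}[|X_t|]=\infty$. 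Part 3 for $\kappa<1$ therefore needs the tacit condition $\mathbb{E}\bigl[\bigl|\sum_tQ_t\bigr|^\kappa\bigr]>0$. The paper's argument relies on it too, so stating it as an explicit hypothesis is the right resolution.
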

Critical values for a test based on $\widetilde{T}_n$ can be computed by the subsampling Algorithm \ref{algo_infinite_mean} in the Supplemental Appendix. Importantly, under $\kappa\in(0,1)$ we show in Section \ref{sec:proof_infinite_mean} that the subsample statistics $\widetilde{T}_{i,{b_n}}$ satisfy 
\begin{equation}\label{eq:modified_subsampling_validity}
    |\widetilde{T}_{i,{b_n}}| = o_\mathbb{P}(|\widetilde{T}_n|),
\end{equation}
under $\mathbb{E}[X_t]\neq0$, that is, the original statistic $\widetilde{T}_n$ diverges faster than the subsample statistics, ensuring non-trivial rejection frequencies. The finite-sample properties of a test based on the modified statistic $\widetilde{T}_n$ and subsampling-based critical values, are investigated in a simulation experiment in Section \ref{sim:infinite_mean} in the Supplemental Appendix. Notably, the simulations indicate that for processes with undefined mean ($\kappa <1$), the standard Diebold--Mariano test has rejection frequencies below the nominal level irrespective of the sample size, whereas the subsampling-based test in Algorithm \ref{algo_infinite_mean} achieves non-trivial rejection frequencies for sufficiently large sample sizes. 

\subsection{Testing for superior predictive ability} \label{sec:multivariate}
  In Section \ref{sec:appendix_multivariate} of the Supplemental Appendix we provide a multivariate version of Theorem \ref{theo:inf_var_jointCLT}. The result, Theorem \ref{theo:mikosch_multivariate}, may be used for testing a hypothesis about the mean vector of a multivariate heavy-tailed stationary time series. We illustrate this with an application to comparing multiple forecast series. Motivated by the work of \cite{hansen2005SPA}, see also \cite{white2000snooping}, suppose that we have $m+1\geqslant 2$ forecast series $(f_{j,t})_{t=1,\dots,n}$, $j=0,1\dots,m$, of a target variable $(Y_t)_{t=1,\dots,n}$ with $m$ fixed. Given a loss function $\mathcal{L}(\cdot , \cdot)$, let $\mathbf{X}_t=(X_{t,1},\dots,X_{t,m})'$ be defined by
\begin{equation}
  X_{t,j}:=\mathcal{L}(Y_t,f_{0,t})-\mathcal{L}(Y_t,f_{j,t}),\qquad j=1,\dots,m,
\end{equation}
that is $X_{t,j}$ is the loss differential between forecast series $0$ and $j$ at time $t$. Given that $\mathbf{X}_t$ has time-invariant mean, the forecast series $0$ is said to have \textit{superior predictive ability} (SPA) over the other $m$ competing forecast series if 
\begin{equation}\label{eq:def_H_SPA}
    \mathsf{H}_{\mathrm{SPA}}:\mathbb{E}[\mathbf{X}_t]\leqslant \mathbf{0}
\end{equation}
is true, where the inequality holds entry-wise. \cite{hansen2005SPA} proposes a test for $\mathsf{H}_{\mathrm{SPA}}$ against
\begin{equation}\label{eq:H_alternative_SPA}
 \mathsf{H}_{\mathrm{non-SPA}}: \mathbb{E}[X_{t,j}]>0 \ \text{for some} \ j\in\{1,\dots,m\}   
\end{equation}
under the assumptions of $\Vert\mathbf{X}_t\Vert$ having bounded variance and suitable strong mixing coefficients of $(\mathbf{X}_t)_{t\in\mathbb{Z}}$, see Assumption \ref{ass:mixing_multivariate} in Section \ref{sec:appendix_multivariate} of the Supplemental Appendix. Here $\Vert \mathbf{x} \Vert = (\sum_{j=1}^mx_j^2)^{1/2}$ for a vector $\mathbf{x}=(x_1,\dots,x_m)'\in\mathbb{R}^m$. In order to account for the possibility of $\Vert\mathbf{X}_t\Vert$ having infinite variance, that is, being regularly varying with index $\kappa\in(1,2)$ (c.f., Assumption \ref{ass_RV_multivariate} in the  Supplemental Appendix), we consider a modification of the test statistic proposed by \cite{hansen2005SPA}. Define
\begin{eqnarray}\label{eq:T_n_multivariate}
    \mathbf{S}_n := \sum_{t=1}^n\mathbf{X}_t,\quad \tilde{\gamma}_{n}:=\sqrt{\sum_{t=1}^n\Vert \mathbf{X}_t\Vert^2},\quad  \mathbf{T}_n := (T_{n,1},\dots,T_{n,m})'= \tilde{\gamma}_{n}^{-1}\mathbf{S}_n,
\end{eqnarray}
and the test statistic
\begin{equation}\label{eq:def_V_SPA}
    V_n^{\mathrm{SPA}}:=\max\left[\max_{j=1,\dots,m}T_{n,j},0\right].
\end{equation}
We have the following result.
\begin{theorem}\label{thm:SPA}
      Suppose that $(\mathbf{X}_t)_{t\in\mathbb{Z}}$ is stationary. Let $ V_n^{\mathrm{SPA}}$ be given by \eqref{eq:def_V_SPA}.
    \begin{itemize}
        \item [1.] Suppose that $\mathsf{H}_{\mathrm{SPA}}$ in \eqref{eq:def_H_SPA} holds.
        \begin{itemize}
            \item [a.] Under Assumption \ref{ass:mixing_multivariate},
            \begin{equation*}
                V_n^{\mathrm{SPA}} \overset{d}{\to} \frac{\max\left[\max_{j=1,\dots,m}\tilde{Z}_j,0\right]}{\sqrt{\mathbb{E}[\Vert \mathbf{X}_0 \Vert^2]}},\qquad n\to\infty.
            \end{equation*}
            where 
            \begin{equation}
                \tilde{Z_{j}}=Z_j\mathbf{1}_{(\mathbb{E}[X_{0,j}]=0)},\qquad j=1,\dots,m,
            \end{equation}
            with $\mathbf{Z} = (Z_1,\dots,Z_m)'$ a Gaussian vector provided by Theorem \ref{thm:rio_multivariate} in the  Appendix.
            \item [b.] Under Assumptions \ref{ass_RV_multivariate}--\ref{AC,MX_multivariate} with tail index $\kappa\in(1,2)$,
            \begin{equation*}
                V_n^{\mathrm{SPA}} \overset{d}{\to} \frac{\max\left[\max_{j=1,\dots,m}\tilde{\xi}_{\kappa,j},0\right]}{\tilde{\zeta}_{\kappa,2}},\qquad n\to\infty.
            \end{equation*}
            where 
            \begin{equation}
              \tilde{\xi}_{\kappa,j}=\xi_{\kappa,j}\mathbf{1}_{(\mathbb{E}[X_{0,j}]=0)}, \qquad j=1,\dots,m,
            \end{equation}
            with $\bm{\xi}_\kappa = (\xi_{\kappa,1},\dots,\xi_{\kappa,m})'$ a $\kappa$-stable random vector and $\tilde{\zeta}_{\kappa,2}^2$ a strictly positive $\kappa/2$-stable random variable both provided by Theorem \ref{theo:mikosch_multivariate} in the  Appendix.
        \end{itemize}
        \item [2.] Suppose that $\mathsf{H}_{\mathrm{non-SPA}}$ in \eqref{eq:H_alternative_SPA} holds. Then under either Assumption \ref{ass:mixing_multivariate} or Assumptions \ref{ass_RV_multivariate}--\ref{AC,MX_multivariate} with tail index $\kappa\in(1,2)$,
            \begin{equation*}
                V_n^{\mathrm{SPA}}\overset{\mathbb{P}}{\to}\infty,\qquad n\to\infty.
            \end{equation*}        
        
    \end{itemize}
\end{theorem}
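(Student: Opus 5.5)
The plan is to reduce everything to the joint limit of $(\mathbf{S}_n - n\boldsymbol{\mu},\tilde{\gamma}_n)$, where $\boldsymbol{\mu}=\mathbb{E}[\mathbf{X}_t]$, and then apply the continuous mapping theorem. The joint limit comes from Theorem \ref{thm:rio_multivariate} in the finite-variance case and from Theorem \ref{theo:mikosch_multivariate} in the heavy-tailed case. Let $c_n=\sqrt{n}$ under Assumption \ref{ass:mixing_multivariate} and $c_n=a_n$ under Assumptions \ref{ass_RV_multivariate}--\ref{AC,MX_multivariate}. By those theorems,
\[
c_n^{-1}\big(\mathbf{S}_n-n\boldsymbol{\mu},\;\textstyle\sqrt{\sum_{t=1}^n\Vert\mathbf{X}_t-\boldsymbol{\mu}\Vert^2}\big)\overset{d}{\to}(\mathbf{Z},\sqrt{\mathbb{E}\Vert\mathbf{X}_0-\boldsymbol\mu\Vert^2})\ \text{ or }\ (\bm{\xi}_\kappa,\tilde\zeta_{\kappa,2}).
\]
The statement involves the uncentred $\tilde\gamma_n$, so I first relate the two versions. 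Since $\tilde\gamma_n^2=\sum\Vert\mathbf{X}_t-\boldsymbol\mu\Vert^2+2\boldsymbol\mu'(\mathbf{S}_n-n\boldsymbol\mu)+n\Vert\boldsymbol\mu\Vert^2$, the following holds:
\begin{itemize}
\item In the finite-variance case, $\tilde\gamma_n^2/n\to\mathbb{E}\Vert\mathbf{X}_0\Vert^2$ by the ergodic theorem.
\item In the heavy-tailed case with $\kappa\in(1,2)$, the correction terms are $O_{\mathbb{P}}(a_n)+O(n)=o_{\mathbb{P}}(a_n^2)$, because $a_n^2/n\to\infty$. Hence $\tilde\gamma_n/a_n\overset{d}{\to}\tilde\zeta_{\kappa,2}$ jointly with the numerator.
\end{itemize}
This matches the normalization in 1a, where the denominator is $\sqrt{\mathbb{E}\Vert\mathbf{X}_0\Vert^2}$ and not the centred second moment. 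The theorem in the Appendix may already be stated for the uncentred square sum; if so, this step is immediate.

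For part 1, I split the coordinates into $J_0=\{j:\mu_j=0\}$ and $J_-=\{j:\mu_j<0\}$. For $j\in J_0$, $T_{n,j}=c_n^{-1}(S_{n,j}-n\mu_j)/(c_n^{-1}\tilde\gamma_n)$, which converges jointly to $Z_j/\sqrt{\mathbb{E}\Vert\mathbf{X}_0\Vert^2}$ or $\xi_{\kappa,j}/\tilde\zeta_{\kappa,2}$. Here I use the continuous mapping theorem together with the strict positivity of the denominator limit. For $j\in J_-$, I write $T_{n,j}=\big(c_n^{-1}(S_{n,j}-n\mu_j)+n\mu_j/c_n\big)/(c_n^{-1}\tilde\gamma_n)$. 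In both regimes $n/c_n\to\infty$, because $a_n\sim L(n)n^{1/\kappa}$ with $1/\kappa<1$. So the numerator tends to $-\infty$ in probability while the denominator is tight and bounded away from zero, giving $T_{n,j}\overset{\mathbb{P}}{\to}-\infty$.

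Next I pass to the maximum. The map $\mathbf{t}\mapsto\max[\max_j t_j,0]$ is continuous. Coordinates in $J_-$ eventually fall below $0$ with probability tending to one, so they can be replaced by $0$, which is exactly the role of the indicator in $\tilde Z_j$ and $\tilde\xi_{\kappa,j}$. Formally, I show that $V_n^{\mathrm{SPA}}-\max[\max_{j\in J_0}T_{n,j},0]\to0$ in probability, on the event that $\max_{j\in J_-}T_{n,j}<0$, and then apply Slutsky's lemma. If $J_0$ is empty, the limit is $0$, which is consistent with the stated formula.

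For part 2, I pick $j$ with $\mu_j>0$. By the same decomposition, $T_{n,j}\ge$ (a tight term) $+\,n\mu_j/\tilde\gamma_n$, and $\tilde\gamma_n=O_{\mathbb{P}}(c_n)$ with $n/c_n\to\infty$. Hence $T_{n,j}\overset{\mathbb{P}}{\to}\infty$, and $V_n^{\mathrm{SPA}}\ge T_{n,j}$ gives the claim. I expect the main obstacle to be the handling of $\tilde\gamma_n$ when $\boldsymbol\mu\neq\mathbf 0$: I need it to stay of order exactly $c_n$ and to converge jointly with the centred numerator to a strictly positive limit. I would settle this through the expansion above, using the ergodic theorem for $n^{-1}\sum\Vert\mathbf X_t\Vert^2$ in the finite-variance case.
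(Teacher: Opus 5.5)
Your proposal is correct and follows essentially the same route as the paper: joint convergence of the centred sum and square sum, and the expansion of $\tilde\gamma_n^2$ showing the uncentred square sum has the same limit because $n/\tilde a_n^2\to0$. It also matches the paper's split into zero- and negative-mean coordinates (the paper's $\mathbf{U}_n$, $\mathbf{W}_n$ and the identity $\varphi(\mathbf{u},v)=\varphi(\mathbf{u}^+,v)$, which is your ``replace negative coordinates by $0$'' step), and its use of a single positive-mean coordinate for Part~2. The differences are that the paper cites Theorem~1 of \cite{hansen2005SPA} for the finite-variance cases instead of your unified argument, and that, since Theorem~\ref{theo:mikosch_multivariate} assumes $\mathbb{E}[\mathbf{X}_0]=\mathbf{0}$, applying it to $\mathbf{X}_t-\bm{\mu}$ requires checking the anti-clustering condition for the centred process, which the paper does by the argument at the end of the proof of Theorem~\ref{theo:inf_var_jointCLT}.
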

As earlier, critical values for a test based on $V_n^{\mathrm{SPA}}$ can be computed by a straightforward multivariate extension of the subsampling procedure in Algorithm \ref{algo}.

\begin{remark}
    Under Assumption \ref{ass_RV_multivariate} the entries of $\mathbf{X}_t$ may have different tail heaviness. In this case, the entries of $\mathbf{S}_n$ are of different stochastic orders whereas the normalizing quantity $\tilde{\gamma}_n$ has stochastic order determined by the \textit{smallest} of the tail indices, namely the index of regular variation of $\Vert \mathbf{X}_t \Vert$, $\kappa>0$. Consequently, the limiting distribution of $\mathbf{T}_n$ in \eqref{eq:T_n_multivariate} may be singular with respect to the Lebesgue measure on $\mathbb{R}^m$ when some entries of  $\mathbf{X}_t$ have zero mean. This may affect the power properties of the test based on the statistic $V_n^{\mathrm{SPA}}$. An alternative test statistic, similar to the one considered by \cite{hansen2005SPA}, could be based on $\mathbf{S}_n \oslash \bm{\gamma}_n$ with $\bm{\gamma}_n:=(\gamma_{n,1},\dots,\gamma_{n,m})'$, $\gamma_{n,j}=\sqrt{\sum_{t=1}^nX_{t,j}^2}$, and $\oslash$ denoting entry-wise division. It remains an open task to derive stable limit results for $(\mathbf{S}_n,\bm{\gamma}_n)$ under potentially heterogeneous tail heaviness across the entries of $\mathbf{X}_t$ and, consequently, with entry-wise different rates of convergence.
\end{remark}

\section{Monte-Carlo simulations}\label{sec:Monte_Carlo}
In this section, we investigate the finite-sample properties of the standard Diebold--Mariano test as well as the test based on subsampling Algorithm \ref{algo} with block length $b_n$ given by \eqref{eq:b_n_choice}. We refer to Section \ref{sim:infinite_mean} in the Supplemental Appendix for additional results for the Diebold--Mariano test and the subsampling based test allowing for $\mathbb{E}[|X_t|]=\infty$ as considered in Section \ref{sec:infinite_mean}. 

Throughout, we consider a DGP for $X_t$ given in terms of the AR(1) process
\begin{equation}\label{eq:sim_AR}
    X_t=\delta+0.5\; X_{t-1}+Z_t,\qquad t=1,\dots,n,
\end{equation}
with iid noise $Z_t$.\footnote{The initial $X_0$ is chosen random using a burn-in period of $10^4$ observations initiated at $X_{-10000}=0$.}\footnote{We also considered different values of the autoregressive coefficient ranging from $0.1$ to $0.9$. The simulation results were qualitatively the same as for the ones reported here with $0.5$.} Initially, we consider two different distributions for $Z_t$: a symmetric stable and an asymmetric stable, denoted \textrm{Stable}$(\kappa,0,1,0)$  and \textrm{Stable}$\left(\kappa,4/5,1,4\tan(\kappa\pi/2)/5\right)$, respectively, with $\kappa\in(1,2)$. Both distributions imply that $\mathbb{E}[Z_t]=0$, and, consequently, $\mathbb{E}[X_t]=2\delta$ so that the hypothesis $\mathsf{H}_0$ in \eqref{eq:zero_mean} is true if and only if $\delta=0$. Moreover,  $Z_t$ is regularly varying with index $\kappa$ for both choices of distribution. The symmetric distribution has tail balance coefficients $p_+=p_-=1/2$, whereas the asymmetric distribution has $p_+=1-p_-=9/10$. By Proposition \ref{prop:AR1 - RV} $(X_t)_{t\in\mathbb{Z}}$ is regularly varying with index $\kappa$. %Moreover, by Proposition \ref{prop:AR_lim_moments}, the limiting distribution $\xi_\kappa/\zeta_{\kappa/2}$ is asymmetric in the latter case. Furthermore, $\xi_\kappa/\zeta_{\kappa/2}$ has a continuous probability density function, and moments of any order are finite for both distributional specifications of $Z_t$.  Notably, for the case $p_+=0.9$, it holds by \eqref{eq:limit_mean} that $\xi_\kappa/\zeta_{\kappa/2}$ has a strictly positive mean.
Moreover, by Proposition \ref{prop:AR_lim_moments}, the limiting distribution of the statistic $T_n$, $\xi_\kappa/\zeta_{\kappa/2}$, has a continuous density and finite moments of all orders for both distributional specifications of $Z_t$. In the asymmetric case ($p_+ = 0.9$), $\xi_\kappa$ has strictly positive skewness parameter $\beta$, and $\xi_\kappa/\zeta_{\kappa/2}$ has strictly negative mean.

In terms of the Diebold--Mariano test, we make use of the statistic
\begin{align}\label{eq:T_DM} 
    T_n^{\mathrm{DM}}=\frac{S_n}{\sqrt{n}\hat{\sigma}_{n,\mathrm{NW}}},
\end{align}
with $S_n$ given by \eqref{eq:def_S_gamma} and with $\hat{\sigma}^2_{n,\mathrm{NW}}$ the standard \cite{newey_west_1987} HAC variance estimator with lag length given by $\lfloor4(n/100)^{2/9}\rfloor$, as considered in \citet[Section IV]{newey_west_1994_automatic}. The test rejects the hypothesis $\mathsf{H}_0$ at a 5\% nominal level if $|T_n^{\mathrm{DM}}|>1.96$. Apart from the centering with respect to the sample mean in $\hat{\sigma}^2_{n,\mathrm{NW}}$ and the increasing lag length, the test statistic is identical to $T_n^{\mathrm{HAC}}$ in \eqref{eq:T_HAC}. Consequently, from Proposition \ref{prop:AR_HAC} we expect that, under $\mathsf{H}_0$, $T_n^{\mathrm{DM}}$ is potentially poorly approximated by a standard normal distribution when $X_t$ has tail index $\kappa <2$.

\subsection{Rejection frequencies under $\mathsf{H}_0$}
Table \ref{tab:RFs under null} contains rejection frequencies under $\mathsf{H}_0$ for the two tests for different values of the tail index $\kappa\in(1,2)$. In the case where the innovation $Z_t$ has a symmetric distribution, Panel A of Table \ref{tab:RFs under null} states that the rejection frequencies of the two tests are quite comparable and fairly close to the nominal level of 5\% across tail indices and sample sizes.

Turning to Panel B, where the innovations exhibit extremal skewness, the Diebold--Mariano test is heavily distorted, with rejection frequencies remarkably above the nominal level for $\kappa \leqslant 1.5$. This holds irrespective of the sample size, and the distortion increases with the tail-heaviness ($\kappa \downarrow 1$). This is likely explained by the asymmetry of the limiting distribution $\xi_\kappa/\zeta_{\kappa/2}$, which suggests that critical values based on the (symmetric) standard normal distribution are unreliable. The subsampling-based test does not perform as well as in the symmetric case, but its rejection frequencies are generally substantially closer to the nominal level than those of the Diebold--Mariano test, and converge toward it as the sample size increases. For tail indices $\kappa \geqslant 1.5$, the rejection frequencies of the subsampling-based test appear reasonable across all sample sizes.

\begin{table}[H]
\centering
\caption{Rejection frequencies (in percent) under $\mathsf{H}_0$ for the Diebold--Mariano test and the subsampling-based test in Algorithm \ref{algo}. The Diebold--Mariano test is based on the statistic $T_n^{\mathrm{DM}}$ in \eqref{eq:T_DM}. The DGP $(X_t)_{t=1,\dots,n}$ is given by  \eqref{eq:sim_AR} with $\delta=0$. In Panel A the iid noise $(Z_t)_{t=1,\dots,n}$ is $\textrm{Stable}(\kappa,0,1,0)$, and in Panel B it is \textrm{Stable}$\left(\kappa,4/5,1,4\tan(\kappa\pi/2)/5\right)$. All tests are carried out at a 5\% nominal level. The rejection frequencies are based on $M=10^4$ Monte-Carlo replications.}\label{tab:RFs under null}
\resizebox{0.95\textwidth}{!}{
\begin{tabular}{l|ccccc|ccccc}
\hline\hline
 & \multicolumn{5}{c|}{Diebold--Mariano} & \multicolumn{5}{c}{Subsampling Algorithm \ref{algo}} \\
 \hline
 $n$: & 1000 & 2000 & 5000 & 10000 & 100000 & 1000 & 2000 & 5000 & 10000 & 100000 \\
 \hline\hline
 \multicolumn{11}{c}{Panel A - \textrm{Stable}$(\kappa,0,1,0)$ noise.} \\
 \hline
 $\kappa=1.1$  & 4.9 & 4.4 & 3.8 & 3.5 & 3.0 & 7.3 & 6.5 & 5.7 & 5.6 & 5.2 \\
 $\kappa=1.3$  & 6.1 & 5.5 & 5.0 & 4.2 & 4.1 & 6.7 & 6.2 & 6.0 & 5.2 & 5.3 \\
 $\kappa=1.5$  & 6.8 & 5.3 & 5.6 & 5.0 & 4.5 & 5.7 & 4.6 & 4.9 & 4.7 & 4.8 \\
 $\kappa=1.7$  & 7.4 & 6.8 & 6.3 & 6.0 & 4.8 & 5.0 & 4.5 & 4.4 & 4.6 & 4.2 \\
 $\kappa=1.9$  & 7.6 & 7.2 & 6.7 & 6.3 & 5.5 & 4.5 & 4.0 & 3.8 & 4.0 & 3.7 \\
 \hline\hline
 \multicolumn{11}{c}{Panel B - \textrm{Stable}$\left(\kappa,4/5,1,4\tan(\kappa\pi/2)/5\right)$ noise.} \\
 \hline
 $\kappa=1.1$ & 71.2 & 71.8 & 71.4 & 71.0 & 70.6 & 50.4 & 45.1 & 37.8 & 31.2 & 15.1 \\
 $\kappa=1.3$ & 35.2 & 36.1 & 34.9 & 34.3 & 33.0 & 15.3 & 12.7 & 9.4 & 8.1 & 6.0 \\
 $\kappa=1.5$ & 18.0 & 18.0 & 17.2 & 16.2 & 16.3 & 8.3 & 7.1 & 5.8 & 5.6 & 4.9 \\
 $\kappa=1.7$ & 11.1 & 10.1 & 9.4 & 9.3 & 8.1 & 6.1 & 5.0 & 4.7 & 4.8 & 4.3 \\
 $\kappa=1.9$ & 8.0 & 6.8 & 7.2 & 6.2 & 6.0 & 4.6 & 3.5 & 3.8 & 3.6 & 4.0 \\
\hline\hline
\end{tabular}
}
\end{table}

\subsection{Rejection frequencies under alternatives}\label{sec:algo_power}
Figure \ref{fig:power_finite_mean} contains rejection frequencies for the subsampling-based test when the null hypothesis is violated. We consider DGPs given by \eqref{eq:sim_AR} with $Z_t$ \textrm{Stable}$(\kappa,0,1,0)$-distributed and $\delta\in(0,2.5]$ such that $\mathbb{E}[X_t]\in(0,5]$. As one would expect, the rejection frequencies are increasing in the sample size. We also note that the rejection frequencies tend to increase quite slowly in the sample size and in $\mathbb{E}[X_t]$ as $\kappa$ tends to one. This is likely explained by the fact that for $\kappa$ close to one, we are near a setting where $X_t$ has undefined mean, as considered in detail in Section \ref{sec:infinite_mean}. When $X_t$ has undefined mean ($\kappa\in(0,1)$) the test statistic $T_n$ does not diverge irrespective of the value of $\delta$, and the subsampling- based test is not expected to provide non-trivial rejection frequencies.
\begin{figure}[H]
    \centering
    \includegraphics[width=1\linewidth]{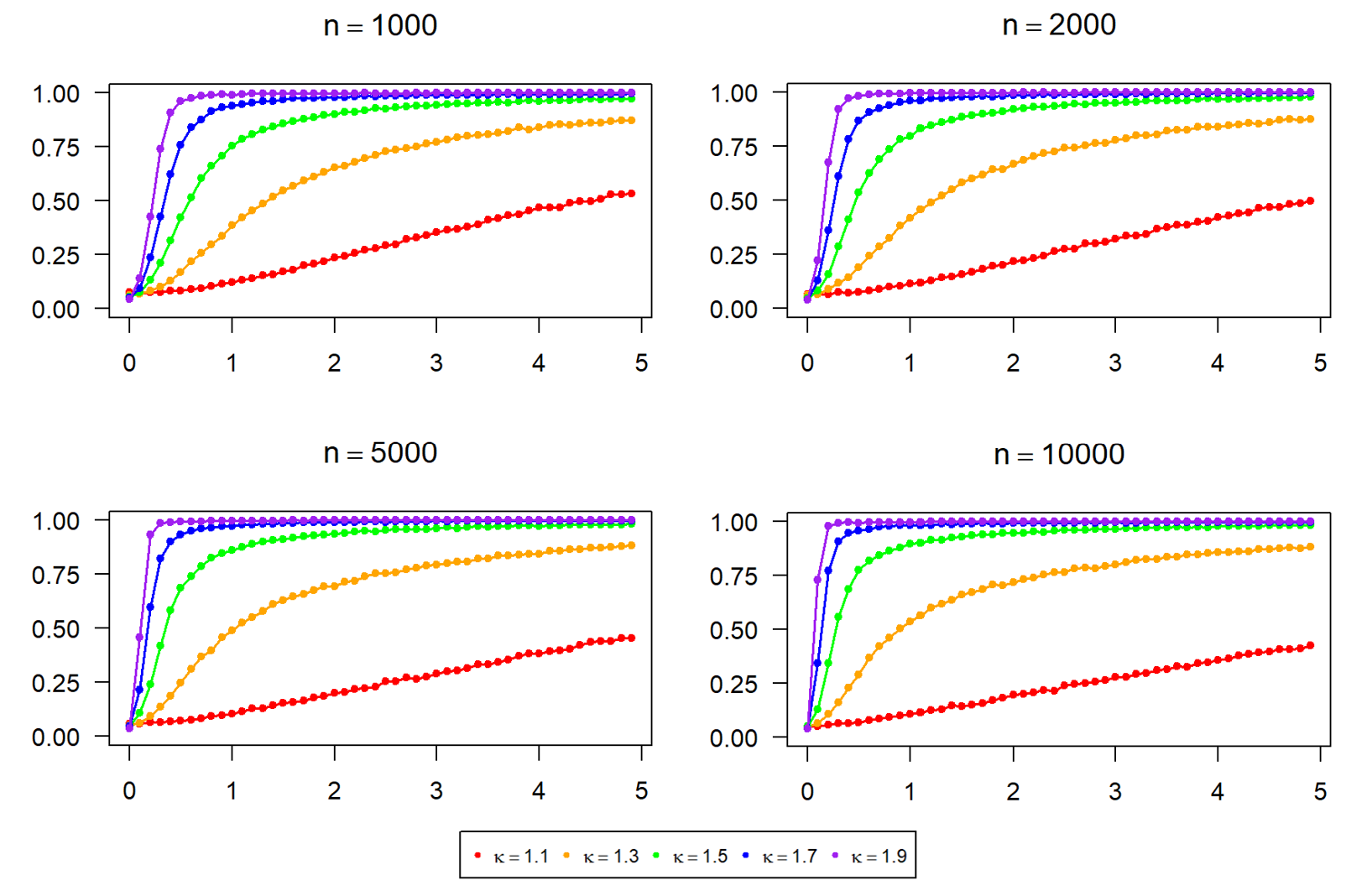}
    \caption{Rejection frequencies under alternatives $\mathbb{E}[X_t]\neq 0$ for the subsampling-based test in Algorithm \ref{algo}. The DGP $(X_t)_{t=1,\dots,n}$ is given by  \eqref{eq:sim_AR} with $\delta\in[0,2.5]$ and with iid noise $(Z_t)_{t=1,\dots,n}$ is $\textrm{Stable}(\kappa,0,1,0)$-distributed, $\kappa\in(1,2)$. The test is carried out at a 5\% nominal level. The x-axes indicate the values of $\mathbb{E}[X_t]=2\delta$. The rejection frequencies are based on $M=10^4$ Monte-Carlo replications.}\label{fig:power_finite_mean}
\end{figure}
We end this section by comparing the performances of the Diebold--Mariano and the subsampling-based tests when $X_t$ has finite variance, i.e., when the Diebold--Mariano test is expected to work. To do so, we modify the DGP so that $Z_t$ has a Student's $t$-distribution with $\kappa > 2$ degrees of freedom: $Z_t$ either has a standard (symmetric) Student's $t$-distribution ($\mathrm{Student}(\kappa)$) or a skewed Student's $t$-distribution ($\mathrm{Skt}(\kappa)$).\footnote{We refer to \cite{fernandez1998bayesian} for details about the skewed Student's $t$-distribution.} For the skewed case, $Z_t$ is scaled such that $V(Z_t) = \kappa/(\kappa-2)$, which is also the variance of the symmetric version. Both distributions are regularly varying with index $\kappa$ and have $\mathbb{E}[Z_t] = 0$. For the symmetric case, the tail balancing coefficients are $p_+ = p_- = 0.5$, and the skewed distribution is chosen such that $p_+ = 1 - p_- = 0.9$. Since $\kappa > 2$, the DGPs obey Theorem \ref{theo:rio}, so that $T_n$ has a Gaussian limiting distribution; we expect the same to hold for the Diebold--Mariano statistic $T_n^{\mathrm{DM}}$.

Figure \ref{fig:Algo vs Newey} contains rejection frequencies for the Diebold--Mariano and subsampling-based tests when the null hypothesis is violated. The first row corresponds to symmetric $t$-distribution and the second to the asymmetric one. For the heaviest-tailed symmetric distributions ($\kappa = 2.1, 2.5$), the Diebold--Mariano test yields larger rejection frequencies than the subsampling-based test, whereas the opposite holds under asymmetric distributions. Under lighter tails ($\kappa \geqslant 3$), the rejection frequencies of the two tests are quite similar, especially for large sample sizes ($n = 5000$). Overall, the two tests exhibit comparable power under finite-variance distributions.

\begin{figure}[H]
    \centering
    \includegraphics[width=1\linewidth]{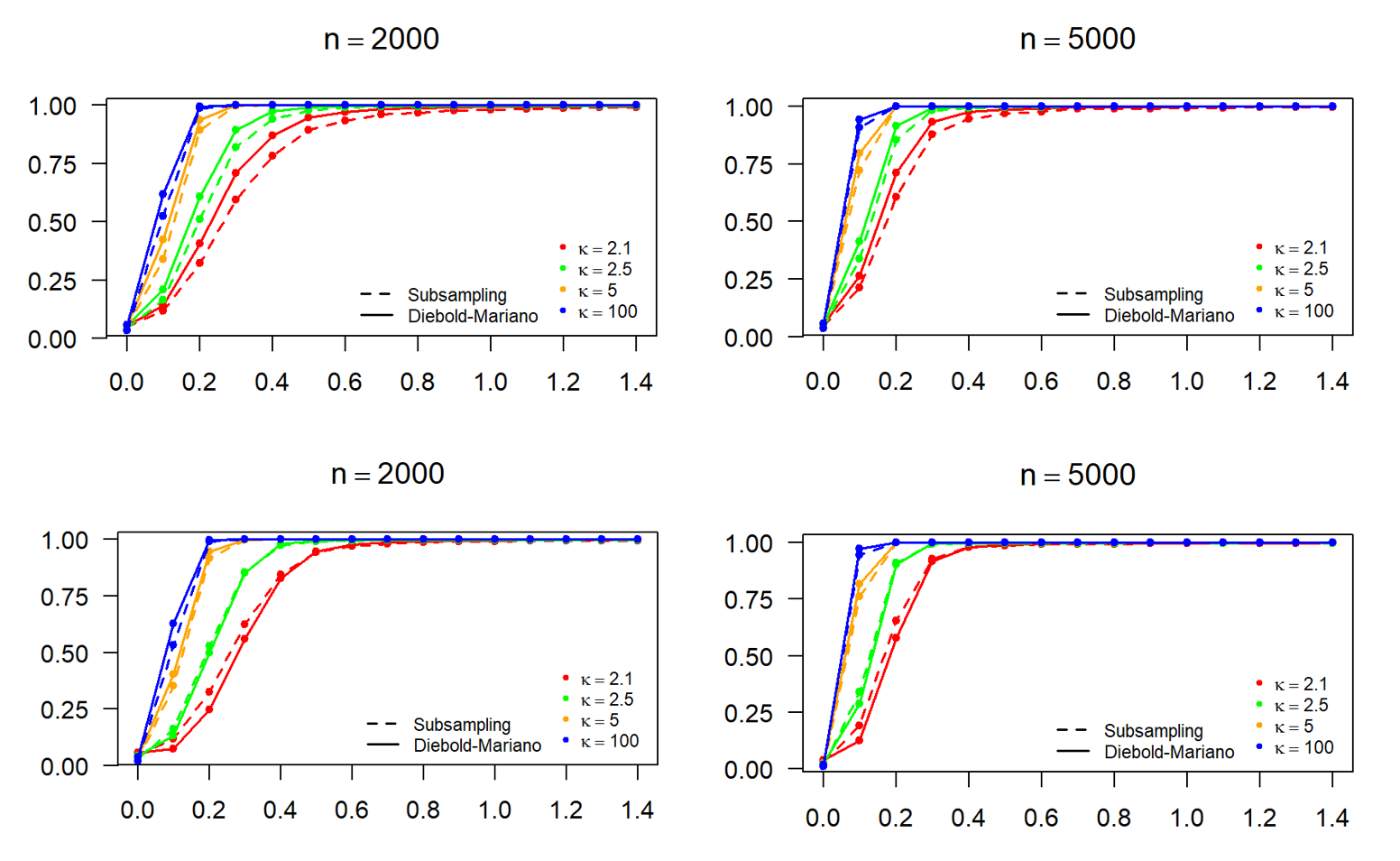}
    \caption{Rejection frequencies under alternatives for tests based on Algorithm \ref{algo} and the Diebold--Mariano test. The tests are carried out at a 5\% nominal level. The DGP is given by \eqref{eq:sim_AR} with $(Z_t)$ iid and $\delta \in [0,0.7]$. In the first row $Z_t$ is $\textrm{Student}(\kappa)$-distributed. In the second row $Z_t$ is \textrm{Skt}$(\kappa)$-distributed. The x-axes indicate the values of $\mathbb{E}[X_t]=2\delta$. The rejection frequencies are based on $M=10^4$ Monte-Carlo replications.}
    \label{fig:Algo vs Newey}
\end{figure}

\section{Empirical illustration: Comparisons of risk forecasts for emerging-market foreign exchange rates }\label{sec:empirical}

Quantification and hedging of foreign currency risk are important for international portfolio management\footnote{See, e.g., \cite{christensen_varneskov_2021_FX} and the references therein.}, and we consider in this section an application of Algorithm \ref{algo} to comparing risk forecasts for foreign exchange (FX) rate returns.  FX rates are typically volatile, and their returns are heavy-tailed. For instance, \cite{ibragimov2013emerging_markets} find that certain emerging-market FX rate returns are regularly varying, and standard confidence intervals for their tail index include values below two; that is, returns may have infinite variance. Among such FX rates is the South Korean won against U.S. dollars (KRWUSD), which we focus on here. 
Let $Y_t$ denote the FX rate log return on day $t$. Given some information set $\mathcal{F}_{t-1}$ available at time $t-1$ denote by $Q_{t-1,\tau}$ the $\tau$-quantile of the conditional distribution of $Y_t$ given $\mathcal{F}_{t-1}$:
\begin{eqnarray}\label{eq:def_VaR}
    \mathbb{P}(Y_{t}\leqslant Q_{t-1,\tau}|\mathcal{F}_{t-1})=\tau,\quad Q_{t-1,\tau}\in \mathcal{F}_{t-1},\quad \tau\in(0,1).
\end{eqnarray}
 In the context of financial risk management, this quantile is the Value-at-Risk (VaR) at risk level $\tau$. Following Example \ref{ex:EPA} one may evaluate the time $t$ VaR forecast using the tick loss function,
\begin{equation}\label{eq:def_tick_loss}
    \mathcal{L}_{\mathrm{tick},\tau}(Y_t,Q_{t-1,\tau})= (\tau - \mathbf{1}\{Y_t-Q_{t-1,\tau}<0\})(Y_t-Q_{t-1,\tau}).
\end{equation}
A common approach to quantifying VaR is by means of a GARCH-type model: suppose that $Y_t = \sigma_t Z_t$, with $\sigma_t > 0$ a function of past returns and $Z_t$ a mean-zero variable independent of $\sigma_t \in \mathcal{F}_{t-1}$. Then the VaR of $Y_t$ is given by $Q_{t-1,\tau}=c_{\tau}\sigma_t$, where the constant $c_\tau$ is the $\tau$-quantile of the distribution of $Z_t$. Consequently, the tick loss is 
\begin{equation*}
   \mathcal{L}_{\mathrm{tick},\tau}(Y_t,Q_{t-1,\tau}) = (\tau - \mathbf{1}
    \{Z_t<c_{\tau}\})(Z_t-c_\tau)\sigma_t.
\end{equation*}
For standard GARCH-type processes, $\sigma_t^2$ can be written as a stochastic recurrence equation (see, e.g., Section \ref{sec:SRE} in the Supplemental Appendix), and under mild conditions, $\sigma_t$ is regularly varying. By Breiman's Lemma, the tick loss $\mathcal{L}_{\mathrm{tick},t}(Y_t,Q_{t-1,\tau})$ is also regularly varying with index $\kappa$, and, likewise, loss differentials may be regularly varying.
Similar to \citet[Section 5]{patton2019_es_var}, we compare the VaR forecasts from ten different methods. The methods can be divided into three groups: rolling window-based estimation, GARCH models, and score-driven models. The rolling window methods, labeled "RW-$H$", compute the risk forecasts using a simple rolling window of $H$ days with $H\in\{125,250,500\}$. Three of the GARCH-based methods, labeled "G-N", "G-Skt", and "G-EDF", compute the forecasts using a standard GARCH(1,1) model with, respectively, the $\mathrm{N}(0,1)$-distribution, a skewed Student's $t$-distribution, and the empirical distribution of the in-sample standardized residuals. The method "G-FZ" computes the risk forecast based on a GARCH(1,1) model estimated under a so-called Fissler-Ziegel loss function; see \citet[Section 2.5.1]{patton2019_es_var}. The three methods based on score-driven models are labeled "FZ-2F", "FZ-1F", and "Hybrid" and are described in \citet[Section 2]{patton2019_es_var}.\footnote{Along the lines of \citet[Section 5]{patton2019_es_var}, the conditional mean of the returns is given by a constant for the GARCH and score-driven models.}  Our dataset\footnote{The data were retrieved from the Board of Governors of the Federal Reserve System webpage, \url{https://www.federalreserve.gov/datadownload/}.} contains daily continuously compounded returns on the KRWUSD rate from 4 January 1990 to 26 December 2025. The first ten years of data (4 January 1990 to 31 December 1999) are used for the estimation of the model parameters. These estimates are kept fixed for the entire out-of-sample period (1 January 2000 to 26 December 2025). As in \citet[Section 5]{patton2019_es_var}, we remove all observations with returns exactly zero, yielding a total of 2225 in-sample observations for parameter estimation and $n=6333$ out-of-sample observations for forecast comparisons. Throughout, the risk level $\tau$ is set to 0.05. 
For any two competing VaR forecasts $Q_{1, t-1,\tau}$ and $Q_{2, t-1,\tau}$, the time $t$ tick loss differential is
\begin{equation}\label{eq:tickloss_diff}
    X_t = \mathcal{L}_{\mathrm{tick},\tau}(Y_t,Q_{1, t-1,\tau}) - \mathcal{L}_{\mathrm{tick},\tau}(Y_t,Q_{2, t-1,\tau}).
\end{equation}
Following Example \ref{ex:EPA}, the two series have equal predictive ability under $\mathsf{H}_0:\mathbb{E}[X_t]=0$, and we test this EPA hypothesis for all pairs of the 10 forecast series. Panel A of Table \ref{tab:EPA_KRWUSD_tickLoss} presents Diebold--Mariano $t$-statistics on the loss differentials\footnote{As in \citet[Section 5]{patton2019_es_var} the statistics are based on the Newey-West estimator with a fixed lag-length of 20 days.}, whereas Panel B presents the statistic $T_n$ on the differentials along with subsampling-based critical values computed by Algorithm \ref{algo} with block size given by \eqref{eq:b_n_choice}. The statistics are computed as "row method minus column method", so that a negative number indicates that the row method has a smaller average loss than the column method. All grey-shaded entries imply a rejection of the EPA hypothesis at a 5\% nominal level. Notably, Panel B has fewer grey-shaded entries than Panel A, that is, the EPA hypothesis is rejected for fewer pairs when relying on the subsampling-based test. For instance, based on the reported Diebold--Mariano statistics and their sign, one would tend to conclude that the forecasts based on the G-FZ method are more accurate than each of the nine competing forecast series. In contrast, applying the subsampling-based test, the EPA hypothesis is not rejected when comparing the G-FZ method with any of the rolling window-based forecasts. A plausible explanation for the discrepancy between the outcomes of the two tests can be found in Figure \ref{fig:hill_KRWUSD_tickloss}, containing the loss differentials between the RW-500 and G-FZ forecasts and a Hill plot of the absolute value of the loss differentials. The loss differentials have an estimated tail index around 1.5, indicating that the series has infinite variance but a well-defined mean. We also note that the loss differential series appears to exhibit extremal skewness. To see this, the solid red lines indicate plus/minus the empirical 99th percentile of the absolute value of the loss differentials. Most of the loss differentials exceeding these thresholds (in magnitude) are positive, suggesting that rolling window-based forecasts are much more prone to exhibiting extreme losses than their counterparts. Following \citet[Section 2.1]{davis2018tail_process_inference} the tail-balance coefficient $p_+$ in \eqref{RV_marg} can be estimated by the proportion of positive loss differentials exceeding a large threshold. Using the aforementioned empirical 99th percentile as the threshold, we obtain an estimate $\hat{p}_+\approx 0.86$. This extremal skewness is also reflected in the reported subsampled equal-tailed critical values, suggesting that the statistic $T_n$, and hence the loss differentials, has a skewed distribution. Recall from the simulation experiments in Section \ref{sec:Monte_Carlo} that in such a setting, even with $n\geqslant5000$, the Diebold--Mariano test tends to over-reject, whereas the subsampling-based test has rejection frequencies closer to its nominal level.

\begin{figure}[H]
    \centering
    \includegraphics[width=1\linewidth]{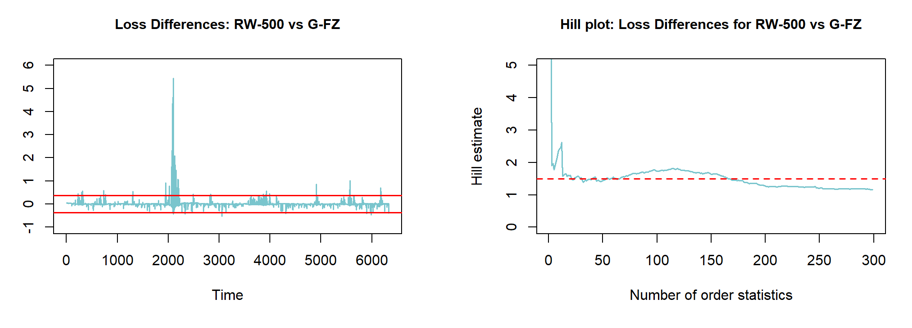}
    \caption{Left: The series of loss differentials when comparing $\mathrm{VaR}$-forecasts based on rolling-window and GARCH approaches, respectively. The red lines indicate plus/minus the empirical 99th percentile of the series in absolute values. Right: Hill plot based on the absolute values, where the dashed red line indicates $\kappa =1.5$.}\label{fig:hill_KRWUSD_tickloss}
\end{figure}

 \newpage

\begin{table}[H]
\centering
\caption{Test statistics for EPA tests}\label{tab:EPA_KRWUSD_tickLoss}
\scriptsize
\setlength{\tabcolsep}{3pt}
\renewcommand{\arraystretch}{1.35}
\resizebox{\linewidth}{!}{%
\begin{tabular}{lcccccccccc}
\multicolumn{11}{c}{\small{A: Diebold--Mariano $t$-statistics, $T_n^{\mathrm{DM}}$}}\\
\toprule
  & \textbf{RW-125} & \textbf{RW-250} & \textbf{RW-500} & \textbf{G-N} & \textbf{G-Skt} & \textbf{G-EDF} & \textbf{FZ-2F} & \textbf{FZ-1F} & \textbf{G-FZ} & \textbf{Hybrid} \\
\midrule
\textbf{RW-125} &   &
    \cellcolor{siggray}{-2.177} &
    \cellcolor{siggray}{-2.774} &
    \cellcolor{siggray}{2.089} &
    1.028 &
    1.879 &
    \cellcolor{siggray}{-3.206} &
    -0.475 &
    \cellcolor{siggray}{2.962} &
    0.879 \\
\textbf{RW-250} &
    \cellcolor{siggray}{2.177} &   &
    \cellcolor{siggray}{-2.470} &
    \cellcolor{siggray}{2.525} &
    1.769 &
    \cellcolor{siggray}{2.382} &
    -0.860 &
    0.853 &
    \cellcolor{siggray}{3.197} &
    1.558 \\
\textbf{RW-500} &
    \cellcolor{siggray}{2.774} &
    \cellcolor{siggray}{2.470} &   &
    \cellcolor{siggray}{3.212} &
    \cellcolor{siggray}{2.691} &
    \cellcolor{siggray}{3.118} &
    1.156 &
    \cellcolor{siggray}{2.181} &
    \cellcolor{siggray}{3.764} &
    \cellcolor{siggray}{2.437} \\
\textbf{G-N} &
    \cellcolor{siggray}{-2.089} &
    \cellcolor{siggray}{-2.525} &
    \cellcolor{siggray}{-3.212} &   &
    \cellcolor{siggray}{-4.860} &
    \cellcolor{siggray}{-3.113} &
    \cellcolor{siggray}{-4.165} &
    \cellcolor{siggray}{-3.471} &
    \cellcolor{siggray}{2.531} &
    \cellcolor{siggray}{-2.474} \\
\textbf{G-Skt} &
    -1.028 &
    -1.769 &
    \cellcolor{siggray}{-2.691} &
    \cellcolor{siggray}{4.860} &   &
    \cellcolor{siggray}{5.629} &
    \cellcolor{siggray}{-3.361} &
    -1.850 &
    \cellcolor{siggray}{4.864} &
    -0.060 \\
\textbf{G-EDF} &
    -1.879 &
    \cellcolor{siggray}{-2.382} &
    \cellcolor{siggray}{-3.118} &
    \cellcolor{siggray}{3.113} &
    \cellcolor{siggray}{-5.629} &   &
    \cellcolor{siggray}{-4.061} &
    \cellcolor{siggray}{-3.103} &
    \cellcolor{siggray}{3.203} &
    -1.781 \\
\textbf{FZ-2F} &
    \cellcolor{siggray}{3.206} &
    0.860 &
    -1.156 &
    \cellcolor{siggray}{4.165} &
    \cellcolor{siggray}{3.361} &
    \cellcolor{siggray}{4.061} &   &
    1.799 &
    \cellcolor{siggray}{5.093} &
    \cellcolor{siggray}{2.555} \\
\textbf{FZ-1F} &
    0.475 &
    -0.853 &
    \cellcolor{siggray}{-2.181} &
    \cellcolor{siggray}{3.471} &
    1.850 &
    \cellcolor{siggray}{3.103} &
    -1.799 &   &
    \cellcolor{siggray}{5.495} &
    \cellcolor{siggray}{2.047} \\
\textbf{G-FZ} &
    \cellcolor{siggray}{-2.962} &
    \cellcolor{siggray}{-3.197} &
    \cellcolor{siggray}{-3.764} &
    \cellcolor{siggray}{-2.531} &
    \cellcolor{siggray}{-4.864} &
    \cellcolor{siggray}{-3.203} &
    \cellcolor{siggray}{-5.093} &
    \cellcolor{siggray}{-5.495} &   &
    \cellcolor{siggray}{-5.005} \\
\textbf{Hybrid} &
    -0.879 &
    -1.558 &
    \cellcolor{siggray}{-2.437} &
    \cellcolor{siggray}{2.474} &
    0.060 &
    1.781 &
    \cellcolor{siggray}{-2.555} &
    \cellcolor{siggray}{-2.047} &
    \cellcolor{siggray}{5.005} &   \\
\midrule
\\[1em] % Adds 1em of extra vertical space
\multicolumn{11}{c}{\small{B: $T_n$ and equal-tailed critical values}} \\
\toprule
  & \textbf{RW-125} & \textbf{RW-250} & \textbf{RW-500} & \textbf{G-N} & \textbf{G-Skt} & \textbf{G-EDF} & \textbf{FZ-2F} & \textbf{FZ-1F} & \textbf{G-FZ} & \textbf{Hybrid} \\
\midrule
\textbf{RW-125} &
  &
$\underset{\{-6.06,\,2.76\}}{-3.24}$ &
$\underset{\{-7.47,\,2.39\}}{-4.55}$ &
$\underset{\{-2.05,\,2.49\}}{2.38}$ &
$\underset{\{-1.98,\,2.07\}}{1.18}$ &
$\underset{\{-1.93,\,2.32\}}{2.16}$ &
\cellcolor{siggray}$\underset{\{-2.28,\,1.58\}}{-3.30}$ &
$\underset{\{-3.43,\,2.43\}}{-0.62}$ &
$\underset{\{-1.98,\,3.61\}}{3.42}$ &
$\underset{\{-4.22,\,3.38\}}{1.01}$ \\
\textbf{RW-250} &
$\underset{\{-2.76,\,6.06\}}{3.24}$ &
  &
$\underset{\{-8.01,\,4.37\}}{-4.24}$ &
$\underset{\{-2.03,\,4.29\}}{3.17}$ &
$\underset{\{-2.07,\,3.52\}}{2.26}$ &
$\underset{\{-2.01,\,3.96\}}{3.02}$ &
$\underset{\{-2.08,\,2.27\}}{-1.02}$ &
$\underset{\{-3.66,\,3.50\}}{1.07}$ &
$\underset{\{-2.11,\,4.89\}}{3.98}$ &
$\underset{\{-3.42,\,5.02\}}{1.89}$ \\
\textbf{RW-500} &
$\underset{\{-2.39,\,7.47\}}{4.55}$ &
$\underset{\{-4.37,\,8.01\}}{4.24}$ &
  &
$\underset{\{-2.04,\,5.14\}}{4.54}$ &
$\underset{\{-1.93,\,4.21\}}{3.91}$ &
$\underset{\{-1.93,\,4.83\}}{4.47}$ &
$\underset{\{-1.96,\,3.02\}}{1.69}$ &
$\underset{\{-2.93,\,5.20\}}{2.85}$ &
$\underset{\{-2.09,\,6.29\}}{5.23}$ &
$\underset{\{-4.60,\,7.02\}}{3.20}$ \\
\textbf{G-N} &
$\underset{\{-2.49,\,2.05\}}{-2.38}$ &
$\underset{\{-4.29,\,2.03\}}{-3.17}$ &
$\underset{\{-5.14,\,2.04\}}{-4.54}$ &
  &
\cellcolor{siggray}$\underset{\{-1.75,\,0.98\}}{-4.42}$ &
\cellcolor{siggray}$\underset{\{-1.56,\,1.44\}}{-2.72}$ &
\cellcolor{siggray}$\underset{\{-1.79,\,1.15\}}{-4.52}$ &
\cellcolor{siggray}$\underset{\{-2.77,\,1.39\}}{-3.95}$ &
\cellcolor{siggray}$\underset{\{-2.02,\,1.89\}}{2.24}$ &
$\underset{\{-2.84,\,1.09\}}{-2.37}$ \\
\textbf{G-Skt} &
$\underset{\{-2.07,\,1.98\}}{-1.18}$ &
$\underset{\{-3.52,\,2.07\}}{-2.26}$ &
$\underset{\{-4.21,\,1.93\}}{-3.91}$ &
\cellcolor{siggray}$\underset{\{-0.98,\,1.75\}}{4.42}$ &
  &
\cellcolor{siggray}$\underset{\{-0.81,\,1.88\}}{5.25}$ &
\cellcolor{siggray}$\underset{\{-1.97,\,1.91\}}{-3.88}$ &
$\underset{\{-2.68,\,1.38\}}{-1.99}$ &
\cellcolor{siggray}$\underset{\{-1.76,\,1.79\}}{4.14}$ &
$\underset{\{-2.36,\,1.22\}}{-0.05}$ \\
\textbf{G-EDF} &
$\underset{\{-2.32,\,1.93\}}{-2.16}$ &
$\underset{\{-3.96,\,2.01\}}{-3.02}$ &
$\underset{\{-4.83,\,1.93\}}{-4.47}$ &
\cellcolor{siggray}$\underset{\{-1.44,\,1.56\}}{2.72}$ &
\cellcolor{siggray}$\underset{\{-1.88,\,0.81\}}{-5.25}$ &
  &
\cellcolor{siggray}$\underset{\{-1.85,\,1.47\}}{-4.51}$ &
\cellcolor{siggray}$\underset{\{-2.75,\,1.31\}}{-3.49}$ &
\cellcolor{siggray}$\underset{\{-2.14,\,1.77\}}{2.75}$ &
$\underset{\{-2.65,\,1.14\}}{-1.67}$ \\
\textbf{FZ-2F} &
\cellcolor{siggray}$\underset{\{-1.58,\,2.28\}}{3.30}$ &
$\underset{\{-2.27,\,2.08\}}{1.02}$ &
$\underset{\{-3.02,\,1.96\}}{-1.69}$ &
\cellcolor{siggray}$\underset{\{-1.15,\,1.79\}}{4.52}$ &
\cellcolor{siggray}$\underset{\{-1.91,\,1.97\}}{3.88}$ &
\cellcolor{siggray}$\underset{\{-1.47,\,1.85\}}{4.51}$ &
  &
\cellcolor{siggray}$\underset{\{-1.40,\,1.63\}}{2.02}$ &
\cellcolor{siggray}$\underset{\{-0.42,\,1.76\}}{5.33}$ &
\cellcolor{siggray}$\underset{\{-1.15,\,1.45\}}{2.62}$ \\
\textbf{FZ-1F} &
$\underset{\{-2.43,\,3.43\}}{0.62}$ &
$\underset{\{-3.50,\,3.66\}}{-1.07}$ &
$\underset{\{-5.20,\,2.93\}}{-2.85}$ &
\cellcolor{siggray}$\underset{\{-1.39,\,2.77\}}{3.95}$ &
$\underset{\{-1.38,\,2.68\}}{1.99}$ &
\cellcolor{siggray}$\underset{\{-1.31,\,2.75\}}{3.49}$ &
\cellcolor{siggray}$\underset{\{-1.63,\,1.40\}}{-2.02}$ &
  &
\cellcolor{siggray}$\underset{\{-1.05,\,3.35\}}{6.56}$ &
$\underset{\{-2.75,\,2.94\}}{2.30}$ \\
\textbf{G-FZ} &
$\underset{\{-3.61,\,1.98\}}{-3.42}$ &
$\underset{\{-4.89,\,2.11\}}{-3.98}$ &
$\underset{\{-6.29,\,2.09\}}{-5.23}$ &
\cellcolor{siggray}$\underset{\{-1.89,\,2.02\}}{-2.24}$ &
\cellcolor{siggray}$\underset{\{-1.79,\,1.76\}}{-4.14}$ &
\cellcolor{siggray}$\underset{\{-1.77,\,2.14\}}{-2.75}$ &
\cellcolor{siggray}$\underset{\{-1.76,\,0.42\}}{-5.33}$ &
\cellcolor{siggray}$\underset{\{-3.35,\,1.05\}}{-6.56}$ &
  &
\cellcolor{siggray}$\underset{\{-4.08,\,0.62\}}{-5.21}$ \\
\textbf{Hybrid} &
$\underset{\{-3.38,\,4.22\}}{-1.01}$ &
$\underset{\{-5.02,\,3.42\}}{-1.89}$ &
$\underset{\{-7.02,\,4.60\}}{-3.20}$ &
$\underset{\{-1.09,\,2.84\}}{2.37}$ &
$\underset{\{-1.22,\,2.36\}}{0.05}$ &
$\underset{\{-1.14,\,2.65\}}{1.67}$ &
\cellcolor{siggray}$\underset{\{-1.45,\,1.15\}}{-2.62}$ &
$\underset{\{-2.94,\,2.75\}}{-2.30}$ &
\cellcolor{siggray}$\underset{\{-0.62,\,4.08\}}{5.21}$ &
  \\
\bottomrule
\end{tabular}}
\end{table}
\vspace{-0.5em}
\begin{singlespace}
\noindent {\footnotesize Notes: Panel A contains $t$-statistics, $T_n^{\mathrm{DM}}$, from the Diebold--Mariano test of EPA, using the tick loss function with risk level $\tau=0.05$, over the out-of-sample period from January 2000 to December 2025, for ten different forecasting methods. Method "RW-$H$" computes the risk forecasts using a simple rolling window of $H$ days. Methods "G-N", "G-Skt", and "G-EDF" compute the forecasts using a standard GARCH(1,1) model with, respectively, the $N(0,1)$-distribution, a skewed Student's $t$-distribution, and the empirical distribution of the in-sample standardized residuals. The "FZ-2F", "FZ-1F", "G-FZ", and "Hybrid" methods are described in \citet[Section 2]{patton2019_es_var}. The statistics are based on the Newey-West HAC estimator with 20 lags. A positive value indicates that the row method has higher average loss than the column method.  Values exceeding 1.96 in absolute value (grey shaded) imply a rejection of the EPA hypothesis at a 5\% nominal level. Values along the main diagonal are all, by convention, identically zero and are omitted. Panel B contains $t$-statistics, $T_n$, for the null of EPA along with equal-tailed  critical values (in brackets) at 5\% nominal level computed according to Algorithm \ref{algo}. Values outside of the interval formed by the critical values (grey shaded) imply a rejection of the EPA hypothesis.} 
\end{singlespace}

 \section{Conclusion}
Many key variables in economics and finance exhibit heavy-tailed behavior, and the forecast errors and loss differentials derived from them can inherit this feature. We show that standard tools for forecast evaluation, including the widely used Diebold--Mariano test, can be severely unreliable in such settings: under empirically plausible heavy-tailed data-generating processes, a test at the 5\% nominal level can reject a true null of equal predictive ability as often as 70\% of the time. The root cause is that the standard normal critical values are justified by a finite-variance central limit theorem that breaks down when the loss differentials have infinite variance.

To address this, we develop a novel stable limit theorem for strongly mixing time series processes under infinite (or potentially undefined) variance, extending the classical central limit theorem to the heavy-tailed setting. Building on this result, we derive the limiting distribution of the test statistic, which is a ratio of dependent non-Gaussian random variables. The quantiles of this ratio can differ markedly from those of a standard normal distribution. We then consider a subsampling-based inferential procedure that is valid regardless of tail-heaviness and requires no prior knowledge of the tail index, the convergence rate of the sample mean, or estimation of long-run variances.

An empirical application to risk forecasts for emerging-market foreign exchange rates, where the loss differentials appear to have infinite variance, illustrates the practical relevance: the standard Diebold--Mariano test and the subsampling-based test yield substantively different conclusions about equal predictive ability. In particular, simple rolling-window risk forecasts are not found to be inferior to forecasts based on more sophisticated GARCH and score-driven models once the heavy-tailed nature of the loss differentials is properly accounted for.

\subsubsection*{Declaration of generative AI and AI-assisted technologies in the manuscript preparation process}
During the preparation of this manuscript, the authors used Opus 4.6 as a language-editing tool to improve clarity and readability, particularly in the Introduction. All content was subsequently reviewed and edited by the authors, who take full responsibility for the manuscript.
\newpage 
\bibliographystyle{ecta}
\bibliography{Mybibliography}

\begin{thebibliography}{45}
\newcommand{\enquote}[1]{``#1''}
\expandafter\ifx\csname natexlab\endcsname\relax\def\natexlab#1{#1}\fi

\bibitem[\protect\citeauthoryear{Bai, Taqqu, and Zhang}{Bai
  et~al.}{2016}]{bai2016unified}
\textsc{Bai, S., M.~S. Taqqu, and T.~Zhang} (2016): \enquote{A unified approach
  to self-normalized block sampling,} \emph{Stochastic Processes and their
  Applications}, 126, 2465--2493.

\bibitem[\protect\citeauthoryear{Barendse and Patton}{Barendse and
  Patton}{2022}]{BarendsePatton2022}
\textsc{Barendse, S. and A.~J. Patton} (2022): \enquote{Comparing predictive
  accuracy in the presence of a loss function shape parameter,} \emph{Journal
  of Business \& Economic Statistics}, 40, 1057--1069.

\bibitem[\protect\citeauthoryear{Bartkiewicz, Jakubowski, Mikosch, and
  Wintenberger}{Bartkiewicz et~al.}{2011}]{bartkiewicz2011stable}
\textsc{Bartkiewicz, K., A.~Jakubowski, T.~Mikosch, and O.~Wintenberger}
  (2011): \enquote{Stable limits for sums of dependent infinite variance random
  variables,} \emph{Probability Theory and Related Fields}, 150, 337--372.

\bibitem[\protect\citeauthoryear{Basrak and Segers}{Basrak and
  Segers}{2009}]{basrak2009regularly}
\textsc{Basrak, B. and J.~Segers} (2009): \enquote{Regularly varying
  multivariate time series,} \emph{Stochastic Processes and their
  Applications}, 119, 1055--1080.

\bibitem[\protect\citeauthoryear{Bingham, Goldie, and Teugels}{Bingham
  et~al.}{1989}]{bingham1989regular}
\textsc{Bingham, N.~H., C.~M. Goldie, and J.~L. Teugels} (1989): \emph{Regular
  Variation}, Cambridge: Cambridge University Press.

\bibitem[\protect\citeauthoryear{Buraczewski, Damek, and Mikosch}{Buraczewski
  et~al.}{2016}]{buraczewski2016stochastic}
\textsc{Buraczewski, D., E.~Damek, and T.~Mikosch} (2016): \emph{Stochastic
  Models with Power-Law Tails: The Equation $X=AX+B$}, Switzerland: Springer.

\bibitem[\protect\citeauthoryear{Carrasco and Chen}{Carrasco and
  Chen}{2002}]{Carrasco_Chen_2002}
\textsc{Carrasco, M. and X.~Chen} (2002): \enquote{Mixing and moment properties
  of various GARCH and stochastic volatility models,} \emph{Econometric
  Theory}, 18, 17--39.

\bibitem[\protect\citeauthoryear{Christensen and Varneskov}{Christensen and
  Varneskov}{2021}]{christensen_varneskov_2021_FX}
\textsc{Christensen, B.~J. and R.~T. Varneskov} (2021): \enquote{Dynamic global
  currency hedging,} \emph{Journal of Financial Econometrics}, 19, 97--127.

\bibitem[\protect\citeauthoryear{Davis}{Davis}{1983}]{Davis1983stablelimit}
\textsc{Davis, R.~A.} (1983): \enquote{Stable limits for partial sums of
  dependent random variables,} \emph{The Annals of Probability}, 11, 262--269.

\bibitem[\protect\citeauthoryear{Davis, Drees, Segers, and
  Warcho^^c5^^82}{Davis et~al.}{2018}]{davis2018tail_process_inference}
\textsc{Davis, R.~A., H.~Drees, J.~Segers, and M.~Warcho^^c5^^82} (2018):
  \enquote{Inference on the tail process with application to financial time
  series modeling,} \emph{Journal of Econometrics}, 205, 508--525.

\bibitem[\protect\citeauthoryear{Davis and Hsing}{Davis and
  Hsing}{1995}]{davis1995point}
\textsc{Davis, R.~A. and T.~Hsing} (1995): \enquote{Point process and partial
  sum convergence for weakly dependent random variables with infinite
  variance,} \emph{The Annals of Probability}, 23, 879--917.

\bibitem[\protect\citeauthoryear{Diebold and Mariano}{Diebold and
  Mariano}{1995}]{diebold1995paring}
\textsc{Diebold, F.~X. and R.~S. Mariano} (1995): \enquote{Comparing predictive
  accuracy,} \emph{Journal of Business \& Economic Statistics}, 13, 253--263.

\bibitem[\protect\citeauthoryear{Doukhan}{Doukhan}{1994}]{Doukhan1994mixing}
\textsc{Doukhan, P.} (1994): \emph{Mixing: Properties and Examples}, Lecture
  Notes in Statistics, New York, NY: Springer-Verlag.

\bibitem[\protect\citeauthoryear{Feller}{Feller}{1971}]{Feller1971}
\textsc{Feller, W.} (1971): \emph{An Introduction to Probability Theory and Its
  Applications}, vol.~2, New York, NY: Wiley.

\bibitem[\protect\citeauthoryear{Fern{\'a}ndez and Steel}{Fern{\'a}ndez and
  Steel}{1998}]{fernandez1998bayesian}
\textsc{Fern{\'a}ndez, C. and M.~F. Steel} (1998): \enquote{On Bayesian
  modeling of fat tails and skewness,} \emph{Journal of the American
  Statistical Association}, 93, 359--371.

\bibitem[\protect\citeauthoryear{Gabaix}{Gabaix}{2009}]{gabaix2009power}
\textsc{Gabaix, X.} (2009): \enquote{Power laws in economics and finance,}
  \emph{Annual Review of Economics}, 1, 255--93.

\bibitem[\protect\citeauthoryear{Gabaix, Gopikrishnan, Plerou, and
  Stanley}{Gabaix et~al.}{2006}]{gabaix2006_QJE}
\textsc{Gabaix, X., P.~Gopikrishnan, V.~Plerou, and H.~E. Stanley} (2006):
  \enquote{Institutional investors and stock market volatility,} \emph{The
  Quarterly Journal of Economics}, 121, 461--504.

\bibitem[\protect\citeauthoryear{Giacomini and Komunjer}{Giacomini and
  Komunjer}{2005}]{giacomini_komunjer2005quantile}
\textsc{Giacomini, R. and I.~Komunjer} (2005): \enquote{Evaluation and
  combination of conditional quantile forecasts,} \emph{Journal of Business \&
  Economic Statistics}, 23, 416--431.

\bibitem[\protect\citeauthoryear{Giacomini and White}{Giacomini and
  White}{2006}]{GiacominiWhite2006}
\textsc{Giacomini, R. and H.~White} (2006): \enquote{Tests of conditional
  predictive ability,} \emph{Econometrica}, 74, 1545--1578.

\bibitem[\protect\citeauthoryear{Hansen}{Hansen}{2005}]{hansen2005SPA}
\textsc{Hansen, P.~R.} (2005): \enquote{A test for superior predictive
  ability,} \emph{Journal of Business \& Economic Statistics}, 23, 365--380.

\bibitem[\protect\citeauthoryear{Hansen and Lunde}{Hansen and
  Lunde}{2005}]{hansen_lunde2005}
\textsc{Hansen, P.~R. and A.~Lunde} (2005): \enquote{A forecast comparison of
  volatility models: does anything beat a GARCH(1,1)?} \emph{Journal of Applied
  Econometrics}, 20, 873--889.

\bibitem[\protect\citeauthoryear{Hansen, Lunde, and Nason}{Hansen
  et~al.}{2011}]{hansen_lunde_nason2011MCS}
\textsc{Hansen, P.~R., A.~Lunde, and J.~M. Nason} (2011): \enquote{The model
  confidence set,} \emph{Econometrica}, 79, 453--497.

\bibitem[\protect\citeauthoryear{Harvey, Leybourne, and Newbold}{Harvey
  et~al.}{1998}]{harvey1998encompassing}
\textsc{Harvey, D.~I., S.~J. Leybourne, and P.~Newbold} (1998): \enquote{Tests
  for forecast encompassing,} \emph{Journal of Business \& Economic
  Statistics}, 16, 254--259.

\bibitem[\protect\citeauthoryear{Ibragimov}{Ibragimov}{1962}]{ibragimov1962clt}
\textsc{Ibragimov, I.~A.} (1962): \enquote{Some limit theorems for stationary
  processes,} \emph{Theory of Probability \& Its Applications}, 7, 349--382.

\bibitem[\protect\citeauthoryear{Ibragimov, Ibragimov, and Kattuman}{Ibragimov
  et~al.}{2013}]{ibragimov2013emerging_markets}
\textsc{Ibragimov, M., R.~Ibragimov, and P.~Kattuman} (2013): \enquote{Emerging
  markets and heavy tails,} \emph{Journal of Banking \& Finance}, 37,
  2546--2559.

\bibitem[\protect\citeauthoryear{Ibragimov, Ibragimov, and Walden}{Ibragimov
  et~al.}{2015}]{ibragimov2015heavytails}
\textsc{Ibragimov, M., R.~Ibragimov, and J.~Walden} (2015): \emph{Heavy-Tailed
  Distributions and Robustness in Economics and Finance}, Lecture Notes in
  Statistics, Heidelberg: Springer.

\bibitem[\protect\citeauthoryear{Kim, Meddahi, and Yamashita}{Kim
  et~al.}{2021}]{kim2021forecast}
\textsc{Kim, J., N.~Meddahi, and M.~Yamashita} (2021): \enquote{Forecast
  comparison tests under fat-tails,} Working paper accessed on 24 November 2025
  via
  \url{https://www.ecb.europa.eu/press/conferences/shared/pdf/20210615_11th_cft/Yamashita_paperT11.pdf}.

\bibitem[\protect\citeauthoryear{Kokoszka and Wolf}{Kokoszka and
  Wolf}{2004}]{kokoszka2004subsampling}
\textsc{Kokoszka, P. and M.~Wolf} (2004): \enquote{Subsampling the mean of
  heavy-tailed dependent observations,} \emph{Journal of Time Series Analysis},
  25, 217--234.

\bibitem[\protect\citeauthoryear{Li, Liao, and Quaedvlieg}{Li
  et~al.}{2021}]{Li2021CSPA}
\textsc{Li, J., Z.~Liao, and R.~Quaedvlieg} (2021): \enquote{Conditional
  superior predictive ability,} \emph{The Review of Economic Studies}, 89,
  843--875.

\bibitem[\protect\citeauthoryear{Logan, Mallows, Rice, and Shepp}{Logan
  et~al.}{1973}]{Logan1973self-norm}
\textsc{Logan, B.~F., C.~L. Mallows, S.~O. Rice, and L.~A. Shepp} (1973):
  \enquote{Limit distributions of self-normalized sums,} \emph{The Annals of
  Probability}, 1, 788--809.

\bibitem[\protect\citeauthoryear{Matsui, Mikosch, and Wintenberger}{Matsui
  et~al.}{2025{\natexlab{a}}}]{matsui2025moments}
\textsc{Matsui, M., T.~Mikosch, and O.~Wintenberger} (2025{\natexlab{a}}):
  \enquote{Moments for self-normalized partial sums,} \emph{Stochastic
  Processes and their Applications}, 198, 104810.

\bibitem[\protect\citeauthoryear{Matsui, Mikosch, and Wintenberger}{Matsui
  et~al.}{2025{\natexlab{b}}}]{matsui2025self}
---\hspace{-.1pt}---\hspace{-.1pt}--- (2025{\natexlab{b}}):
  \enquote{Self-normalized partial sums of heavy-tailed time series,}
  \emph{Stochastic Processes and their Applications}, 190, 104729.

\bibitem[\protect\citeauthoryear{McElroy and Politis}{McElroy and
  Politis}{2002}]{McElroy_Politis_2002_subsampling}
\textsc{McElroy, T. and D.~N. Politis} (2002): \enquote{Robust inference for
  the mean in the presence of serial correlation and heavy-tailed
  distributions,} \emph{Econometric Theory}, 18, 1019--1039.

\bibitem[\protect\citeauthoryear{Mikosch and Wintenberger}{Mikosch and
  Wintenberger}{2024}]{mikosch2024extreme}
\textsc{Mikosch, T. and O.~Wintenberger} (2024): \emph{Extreme Value Theory for
  Time Series: Models with Power-Law Tails}, Heidelberg: Springer.

\bibitem[\protect\citeauthoryear{Newey and West}{Newey and
  West}{1987}]{newey_west_1987}
\textsc{Newey, W.~K. and K.~D. West} (1987): \enquote{A simple, positive
  semi-definite, heteroskedasticity and autocorrelation consistent covariance
  matrix,} \emph{Econometrica}, 55, 703--708.

\bibitem[\protect\citeauthoryear{Newey and West}{Newey and
  West}{1994}]{newey_west_1994_automatic}
---\hspace{-.1pt}---\hspace{-.1pt}--- (1994): \enquote{Automatic lag selection
  in covariance matrix estimation,} \emph{The Review of Economic Studies}, 61,
  631--653.

\bibitem[\protect\citeauthoryear{Odendahl, Rossi, and Sekhposyan}{Odendahl
  et~al.}{2023}]{odendahl2023forecast}
\textsc{Odendahl, F., B.~Rossi, and T.~Sekhposyan} (2023): \enquote{Evaluating
  forecast performance with state dependence,} \emph{Journal of Econometrics},
  237, 105220.

\bibitem[\protect\citeauthoryear{Patton and Timmermann}{Patton and
  Timmermann}{2012}]{Patton_timmermann_2012rationality}
\textsc{Patton, A.~J. and A.~Timmermann} (2012): \enquote{Forecast rationality
  tests based on multi-horizon bounds,} \emph{Journal of Business \& Economic
  Statistics}, 30, 1--17.

\bibitem[\protect\citeauthoryear{Patton, Ziegel, and Chen}{Patton
  et~al.}{2019}]{patton2019_es_var}
\textsc{Patton, A.~J., J.~F. Ziegel, and R.~Chen} (2019): \enquote{Dynamic
  semiparametric models for expected shortfall (and Value-at-Risk),}
  \emph{Journal of Econometrics}, 211, 388--413.

\bibitem[\protect\citeauthoryear{Politis, Romano, and Wolf}{Politis
  et~al.}{1999}]{politis2012subsampling}
\textsc{Politis, D.~N., J.~P. Romano, and M.~Wolf} (1999): \emph{Subsampling},
  New York, NY: Springer.

\bibitem[\protect\citeauthoryear{Samorodnitsky and Taqqu}{Samorodnitsky and
  Taqqu}{1994}]{samorodnitsky1994stable}
\textsc{Samorodnitsky, G. and M.~S. Taqqu} (1994): \emph{Stable Non-Gaussian
  Random Processes: Stochastic Models with Infinite Variance}, Boca Raton, FL:
  Chapman \& Hall.

\bibitem[\protect\citeauthoryear{Sato}{Sato}{1999}]{sato1999}
\textsc{Sato, K.-I.} (1999): \emph{L{\'e}vy Processes and Infinitely Divisible
  Distributions}, Cambridge: Cambridge University Press.

\bibitem[\protect\citeauthoryear{West}{West}{1996}]{west1996asymptotic}
\textsc{West, K.~D.} (1996): \enquote{Asymptotic inference about predictive
  ability,} \emph{Econometrica}, 64, 1067--1084.

\bibitem[\protect\citeauthoryear{White}{White}{2000}]{white2000snooping}
\textsc{White, H.} (2000): \enquote{A reality check for data snooping,}
  \emph{Econometrica}, 68, 1097--1126.

\bibitem[\protect\citeauthoryear{Zhang, Ho, Wendler, and Wu}{Zhang
  et~al.}{2013}]{zhang2013blocksampling}
\textsc{Zhang, T., H.-C. Ho, M.~Wendler, and W.~B. Wu} (2013): \enquote{Block
  sampling under strong dependence,} \emph{Stochastic Processes and their
  Applications}, 123, 2323--2339.

\end{thebibliography}

\appendix

\section{Proof of Theorem \ref{theo:inf_var_jointCLT}}\label{sec:appendix_main_theorem}
Throughout the appendix, with $X= \Re(X)+i\Im(X)$ and $Y=\Re(Y)+i\Im(Y)$ complex-valued random variables, the (pseudo-)covariance between $X$ and $Y$ is given by $\mathrm{cov}(X,Y) = \mathbb{E}[XY]-\mathbb{E}[X]\mathbb{E}[Y]$, with $\mathbb{E}[X]=\mathbb{E}[\Re(X)]+i\mathbb{E}[\Im(X)]$. 
\bigskip

Suppose initially that $\mathbb{E}[X_0]=0$ if $\kappa\in(1,2)$, such that $\mu=0$. Then the result follows by verifying the conditions of \citet[Theorem 3.2]{matsui2025self}. Since $(X_t)_{t\in\mathbb{Z}}$ is regularly varying with index $\kappa\in(0,2)\setminus\{1\}$ (Assumption \ref{ass_RV}) and the process obeys the anti-clustering condition in Assumption \ref{ass:AC+MX}, it remains to show that $(X_t)_{t\in\mathbb{Z}}$ obeys a mixing condition similar to (3.21) in \cite{matsui2025self}. Specifically, with $(a_n)_{n\geqslant 1}$ defined by \eqref{eq:def_a_n}, let
\begin{equation}
    \Psi_n(u,\lambda) :=\mathbb{E}[\exp(ia_n^{-1}uS_n-a_n^{-2}\lambda\gamma_{n}^2)],\qquad (u,\lambda)\in \mathbb{R}\times\mathbb{R}_+,
\end{equation}
with $S_n=\sum_{t=1}^nX_t$ and $\gamma_{n}=(\sum_{t=1}^nX_t^2)^{1/2}$. 
Then it suffices to show that for the integer sequence $r_n \to \infty$ in Assumption \ref{ass:AC+MX} and $k_n= \lfloor n/r_n\rfloor\to\infty$ as $n\to \infty$, it holds for all $(u,\lambda)\in \mathbb{R}\times\mathbb{R}_+$ that
\begin{align}\label{eq:mix_condition}
    \Psi_n(u,\lambda)  =  \left(\mathbb{E}[\exp(ia_n^{-1}uS_{r_n}-a_n^{-2}\lambda\gamma_{r_n}^2)]\right)^{k_n}+o(1).
\end{align}
Without loss of generality, suppose throughout that $n/r_n$ is an integer, such that $k_nr_n=n$. Then we have the following block structure of $(X_t)_{t=1,\dots,n}$:
\begin{align*}
&\underset{\text{Large block 1}}{\underbrace{X_{1},\dots ,X_{r_{n}-\ell _{n}}}%
},~\underset{\text{Small block 1}}{\underbrace{X_{r_{n}-\ell _{n}+1},\dots
,X_{r_{n}}}},~\dots, \underset{\text{Large block }k_{n}}{\underbrace{%
X_{r_{n}\left( k_{n}-1\right) +1},\dots ,X_{r_{n}k_{n}-\ell _{n}}}},\underset%
{\text{Small block }k_{n}}{\underbrace{X_{r_{n}k_{n}-\ell _{n}+1},\dots
,X_{r_{n}k_{n}}}}.
\end{align*}%
We write $A_{j}^{L}$ and $A_{j}^{S}$ for the index sets of the $j$th large and
small blocks, respectively, and define
\begin{equation}
    \Psi_{L,n}(u,\lambda) :=\mathbb{E}\left[ \mathrm{e}^{ iua_{n}^{-1}\sum_{s=1}^{k_{n}}\sum_{t\in
A_{s}^{L}}X_{t}-a_{n}^{-2}\lambda \sum_{s=1}^{k_{n}}\sum_{t\in
A_{s}^{L}}X_{t}^{2} }\right],\qquad (u,\lambda)\in \mathbb{R}\times\mathbb{R}_+.
\end{equation}
To prove \eqref{eq:mix_condition}, it suffices to show that the following holds for all $(u,\lambda)\in \mathbb{R}\times\mathbb{R}_+$:
\begin{align}
   & \Psi_n(u,\lambda) = \Psi_{L,n}(u,\lambda) + o(1), \label{eq:mix_step1} \\
& \Psi_{L,n}(u,\lambda) = \left( \mathbb{E}\left[ \mathrm{e}%
^{ iua_{n}^{-1}\sum_{t\in A_{1}^{L}}X_{t}-a_{n}^{-2}\lambda \sum_{t\in
A_{1}^{L}}X_{t}^{2} }\right] \right) ^{k_{n}} + o(1), \label{eq:mix_step2} \\
& \left( \mathbb{E}\left[ \mathrm{e}%
^{ iua_{n}^{-1}\sum_{t\in A_{1}^{L}}X_{t}-a_{n}^{-2}\lambda \sum_{t\in
A_{1}^{L}}X_{t}^{2} }\right] \right) ^{k_{n}} =  \left(\mathbb{E}[\mathrm{e}^{ia_n^{-1}uS_{r_n}-a_n^{-2}\lambda\gamma_{r_n}^2}]\right)^{k_n}+o(1). \label{eq:mix_step3}
\end{align}

\noindent\textit{Proof of \eqref{eq:mix_step1}:}
Using the inequality $|\exp(ix-y)|\leqslant1$ for $x\in\mathbb{R}$ and $y>0$, we have that
\begin{align*}
    &\left\vert \Psi_n(u,\lambda) - \Psi_{L,n}(u,\lambda)\right\vert \\
    & = \left\vert \mathbb{E}\left[\mathrm{e}^{ iua_{n}^{-1}\sum_{s=1}^{k_{n}}\sum_{t\in
A_{s}^{L}}X_{t}-a_{n}^{-2}\lambda \sum_{s=1}^{k_{n}}\sum_{t\in
A_{s}^{L}}X_{t}^{2} }\left(\mathrm{e}^{ iua_{n}^{-1}\sum_{s=1}^{k_{n}}\sum_{t\in
A_{s}^{S}}X_{t}-a_{n}^{-2}\lambda \sum_{s=1}^{k_{n}}\sum_{t\in
A_{s}^{S}}X_{t}^{2} } - 1\right)     \right]   \right\vert \\
&\leqslant \mathbb{E}\left[\left\vert \mathrm{e}^{ iua_{n}^{-1}\sum_{s=1}^{k_{n}}\sum_{t\in
A_{s}^{S}}X_{t}-a_{n}^{-2}\lambda \sum_{s=1}^{k_{n}}\sum_{t\in
A_{s}^{S}}X_{t}^{2} } - 1\right\vert     \right].   
\end{align*}
Since $\vert \mathrm{e}^{ iua_{n}^{-1}\sum_{s=1}^{k_{n}}\sum_{t\in
A_{s}^{S}}X_{t}-a_{n}^{-2}\lambda \sum_{s=1}^{k_{n}}\sum_{t\in
A_{s}^{S}}X_{t}^{2} } - 1\vert \leqslant 2$, \eqref{eq:mix_step1} holds by dominated convergence provided that 
\begin{equation}\label{eq:conv_small_sums}
    \left(a_{n}^{-1}\sum_{s=1}^{k_{n}}\sum_{t\in
A_{s}^{S}}X_{t}, a_{n}^{-2}\sum_{s=1}^{k_{n}}\sum_{t\in
A_{s}^{S}}X_{t}^{2}  \right) \overset{\mathbb{P}}{\to} (0,0),\qquad  n \to \infty.
\end{equation}
From \citet[Proof of Theorem 3.3]{matsui2025self} it holds that under the anti-clustering condition in \eqref{AC}, 
\begin{equation}\label{eq:block_conv1}
 \left( \mathbb{E}[\mathrm{e}^{ia_n^{-1}uS_{r_n}-a_n^{-2}\lambda\gamma_{r_n}^2}] \right)^{k_n} \to \mathbb{E}\big[\mathrm{e}^{iu\xi_\kappa-\lambda\zeta_{\kappa/2}^2}\big], \qquad n\to\infty,
\end{equation}
with $\mathbb{E}\big[\mathrm{e}^{iu\xi_\kappa-\lambda\zeta_{\kappa/2}^2}\big]$ given in \eqref{eq:hybrid_chf_Laplace}.
For later use, we provide an intuitive explanation of why this result holds. 
Consider the following change of expression
\begin{align}
    \left(\mathbb{E}[\mathrm{e}^{ia_n^{-1}uS_{r_n}-a_n^{-2}\lambda\gamma_{r_n}^2}]\right)^{k_n} 
    &= \left(1+k_n^{-1} \left( k_n \big(\mathbb{E}[\mathrm{e}^{ia_n^{-1}uS_{r_n}-a_n^{-2}\lambda\gamma_{r_n}^2}]-1\big)\right)
    \right)^{k_n} \notag \\
    &=:\left(1+k_n^{-1} \psi_n(u,\lambda)\right)^{k_n}.
\end{align}
Under the anti-clustering condition, it follows from \citet[Proof of Theorem 3.3]{matsui2025self} that
\begin{equation}\label{eq:conv_psi_n}
\psi_n(u,\lambda) \to \log \mathbb{E}\big[\mathrm{e}^{iu\xi_\kappa-\lambda\zeta_{\kappa/2}^2}\big],\qquad n\to \infty.    
\end{equation}
Then the well-known formula $\lim_{k\to\infty}(1+k^{-1}c)^k\to e^c,\,c>0$ implies \eqref{eq:block_conv1} as $n\to\infty$. Now, since $\ell_n = o(r_n)$, the anti-clustering condition \eqref{AC} also holds with $r_n$ replaced by $\ell_n$, we also have that 
\[
\left( \mathbb{E}[\mathrm{e}^{ia_n^{-1}uS_{\ell_n}-a_n^{-2}\lambda\gamma_{\ell_n}^2}] \right)^{\tilde{k}_n} \to \mathbb{E}\big[\mathrm{e}^{iu\xi_\kappa-\lambda\zeta_{\kappa/2}^2}\big], \qquad n\to\infty,
\]
with $\tilde{k}_n = n/\ell_n$. Using that $k_n/\tilde{k}_n = \ell_n/r_n=o(1)$, we obtain
\begin{align*}
    \left( \mathbb{E}[\mathrm{e}^{ia_n^{-1}uS_{\ell_n}-a_n^{-2}\lambda\gamma_{\ell_n}^2}] \right)^{k_n} &= \left( \mathbb{E}[\mathrm{e}^{ia_n^{-1}uS_{\ell_n}-a_n^{-2}\lambda\gamma_{\ell_n}^2}] \right)^{\tilde{k}_n\cdot k_n/\tilde{k}_n} \notag \\ 
    &= \left(\mathbb{E}\big[\mathrm{e}^{iu\xi_\kappa-\lambda\zeta_{\kappa/2}^2}\big]+o(1)\right)^{\ell_n/r_n} =1+o(1).
\end{align*}
Noting that the small blocks are of length $\ell_n$, by stationarity, this implies that
\begin{equation}\label{eq:conv_indep_small_blocks}
     \left( \mathbb{E}\left[\mathrm{e}^{ia_n^{-1}u\sum_{t\in A_1^S}X_t-a_n^{-2}\lambda\sum_{t\in A_1^S}X_t^2}\right] \right)^{k_n} = 1+o(1).
\end{equation}
Consequently, \eqref{eq:conv_small_sums} now follows provided that
\begin{equation}\label{eq:small_blocks}
    \mathbb{E}\left[ \mathrm{e}^{ iua_{n}^{-1}\sum_{s=1}^{k_{n}}\sum_{t\in
A_{s}^{S}}X_{t}-a_{n}^{-2}\lambda \sum_{s=1}^{k_{n}}\sum_{t\in
A_{s}^{S}}X_{t}^{2} }\right]  =  \left( \mathbb{E}\left[\mathrm{e}^{ia_n^{-1}u\sum_{t\in A_1^S}X_t-a_n^{-2}\lambda\sum_{t\in A_1^S}X_t^2}\right] \right)^{k_n} + o(1).
\end{equation}
To show \eqref{eq:small_blocks}, we note that
\begin{align*}
  & \left\vert \mathbb{E}\left[ \mathrm{e}^{ iua_{n}^{-1}\sum_{s=1}^{k_{n}}\sum_{t\in
A_{s}^{S}}X_{t}-a_{n}^{-2}\lambda \sum_{s=1}^{k_{n}}\sum_{t\in
A_{s}^{S}}X_{t}^{2} }\right]  -  \left( \mathbb{E}\left[\mathrm{e}^{ia_n^{-1}u\sum_{t\in A_1^S}X_t-a_n^{-2}\lambda\sum_{t\in A_1^S}X_t^2}\right] \right)^{k_n} \right\vert \\
 & = \left\vert \mathbb{E}\left[ \prod_{s=1}^{k_n} \mathrm{e}^{ iua_{n}^{-1}\sum_{t\in
A_{s}^{S}}X_{t}-a_{n}^{-2}\lambda \sum_{t\in
A_{s}^{S}}X_{t}^{2} }\right]  -  \left( \mathbb{E}\left[\mathrm{e}^{ia_n^{-1}u\sum_{t\in A_1^S}X_t-a_n^{-2}\lambda\sum_{t\in A_1^S}X_t^2}\right] \right)^{k_n} \right\vert \\
 & \leqslant \sum_{s=2}^{k_n}\left\vert \mathrm{cov}\left(\prod_{j=1}^{s-1}\mathrm{e}^{ iua_{n}^{-1}\sum_{t\in
A_{j}^{S}}X_{t}-a_{n}^{-2}\lambda \sum_{t\in
A_{j}^{S}}X_{t}^{2} } , \mathrm{e}^{ iua_{n}^{-1}\sum_{t\in
A_{s}^{S}}X_{t}-a_{n}^{-2}\lambda \sum_{t\in
A_{s}^{S}}X_{t}^{2} } \right) \right\vert \\
& \leqslant (k_n-1)\sup_{A\subset \left\{
-n,\dots ,-1,0\right\} }\left\vert \mathrm{cov}\left( \mathrm{e}
^{ia_{n}^{-1}u\sum_{t\in A} X_{t}-a_{n}^{-2}\lambda \sum_{t\in A}X_{t}^{2}},
\mathrm{e}^{ia_{n}^{-1}u\sum_{t=r_n-\ell_n}^{r_{n}}X_{t}-a_{n}^{-2}\lambda \sum_{t=r_n-\ell_n }^{r_{n}}X_{t}^{2}}\right) \right\vert \\
&  \leqslant 4(k_n - 1)\alpha_{r_n-\ell_n},
\end{align*}
where the first inequality follows by Lemma \ref{lem:product_inequality}, the second inequality follows by stationarity and using that the small blocks are separated by $r_n - \ell_n$ observations. The third inequality follows by Lemma \ref{lem:mixing_bound}. Now \eqref{eq:small_blocks} follows by \eqref{alpha_coefs_req}, using that $k_n=n/r_n$ and $\ell_n=o(r_n)$. \\
\smallskip
\noindent\textit{Proof of \eqref{eq:mix_step2}:} The result follows by arguments similar to the ones given to show \eqref{eq:small_blocks} above. In particular, we have that
\begin{align*}
    & \left\vert \Psi_{L,n}(u,\lambda) - \left( \mathbb{E}\left[ \mathrm{e}
^{ iua_{n}^{-1}\sum_{t\in A_{1}^{L}}X_{t}-a_{n}^{-2}\lambda \sum_{t\in
A_{1}^{L}}X_{t}^{2} }\right] \right) ^{k_{n}} \right\vert \\
& \leqslant \sum_{s=2}^{k_n}\left\vert \mathrm{cov}\left(\prod_{j=1}^{s-1}\mathrm{e}^{ iua_{n}^{-1}\sum_{t\in
A_{j}^{L}}X_{t}-a_{n}^{-2}\lambda \sum_{t\in
A_{j}^{L}}X_{t}^{2} } , \mathrm{e}^{ iua_{n}^{-1}\sum_{t\in
A_{s}^{L}}X_{t}-a_{n}^{-2}\lambda \sum_{t\in
A_{s}^{L}}X_{t}^{2} } \right) \right\vert \\
& \leqslant (k_n-1)\sup_{A\subset \left\{
-n,\dots ,-1,0\right\} }\left\vert \mathrm{cov}\left( \mathrm{e}
^{ia_{n}^{-1}u\sum_{t\in A} X_{t}-a_{n}^{-2}\lambda \sum_{t\in A}X_{t}^{2}},
\mathrm{e}^{ia_{n}^{-1}u\sum_{t=\ell_n}^{r_{n}}X_{t}-a_{n}^{-2}\lambda \sum_{t=\ell_n }^{r_{n}}X_{t}^{2}}\right) \right\vert \\
&  \leqslant 4(k_n - 1)\alpha_{\ell_n},
\end{align*}
where the first inequality follows by Lemma \ref{lem:product_inequality}, the second inequality follows by stationarity and the fact that the large blocks are separated by $\ell_n$ observations, and the third inequality follows by Lemma \ref{lem:mixing_bound}. The result now follows by \eqref{alpha_coefs_req}, using that $k_n=n/r_n$. 

\noindent \textit{Proof of \eqref{eq:mix_step3}:}  From \eqref{eq:conv_psi_n} we have that 
\begin{align}
\label{largeO:inv:kn}
    & \left( \mathbb{E}\left[\mathrm{e}^{ia_n^{-1}uS_{r_n}-a_n^{-2}\lambda\gamma_{r_n}^2}\right]   - 1 \right) = O(1/k_n),
\end{align}
and by arguments identical to the ones used for deriving \eqref{eq:conv_psi_n}, it holds that
\begin{eqnarray*}
  \left( \mathbb{E}\left[ \mathrm{e}%
^{ iua_{n}^{-1}\sum_{t\in A_{1}^{L}}X_{t}-a_{n}^{-2}\lambda \sum_{t\in
A_{1}^{L}}X_{t}^{2} }\right] -1 \right) = O(1/k_n).
\end{eqnarray*}
With $(r_n,k_n)$ replaced by $(\ell_n,\tilde k_n)$, an argument similar to the one used for deriving \eqref{largeO:inv:kn} from \eqref{eq:conv_psi_n} yields that $ \big(\mathbb{E}[\mathrm{e}^{ia_n^{-1}uS_{\ell_n}-a_n^{-2}\lambda\gamma_{\ell_n}^2}]-1\big) = O(1/\tilde{k}_n)=o(1/k_n)$,
where the last equality follows by using that $k_n/\tilde{k}_n=o(1)$.
Using that
\begin{equation*}
    \mathbb{E}\left[\mathrm{e}^{ia_n^{-1}uS_{r_n}-a_n^{-2}\lambda\gamma_{r_n}^2}\right] = \mathbb{E}\left[ \mathrm{e}%
^{ iua_{n}^{-1}\sum_{t\in A_{1}^{L}}X_{t}-a_{n}^{-2}\lambda \sum_{t\in
A_{1}^{L}}X_{t}^{2} } \mathrm{e}%
^{ iua_{n}^{-1}\sum_{t\in A_{1}^{S}}X_{t}-a_{n}^{-2}\lambda \sum_{t\in
A_{1}^{S}}X_{t}^{2} }\right],
\end{equation*}
the result follows by Lemma \ref{lem:large_small_blocks} below. 
We conclude that the theorem holds for $\mu=0$.

Next, consider the case where $\kappa\in(1,2)$ and $\mathbb{E}[X_0]\ne0$. Then the centered process $(X_t-\mu)_{t\in\mathbb{Z}}$ is stationary and has the same mixing coefficients, tail index, and spectral tail process as $(X_t)_{t\in\mathbb{Z}}$. Consequently, it suffices to show that the centered process obeys the anti-clustering condition as in \eqref{AC}, that is, 
\begin{equation}\label{AC2}
    \lim_{k\to\infty}\limsup_{n\to\infty}\sum_{j=k}^{r_n}\mathbb{E}[(|a_n^{-1}(X_j-\mu)|\wedge 1\big)\big(|a_n^{-1}(X_0-\mu)|\wedge 1)]=0,
\end{equation}
where $x \wedge y:=\min(x,y)$. Using that $(|x-y|\wedge1)\leq (|x|\wedge 1)+(|y|\wedge1)$, by stationarity
\begin{align*}
    &\mathbb{E}[(|a_n^{-1}(X_j-\mu)|\wedge 1)(|a_n^{-1}(X_0-\mu)|\wedge 1)] \\
    &\quad \leq \mathbb{E}[(|a_n^{-1}X_j|\wedge 1)(|a_n^{-1}X_0|\wedge 1)]+(|a_n^{-1}\mu|\wedge1)^2+2\mathbb{E}[(|a_n^{-1}X_0|\wedge 1)](|a_n^{-1}\mu|\wedge 1)] \\
    &\quad=:J_1+J_2+J_3.
\end{align*}
By regular variation of $X_0$, $a_n\sim L(n)n^{1/\kappa}$ as $n\to\infty$, and we have for $\kappa\in(1,2)$,
\begin{equation}\label{eq:rate_a_n}
    n/a_n^2 \to 0\quad \text{and}\quad n/a_n\to \infty.
\end{equation}
From \eqref{AC}, we have that $\lim_{k\to\infty}\limsup_{n\to\infty}\sum_{j=k}^{r_n}J_1=0$. Furthermore, it holds that
\begin{equation*}
    \lim_{k\to\infty}\limsup_{n\to\infty}\sum_{j=k}^{r_n}J_2= \lim_{k\to\infty}\limsup_{n\to\infty}\dfrac{(r_n-k+1)}{n}(n/a_n^{2})(a_n^2\wedge\mu^2)=0,
\end{equation*}
where we used that $r_n=o(n)$ and \eqref{eq:rate_a_n}. Finally, we have that
\begin{align*}
    \mathbb{E}[(|a_n^{-1}X_0|\wedge 1)] &= \mathbb{E}[a_n^{-1}|X_0|\mathbf{1}(|X_0|\leq a_n)]+\mathbb{P}(|X_0|>a_n) \\
    & \leqslant a_n^{-1} \left( \mathbb{E}[|X_0|] + (a_n/n)[n\mathbb{P}(|X_0|>a_n)]  \right), 
\end{align*}
    such that
\begin{align*}
\lim_{k\to\infty}&\limsup_{n\to\infty}\sum_{j=k}^{r_n}J_3\\
    &=\lim_{k\to\infty}\limsup_{n\to\infty}2(r_n-k+1)\mathbb{E}[(|a_n^{-1}X_0|\wedge 1)](|a_n^{-1}\mu|\wedge 1) \\
    &\leqslant \lim_{k\to\infty}\limsup_{n\to\infty}2(r_n-k+1)(|a_n^{-1}\mu|\wedge 1)a_n^{-1} \left( \mathbb{E}[|X_0|] + (a_n/n)[n\mathbb{P}(|X_0|>a_n)] \right) \\
    &= \lim_{k\to\infty}\limsup_{n\to\infty}2\frac{(r_n-k+1)}{n}(|\mu|\wedge a_n)(n/a_n^{2}) \left( \mathbb{E}[|X_0|] + (a_n/n)[n\mathbb{P}(|X_0|>a_n)] \right) =0,
\end{align*}
using that $r_n=o(n)$, \eqref{eq:rate_a_n}, and \eqref{eq:def_a_n}. Collecting terms, we have that \eqref{AC2} holds.\hfill $\square$

\begin{lemma}\label{lem:mixing_bound}
    Let $(X_t)_{t\in\mathbb{Z}}$ be a real-valued stationary process. For $n\geqslant1 $ and $(u,\lambda) \in \mathbb{R}\times \mathbb{R}_+$, define for $\overline{\ell} \geqslant\ell \geqslant 1$
\begin{equation*}
\alpha_{n,\ell,\overline{\ell} }( u,\lambda ) :=\sup_{A\subset \left\{
-n,\dots ,-1,0\right\} }\left\vert \mathrm{cov}\left( \mathrm{e}%
^{ia_{n}^{-1}u\sum_{t\in A}X_{t}-a_{n}^{-2}\lambda \sum_{t\in A}X_{t}^{2}},
\mathrm{e}^{ia_{n}^{-1}u\sum_{t=\ell }^{\overline{\ell}}X_{t}-a_{n}^{-2}\lambda
\sum_{t=\ell }^{\overline{\ell}}X_{t}^{2}}\right) \right\vert.
\end{equation*}  
It holds that for any $(n, u,\lambda) \in \mathbb{N}\times \mathbb{R}\times \mathbb{R}_+$, $\alpha_{n,\ell,\overline{\ell} }( u,\lambda ) \leqslant 4 \alpha_\ell$,
where $(\alpha_k)_{k\geqslant1}$ are the strong mixing coefficients of $(X_t)_{t\in\mathbb{Z}}$ defined in \eqref{mix_coef}.
\end{lemma}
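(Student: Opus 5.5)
The approach is to reduce the statement to the classical covariance inequality for strongly mixing processes with bounded complex-valued functionals, by way of real and imaginary parts. Fix $(n,u,\lambda)$, a set $A\subset\{-n,\dots,0\}$ and $\overline{\ell}\geqslant\ell\geqslant1$. Define
\[
V:=\mathrm{e}^{ia_n^{-1}u\sum_{t\in A}X_t-a_n^{-2}\lambda\sum_{t\in A}X_t^2},
\qquad
W:=\mathrm{e}^{ia_n^{-1}u\sum_{t=\ell}^{\overline{\ell}}X_t-a_n^{-2}\lambda\sum_{t=\ell}^{\overline{\ell}}X_t^2}.
\]
Since $\lambda\geqslant0$, both $|V|\leqslant1$ and $|W|\leqslant1$. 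Moreover, $V$ is a measurable function of $(\dots,X_{-1},X_0)$ and $W$ is a measurable function of $(X_\ell,X_{\ell+1},\dots)$, so the two are separated by a gap of $\ell$ time points.

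Next, write $V=V_1+iV_2$ and $W=W_1+iW_2$, where each $V_j,W_k$ is real-valued with $|V_j|,|W_k|\leqslant1$. By bilinearity of the (pseudo-)covariance,
\[
\mathrm{cov}(V,W)=\mathrm{cov}(V_1,W_1)-\mathrm{cov}(V_2,W_2)+i\big(\mathrm{cov}(V_1,W_2)+\mathrm{cov}(V_2,W_1)\big).
\]
Each of the four real covariances is bounded in modulus by $\alpha_\ell$. This follows directly from the definition \eqref{mix_coef} after shifting by stationarity: the supremum there is over real bounded functions $g_1$ of the past up to time $0$ and $g_2$ of the future from time $\ell$ on. Consequently $|\mathrm{cov}(V,W)|\leqslant 4\alpha_\ell$. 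Because the bound does not depend on $A$, $n$, $u$ or $\lambda$, taking the supremum over $A$ gives the claim.

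The only point that needs care is checking that \eqref{mix_coef} indeed covers functions of the whole infinite past and future, so that $V$ and $W$, which depend on finitely many coordinates, are admissible. This is immediate. One could sharpen the constant using the standard complex-valued covariance inequality, but the factor $4$ from the crude real/imaginary split is already enough for the proof of Theorem \ref{theo:inf_var_jointCLT}.
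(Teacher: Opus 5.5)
Your proposal is correct and matches the paper's proof: the paper likewise writes each exponential as $\cos(\cdot)\mathrm{e}^{-(\cdot)}+i\sin(\cdot)\mathrm{e}^{-(\cdot)}$, expands the pseudo-covariance into four real covariances of functions bounded by one of $(X_t)_{t\leqslant 0}$ and $(X_t)_{t\geqslant \ell}$, and bounds each by $\alpha_\ell$ using \eqref{mix_coef}. The stationarity shift you mention is not needed, because the two functionals already sit on either side of the gap exactly as in that definition with $k=\ell$.
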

\textbf{Proof of Lemma \ref{lem:mixing_bound}:} 
For the ease of notation, with $A\subset \left\{
-n,\dots ,-1,0\right\}$ define $Y_0 := a_{n}^{-1}u\sum_{t\in A}X_{t}$, $\tilde{Y}_0 := a_{n}^{-2}\lambda \sum_{t\in A}X_{t}^{2}$,  $Y_\ell:= a_{n}^{-1}u\sum_{t=\ell }^{\overline{\ell}}X_{t}$, and $\tilde{Y}_\ell := a_{n}^{-2}\lambda
\sum_{t=\ell }^{\overline{\ell}}X_{t}^{2}$. Then
\begin{align*}
    &  \mathrm{cov}\left( \mathrm{e}%
^{ia_{n}^{-1}u\sum_{t\in A}X_{t}-a_{n}^{-2}\lambda \sum_{t\in A}X_{t}^{2}},%
\mathrm{e}^{ia_{n}^{-1}u\sum_{t=\ell }^{\overline{\ell}}X_{t}-a_{n}^{-2}\lambda
\sum_{t=\ell }^{\overline{\ell}}X_{t}^{2}}\right) \\
& = \mathrm{cov}\left( \cos(Y_0)\mathrm{e}^{-\tilde{Y}_0} + i\sin(Y_0)\mathrm{e}^{-\tilde{Y}_0}, \cos(Y_\ell)\mathrm{e}^{-\tilde{Y}_\ell} + i\sin(Y_\ell)\mathrm{e}^{-\tilde{Y}_\ell}   \right) \\
&  = \mathrm{cov}\left( \cos(Y_0)\mathrm{e}^{-\tilde{Y}_0}, \cos(Y_\ell)\mathrm{e}^{-\tilde{Y}_\ell} \right) + i \mathrm{cov}\left(\sin(Y_0)\mathrm{e}^{-\tilde{Y}_0}, 
\cos(Y_\ell)\mathrm{e}^{-\tilde{Y}_\ell}   \right) \\
&  \quad +i\mathrm{cov}\left(\cos(Y_0)\mathrm{e}^{-\tilde{Y}_0}, \sin(Y_\ell)\mathrm{e}^{-\tilde{Y}_\ell}  \right) - \mathrm{cov}\left( \sin(Y_0)\mathrm{e}^{-\tilde{Y}_0}, \sin(Y_\ell)\mathrm{e}^{-\tilde{Y}_\ell}   \right).
\end{align*}
The triangle inequality gives that 
\begin{align*}
     &  \left\vert \mathrm{cov}\left( \mathrm{e}%
^{ia_{n}^{-1}u\sum_{t\in A}X_{t}-a_{n}^{-2}\lambda \sum_{t\in A}X_{t}^{2}},%
\mathrm{e}^{ia_{n}^{-1}u\sum_{t=\ell }^{\overline{\ell}}X_{t}-a_{n}^{-2}\lambda \sum_{t=\ell }^{\overline{\ell}}X_{t}^{2}}\right) \right\vert \\
& \leqslant \left\vert \mathrm{cov}\left( \cos(Y_0)\mathrm{e}^{-\tilde{Y}_0}, \cos(Y_\ell)\mathrm{e}^{-\tilde{Y}_\ell} \right) \right\vert + \left\vert \mathrm{cov}\left(\sin(Y_0)\mathrm{e}^{-\tilde{Y}_0}, 
\cos(Y_\ell)\mathrm{e}^{-\tilde{Y}_\ell}   \right) \right\vert \\
&  \quad + \left\vert \mathrm{cov}\left(\cos(Y_0)\mathrm{e}^{-\tilde{Y}_0}, \sin(Y_\ell)\mathrm{e}^{-\tilde{Y}_\ell}  \right) \right\vert  + \left\vert \mathrm{cov}\left( \sin(Y_0)\mathrm{e}^{-\tilde{Y}_0}, \sin(Y_\ell)\mathrm{e}^{-\tilde{Y}_\ell}   \right) \right\vert.
\end{align*}
Since  $\cos(Y_0)\mathrm{e}^{-\tilde{Y}_0}$ and $ \cos(Y_\ell)\mathrm{e}^{-\tilde{Y}_\ell}$ are functions of, respectively, $(X_t)_{t\leqslant 0} $ and $(X_t)_{t\geqslant \ell} $, and bounded in absolute value by one, $\alpha_{n,\ell, \overline{\ell} }( u,\lambda ) \leqslant 4 \sup_{|g_1|,|g_2|\leq1}\left|\mathrm{cov}\left[g_1(\{X_t\}_{t\leqslant 0}),g_2(\{X_t\}_{t\geqslant \ell})\right]\right|  = 4\alpha_{\ell}$.
\hfill $\square$
\begin{lemma}\label{lem:product_inequality}
    Let $Z_1,\dots,Z_m$ be identically distributed complex-valued random variables with $\mathbb{P}(|Z_1|\leqslant 1)=1$. Then for $m\geqslant2$,
    \begin{equation*}
        \left\vert \mathbb{E}\left[\prod_{i=1}^m Z_i \right] - (\mathbb{E}[Z_1])^m \right\vert \leqslant \sum_{i=2}^m \left\vert \mathrm{cov}\left( \prod_{j=1}^{i-1}Z_j, Z_i \right)\right\vert.
    \end{equation*}
   
\end{lemma}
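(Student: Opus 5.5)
We prove the inequality by induction on $m$, using a telescoping decomposition. Throughout, write $a:=\mathbb{E}[Z_1]$. Since the $Z_i$ are identically distributed, $\mathbb{E}[Z_i]=a$ for every $i$. Since $|Z_1|\leqslant 1$ almost surely, $|a|\leqslant 1$, and every finite product $\prod_{j\in J}Z_j$ is also bounded by one in modulus. Set $P_k:=\prod_{j=1}^{k}Z_j$. The key identity is
\begin{equation*}
\mathbb{E}[P_k]-a^k=\mathbb{E}[P_{k-1}Z_k]-\mathbb{E}[P_{k-1}]\,\mathbb{E}[Z_k]+a\big(\mathbb{E}[P_{k-1}]-a^{k-1}\big)=\mathrm{cov}(P_{k-1},Z_k)+a\big(\mathbb{E}[P_{k-1}]-a^{k-1}\big),
\end{equation*}
which uses only $\mathbb{E}[Z_k]=a$ and the pseudo-covariance convention fixed at the start of the appendix. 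This convention involves no complex conjugation, so the algebra is the same as in the real case.

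For the base case $m=2$, we have $\mathbb{E}[P_1]-a=0$, so the identity gives $\mathbb{E}[Z_1Z_2]-a^2=\mathrm{cov}(Z_1,Z_2)$, and the claim holds with equality in modulus.

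For the inductive step, take absolute values in the identity with $k=m$ and use $|a|\leqslant 1$:
\begin{equation*}
\big|\mathbb{E}[P_m]-a^m\big|\leqslant \big|\mathrm{cov}(P_{m-1},Z_m)\big|+\big|\mathbb{E}[P_{m-1}]-a^{m-1}\big|.
\end{equation*}
Applying the induction hypothesis to $Z_1,\dots,Z_{m-1}$ (still identically distributed and bounded by one) bounds the last term by $\sum_{i=2}^{m-1}|\mathrm{cov}(P_{i-1},Z_i)|$. Adding the new term gives exactly the stated sum. Equivalently, one can unroll the recursion into $\mathbb{E}[P_m]-a^m=\sum_{i=2}^m a^{m-i}\,\mathrm{cov}(P_{i-1},Z_i)$ and then bound $|a|^{m-i}\leqslant 1$.

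No step is a real obstacle. The only points that need care are that $|\mathbb{E}[Z_1]|\leqslant \mathbb{E}|Z_1|\leqslant 1$, which holds for complex-valued variables by the triangle inequality for integrals, and that the identity uses $\mathbb{E}[Z_k]=\mathbb{E}[Z_1]$. This is where the identical-distribution assumption enters; in the applications above it is supplied by stationarity of the block functionals.
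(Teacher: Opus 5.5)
Your proof is correct and is essentially the paper's argument: the paper applies the telescoping identity $\prod_{i=1}^m a_i-\prod_{i=1}^m b_i=\sum_{i=1}^m\prod_{j<i}a_j(a_i-b_i)\prod_{j>i}b_j$ with $a_i=Z_i$, $b_i=\mathbb{E}[Z_i]$ in one step. After taking expectations this is exactly your unrolled identity $\mathbb{E}[P_m]-a^m=\sum_{i=2}^m a^{m-i}\,\mathrm{cov}(P_{i-1},Z_i)$, and your induction is the recursive form of the same decomposition; you also record the weights correctly as $|a|^{m-i}$, where the paper's display loosely writes $|(\mathbb{E}[Z_1])^m|$ before bounding it by one.
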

\textbf{Proof of Lemma \ref{lem:product_inequality}:}
For complex-valued $a_1,\dots,a_m$ and $b_1,\dots,b_m$, one has the identity
\begin{equation}\label{eq:identity}
\prod_{i=1}^{m}a_{i}-\prod_{i=1}^{m}b_{i}=\sum_{i=1}^{m}%
\prod_{j=1}^{i-1}a_{j}\left( a_{i}-b_{i}\right) \prod_{j=i+1}^{m}b_{j},
\end{equation}
with the convention that $\prod_{j=1}^{0}a_{j} = \prod_{j=m+1}^{m}b_{j} =1$. By stationarity and \eqref{eq:identity},
\begin{align*}
     \left\vert \mathbb{E}\left[\prod_{i=1}^m Z_i \right] - (\mathbb{E}[Z_1])^m \right\vert
    & = \left\vert \mathbb{E}\left[\prod_{i=1}^m Z_i \right] - \prod_{i=1}^m \mathbb{E}[Z_i] \right\vert \\
    &= \left\vert  \sum_{i=1}^{m} \mathbb{E}\left[\prod_{j=1}^{i-1}Z_{j}\left( Z_{i}-\mathbb{E}[Z_{i}]\right) \right] \prod_{j=i+1}^{m}\mathbb{E}[Z_{i}]  \right\vert \\
 &\leqslant   \sum_{i=1}^{m} \left\vert \mathrm{cov}\left(
\prod_{j=1}^{i-1}Z_{j}, Z_{i}\right) \right\vert \left\vert  (\mathbb{E}[Z_{1}])^m  \right\vert \\
 &  \leqslant \sum_{i=1}^{m} \left\vert \mathrm{cov}\left(
\prod_{j=1}^{i-1}Z_{j}, Z_{i}\right) \right\vert   = \sum_{i=2}^{m} \left\vert \mathrm{cov}\left(
\prod_{j=1}^{i-1}Z_{j}, Z_{i}\right) \right\vert,
\end{align*}
where the first inequality follows by the triangle inequality, and the second inequality follows from the fact that $|\mathbb{E}[Z_1]|\leqslant1$.\hfill $\square$

\begin{lemma}\label{lem:large_small_blocks}
    For $n\in \mathbb{N}$, let $a_n,b_n,c_n,d_n$  be real-valued random variables with $\mathbb{P}(c_n\geqslant0)=\mathbb{P}(d_n\geqslant0)=1$ $\forall n$, and let $(k_n)_{n\in\mathbb{N}}$ be an integer-valued deterministic sequence satisfying $k_n\to\infty$ as $n\to\infty$.  Define
    \begin{equation*}
        X_n=\mathrm{e}^{ia_n-c_n},\quad x_n=\mathbb{E}[X_n],\quad Y_n=\mathrm{e}^{ib_n-d_n},\quad y_n=\mathbb{E}[Y_n],\quad z_n=\mathbb{E}[X_nY_n],  
    \end{equation*}
    and assume that $x_n-1=O(1/k_n)$, $y_n-1=o(1/k_n)$, and $z_n -1 = O(1/k_n)$, as $n\to\infty$.
    Then $z_n^{k_n} - x_n^{k_n} = o(1)$ as $n\to\infty$.
\end{lemma}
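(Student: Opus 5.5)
The natural approach is to compare $z_n$ and $x_n$ directly and then pass to $k_n$-th powers using the telescoping identity \eqref{eq:identity} together with a bound on the moduli. First, since $c_n,d_n\geqslant 0$, all of $|X_n|,|Y_n|,|X_nY_n|$ are at most one almost surely, hence $|x_n|,|y_n|,|z_n|\leqslant 1$. Applying \eqref{eq:identity} with $a_i=z_n$ and $b_i=x_n$ for $i=1,\dots,k_n$ gives
\begin{equation*}
|z_n^{k_n}-x_n^{k_n}|\leqslant k_n\,|z_n-x_n|\,\max(|z_n|,|x_n|)^{k_n-1}\leqslant k_n|z_n-x_n|.
\end{equation*}
It therefore suffices to show $z_n-x_n=o(1/k_n)$.

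To get this, I would write $X_nY_n-X_n=X_n(Y_n-1)$, so that
\begin{equation*}
z_n-x_n=\mathbb{E}[X_n(Y_n-1)]=\mathbb{E}[(X_n-1)(Y_n-1)]+(y_n-1).
\end{equation*}
The second term is $o(1/k_n)$ by assumption. The difficulty lies in the cross term $\mathbb{E}[(X_n-1)(Y_n-1)]$: the hypotheses only control the \emph{means} of $X_n-1$ and $Y_n-1$, not their absolute values, so a Cauchy--Schwarz bound is not directly available. This is the main obstacle.

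The plan is to exploit the specific exponential structure. For $W=\mathrm{e}^{ib-d}$ with $d\geqslant0$, one has $1-\Re W=1-\cos(b)\mathrm{e}^{-d}\geqslant0$ and
\begin{equation*}
|W-1|^2=1-2\cos(b)\mathrm{e}^{-d}+\mathrm{e}^{-2d}\leqslant 2(1-\cos(b)\mathrm{e}^{-d})=2\,\Re(1-W),
\end{equation*}
using $\mathrm{e}^{-2d}\leqslant1$. Hence $\mathbb{E}|Y_n-1|^2\leqslant 2\Re(1-y_n)\leqslant2|1-y_n|=o(1/k_n)$, and likewise $\mathbb{E}|X_n-1|^2\leqslant 2|1-x_n|=O(1/k_n)$. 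Cauchy--Schwarz then gives
\begin{equation*}
|\mathbb{E}[(X_n-1)(Y_n-1)]|\leqslant \big(O(1/k_n)\cdot o(1/k_n)\big)^{1/2}=o(1/k_n),
\end{equation*}
which closes the argument. The hypothesis $z_n-1=O(1/k_n)$ is not needed in this route; it is consistent with the conclusion but plays no role.

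The only step I would double-check is the inequality $|W-1|^2\leqslant 2\Re(1-W)$. It holds because $1+\mathrm{e}^{-2d}\leqslant 2$ when $d\geqslant 0$, and this is exactly where the nonnegativity of $c_n,d_n$ enters.
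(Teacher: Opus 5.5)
Your proposal is correct and follows essentially the same route as the paper. Both arguments use the telescoping bound $|z_n^{k_n}-x_n^{k_n}|\leqslant k_n|z_n-x_n|\max(|z_n|,|x_n|)^{k_n-1}$, reduce the claim to $z_n-x_n=o(1/k_n)$, and finish with Cauchy--Schwarz together with $\mathbb{E}|W-1|^2\leqslant 2(1-\Re\,\mathbb{E}[W])$ for $|W|\leqslant 1$.

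The differences are cosmetic. You bound the uncentered cross term $\mathbb{E}[(X_n-1)(Y_n-1)]$, whereas the paper bounds $\mathrm{cov}(X_n,Y_n)$. You also use $|z_n|,|x_n|\leqslant 1$ directly, which correctly shows that the hypothesis $z_n-1=O(1/k_n)$ is superfluous; the paper invokes it only to control $\max(|z_n|,|x_n|)^{k_n-1}$.
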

\textbf{Proof of Lemma \ref{lem:large_small_blocks}:} From the identity in \eqref{eq:identity}, it holds that
\begin{equation*}
     z_n^{k_n} - x_n^{k_n} = (z_n-x_n)\left(x_n^{k_n-1} +\sum_{s=2}^{k_n}z_n^{s-1}x_n^{k_n-s} \right),
\end{equation*}
which gives the bound
\begin{equation}\label{eq:small_large_bound}
     \vert z_n^{k_n} - x_n^{k_n} \vert \leqslant k_n|z_n-x_n|\max(\vert z_n \vert,\vert x_n \vert)^{k_n-1}.
\end{equation}
By assumption, there exists a constant $C>0$ such that for all $n$ sufficiently large $\max(\vert z_n \vert,\vert x_n \vert) \leqslant ( 1+ C/k_n )$,
where we note that $\lim_{n\to\infty}( 1+ C/k_n )^{k_n-1} = \mathrm{e}^{C}$. Consequently, $\max(\vert z_n \vert,\vert x_n \vert)^{k_n-1} = O(1)$, and given the bound in \eqref{eq:small_large_bound}, it remains to show that
\begin{equation}\label{eq:z_minus_x}
    \vert z_n - x_n \vert = o(1/k_n).
\end{equation}
We have that $z_n-x_n = \mathbb{E}[X_n(Y_n-1)] = \mathrm{cov}(X_n,Y_n) + x_n(y_n-1)$, where the second term is $x_n(y_n-1)=o(1/k_n)$ by assumption. Hence, \eqref{eq:z_minus_x} holds provided that
\begin{eqnarray}\label{eq:cov_rate}
    |\mathrm{cov}(X_n,Y_n)|=o(1/k_n).
\end{eqnarray}
By the Cauchy-Schwarz inequality,
\begin{equation}\label{eq:cov_bound}
    |\mathrm{cov}(X_n,Y_n)| \leqslant \sqrt{\mathbb{E}[|X_n-x_n|^2]\mathbb{E}[|Y_n-y_n|^2] }.
\end{equation}
With ${x}^{\ast}$ denoting the complex conjugate of complex-valued $x$, we have that
\begin{equation*}
    |X_n-x_n|^2 = |X_n-1|^2 + |x_n-1|^2 - 2 \Re\left((X_n-1){(x_n-1)}^{\ast}  \right), 
\end{equation*}
such that $\mathbb{E}[|X_n-x_n|^2] = \mathbb{E}[|X_n-1|^2] - |x_n-1|^2 = \mathbb{E}[|X_n-1|^2] + O(1/k_n^2)$, where the second equality follows by assumption. Moreover, using that, by definition, $\mathbb{P}(|X_n|\leqslant 1)=1$,
\begin{equation*}
    \mathbb{E}[|X_n-1|^2] = \mathbb{E}\left[|X_n|^2 + 1 - 2\Re(X_n)\right] \leqslant 2(1 - \Re(x_n)) = O(1/k_n),
\end{equation*}
where the last equality follows by assumption. Hence,
\begin{equation}\label{eq:X_bound}
\mathbb{E}[|X_n-x_n|^2]=O(1/k_n).    
\end{equation}
 Likewise, by identical arguments,
    $\mathbb{E}[|Y_n-y_n|^2] = \mathbb{E}[|Y_n-1|^2] - |y_n-1|^2 = \mathbb{E}[|Y_n-1|^2] + o(1/k_n^2)$, where    $\mathbb{E}[|Y_n-1|^2] = E\left[|Y_n|^2 + 1 - 2\Re(Y_n)\right] \leqslant 2(1 - \Re(y_n)) = o(1/k_n)$, and we have that
\begin{equation}\label{eq:Y_bound}
\mathbb{E}[|Y_n-y_n|^2]=o(1/k_n).    
\end{equation}
Combining the bounds in \eqref{eq:cov_bound}-\eqref{eq:Y_bound}, we conclude that \eqref{eq:cov_rate} holds. \hfill $\square$

\newpage

\setcounter{page}{1}
\begin{center}
\large{Supplemental Appendix to\\
"Heavy Tails and Predictive Ability Testing"}
\end{center}
\begin{center}
by Jonas F. Frederiksen, Muneya Matsui, and Rasmus S. Pedersen
\end{center}

\section{Additional derivations and proofs for Section \ref{sec:limit_theory}}
\subsection{Derivations of \eqref{eq:chr fct of xi} and \eqref{eq: scale and skew}}\label{sec:chf_scale_skew}
Setting $\lambda=0$ in the hybrid characteristic function - Laplace transform in \eqref{eq:hybrid_chf_Laplace} yields the characteristic function of $\xi_\kappa$,
\begin{align}\label{eq:chf_deriv1}
        &\mathbb{E}\big[\mathrm{e}^{iu\xi_\kappa}\big]=\exp\left(\int_0^\infty \mathbb{E}\left[\mathrm{e}^{iyu\sum_{t\in\mathbb{Z}}Q_t}-1-iyu\sum_{t\in\mathbb{Z}}Q_t\boldsymbol{1}_{(1,2)}(\kappa)\right]\,d(-y^{-\kappa})\right),\qquad u\in\mathbb{R},
\end{align}
where $\boldsymbol{1}_{(1,2)}(\kappa):=\boldsymbol{1}(1<\kappa<2)$. Initially, we seek to show that
\begin{equation}\label{eq:integrability}
    \mathbb{E}\left[\int_0^\infty\Big|\mathrm{e}^{iyu\sum_{t\in\mathbb{Z}}Q_t}-1-iyu\sum_{t\in\mathbb{Z}}Q_t\mathbf{1}_{(1,2)}(\kappa)\Big|\,d(-y^{-\kappa})\right]<\infty,
\end{equation}
in order to apply Fubini's theorem to \eqref{eq:chf_deriv1}; that is, to interchange the order of integration and expectation. With $z=y|\sum_{t\in\mathbb{Z}}Q_t|$, a change of variable gives that
\begin{align}\label{eq:change_of_variable}
    \mathbb{E}&\left[\int_0^\infty\Big|\mathrm{e}^{iyu\sum_{t\in\mathbb{Z}}Q_t}-1-iyu\sum_{t\in\mathbb{Z}}Q_t\mathbf{1}_{(1,2)}(\kappa)\Big|\,d(-y^{-\kappa})\right] \notag \\[0.05in]
    &=\mathbb{E}\left[\Big|\sum_{t\in\mathbb{Z}}Q_t\Big|^\kappa\right]\,\left(\int_0^\infty \big|\mathrm{e}^{iuz}-1-iuz\mathbf{1}_{(1,2)}(\kappa)\big|\,d(-z^{-\kappa})\right) \notag \\[0.05in]
    &\leqslant \mathbb{E}\left[\Big(\sum_{t\in\mathbb{Z}}|Q_t|\Big)^\kappa\right]\,\left(\int_0^\infty \big|\mathrm{e}^{iuz}-1-iuz\mathbf{1}_{(1,2)}(\kappa)\big|\,d(-z^{-\kappa})\right),
\end{align}
where we for the equality used that $|\mathrm{e}^{iuz}-1-iuz\mathbf{1}_{(1,2)}(\kappa)|=|\mathrm{e}^{-iuz}-1+iuz\mathbf{1}_{(1,2)}(\kappa)|$. It follows by Lemma 9.1.7 in \cite{mikosch2024extreme} that $\mathbb{E}[{(\sum_{t\in\mathbb{Z}}|Q_t|)}^\kappa]<\infty$ since we assume the anti-clustering condition in Assumption \ref{ass:AC+MX} holds. For the second integral in \eqref{eq:change_of_variable}, it holds by Lemma 8.6 of \cite{sato1999} that for any $n\in\mathbb{N}$
\begin{equation}\label{eq:sato}
    \mathrm{e}^{iuz}=\sum_{k=0}^{n-1}\dfrac{(iuz)^k}{k!}+\theta\dfrac{|uz|^n}{n!},
\end{equation}
where $\theta\in\mathbb{C}$ satisfies $|\theta|\leqslant1$. Consequently, noting that $|\mathrm{e}^{iuz}-1|\leqslant2$ and using \eqref{eq:sato} with $n=1$ yields for the case $\kappa\in(0,1)$ that,
\begin{align*}
    \int_0^\infty \big|\mathrm{e}^{iuz}-1\big|\,d(-z^{-\kappa})&=\int_0^1 \big|\mathrm{e}^{iuz}-1\big|\,d(-z^{-\kappa})+\int_1^\infty \big|\mathrm{e}^{iuz}-1\big|\,d(-z^{-\kappa}) \\[0.05in]
    &\leqslant \int_0^1|u|z\,d(-z^{-\kappa})+\int_1^\infty2\,d(-z^{-\kappa})<\infty.
\end{align*}
Next, for the case where $\kappa\in(1,2)$ we similarly obtain from \eqref{eq:sato} with $n=2$ that,
\begin{align*}
    \int_0^\infty\left|\mathrm{e}^{iuz}-1-iuz\right|\,d(-z^{-\kappa})\leqslant \int_0^1(uz)^2/2\,d(-z^{-\kappa})+\int_1^\infty (2+|u|z)\,d(-z^{-\kappa})<\infty.
\end{align*}
Consequently, the integrability in \eqref{eq:integrability} holds for $\kappa\in(0,1)\cup(1,2)$. An application of Fubini's theorem now yields,
\begin{align}
    &\int_{0}^\infty \mathbb{E}\left[\mathrm{e}^{iyu\sum_{t\in\mathbb{Z}}Q_t}-1-iyu\sum_{t\in\mathbb{Z}}Q_t\mathbf{1}_{(1,2)}(\kappa)\right]\,d(-y^{-\kappa})\notag\\[0.05in]
    &\quad\;\;=\mathbb{E}\left[\int_0^\infty\left(\mathrm{e}^{iyu\sum_{t\in\mathbb{Z}}Q_t}-1-iyu\sum_{t\in\mathbb{Z}}Q_t\mathbf{1}_{(1,2)}(\kappa)\right)\,d(-y^{-\kappa})\right] \notag\\[0.1in]
    &\quad\;\;=\mathbb{E}\left[\int_0^\infty\dfrac{\mathrm{e}^{iyu\sum_{t\in\mathbb{Z}}Q_t}-1-iyu\sum_{t\in\mathbb{Z}}Q_t\mathbf{1}_{(1,2)}(\kappa)}{y^{\kappa+1}}\kappa\; dy\right].\label{eq:chf_deriv2}
\end{align}
By \citet[Lemma 14.11]{sato1999}, 
\begin{align}\label{eq:chf_deriv3}
    \int_0^\infty \dfrac{\mathrm
    e^{iyu\sum_{t\in\mathbb{Z}}Q_t}-1-iyu\sum_{t\in\mathbb{Z}}Q_t\mathbf{1}_{(1,2)}(\kappa)}{y^{\kappa+1}}dy&=-\kappa^{-1}\,\dfrac{\Gamma(2-\kappa)}{1-\kappa}\,\left(-iu\sum_{t\in\mathbb{Z}}Q_t\right)^\kappa.
\end{align}
Next, writing the imaginary unit on its polar form, we obtain
\begin{align*}
    \left(-iu\sum_{t\in\mathbb{Z}}Q_t\right)^\kappa &= \left(-i\,\Big|u\sum_{t\in\mathbb{Z}}Q_t\Big|\,\textrm{sign}\Big(u\sum_{t\in\mathbb{Z}}Q_t\Big)\right)^\kappa=\Big|u\sum_{t\in\mathbb{Z}}Q_t\Big|^\kappa\mathrm{e}^{-i\frac{\kappa\pi}{2}\,\textrm{sign}\big(u\sum_{t\in\mathbb{Z}}Q_t\big)}\\[0.1in]
    &=\left(u\sum_{t\in\mathbb{Z}}Q_t\right)^\kappa_+\mathrm{e}^{-i\frac{\kappa\pi}{2}}+\left(u\sum_{t\in\mathbb{Z}}Q_t\right)^\kappa_-\mathrm{e}^{i\frac{\kappa\pi}{2}}.
\end{align*}
Applications of Euler's formula now yield that
% \begin{align*}
%     \mathrm{e}^{i\frac{\kappa\pi}{2}}=\cos\left(\dfrac{\kappa\pi}{2}\right)+i\sin\left(\dfrac{\kappa\pi}{2}\right),\qquad \mathrm{e}^{-i\frac{\kappa\pi}{2}}=\cos\left(\dfrac{\kappa\pi}{2}\right)-i\sin\left(\dfrac{\kappa\pi}{2}\right), 
% \end{align*}
\begin{align}
    &\left(-iu\sum_{t\in\mathbb{Z}}Q_t\right)^\kappa\notag \\[0.05in] &=\cos\left(\dfrac{\kappa\pi}{2}\right)\left[\Big(u\sum_{t\in\mathbb{Z}}Q_t\Big)^\kappa_++\Big(u\sum_{t\in\mathbb{Z}}Q_t\Big)^\kappa_-\right]-i\sin\left(\dfrac{\kappa\pi}{2}\right)\left[\Big(u\sum_{t\in\mathbb{Z}}Q_t\Big)^\kappa_+-\Big(u\sum_{t\in\mathbb{Z}}Q_t\Big)^\kappa_-\right]  \notag \\[0.05in]
    &=\cos\left(\dfrac{\kappa\pi}{2}\right)|u|^\kappa\Big|\sum_{t\in\mathbb{Z}}Q_t\Big|^\kappa-i\sin\left(\dfrac{\kappa\pi}{2}\right)\left[\Big(u\sum_{t\in\mathbb{Z}}Q_t\Big)^\kappa_+-\Big(u\sum_{t\in\mathbb{Z}}Q_t\Big)^\kappa_-\right] \notag \\[0.05in]
    &=\cos\left(\dfrac{\kappa\pi}{2}\right)|u|^\kappa\Big|\sum_{t\in\mathbb{Z}}Q_t\Big|^\kappa\left(1-i\tan\left(\dfrac{\kappa\pi}{2}\right)\dfrac{\Big(u\sum_{t\in\mathbb{Z}}Q_t\Big)^\kappa_+-\Big(u\sum_{t\in\mathbb{Z}}Q_t\Big)^\kappa_-}{\Big|u\sum_{t\in\mathbb{Z}}Q_t\Big|^\kappa}\right). \label{eq:chf_deriv4}
\end{align}
Next, observe that
\begin{align}\label{eq:chf_deriv5}
    &\dfrac{\Big(u\sum_{t\in\mathbb{Z}}Q_t\Big)^\kappa_+-\Big(u\sum_{t\in\mathbb{Z}}Q_t\Big)^\kappa_-}{\Big|u\sum_{t\in\mathbb{Z}}Q_t\Big|^\kappa}
    %=\textrm{sign}(u)\;\textrm{sign}\Big(\sum_{t\in\mathbb{Z}}Q_t\Big)
    =\textrm{sign}(u)\dfrac{\Big(\sum_{t\in\mathbb{Z}}Q_t\Big)^\kappa_+-\Big(\sum_{t\in\mathbb{Z}}Q_t\Big)^\kappa_-}{\Big|\sum_{t\in\mathbb{Z}}Q_t\Big|^\kappa}.
\end{align}
With $c_\kappa$ given by \eqref{eq:def_c_kappa}, combining \eqref{eq:chf_deriv1} and   \eqref{eq:chf_deriv2}--\eqref{eq:chf_deriv5} gives that
\begin{align*}
    \mathbb{E}[\mathrm{e}^{iu\xi_\kappa}]&=\exp\left(-\mathbb{E}\left[c_\kappa|u|^\kappa\Big|\sum_{t\in\mathbb{Z}}Q_t\Big|^\kappa\left(1-i\tan\big(\dfrac{\kappa\pi}{2}\big)\textrm{sign}(u)\dfrac{\Big(\sum_{t\in\mathbb{Z}}Q_t\Big)^\kappa_+-\Big(\sum_{t\in\mathbb{Z}}Q_t\Big)^\kappa_-}{\Big|\sum_{t\in\mathbb{Z}}Q_t\Big|^\kappa}\right)\right]\right) \notag \\[0.05in]
    &=\exp\left(-|u|^\kappa c_\kappa \mathbb{E}\Big[\big|\sum_{t\in\mathbb{Z}}Q_t\Big|^\kappa\Big]+|u|^\kappa i\tan\left(\dfrac{\kappa\pi}{2}\right)\textrm{sign}(u)c_\kappa\mathbb{E}\left[\Big(\sum_{t\in\mathbb{Z}}Q_t\Big)^\kappa_+-\Big(\sum_{t\in\mathbb{Z}}Q_t\Big)^\kappa_-\right]\right).
\end{align*}
With $\beta$ and $\tilde{\sigma}^\kappa$ given as in \eqref{eq: scale and skew}, we then have that
\begin{align*}
    \mathbb{E}[\mathrm{e}^{iu\xi_\kappa}]&=\exp(-|u|^\kappa \tilde{\sigma}^\kappa+|u|^\kappa\tilde{\sigma}^\kappa i\tan(\kappa\pi/2)\textrm{sign}(u)\beta) \\
    &=\exp(-|u|^\kappa\tilde{\sigma}^\kappa(1-i\beta\textrm{sign}(u)\tan(\kappa\pi/2)),\qquad u\in\mathbb{R},
\end{align*}
and we conclude that \eqref{eq:chr fct of xi} holds.\hfill$\square$

\subsection{Proof of Proposition \ref{Master_Lemma}}
The results for Cases (1) and (2) are immediate consequences of Theorems \ref{theo:rio} and \ref{theo:inf_var_jointCLT}, respectively. It remains to prove the result for case (3). \\
\indent  \textit{Infinite variance case}: Consider first the denominator term in the definition of $T_n$ normalized by $(a_n)_{n\geqslant1}$ satisfying \eqref{eq:def_a_n}. 
We observe that
\begin{align*}
    a_n^{-1}\big(\sum_{t=1}^nX_t^2\big)^{1/2} &= \big(a_n^{-2}\sum_{t=1}^n(X_t-\mathbb{E}[X_t]+\mathbb{E}[X_t])^2\big)^{1/2} \\
    &=\Big(a_n^{-2}\sum_{t=1}^n(X_t-\mathbb{E}[X_t])^2 +\dfrac{2}{a_n^{2}}\mathbb{E}[X_t]\sum_{t=1}^nX_t - \dfrac{n}{a_n^{2}}(\mathbb{E}[X_t])^2\Big)^{1/2} \\
    &=\Big(a_n^{-2}\sum_{t=1}^n(X_t-\mathbb{E}[X_t])^2+ \dfrac{2}{a_n^{2}}\mathbb{E}[X_t]\sum_{t=1}^n(X_t-\mathbb{E}[X_t]) + \dfrac{n}{a_n^{2}}(\mathbb{E}[X_t])^2\Big)^{1/2} \\
    &\xrightarrow{d}\zeta_{\kappa/2},\quad n\rightarrow\infty,
\end{align*}
where we have applied the joint convergence result in Theorem \ref{theo:inf_var_jointCLT} for the centered process, $(X_t-\mathbb{E}[X_t])_{t\in\mathbb{Z}}$ and \eqref{eq:rate_a_n}. Next, for the numerator in the definition of $T_n$ normalized by $a_n$, we have the following result, as $n\to\infty$,
\begin{align*}
    a_n^{-1}\sum_{t=1}^nX_t  &=  a_n^{-1}\sum_{t=1}^n(X_t-\mathbb{E}[X_t]) +\dfrac{n}{a_n}\mathbb{E}[X_t] = O_{\mathbb{P}}(1) + \dfrac{n}{a_n}\mathbb{E}[X_t]  \overset{\mathbb{P}}{\to} \pm\infty
\end{align*}
where we again used Theorem \ref{theo:inf_var_jointCLT} and \eqref{eq:rate_a_n}. Noting that $\zeta_{\kappa/2}>0$,
the result now follows by Slutsky's lemma. \\
\indent \textit{Finite variance case}: By the ergodic theorem, the denominator of $T_n$ satisfies
\begin{align*}
    \dfrac{1}{\sqrt{n}}\Big(\sum_{t=1}^n X_t^2\Big)^{1/2} =\Big(n^{-1}\sum_{t=1}^n X_t^2\Big)^{1/2}\xrightarrow{\mathbb{P}}\sqrt{\mathbb{E}[X_t^2]},\qquad n\to\infty. 
\end{align*}
Applying Theorem \ref{theo:rio} to the centered process $(X_t - \mathbb{E}[X_t])_{t\in\mathbb{Z}}$, the $\sqrt{n}$-normalized numerator in $T_n$ satisfies
\begin{align*}
   \dfrac{1}{\sqrt{n}}\sum_{t=1}^n X_t  &=  \dfrac{1}{\sqrt{n}}\sum_{t=1}^n (X_t-\mathbb{E}[X_t]) + \dfrac{n}{\sqrt{n}}\mathbb{E}[X_t] =  O_{\mathbb{P}}(1) + \dfrac{n}{\sqrt{n}}\mathbb{E}[X_t] \overset{\mathbb{P}}{\to} \pm\infty, \qquad n\to\infty.
\end{align*}
\hfill $\square$

\subsection{Proofs related to Section \ref{sec:AR_details}}
\textbf{Proof of Proposition \ref{prop:AR1 - RV}:}  Set $\widetilde{X}_t=X_t-\delta/(1-\varphi)$ for all $t\in\mathbb{Z}$. Then it holds that $(\widetilde{X}_t)_{t\in\mathbb{Z}}$ is a stationary and $\alpha$-mixing (with geometric rate) AR(1) process with no constant term, in particular, it follows the recursion $\widetilde{X}_t=\varphi\,\widetilde{X}_{t-1}+Z_t,\;t\in\mathbb{Z}$. Furthermore, it follows from p.\,192 in \cite{mikosch2024extreme} that $(\widetilde{X}_t)_{t\in\mathbb{Z}}$ is regularly varying with tail index $\kappa$ and has spectral tail process $(\mathit{\Theta}_t)_{t\in\mathbb{Z}}$, which is given by,
\begin{align}
    \mathit{\Theta}_t=\mathit{\Theta}_0\,\varphi^t\mathbf{1}(t\geqslant -J),\qquad t\in\mathbb{Z}
\end{align}
with $\mathbb{P}(J=j)=\varphi^{\kappa j}\,(1-\varphi^\kappa)$, $j=0,1,\dots$, and $\mathbb{P}(\mathit{\Theta}_0=\pm1)=p_\pm$, where $p_\pm$ are the tail-balance coefficients of $Z_t$. Thus by \citet[Theorem 2.1 and Corollary 3.2]{basrak2009regularly}, we have for every $h\geq 0$,
\begin{equation*}
    \mathbb{P}(x^{-1}(\widetilde{X}_0,\dots,\widetilde{X}_h)\in\cdot\;|\;|\widetilde{X}_0|>x) \xrightarrow{w}\mathbb{P}(Y_\kappa\,(\mathit{\Theta}_0,\dots,\mathit{\Theta}_h)\in\cdot),\qquad x\to\infty,
\end{equation*}
where $Y_\kappa$ is a Pareto-distributed random variable satisfying $\mathbb{P}(Y_\kappa\leqslant y)=1-y^{-\kappa}$ for $y\geqslant1 $.
Next, using that $X_0=\widetilde{X}_0+\delta/(1-\varphi)$, we observe that,
\begin{align*}
\dfrac{\mathbb{P}(|X_0|>x)}{\mathbb{P}(|\widetilde{X}_0|>x)}&=\dfrac{\mathbb{P}(\widetilde{X}_0>x-\delta/(1-\varphi))}{\mathbb{P}(|\widetilde{X}_0|>x)}+\dfrac{\mathbb{P}(\widetilde{X}_0<-x-\delta/(1-\varphi))}{\mathbb{P}(|\widetilde{X}_0|>x)} \\[0.05in]
 &\sim p_+\left(\dfrac{x-\delta/(1-\varphi)}{x}\right)^{-\kappa}+p_-\left(\dfrac{x+\delta/(1-\varphi)}{x}\right)^{-\kappa}\rightarrow1,\qquad x\to\infty.
\end{align*}
That is, $|X_t|$ is regularly varying of index $\kappa$ with $\mathbb{P}(|\widetilde{X}_t|>x)\sim\mathbb{P}(|X_t|>x)$ as $x\to\infty$.
By similar arguments, we have, as $x\to\infty$,

\begin{align*}
    &\mathbb{P}(|X_0|>x\,|\;|\widetilde{X}_0|>x)\\[0.05in]
    &\quad= \dfrac{\mathbb{P}(|X_0|>x,|\widetilde{X}_0|>x)}{\mathbb{P}(|\widetilde{X}_0|>x)}\\
    &\quad \sim \dfrac{\mathbb{P}(\widetilde{X}_0+\delta/({1-\varphi})<-x,\widetilde{X}_0<-x)}{\mathbb{P}(|\widetilde{X}_0|>x)}+\dfrac{\mathbb{P}(\widetilde{X}_0+\delta/({1-\varphi})>x,\widetilde{X}_0>x)}{\mathbb{P}(|\widetilde{X}_0|>x)}\\
    &\quad=\dfrac{\mathbb{P}(\widetilde{X}_0<-x-\max\{\delta/(1-\varphi),0\})}{\mathbb{P}(|\widetilde{X}_0|>x)}+\dfrac{\mathbb{P}(\widetilde{X}_0>x-\min\{\delta/(1-\varphi),0\})}{\mathbb{P}(|\widetilde{X}_0|>x)} \\
    &\quad\sim p_-\left(\dfrac{x+\max\{\delta/(1-\varphi),0\}}{x}\right)^{-\kappa}+p_+\left(\dfrac{x-\min\{\delta/(1-\varphi),0\}}{x}\right)^{-\kappa}\rightarrow1,
\end{align*}
where we have used that 
$\mathbb{P}(X_0> \pm x,\,\widetilde X_0< \mp x) =o(1)$ as $x\to\infty$.
Consequently, it follows that
\begin{align*}
    \dfrac{\mathbb{P}(|X_0|>x,|\widetilde{X}_0|\leq x)}{\mathbb{P}(|\widetilde{X}_0|>x)}&=\dfrac{\mathbb{P}(|X_0|>x)}{\mathbb{P}(|\widetilde{X}_0|>x)}-\dfrac{\mathbb{P}(|X_0|>x,|\widetilde{X}_0|>x)}{\mathbb{P}(|\widetilde{X}_0|>x)}\\
    &=\dfrac{\mathbb{P}(|X_0|>x)}{\mathbb{P}(|\widetilde{X}_0|>x)}-\mathbb{P}(|X_0|>x\;|\;|\widetilde{X}_0|>x)\to0,\qquad x\to\infty.
\end{align*}
Using the above relations, we obtain for every $h\geq0$, as $x\to\infty$,
\begin{align*}
    \mathbb{P}&(x^{-1}(X_0,\dots,X_h)\in\cdot\;|\;|X_0|>x)\\
    &=\dfrac{\mathbb{P}(x^{-1}(X_0,\dots,X_h)\in\cdot\;,\;|X_0|>x)}{\mathbb{P}(|X_0|>x)} \\
    &\sim\dfrac{\mathbb{P}(x^{-1}(X_0,\dots,X_h)\in\cdot\;,\;|X_0|>x)}{\mathbb{P}(|\widetilde{X}_0|>x)} \\
    &=\dfrac{\mathbb{P}(x^{-1}(X_0,\dots,X_h)\in\cdot\;,\;|X_0|>x,|\widetilde{X}_0|>x)}{\mathbb{P}(|\widetilde{X}_0|>x)} +\dfrac{\mathbb{P}(x^{-1}(X_0,\dots,X_h)\in\cdot\;,\;|X_0|>x,|\widetilde{X}_0|\leq x)}{\mathbb{P}(|\widetilde{X}_0|>x)} \\
    &\sim\mathbb{P}(x^{-1}(X_0,\dots,X_h)\in\cdot\;|\;|\widetilde{X}_0|>x).
\end{align*}
Finally, inserting $X_t=\widetilde{X}_t+\delta/(1-\varphi)$ and using that $(\widetilde{X}_t)_{t\in\mathbb{Z}}$ is regularly varying of index $\kappa$ and has spectral tail process $(\mathit{\Theta}_t)_{t\in\mathbb{Z}}$,
\begin{align*}
    \mathbb{P}&(x^{-1}(X_0,\dots,X_h)\in\cdot\;|\;|X_0|>x)\\
    &\sim\mathbb{P}(x^{-1}\delta/(1-\varphi)+x^{-1}(\widetilde{X}_0,\dots,\widetilde{X}_h)\in\cdot\;|\;|\widetilde{X}_0|>x) \\
    &\sim\mathbb{P}(x^{-1}(\widetilde{X}_0,\dots,\widetilde{X}_h)\in\cdot\;|\;|\widetilde{X}_0|>x)  \xrightarrow{w}\mathbb{P}(Y_\kappa\,(\mathit{\Theta}_0,\dots,\mathit{\Theta}_h)\in\cdot),\qquad x\to\infty.
\end{align*}
Hence, $(X_t)_{t\in\mathbb{Z}}$ is regularly varying of index $\kappa$ with the same spectral tail process as $(\widetilde{X}_t)_{t\in\mathbb{Z}}$.
Next, by \citet[Proposition 10.1.14]{mikosch2024extreme}, it holds that $(X_t)_{t\in\mathbb{Z}}$ satisfies the anti-clustering condition \ref{AC} for any $(r_n)_{n\geq1}$ chosen such that $r_n=o(a_n^2/n)$ for every $\kappa\in(0,2)$. Furthermore, the mixing condition \eqref{alpha_coefs_req} holds for any sequence $(r_n)_{n\geq1}$ chosen such that $\log n=o(r_n)$ as $n\to\infty$. Indeed, choosing $\ell_n=d\log n$ with $d>0$, and using that $(X_t)_{t\in\mathbb{Z}}$ is $\alpha$-mixing with geometric rate,
\begin{equation*}
    \dfrac{n}{r_n}\alpha_{\ell_n}\leq c\dfrac{n}{r_n}\rho^{d\log n}=c\dfrac{n}{r_n}n^{d\log \rho}=c\dfrac{\log n}{r_n}\;\dfrac{n^{1-d|\log \rho|}}{\log n},
\end{equation*}
for some $c>0$ and $\rho\in(0,1)$. Hence, with $(r_n)_{n\geq1}$ and $d$ such that $\log n=o(r_n)$ and $d>1/|\log\rho|$, the right-hand side tends to zero as $n\to\infty$. Therefore, we conclude that $(X_t)_{t\in\mathbb{Z}}$ given by \eqref{eq:AR1_sec2} satisfies Assumption \ref{ass:AC+MX} for every $\kappa\in(0,2)$.\hfill $\square$ \bigskip

\noindent \textbf{Proof of Proposition \ref{prop:AR_lim_moments}:} For $\kappa \in (0,1)$, \eqref{eq:weak_AR1} follows by Proposition \ref{prop:AR1 - RV}, Theorem \ref{theo:inf_var_jointCLT} and the continuous mapping theorem. Likewise, for $\kappa\in(1,2)$ and $\delta =0$, \eqref{eq:weak_AR1} holds by Propositions \ref{prop:AR1 - RV} and  \ref{Master_Lemma}.(2).

That $\mathbb{E}[(\xi_\kappa/\zeta_{\kappa/2})^p]<\infty$ for every $p>0$ and every $\kappa\in(0,1)\cup(1,2)$ follows directly by \citet[Corollary 4.2]{matsui2025moments}. The same corollary implies that $\xi_\kappa/\zeta_{\kappa/2}$ has a continuous density if $\kappa\in(1,2)$.  With $(Q_t)_{t\in\mathbb{Z}}$ given by \eqref{eq:extremal_cluster_process}, let
\begin{equation}
    \Vert \mathit{Q} \Vert_p := \left( \sum_{t\in\mathbb{Z}}|Q_t|^p \right)^{1/p},\qquad p>0.
\end{equation}
Then by \citet[Remark 5.3]{matsui2025moments}, for $\kappa\in(1,2)$, 
\begin{equation*}
    \mathbb{E}\left[\dfrac{\xi_\kappa}{\zeta_{\kappa/2}}\right]=\mathbb{E}\left[W^{(2)}\right]\dfrac{\Gamma((1-\kappa)/2)}{\sqrt{\pi}\;\Gamma(1-\kappa/2)},
\end{equation*}
and
\begin{equation*}
    \mathbb{E}\left[\Big(\dfrac{\xi_\kappa}{\zeta_{\kappa/2}}\Big)^2\right]=\mathbb{E}[(W^{(2)})^2]+(\mathbb{E}[W^{(2)}])^2\dfrac{\kappa}{2}\left(\dfrac{\Gamma((1-\kappa)/2)}{\Gamma(1-\kappa/2)}\right)^2,
\end{equation*}
where
\begin{equation*}
    W^{(2)}=\dfrac{\Vert Q\Vert_2^\kappa}{\mathbb{E}[\Vert Q\Vert_2^\kappa]}\dfrac{\sum_{t\in\mathbb{Z}}Q_t}{\Vert Q\Vert_2}.
\end{equation*}
Next, we observe that $||Q_t||^\kappa_2$ is deterministic in the AR(1) case from
\begin{align*}
    ||Q||^\kappa_2&=\dfrac{||\mathit{\Theta}||_2^\kappa}{||\mathit{\Theta}||_\kappa^\kappa}=\dfrac{\left(\sum_{t\in\mathbb{Z}}(\mathit{\Theta}_0\;\varphi^t\mathbf{1}(t\geq -J))^2\right)^{\kappa/2}}{\sum_{t\in\mathbb{Z}}|\mathit{\Theta}_0\;\varphi^{ t}\mathbf{1}(t\geq -J)|^\kappa}=\dfrac{(\sum_{t=-J}^\infty \varphi^{2t})^{\kappa/2}}{\sum_{t=-J}^\infty\varphi^{\kappa t}} \\[0.1in]
    &=\dfrac{(\varphi^{-2J}\sum_{t=-J}^\infty\varphi^{2(t+J)})^{\kappa/2}}{\varphi^{-\kappa J}\sum_{t=-J}^\infty\varphi^{\kappa(t+J)}}=\dfrac{(\sum_{i=0}^\infty\varphi^{2i})^{\kappa/2}}{\sum_{i=0}^\infty\varphi^{\kappa i}},
\end{align*}
where we inserted $(\mathit{\Theta}_t)_{t\in\mathbb{Z}}$ given in \eqref{AR_spec_process}, and used $i=t+J$. Consequently, $W^{(2)}$ reduces to
\begin{align*}
    W^{(2)}=\dfrac{\sum_{t\in\mathbb{Z}}Q_t}{||Q_t||_2} =\left(\dfrac{\sum_{t\in\mathbb{Z}}\mathit{\Theta}_t}{(\sum_{t\in\mathbb{Z}}|\mathit{\Theta}_t|^\kappa)^{1/\kappa}}\right)\left(\dfrac{\sum_{t\in\mathbb{Z}}\mathit{\Theta}_t^2}{(\sum_{t\in\mathbb{Z}}|\mathit{\Theta}_t|^\kappa)^{2/\kappa}}\right)^{-1/2}=\;\left(\dfrac{\sum_{t\in\mathbb{Z}}\mathit{\Theta}_t}{(\sum_{t\in\mathbb{Z}}\mathit{\Theta}_t^2)^{1/2}}\right).
\end{align*}
Inserting the spectral tail process of the regularly varying AR(1) process now yields that
\begin{align*}
    W^{(2)}&=\dfrac{\sum_{t\in\mathbb{Z}}\mathit{\Theta}_0\;\varphi^t\mathbf{1}(t\geqslant-J)}{(\sum_{t\in\mathbb{Z}}\;\varphi^{2t}\mathbf{1}(t\geqslant-J))^{1/2}} =\mathit{\Theta}_0\dfrac{\sum_{t=-J}^\infty\varphi^t}{(\sum_{t=-J}^\infty\varphi^{2t})^{1/2}}=\mathit{\Theta}_0\dfrac{\varphi^{-J}\sum_{t=-J}^\infty\varphi^{t+J}}{(\varphi^{-2J}\sum_{t=-J}^\infty\varphi^{2(t+J)})^{1/2}}.
\end{align*}
Finally setting $i=t+J$,
\begin{align*}
    W^{(2)}=\mathit{\Theta}_0\dfrac{\sum_{i=0}^\infty\varphi^{i}}{(\sum_{i=0}^\infty\varphi^{2i})^{1/2}}=\mathit{\Theta}_0\dfrac{\sqrt{1-\varphi^2}}{1-\varphi}=\mathit{\Theta}_0\dfrac{\sqrt{1-\varphi}\sqrt{1+\varphi}}{(\sqrt{1-\varphi})^2}=\mathit{\Theta}_0\dfrac{\sqrt{1+\varphi}}{\sqrt{1-\varphi}}.
\end{align*}
Inserting this relation into the above expressions for the moments and using that $\mathbb{E}[\mathit{\Theta}_0]=p_+-p_-$ yields \eqref{eq:limit_mean}--\eqref{eq:limit_second_moment}. That the skewness parameter of $\xi_\kappa$ is given by $\beta=p_+-p_-$ follows by inserting $\mathit{\Theta}_t=\mathit{\Theta}_0\;\varphi^t$, $t\geq0$, into \eqref{eq:beta forward} of Lemma \ref{lem:sigma_beta}.\hfill $\square$

\begin{lemma}\label{lem:sigma_beta}
    Under the assumptions of Theorem \ref{theo:inf_var_jointCLT}, the skewness and scale parameters in \eqref{eq: scale and skew} can be represented in terms of the forward spectral tail process $(\mathit{\Theta}_t)_{t\geq0}$: 
    \begin{align}
    \beta&=\dfrac{\mathbb{E}[(\sum_{t=0}^\infty\mathit{\Theta}_t)_+^\kappa-(\sum_{t=1}^\infty\mathit{\Theta}_t)_+^\kappa]-\mathbb{E}[(\sum_{t=0}^\infty\mathit{\Theta}_t)_-^\kappa-(\sum_{t=1}^\infty\mathit{\Theta}_t)_-^\kappa]}{\mathbb{E}[|\sum_{t=0}^\infty\mathit{\Theta}_t|^\kappa-|\sum_{t=1}^\infty\mathit{\Theta}_t|^\kappa]}, \label{eq:beta forward}\\
    \tilde{\sigma}^\kappa&=c_\kappa\mathbb{E}\left[\Big|\sum_{t=0}^\infty\mathit{\Theta_t}\Big|^\kappa-\Big|\sum_{t=1}^\infty\mathit{\Theta_t}\Big|^\kappa\right], \label{eq:sigma forward}
\end{align}
with $c_\kappa$ given by \eqref{eq:def_c_kappa}.
\end{lemma}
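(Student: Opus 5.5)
The plan is to reduce both \eqref{eq:beta forward} and \eqref{eq:sigma forward} to a single identity for shift-invariant, $\kappa$-homogeneous functionals of the spectral tail process. For a real sequence $x=(x_t)_{t\in\mathbb{Z}}$ with $\sum_t|x_t|<\infty$, set
\begin{equation*}
H_\pm(x):=\Big(\sum\nolimits_{t\in\mathbb{Z}}x_t\Big)_\pm^\kappa,\qquad H:=H_++H_-=\Big|\sum\nolimits_{t\in\mathbb{Z}}x_t\Big|^\kappa .
\end{equation*}
Each of these maps is invariant under the backshift $B$ and homogeneous of degree $\kappa$. Since $Q=\mathit{\Theta}/\Vert\mathit{\Theta}\Vert_\kappa$ by \eqref{eq:extremal_cluster_process}, homogeneity gives $\mathbb{E}[H_\pm(Q)]=\mathbb{E}[H_\pm(\mathit{\Theta})/\Vert\mathit{\Theta}\Vert_\kappa^\kappa]$. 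These expectations are finite: $\mathbb{E}[(\sum_t|Q_t|)^\kappa]<\infty$ by Lemma 9.1.7 of \cite{mikosch2024extreme} under the anti-clustering condition in Assumption \ref{ass:AC+MX}, exactly as used in Section \ref{sec:chf_scale_skew}. The target identity is therefore
\begin{equation}\label{eq:plan_target}
\mathbb{E}\big[H_\pm(\mathit{\Theta})/\Vert\mathit{\Theta}\Vert_\kappa^\kappa\big]=\mathbb{E}\Big[\Big(\sum\nolimits_{t\geqslant0}\mathit{\Theta}_t\Big)_\pm^\kappa-\Big(\sum\nolimits_{t\geqslant1}\mathit{\Theta}_t\Big)_\pm^\kappa\Big].
\end{equation}
Once \eqref{eq:plan_target} holds for $H_+$ and $H_-$, summing the two cases gives \eqref{eq:sigma forward} through the definition of $\tilde{\sigma}$ in \eqref{eq: scale and skew}. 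Taking the difference of the two cases and dividing by the sum gives \eqref{eq:beta forward}.

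To prove \eqref{eq:plan_target}, I would introduce the forward increment $D(x):=H_\pm((x_t)_{t\geqslant0})-H_\pm((x_t)_{t\geqslant1})$, which is again $\kappa$-homogeneous. Its shifts telescope: summing $D(B^{j}x)$ over $j\in\mathbb{Z}$ gives $H_\pm(x)$. This uses that forward tail sums vanish as the starting index goes to $+\infty$ and tend to the full sum as it goes to $-\infty$, both of which hold when $\sum_t|x_t|<\infty$. By the anti-clustering condition, $\mathit{\Theta}$ is almost surely absolutely summable, so
\begin{equation*}
\mathbb{E}\big[H_\pm(\mathit{\Theta})/\Vert\mathit{\Theta}\Vert_\kappa^\kappa\big]=\sum\nolimits_{j\in\mathbb{Z}}\mathbb{E}\big[D(B^j\mathit{\Theta})/\Vert\mathit{\Theta}\Vert_\kappa^\kappa\big].
\end{equation*}
Interchanging the sum and the expectation requires a dominating bound. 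I would use $|D(B^jx)|\leqslant \kappa$-Hölder/mean-value estimates of the form $\big||a+b|^\kappa-|b|^\kappa\big|\leqslant |a|^\kappa$ for $\kappa\leqslant1$, and $\leqslant \kappa|a|(|a|+|b|)^{\kappa-1}$ for $\kappa\in(1,2)$. Both bounds are summable in $j$ with sum dominated by a multiple of $(\sum_t|\mathit{\Theta}_t|)^\kappa$.

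Next I would apply the time-change formula for the spectral tail process (\citealp{basrak2009regularly}; \citealp[Chapter 5]{mikosch2024extreme}) to the $0$-homogeneous function $g(x)=D(x)/\Vert x\Vert_\kappa^\kappa$. The formula reads $\mathbb{E}[g(B^{j}\mathit{\Theta})\mathbf{1}(\mathit{\Theta}_{-j}\neq0)]=\mathbb{E}[g(\mathit{\Theta})|\mathit{\Theta}_{j}|^\kappa]$. Summing over $j$ and using $\sum_j|\mathit{\Theta}_j|^\kappa/\Vert\mathit{\Theta}\Vert_\kappa^\kappa=1$ turns the right-hand side into $\mathbb{E}[D(\mathit{\Theta})]$, which is the right-hand side of \eqref{eq:plan_target}.

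I expect the main obstacle to be the indicator $\mathbf{1}(\mathit{\Theta}_{-j}\neq0)$ in the time-change formula. The telescoped sum contains terms $D(B^j\mathit{\Theta})$ with $\mathit{\Theta}_{-j}=0$, and the formula does not account for these. My first attempt would be to show that these terms contribute zero in expectation, or to anchor the telescoping only over indices $j$ with $\mathit{\Theta}_{-j}\neq0$ by grouping consecutive zero coordinates into the adjacent nonzero increment. Grouping leaves the telescoping sum unchanged, since a zero coordinate changes no partial sum. If this bookkeeping becomes delicate, I would fall back on the cluster identities in \citet[Chapters 5 and 9]{mikosch2024extreme}, which establish $\mathbb{E}[h(Q)]=\mathbb{E}[h((\mathit{\Theta}_t)_{t\geqslant0})-h((\mathit{\Theta}_t)_{t\geqslant1})]$ for shift-invariant $\kappa$-homogeneous $h$ under anti-clustering, and invoke that result directly with $h=H_\pm$.
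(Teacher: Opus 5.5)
Your proposal is correct, but it argues differently from the paper. The paper gives no derivation: it simply points to \citet[p.~458]{mikosch2024extreme}, where these forward representations of $\beta$ and $\tilde{\sigma}^\kappa$ are stated, which is essentially your fallback. Your main route instead re-derives the underlying cluster identity $\mathbb{E}[h(Q)]=\mathbb{E}[h((\mathit{\Theta}_t)_{t\geqslant0})-h((\mathit{\Theta}_t)_{t\geqslant1})]$ for $h=H_\pm$. You obtain it from the Basrak--Segers time-change formula by telescoping, which is the standard change-of-anchor argument. This gives a self-contained proof and shows exactly where anti-clustering enters, namely almost sure summability of $\mathit{\Theta}$ and $\mathbb{E}[(\sum_t|Q_t|)^\kappa]<\infty$. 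The cost is some bookkeeping that the citation avoids.

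Three points in the execution should be adjusted.

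\emph{The indicator is not an obstacle.} $D(B^jx)=(\sum_{s\geqslant -j}x_s)_\pm^\kappa-(\sum_{s\geqslant -j+1}x_s)_\pm^\kappa$ is the increment contributed by the single coordinate $x_{-j}$. It therefore vanishes identically when $x_{-j}=0$, so $D(B^j\mathit{\Theta})\mathbf{1}(\mathit{\Theta}_{-j}\neq0)=D(B^j\mathit{\Theta})$ pointwise and no grouping is needed.

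\emph{The domination claim is misstated for $\kappa<1$.} There $\sum_j|D(B^jx)|\leqslant\sum_j|x_{-j}|^\kappa=\Vert x\Vert_\kappa^\kappa$. This is in general not bounded by a multiple of $(\sum_t|x_t|)^\kappa$; the reverse inequality holds. The slip is harmless, because this bound cancels the normalization and gives $\sum_j|D(B^j\mathit{\Theta})|/\Vert\mathit{\Theta}\Vert_\kappa^\kappa\leqslant1$. For $\kappa\in(1,2)$ your bound $\kappa(\sum_t|Q_t|)^\kappa$ is correct and integrable.

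\emph{The time-change formula needs care with unbounded $g$.} Since $g=D/\Vert\cdot\Vert_\kappa^\kappa$ is unbounded for $\kappa\in(1,2)$, first apply the formula to the nonnegative $0$-homogeneous function $|g|$ and sum over $j$ by Tonelli. This also yields $\mathbb{E}[|D(\mathit{\Theta})|]<\infty$. Then apply the formula to $g^{+}$ and $g^{-}$ separately.
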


\noindent \textbf{Proof of Lemma \ref{lem:sigma_beta}:} The expressions for $\beta$ and $\tilde{\sigma}^\kappa$ are stated in \citet[p.458]{mikosch2024extreme}. Detailed arguments are available upon request.  \hfill $\square$ 

\noindent \textbf{Proof of Proposition \ref{prop:AR_HAC}:} Using the autoregressive structure of $X_t$, we note that 
\begin{align*}
    \sum_{t=1}^{n-j}w_jX_tX_{t+j}&=\sum_{t=1}^{n-j}w_jX_t\big(\varphi^jX_t+\varphi^{j-1}Z_{t+1}+\cdots+Z_{t+j}\big) \\
    &=w_j\varphi^j\sum_{t=1}^{n-j}X_t^2+w_j\sum_{t=1}^{n-j}X_t\Big(\sum_{i=1}^j\varphi ^{j-i}Z_{t+i}\Big).
\end{align*}
Consequently, we have that $\hat{\sigma}_n^2$ in \eqref{eq:T_HAC} is given by
\begin{align*}
    \hat{\sigma}_n^2&=n^{-1}\sum_{t=1}^nX_t^2+2n^{-1}\sum_{j=1}^qw_j\varphi^j\sum_{t=1}^{n-j}X_t^2+2n^{-1}\sum_{j=1}^qw_j\sum_{t=1}^{n-j}X_t\Big(\sum_{i=1}^{j}\varphi^{j-i}Z_{t+i}\Big) \\
    &=n^{-1}\sum_{t=1}^{n}X_t^2\Big(1+2\sum_{j=1}^qw_j\varphi^j\Big) \\
    &\quad -2n^{-1}\sum_{j=1}^qw_j \varphi^j\Big(\sum_{t=n-j+1}^nX_t^2\Big)+2n^{-1}\sum_{j=1}^qw_j\sum_{t=1}^{n-j}X_t\Big(\sum_{i=1}^{j}\varphi^{j-i}Z_{t+i}\Big) \\
    &=:n^{-1}\gamma_n^2c_q+R_1+R_2,
\end{align*}
with $c_q:=1+2\sum_{j=1}^qw_j\varphi^j>0$. It immediately follows that
\begin{align*}
    T_n^{\textrm{HAC}}&=\dfrac{\gamma_n}{n^{1/2}\hat{\sigma}_n}\dfrac{S_n}{\gamma_n}= \left(c_q+\dfrac{(R_1+R_2)(n/a_n^2)}{a_n^{-2}\gamma_n^2}\right)^{-1/2}\;T_n,
\end{align*}
where by Proposition \ref{prop:AR_lim_moments}, $T_n\overset{d}{\to} \xi_\kappa/\zeta_{\kappa/2}$. By Theorem \ref{theo:inf_var_jointCLT}, $a_n^{-2}\gamma_n^2$ has a limit that is strictly positive almost surely, and it suffices to show that $(R_1+R_2)(n/a_n^2)=o_\mathbb{P}(1)$, where we note that $n/a_n^2\to0$. Clearly $R_1=o_\mathbb{P}(1)$ as $n\to\infty$. If $\kappa\in(1,2)$ then $R_2$ consists of $q$ averages of zero mean ergodic summands, and, hence, by the ergodic theorem $R_2=o_\mathbb{P}(1)$ as $n\to\infty$. It remains to show that $R_2(n/a_n^{2})=o_\mathbb{P}(1)$ for $\kappa\in(0,1)$. Observe that for any $\varepsilon>0$\medskip
\begin{align*}
    \mathbb{P}\left(a_n^{-2}\sum_{t=1}^{n}X_tZ_{t+j}>\varepsilon\right)&\leqslant \mathbb{P}\left(\sum_{t=1}^{n}|X_t|\;|Z_{t+j}|>\varepsilon a_n^2\right) \leqslant \mathbb{P}\left(n\;\max_{t=1,\dots,n}|X_t|\;|Z_{t+j}|>\varepsilon a_n^2\right).
\end{align*}
Since $|X_t|$ and $|Z_{t+j}|$ are independent and both are regularly varying with index $\kappa$, an application of Breiman's lemma yields that their product is also regularly varying with index $\kappa<1$ as well. Consequently, by regular variation of $|X_t|\;|Z_{t+j}|$ and the fact that $a_n/n\to\infty$ when $\kappa\in(0,1)$
\begin{align*}
    \mathbb{P}\left(a_n^{-2}\sum_{t=1}^{n}X_tZ_{t+j}>\varepsilon\right)&\leq \mathbb{P}\left(\bigcup_{t=1}^n\{|X_t|\;|Z_{t+j}|>\varepsilon\dfrac{a_n^2}{n}\}\right)\leq n\mathbb{P}\left(|X_t|\;|Z_{t+j}|>\varepsilon\dfrac{a_n^2} {n}\right) \\[0.05in]
    &\sim\varepsilon^{-\kappa}n^{1+\kappa}a_n^{-2\kappa}L(n^{-1}a_n^2)\to 0,\qquad n\to\infty.
\end{align*}
\hfill $\square$

\section{Results related to Section \ref{sec:subsampling}}

\subsection{Auxiliary results for subsampling}

Throughout this section, we let $(X_t)_{t\in\mathbb{Z}}$ denote a regularly varying stationary process with tail index $\kappa\in(1,2)$. Furthermore, let $(a_n)_{n\geqslant1}$ be a sequence of reals chosen such that $n\mathbb{P}(|X_t|>a_n)\to1$ as $n\to\infty$. Define
\begin{equation}\label{stats}
    \gamma_{n,2}^2=a_n^{-2}\sum_{t=1}^n(X_t-\mu)^2\quad\quad\text{and}\quad\quad R_n=a_n^{-1}\dfrac{\sum_{t=1}^n(X_t-\mu)}{\gamma_{n,2}},\quad\quad n\geq1,
\end{equation}
with $\mu:=\mathbb{E}[X_t]$. Assuming that the stationary process $(X_t)_{t\in\mathbb{Z}}$ satisfies Assumption \ref{ass:AC+MX}, it holds by Theorem \ref{theo:inf_var_jointCLT} that
\begin{equation}\label{eq:limits}
    \gamma_{n,2}\xrightarrow{d}\zeta_{\kappa/2}\quad\quad\text{and}\quad\quad R_n\xrightarrow{d}\xi_{\kappa}/\zeta_{\kappa/2},\quad\quad n\to\infty,
\end{equation}
respectively, where $\zeta_{\kappa/2}$ and $\xi_{\kappa}$ are defined as in the theorem.\medskip

In this section we prove the validity of the subsampling Algorithm \ref{algo}. We begin with two auxiliary lemmas -- Lemmas \ref{Sub_Lemma 1} and \ref{Sub_lemma2} -- showing that subsampling techniques can approximate the limiting distributions in \eqref{eq:limits}.
The proofs of these lemmas follow the same structure as \citet[proof of Theorem 2.2.1]{politis2012subsampling} adapted to strongly mixing,  regularly varying processes with tail index $\kappa\in(1,2)$. For a block length $b_n \geqslant 1$, the subsample versions of \eqref{stats} are
\begin{equation}\label{eq:centered_subsamples}
    \tilde{\gamma}_{i,b_n}^2=a_{b_n}^{-2}\sum_{t=i}^{i+b_n-1}(X_t-\overline{X}_n)^2\quad\text{and}\quad \tilde{R}_{i,b_n}=\dfrac{a_{b_n}^{-1}\sum_{t=i}^{i+b_n-1}(X_t-\overline{X}_n)}{\;\tilde{\gamma}_{i,b_n}},\quad i=1,\dots,q_n,
\end{equation}
where $q_n=n-b_n+1$, $\overline{X}_n =n^{-1}\sum_{t=1}^nX_t $, and $(a_{b_n})_{n\geqslant1}$ satisfies $b_n\,\mathbb{P}(|X_t|>a_{b_n})\to1$ as $b_n\to\infty$. We then show (in Lemmas \ref{Sub_Lemma 1} and \ref{Sub_lemma2}) that the limiting distributions of $\gamma_{n,2}$ and $R_n$, respectively, can be approximated by
\begin{equation}\label{approximations}
    L_{\gamma,b_n}(x)=q_n^{-1}\sum_{i=1}^{q_n}\mathbf{1}(\tilde{\gamma}_{i,b_n}\leqslant x)\quad\text{and}\quad L_{R,b_n}(x)=q_n^{-1}\sum_{i=1}^{q_n}\mathbf{1}(\tilde{R}_{i,b_n}\leqslant x).
\end{equation}
The validity of Algorithm \ref{algo} under the null hypothesis follows by replacing $\overline{X}_n$ and $\mu$ by zeros, whereas the validity of the Algorithm under the alternative hypothesis is demonstrated in the proof of Proposition \ref{prop:validity_Null_and_alt}).
\begin{lemma}\label{Sub_Lemma 1}
    Let $(X_t)_{t\in\mathbb{Z}}$ satisfy Assumptions \ref{ass_RV}--\ref{ass:AC+MX} with tail index $\kappa\in(1,2)$. Furthermore, let $L_{\gamma,b_n}(x)$ be defined as in \eqref{approximations} with $(b_n)_{n\geqslant1}$ chosen such that $b_n\to\infty$ and $b_n/n\to0$ as $n\to\infty$. Then the following holds.
    \begin{itemize}
        \item [\emph{(1)}] For every $x\in\mathbb{R}$, it holds that $L_{\gamma,b_n}(x)\xrightarrow{\mathbb{P}}\mathbb{P}(\zeta_{\kappa/2}\leqslant x)$ as $n\to\infty$,
        \item [\emph{(2)}] For $\beta\in(0,1)$, define the corresponding quantiles $c_{b_n}(1-\beta)=\inf\{x\in\mathbb{R}:L_{\gamma,b_n}(x)\geqslant1-\beta\}$. It holds that $\mathbb{P}(\gamma_{n,2}\leq c_{b_n}(1-\beta))\rightarrow1-\beta$, as $n\to\infty.$ 
    \end{itemize}
\end{lemma}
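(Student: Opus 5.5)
The argument follows the template of \citet[Theorem 2.2.1]{politis2012subsampling}. The first step is to remove the estimated centering. Write $\gamma_{i,b_n}^2:=a_{b_n}^{-2}\sum_{t=i}^{i+b_n-1}(X_t-\mu)^2$ and expand
\begin{equation*}
\tilde\gamma_{i,b_n}^2=\gamma_{i,b_n}^2+2(\mu-\overline{X}_n)a_{b_n}^{-2}\sum_{t=i}^{i+b_n-1}(X_t-\mu)+b_na_{b_n}^{-2}(\mu-\overline{X}_n)^2 .
\end{equation*}
By Theorem \ref{theo:inf_var_jointCLT}, $\overline{X}_n-\mu=O_{\mathbb{P}}(a_n/n)=o_{\mathbb{P}}(1)$, and $b_n/a_{b_n}^2\to0$ as in \eqref{eq:rate_a_n}. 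For the middle term, stationarity and Theorem \ref{theo:inf_var_jointCLT} give $a_{b_n}^{-1}\sum_{t\in\text{block}}(X_t-\mu)=O_{\mathbb{P}}(1)$ uniformly in distribution over $i$, so its product with $a_{b_n}^{-1}(\mu-\overline{X}_n)=o_{\mathbb{P}}(1)$ is small. To control this for the empirical distribution function, I use a standard sandwich argument. On the event $\{|\overline{X}_n-\mu|\le\epsilon_n\}$, with $\epsilon_n\to0$ slowly so that the event has probability tending to one and $\epsilon_n\sqrt{b_n}/a_{b_n}\to 0$ (and $\epsilon_n a_{b_n}^{-1}\to 0$), one has
\begin{equation*}
L^0_{\gamma,b_n}(x-\delta)-q_n^{-1}\sum_i\mathbf 1(|a_{b_n}^{-1}\textstyle\sum(X_t-\mu)|>M)\le L_{\gamma,b_n}(x)\le L^0_{\gamma,b_n}(x+\delta)+q_n^{-1}\sum_i\mathbf 1(\cdot>M),
\end{equation*}
for any fixed $\delta,M>0$ and $n$ large. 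Here $L^0$ is the empirical distribution function of the $\gamma_{i,b_n}$. The expectation of the remainder term is $\mathbb{P}(|a_{b_n}^{-1}S^{c}_{b_n}|>M)$, which is small for large $M$. Thus part (1) reduces to proving it for $L^0_{\gamma,b_n}$ at continuity points, and these are all points, since $\zeta_{\kappa/2}^2$ is positive stable and hence has a continuous distribution.

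The main step is $L^0_{\gamma,b_n}(x)\to\mathbb{P}(\zeta_{\kappa/2}\le x)$ in probability. Its mean is $\mathbb{P}(\gamma_{b_n,2}\le x)$, which converges by Theorem \ref{theo:inf_var_jointCLT} applied with $n$ replaced by $b_n$. It therefore suffices to show $\mathrm{var}(L^0_{\gamma,b_n}(x))\to0$. Writing $I_i=\mathbf 1(\gamma_{i,b_n}\le x)$ and using stationarity,
\begin{equation*}
\mathrm{var}(L^0)\le \frac{2}{q_n}\sum_{h=0}^{q_n-1}|\mathrm{cov}(I_1,I_{1+h})|\le\frac{2(b_n+m_n)}{q_n}+\frac{2}{q_n}\sum_{h=b_n+m_n}^{q_n}\alpha_{h-b_n+1}.
\end{equation*}
This bound uses the trivial bound for the first $b_n+m_n$ lags and the definition \eqref{mix_coef} for the rest, since $I_1$ and $I_{1+h}$ are bounded functions of the past up to time $b_n$ and of the future from time $1+h$. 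The first term is $o(1)$ for any $m_n=o(n)$.

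I expect the \emph{main obstacle} to be the second term, because Assumption \ref{ass:AC+MX} does not give summability of $\alpha_k$, only $(n/r_n)\alpha_{\ell_n}\to0$. To handle it I bound the term by $2\sup_{k\ge m_n}\alpha_k\le 2\alpha_{m_n}$ (mixing coefficients are nonincreasing) and choose $m_n=\ell_n$. Then $\alpha_{\ell_n}\le (r_n/n)\cdot o(1)\to0$, while $\ell_n=o(r_n)=o(n)$. This Cesàro-type bound, $q_n^{-1}\sum_h|\mathrm{cov}|\le (b_n+\ell_n)/q_n+\alpha_{\ell_n}$, only needs $\alpha_k\to0$, so the weak mixing rate suffices.

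Part (2) then follows from the standard quantile argument. Since the limit distribution function $G$ of $\zeta_{\kappa/2}$ is continuous and strictly increasing on $(0,\infty)$, part (1) at the points $c(1-\beta)\pm\epsilon$ gives $c_{b_n}(1-\beta)\to c(1-\beta):=G^{-1}(1-\beta)$ in probability. Combining this with $\gamma_{n,2}\xrightarrow{d}\zeta_{\kappa/2}$ from \eqref{eq:limits} and Slutsky's lemma yields $\mathbb{P}(\gamma_{n,2}\le c_{b_n}(1-\beta))\to G(c(1-\beta))=1-\beta$.
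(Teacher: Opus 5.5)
Your proposal is correct and follows essentially the same route as the paper. You use the Politis--Romano--Wolf template: a sandwich argument strips the $\overline{X}_n$-centering from $\tilde\gamma_{i,b_n}$; a mixing covariance bound gives $\mathrm{var}(L^0_{\gamma,b_n}(x))\to0$ using only $\alpha_k\to0$ (as in Lemma \ref{lem:sub_variance}, where the paper splits at lag $2b_n$ rather than $b_n+\ell_n$); and the standard quantile argument gives (2). The one difference is that you control the centering error with just $\overline{X}_n-\mu=o_{\mathbb{P}}(1)$, $a_{b_n}\to\infty$, $b_n/a_{b_n}^2\to0$ and a fixed truncation level $M$, whereas the paper invokes the sharper Potter-bound rate $b_n(\overline{X}_n-\mu)/a_{b_n}\xrightarrow{\mathbb{P}}0$, which this lemma does not need (it matters only for Lemma \ref{Sub_lemma2}).
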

\noindent \textbf{Proof of Lemma \ref{Sub_Lemma 1}:}
\textit{Proof of (1):} Note initially from its definition that $\tilde{\gamma}_{i,b_n}$ is a.s.\;\,strictly positive. Consequently, it holds that $L_{\gamma,b_n}(x)=\mathbb{P}(\zeta_{\kappa/2}\leqslant x)=0$ a.s.\;\,for every $x\leqslant0$. It remains to show that the conclusion of (1) holds for every $x> 0$. For this, observe that
\begin{align}\label{eq:sub_gamma_eq1}
    \tilde{\gamma}_{i,b_n}^2&=a_{b_n}^{-2}\sum_{t=i}^{i+b_n-1}(X_t-\overline{X}_n)^2 \notag \\
    &=a_{b_n}^{-2}\left[\sum_{t=i}^{i+b_n-1}(X_t-\mu)^2+b_n(\overline{X}_n-\mu)^2-2(\overline{X}_n-\mu)\sum_{t=i}^{i+b_n-1}(X_t-\mu)\right],
\end{align}
and introduce the following notation: $\gamma_{i,b_n}^2=a_{b_n}^{-2}\sum_{t=i}^{i+b_n-1}(X_t-\mu)^2$, $S_{i,b_n}=\sum_{t=i}^{i+b_n-1}(X_t-\mu)$ and $E_{i,\gamma}=2a_{b_n}^{-2}(\overline{X}_n-\mu)S_{i,b_n}-b_n\;a_{b_n}^{-2}(\overline{X}_n-\mu)^2$. Now, inserting \eqref{eq:sub_gamma_eq1} into $L_{\gamma,b_n}(x)$ in \eqref{approximations}, and using that the events $\{\tilde{\gamma}_{i,b_n}\leqslant x\}$ and $\{\tilde{\gamma}^2_{i,b_n}\leqslant x^2\}$ are equivalent for every $x>0$ yields,
\begin{align*}
    L_{\gamma,b_n}(x)=q_n^{-1}\sum_{i=1}^{q_n}\mathbf{1}\left(\gamma_{i,b_n}^2\leqslant x^2+E_{i,\gamma}\right),\qquad x>0.
\end{align*}
We seek to establish upper and lower bounds for $L_{\gamma,b_n}(x)$ for every $x>0$. To obtain an upper bound, observe that we have the following set relation for every $d>0$,
\begin{equation*}
    \big\{\gamma_{i,b_n}^2\leqslant x^2+E_{i,\gamma}\big\}\subset\big\{\gamma_{i,b_n}^2\leqslant x^2+d\big\}\cup\big\{E_{i,\gamma}>d\big\}.
\end{equation*}
Using this relation and the fact that $(x^2+d)^{1/2}\leqslant x+d^{1/2}$ for all $x,d>0$, we obtain
\begin{equation}\label{Upper 1}
    L_{\gamma,b_n}(x)\leqslant q_n^{-1}\sum_{i=1}^{q_n}\mathbf{1}\big(\gamma_{i,b_n}\leqslant x+d^{1/2}\big)+q_n^{-1}\sum_{i=1}^{q_n}\mathbf{1}\big(E_{i,\gamma}>d\big).
\end{equation}
On the other hand, to obtain a lower bound for $L_{\gamma,b_n}(x)$ for every $x>0$, we initially consider the following set relation
\begin{equation*}
    \big\{\gamma_{i,b_n}^2\leqslant x^2+E_{i,\gamma}\big\}\supset\big\{\gamma_{i,b_n}^2\leqslant x^2-d\big\}\cap\big\{E_{i,\gamma}\geqslant -d\big\}.
\end{equation*}
Next, observe that the function $x\mapsto(x^2-d)^{1/2}$ is well-defined for every $x\in(d^{1/2},\infty)$ and furthermore it holds that $(x^2-d)^{1/2}\geqslant x-d^{1/2}$ on this interval. Applying these observations, we have that
\begin{align*}
     \mathbf{1}\big(\gamma_{i,b_n}^2\leqslant x^2+E_{i,\gamma}\big)&\geqslant\mathbf{1}\big(\gamma_{i,b_n}^2\leqslant x^2-d\big)-\mathbf{1}(E_{i,\gamma}< -d) \\
     &\geqslant \mathbf{1}(\gamma_{i,b_n}\leqslant x-d^{1/2})-\mathbf{1}(E_{i,\gamma}< -d),\quad\quad d>0,\;x\in[d^{1/2},\infty).
\end{align*}
Since we can choose $d>0$ arbitrarily small, we thus obtain the following lower bound for $L_{\gamma,b_n}(x)$, which holds for every $x>0$,
\begin{equation}\label{Lower 1}
    L_{\gamma,b_n}(x)\geqslant q_n^{-1}\sum_{i=1}^{q_n}\mathbf{1}\big(\gamma_{i,b_n}\leqslant x-d^{1/2}\big)-q_n^{-1}\sum_{i=1}^{q_n}\mathbf{1}\big(E_{i,\gamma}< -d\big).
\end{equation}
Next, we seek to show that the term involving $E_{i,\gamma}$ is negligible in both relations \eqref{Upper 1} and \eqref{Lower 1}. Since $d>0$ in both relations, it suffices to show that $q_n^{-1}\sum_{i=1}^{q_n}\mathbf{1}(|E_{i,\gamma}|>d)=o_\mathbb{P}(1)$ as $n\to\infty$. By the triangle inequality we have
\begin{equation*}
    |E_{i,\gamma}|=|2a_{b_n}^{-2}(\overline{X}_n-\mu)S_{i,b_n}-b_n\;a_{b_n}^{-2}(\overline{X}_n-\mu)^2|\leqslant2a_{b_n}^{-2}|\overline{X}_n-\mu|\;|S_{i,b_n}|+b_n\;a_{b_n}^{-2}(\overline{X}_n-\mu)^2.
\end{equation*}
Therefore, it follows that we have for every $\delta>0$,
\begin{align*}
    q_n^{-1}\sum_{i=1}^{q_n}\mathbf{1}\left(|E_{i,\gamma}|>d\right)&\leqslant q_n^{-1}\sum_{i=1}^{q_n}\mathbf{1}\left(2a_{b_n}^{-2}|\overline{X}_n-\mu|\;|S_{i,b_n}|+b_n\;a_{b_n}^{-2}(\overline{X}_n-\mu)^2>d\right)\\
    &=q_n^{-1}\sum_{i=1}^{q_n}\mathbf{1}\left(|a_{b_n}^{-1}S_{i,b_n}|>\dfrac{a_{b_n}\;d}{2|\overline{X}_n-\mu|}-\dfrac{b_n|\overline{X}_n-\mu|}{2a_{b_n}}\right) \\
    &\leqslant q_n^{-1}\sum_{i=1}^{q_n}\mathbf{1}\left(|a_{b_n}^{-1}S_{i,b_n}|>\dfrac{d\,b_n/4-\delta^2}{\delta}\right)+q_n^{-1}\sum_{i=1}^{q_n}\mathbf{1}\left(\dfrac{b_n|\overline{X}_n-\mu|}{2a_{b_n}}>\delta\right).
\end{align*}
Now, recall that $a_n\sim n^{1/\kappa}L(n)$ and $a_{b_n}\sim b_n^{1/\kappa}L(b_n)$ as $n\to\infty$. An application of Potter's bounds \citep[Theorem 1.5.6]{bingham1989regular} yields that $L(n)/L(b_n)\leqslant c(n/b_n)^\epsilon$ for any $c>1,\;\epsilon>0$ with $n$ sufficiently large. Choosing $\epsilon$ such that $1-1/\kappa-\epsilon>0$ (which is possible since $\kappa\in(1,2)$),
\begin{equation}\label{eq:potter_bounds}
    \lim_{n\to\infty}\dfrac{b_n\;a_n}{n\;a_{b_n}}=\lim_{n\to\infty}\left(\dfrac{b_n}{n}\right)^{1-1/\kappa}\dfrac{L(n)}{L(b_n)}\leqslant \lim_{n\to\infty}c\left(\dfrac{b_n}{n}\right)^{1-1/\kappa-\epsilon}=0.
\end{equation}
Using this together with Theorem \ref{theo:inf_var_jointCLT} we have that 
\begin{equation}\label{potter_arg}
    \dfrac{b_n(\overline{X}_n-\mu)}{a_{b_n}}=\dfrac{b_n\;a_n}{n\;a_{b_n}}a_n^{-1}\sum_{t=1}^n(X_t-\mu)\xrightarrow{\mathbb{P}}0,\quad\quad n\to\infty.
\end{equation}
Consequently, $\mathbb{E}[\mathbf{1}(b_n|\overline{X}_n-\mu|/(2a_{b_n})>\delta)]=\mathbb{P}(b_n|\overline{X}_n-\mu|/(2a_{b_n})>\delta)\to0$ as $n\to\infty$, for every $\delta>0$, and an application of Lemma \ref{lem:sub_variance} yields that,
\begin{equation*}
    q_n^{-1}\sum_{i=1}^{q_n}\mathbf{1}\left(\dfrac{b_n|\overline{X}_n-\mu|}{2a_{b_n}}>\delta\right)\xrightarrow{\mathbb{P}}0,\qquad n\to\infty.
\end{equation*}
Similarly, noting $|a_{b_n}^{-1}S_{i,b_n}|=O_\mathbb{P}(1)$ from Theorem \ref{theo:inf_var_jointCLT}, and that $b_n\to\infty$ as $n\to\infty$ gives us that $\mathbb{E}[\mathbf{1}(|a_{b_n}^{-1}S_{i,b_n}|>(d\,b_n/4-\delta^2)/\delta)]\to0$ as $n\to\infty$, and by Lemma \ref{lem:sub_variance},
\begin{equation}\label{e 2}
    q_n^{-1}\sum_{i=1}^{q_n}\mathbf{1}\left(|a_{b_n}^{-1}S_{i,b_n}|>\dfrac{d\,b_n/4-\delta^2}{\delta}\right)\xrightarrow{\mathbb{P}}0,\quad\quad n\to\infty.
\end{equation}
This proves that for every $\varepsilon>0$ it holds that $q_n^{-1}\sum_{i=1}^{q_n}\mathbf{1}(|E_{i,\gamma}|>d)<\varepsilon$ with probability tending to one. \smallskip

Using this result with $d>0$ sufficiently small together with the relations \eqref{Upper 1} and \eqref{Lower 1}, we have for arbitrarily small $\varepsilon>0$ and every $x>0$ that
\begin{equation}\label{sandwich 1}
    q_n^{-1}\sum_{i=1}^{q_n}\mathbf{1}\big(\gamma_{i,b_n}\leqslant x-d^{1/2}\big)-\varepsilon\leqslant L_{\gamma,b_n}(x)\leqslant q_n^{-1}\sum_{i=1}^{q_n}\mathbf{1}\big(\gamma_{i,b_n}\leqslant x+d^{1/2}\big)+\varepsilon,
\end{equation}
holds, with probability tending to one. Letting $d$ and $\varepsilon$ go to zero, we note that (1) of the lemma follows by \eqref{sandwich 1} if we can show that
\begin{equation}\label{final_relation}
    q_n^{-1}\sum_{i=1}^{q_n}\mathbf{1}\big(\gamma_{i,b_n}\leqslant x\pm d^{1/2}\big)\xrightarrow{\mathbb{P}}\mathbb{P}(\zeta_{\kappa/2}\leq x\pm d^{1/2}),\quad\quad n\to\infty.
\end{equation}
The convergence in \eqref{final_relation} holds by applications of Chebyshev's inequality and Portmanteau's theorem, provided that $\mathrm{var}(q_n^{-1}\sum_{i=1}^{q_n}\mathbf{1}(\gamma_{i,b_n}\leqslant x))\to0$ as $n\to\infty$. However, this follows directly by an application of Lemma \ref{lem:sub_variance}. This proves (1).\medskip

\textit{Proof of (2):} From (1), we have that $L_{\gamma,b_n}(x)\xrightarrow{\mathbb{P}}\mathbb{P}(\zeta_{\kappa/2}\leqslant x)$, and since $\zeta_{\kappa/2}$ has a continuous distribution, 
   $ c_{b_n}(1-\beta) \xrightarrow{\mathbb{P}}\inf\{x\in\mathbb{R}:\mathbb{P}(\zeta_{\kappa/2}\leqslant x)\geqslant1-\beta\} =: c(1-\beta)$.
Hence, with probability tending to one, it holds for $\varepsilon>0$ that     $c(1-\beta)-\varepsilon\leqslant c_{b_n}(1-\beta)\leqslant c(1-\beta)+\varepsilon$.
Using this result, we have for sufficiently large $n$,
\begin{equation*}
    \mathbb{P}(\gamma_{n,2}\leqslant c(1-\beta)-\varepsilon)\leqslant\mathbb{P}(\gamma_{n,2}\leqslant c_{b_n}(1-\beta))\leqslant \mathbb{P}(\gamma_{n,2}\leqslant c(1-\beta)+\varepsilon).
\end{equation*}
Observe for any choice of $\varepsilon>0$, we have that $c(1-\beta)\pm\varepsilon$ are continuity points of $\mathbb{P}(\zeta_{\kappa/2}\leqslant x)$. Then since $\gamma_{n,2}\xrightarrow{d}\zeta_{\kappa/2}$, we have by first letting $n\to\infty$ and then $\varepsilon\downarrow0$, 
\begin{align}
    \lim_{n\to\infty}\mathbb{P}(\gamma_{n,2}\leqslant c_{b_n}(1-\beta))&\geqslant \lim_{\varepsilon\downarrow0}\lim_{n\to\infty}\mathbb{P}(\gamma_{n,2}\leqslant c(1-\beta)-\varepsilon) \notag\\
    &=\lim_{\varepsilon\downarrow0}\mathbb{P}(\zeta_{\kappa/2}\leqslant c(1-\beta)-\varepsilon)=\mathbb{P}(\zeta_{\kappa/2}\leqslant c(1-\beta))=1-\beta\notag,\\[0.1in]
    \lim_{n\to\infty}\mathbb{P}(\gamma_{n,2}\leqslant c_{b_n}(1-\beta))&\leqslant \lim_{\varepsilon\downarrow0}\lim_{n\to\infty}\mathbb{P}(\gamma_{n,2}\leqslant c(1-\beta)+\varepsilon) \notag\\
    &=\lim_{\varepsilon\downarrow0}\mathbb{P}(\zeta_{\kappa/2}\leqslant c(1-\beta)+\varepsilon)=\mathbb{P}(\zeta_{\kappa/2}\leqslant c(1-\beta))=1-\beta. \notag
\end{align}
The right-hand sides hold since $\mathbb{P}(\zeta_{\kappa/2}\leqslant x)$ is continuous. This proves (2). \hfill $\square$
\begin{lemma}\label{Sub_lemma2}
    Let  $(X_t)_{t\in\mathbb{Z}}$ satisfy Assumptions \ref{ass_RV}--\ref{ass:AC+MX} with tail index $\kappa\in(1,2)$. Furthermore, let $L_{R,b_n}(x)$ be defined as in \eqref{approximations} with $(b_n)_{n\geqslant1}$ chosen such that $b_n/n\to0$ and $b_n\to\infty$ as $n\to\infty$. 
    \begin{itemize}
        \item [\emph{(1)}] If $x$ is a continuity point of $\mathbb{P}(\xi_\kappa/\zeta_{\kappa/2}\leqslant \cdot)$, then $L_{R,b_n}(x)\xrightarrow{\mathbb{P}}\mathbb{P}(\xi_\kappa/\zeta_{\kappa/2}\leqslant x)$ as $n\to\infty$.
        \item [\emph{(2)}] For $\beta\in(0,1)$, define the corresponding quantiles $c_{b_n}(1-\beta)=\inf\{x\in\mathbb{R}:L_{R,b_n}(x)\geqslant1-\beta\}$ and $c(1-\beta)=\inf\{x\in\mathbb{R}:\mathbb{P}(\xi_{\kappa}/\zeta_{\kappa/2}\leqslant x)\geqslant1-\beta\}$. 
        If $L(\xi_\kappa/\zeta_{\kappa/2}\leqslant x)$ is continuous in a neighborhood around $c(1-\beta)$, it holds that $\mathbb{P}(R_{n}\leqslant c_{b_n}(1-\beta))\rightarrow1-\beta$, as $n\to\infty.$
    \end{itemize}
\end{lemma}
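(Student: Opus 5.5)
The argument follows the template of Lemma \ref{Sub_Lemma 1}: it is the subsampling argument of \citet[Theorem 2.2.1]{politis2012subsampling}, adapted so that the subsample statistic is centred at $\overline{X}_n$ rather than $\mu$. Write $R_{i,b_n}=a_{b_n}^{-1}S_{i,b_n}/\gamma_{i,b_n}$ for the infeasible statistic centred at $\mu$, where $S_{i,b_n}$ and $\gamma_{i,b_n}$ are as in the proof of Lemma \ref{Sub_Lemma 1}. By stationarity and \eqref{eq:limits} applied with $n$ replaced by $b_n$, each $R_{i,b_n}$ converges in distribution to $\xi_\kappa/\zeta_{\kappa/2}$.

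\textbf{Step 1 (infeasible version).} The first step is to show that $U_n(x):=q_n^{-1}\sum_{i=1}^{q_n}\mathbf{1}(R_{i,b_n}\leqslant x)\xrightarrow{\mathbb{P}}\mathbb{P}(\xi_\kappa/\zeta_{\kappa/2}\leqslant x)$ at every continuity point $x$. Its mean converges to this limit by the Portmanteau theorem. Its variance tends to zero by Lemma \ref{lem:sub_variance}. That lemma is a mixing argument: covariances of the indicators at lags beyond $b_n+\ell$ are bounded by $\alpha_{\ell}$, the number of near lags is $O(b_n)=o(q_n)$, and \eqref{alpha_coefs_req} implies $\alpha_k\to0$, so a Cesàro argument applies. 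Chebyshev's inequality then gives the claim.

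\textbf{Step 2 (centering perturbation).} Expand the numerator and the denominator:
\[
a_{b_n}^{-1}\sum_{t=i}^{i+b_n-1}(X_t-\overline{X}_n)=a_{b_n}^{-1}S_{i,b_n}-D_n,\qquad \tilde\gamma_{i,b_n}^2=\gamma_{i,b_n}^2-E_{i,\gamma},
\]
where $D_n:=b_na_{b_n}^{-1}(\overline{X}_n-\mu)$ and $E_{i,\gamma}$ is as in the proof of Lemma \ref{Sub_Lemma 1}. By \eqref{potter_arg}, $D_n\xrightarrow{\mathbb{P}}0$ uniformly in $i$. In that proof it was already shown that $q_n^{-1}\sum_i\mathbf{1}(|E_{i,\gamma}|>d)\xrightarrow{\mathbb{P}}0$ for every $d>0$.

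The remaining task is a sandwich for the ratio. Fix $\varepsilon,d>0$ and a small $\eta>0$, and consider indices $i$ with $|E_{i,\gamma}|\le d$ and $\gamma_{i,b_n}>\eta$. For such $i$, once $|D_n|\le d$,
\[
|\tilde R_{i,b_n}-R_{i,b_n}|\le C(\eta)\,d\,(1+|a_{b_n}^{-1}S_{i,b_n}|).
\]
Now further discard the indices with $|a_{b_n}^{-1}S_{i,b_n}|>M$ or with $\gamma_{i,b_n}\le\eta$. The fraction of such indices is bounded by empirical frequencies that, by Step 1 applied to the marginal subsample distributions (as in Lemma \ref{Sub_Lemma 1}(1) and its analogue for $S$), converge to $\mathbb{P}(|\xi_\kappa|>M)+\mathbb{P}(\zeta_{\kappa/2}\le\eta)$. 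This can be made $<\varepsilon$ because $\zeta_{\kappa/2}>0$ a.s.\ and $\xi_\kappa$ is tight. On the remaining indices $|\tilde R_{i,b_n}-R_{i,b_n}|\le\delta:=C(\eta)d(1+M)$, so with probability tending to one
\[
U_n(x-\delta)-3\varepsilon\le L_{R,b_n}(x)\le U_n(x+\delta)+3\varepsilon.
\]
Choose $x\pm\delta$ to be continuity points, apply Step 1, and let $d\downarrow0$ and then $\varepsilon\downarrow0$. This proves (1).

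\textbf{Step 3 (quantiles).} Part (2) copies the proof of Lemma \ref{Sub_Lemma 1}(2). Continuity of the limit law in a neighbourhood of $c(1-\beta)$ makes every $c(1-\beta)\pm\varepsilon$, for $\varepsilon$ small, a continuity point. Combined with (1), this gives $c_{b_n}(1-\beta)\in[c(1-\beta)-\varepsilon,\,c(1-\beta)+\varepsilon]$ with probability tending to one. One subtlety: uniqueness of the quantile needs the distribution function to be strictly increasing at $c(1-\beta)$. I will get this from Proposition-type density results where available, or otherwise argue with $\liminf/\limsup$ as in Lemma \ref{Sub_Lemma 1}. Then $R_n\xrightarrow{d}\xi_\kappa/\zeta_{\kappa/2}$ and the sandwich $\mathbb{P}(R_n\le c(1-\beta)\mp\varepsilon)$ give the limit $1-\beta$.

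The main obstacle is Step 2. The ratio is not Lipschitz near $\gamma_{i,b_n}=0$, and the errors are not uniform in $i$. The plan handles this by truncation, using the strict positivity of $\zeta_{\kappa/2}$.
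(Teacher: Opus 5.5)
Your proposal is correct and follows essentially the same route as the paper: you establish convergence of the infeasible $\mu$-centred subsample distribution via Lemma \ref{lem:sub_variance}, then sandwich $L_{R,b_n}(x)$ between its values at shifted arguments using \eqref{potter_arg}, the negligibility of $q_n^{-1}\sum_i\mathbf{1}(|E_{i,\gamma}|>d)$ from the proof of Lemma \ref{Sub_Lemma 1}, and $\zeta_{\kappa/2}>0$. Your Lipschitz bound on a truncated set of good indices just repackages the paper's two set inclusions (an additive shift $y$ for the $\overline{X}_n$ term, then a multiplicative $(1+\delta)$ for $\tilde\gamma_{i,b_n}/\gamma_{i,b_n}$) and avoids their sign bookkeeping. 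The strict-monotonicity caveat you raise for part (2) is glossed over in the paper as well: continuity at $c(1-\beta)$ only yields the $\liminf$ bound, while the $\limsup$ bound requires something like $\mathbb{P}(\xi_\kappa/\zeta_{\kappa/2}\leqslant c(1-\beta)+\varepsilon)>1-\beta$ for small $\varepsilon$, or continuity at the right end of any flat stretch at level $1-\beta$.
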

\noindent \textbf{Proof of Lemma \ref{Sub_lemma2}:} \textit{Proof of (1).} Observe that $L_{R,b_n}(x)$ in \eqref{approximations} can be written as
\begin{align}\label{approx in con}
    L_{R,b_n}(x)&=q_n^{-1}\sum_{i=1}^{q_n}\mathbf{1}\left(\dfrac{a_{b_n}^{-1}\sum_{t=i}^{i+b_n-1}(X_t-\mu)-a_{b_n}^{-1}b_n(\overline{X}_n-\mu)}{\tilde{\gamma}_{i,b_n}}\leqslant x\right)\notag\\
    &=q_n^{-1}\sum_{i=1}^{q_n}\mathbf{1}\left(\dfrac{a_{b_n}^{-1}S_{i,b_n}}{\tilde{\gamma}_{i,b_n}}\leqslant x+\dfrac{a_{b_n}^{-1}b_n(\overline{X}_n-\mu)}{\tilde{\gamma}_{i,b_n}}\right),
\end{align}
where we have used the same notation as in the proof of Lemma \ref{Sub_Lemma 1}, that is, $S_{i,b_n}=\sum_{t=i}^{i+b_n-1}(X_t-\mu)$ and $\tilde{\gamma}_{i,b_n}^2=a_{b_n}^{-2}\sum_{t=i}^{i+b_n-1}(X_t-\overline{X}_n)^2$. Now, using that we have the following set relation for every $x\in\mathbb{R}$ and $y>0$,
\begin{equation*}
    \left\{\dfrac{a_{b_n}^{-1}S_{i,b_n}}{\tilde{\gamma}_{i,b_n}}\leqslant x+\dfrac{a_{b_n}^{-1}b_n(\overline{X}_n-\mu)}{\tilde{\gamma}_{i,b_n}}\right\}\subset\left\{\dfrac{a_{b_n}^{-1}S_{i,b_n}}{\tilde{\gamma}_{i,b_n}}\leqslant x+y\right\}\cup\left\{\dfrac{a_{b_n}^{-1}b_n(\overline{X}_n-\mu)}{\tilde{\gamma}_{i,b_n}}>y\right\},
\end{equation*}
we obtain for every $x\in\mathbb{R}$ and $y>0$ that
\begin{equation}\label{set_ineq}
    L_{R,b_n}(x)\leqslant q_n^{-1}\sum_{i=1}^{q_n}\mathbf{1}\left(\dfrac{a_{b_n}^{-1}S_{i,b_n}}{\tilde{\gamma}_{i,b_n}}\leqslant x+y\right)+q_n^{-1}\sum_{i=1}^{q_n}\mathbf{1}\left(\dfrac{a_{b_n}^{-1}b_n(\overline{X}_n-\mu)}{\tilde{\gamma}_{i,b_n}}>y\right).
\end{equation}
First, we seek to show that $q_n^{-1}\sum_{i=1}^{q_n}\mathbf{1}\left(a_{b_n}^{-1}b_n(\overline{X}_n-\mu)/{\tilde{\gamma}_{i,b_n}}>y\right)=o_\mathbb{P}(1)$ as $n\to\infty$. For this, we observe that for any $d>0$,
\begin{align*}
    q_n^{-1}&\sum_{i=1}^{q_n}\mathbf{1}\left(\dfrac{a_{b_n}^{-1}b_n(\overline{X}_n-\mu)}{\tilde{\gamma}_{i,b_n}}>y\right) \\&=1-q_n^{-1}\sum_{i=1}^{q_n}\mathbf{1}\left(\dfrac{a_{b_n}^{-1}b_n(\overline{X}_n-\mu)}{\tilde{\gamma}_{i,b_n}}\leqslant y\right) \\
    &=1-q_n^{-1}\sum_{i=1}^{q_n}\mathbf{1}\left(\dfrac{b_n\;a_n}{n\;a_{b_n}\;y}a_n^{-1}\sum_{t=1}^n(X_t-\mu)\leqslant \tilde{\gamma}_{i,b_n}\right) \\
    &\leqslant 1-q_n^{-1}\sum_{i=1}^{q_n}\mathbf{1}\left(d/y\leqslant \tilde{\gamma}_{i,b_n}\right) +q_n^{-1}\sum_{i=1}^{q_n}\mathbf{1}\left(\dfrac{b_n\;a_n}{n\;a_{b_n}}a_n^{-1}\sum_{t=1}^n(X_t-\mu)>d\right) 
\end{align*}
Using the relation \eqref{potter_arg}, we have by an application of Lemma \ref{lem:sub_variance} that,
\begin{equation*}
    q_n^{-1}\sum_{i=1}^{q_n}\mathbf{1}\left(\dfrac{b_n\;a_n}{n\;a_{b_n}}a_n^{-1}\sum_{t=1}^n(X_t-\mu)>d\right)\xrightarrow{\mathbb{P}}0,\qquad n\to\infty.
\end{equation*}
Furthermore, using the conclusion (1) of Lemma \ref{Sub_Lemma 1}, and letting $d\downarrow 0$, we also obtain that
\begin{equation*}
    1-\lim_{d\downarrow 0}q_n^{-1}\sum_{i=1}^{q_n}\mathbf{1}(\tilde{\gamma}_{i,b_n}\geqslant d/y)\xrightarrow{\mathbb{P}}1-\lim_{d\downarrow0}\mathbb{P}(\zeta_{\kappa/2}\geqslant d/y)=0,\qquad n\to\infty.
\end{equation*}
Collecting terms, we therefore have for every $x\in\mathbb{R}$ and $y,\delta>0$, as $n\to\infty$,
\begin{align*}
    L_{R,b_n}(x)&\leqslant q_n^{-1}\sum_{i=1}^{q_n}\mathbf{1}\left(\dfrac{a_{b_n}^{-1}S_{i,b_n}}{\tilde{\gamma}_{i,b_n}}\leqslant x+y\right)+o_{\mathbb{P}}(1) \\
    &=q_n^{-1}\sum_{i=1}^{q_n}\mathbf{1}\left(\dfrac{a_{b_n}^{-1}S_{i,b_n}}{\gamma_{i,b_n}}\leqslant(x+y)\Big(\dfrac{\tilde{\gamma}_{i,b_n}}{\gamma_{i,b_n}}-1\Big)+(x+y)\right)+o_\mathbb{P}(1) \\
    &\leqslant q_n^{-1}\sum_{i=1}^{q_n}\mathbf{1}\left(\dfrac{a_{b_n}^{-1}S_{i,b_n}}{\gamma_{i,b_n}}\leqslant(x+y)(1+\delta)\right)+q_n^{-1}\sum_{i=1}^{q_n}\mathbf{1}\left(\dfrac{\tilde{\gamma}_{i,b_n}}{\gamma_{i,b_n}}-1>\delta\right)+o_\mathbb{P}(1),
\end{align*}
where $\gamma_{i,b_n}^2=a_{b_n}^{-2}\sum_{i=1}^{i+b_n-1}(X_t-\mu)$ as in the proof of Lemma \ref{Sub_Lemma 1}. Using \eqref{eq:sub_gamma_eq1} observe,
\begin{align*}
    \tilde{\gamma}_{i,b_n}^2/\gamma_{i,b_n}^2=\dfrac{\gamma_{i,b_n}^2+b_n\;a_{b_n}^{-2}(\overline{X}_n-\mu)^2-2a_{b_n}^{-2}(\overline{X}_n-\mu)S_{i,b_n}}{\gamma_{i,b_n}^2}.
\end{align*}
With $E_{i,\gamma}:=2a_{b_n}^{-2}(\overline{X}_n-\mu)S_{i,b_n}-b_n\;a_{b_n}^{-2}(\overline{X}_n-\mu)^2$, we have that
\begin{align*}
    \{\tilde{\gamma}_{i,b_n}/\gamma_{i,b_n}-1>\delta\}&=\{(\gamma_{i,b_n}^2-E_{i,\gamma})/{\gamma_{i,b_n}^2}>(\delta+1)^2\} \\
    &= \{-\delta^2-2\delta>E_{i,\gamma}/\gamma_{i,b_n}^2\}\subset \{\delta^2+2\delta<|E_{i,\gamma}|/\gamma_{i,b_n}^2\}.
\end{align*}
By this relation and since we know $q_n^{-1}\sum_{i=1}^{q_n}\mathbf{1}(|E_{i,\gamma}|>d)=o_{\mathbb{P}}(1)$ for every $d>0$ from the proof of Lemma \ref{Sub_Lemma 1}, we have for every $\delta,d>0$,
\begin{align*}
    q_n^{-1}\sum_{i=1}^{q_n}\mathbf{1}\left(\dfrac{\tilde{\gamma}_{i,b_n}}{\gamma_{i,b_n}}-1>\delta\right)&\leqslant q_n^{-1}\sum_{i=1}^{q_n}\mathbf{1}\left(\dfrac{d}{\gamma_{i,b_n}^2}>\delta^2+2\delta\right)+q_n^{-1}\sum_{i=1}^{q_n}\mathbf{1}\left(|E_{i,\gamma}|>d\right) \\
    &=q_n^{-1}\sum_{i=1}^{q_n}\mathbf{1}\left(\dfrac{d}{\gamma_{i,b_n}^2}>\delta^2+2\delta\right)+o_\mathbb{P}(1).
\end{align*}
Letting $d\downarrow0$, an application of Lemma \ref{lem:sub_variance} to the right-hand side of this expression now yields that $q_n^{-1}\sum_{i=1}^{q_n}\mathbf{1}\left(\tilde{\gamma}_{i,b_n}/\gamma_{i,b_n}-1>\delta\right)=o_\mathbb{P}(1)$ as $n\to\infty$. Therefore, with probability tending to 1, we have the following bound for $L_{R,b_n}(x)$ for any $x\in\mathbb{R}$ and $y,\delta,\varepsilon>0$
\begin{equation}\label{Upper 2}
    L_{R,b_n}\leqslant q_n^{-1}\sum_{i=1}^{q_n}\mathbf{1}\left(\dfrac{a_{b_n}^{-1}S_{i,b_n}}{\gamma_{i,b_n}}\leqslant(x+y)(1+\delta)\right)+\varepsilon.
\end{equation}
One can similarly obtain a lower bound for $L_{R,b_n}(x)$ by symmetric arguments, yielding with probability tending to $1$, for every $x\in\mathbb{R}$ and $y,\delta,\varepsilon >0$
\begin{equation}\label{Lower 2}
    L_{R,b_n}(x)\geqslant q_n^{-1}\sum_{i=1}^{q_n}\mathbf{1}\left(\dfrac{a_{b_n}^{-1}S_{i,b_n}}{\gamma_{i,b_n}}\leqslant (x-y)(1+\delta)\right)-\varepsilon.
\end{equation}
Using the bounds in \eqref{Upper 2}--\eqref{Lower 2}, claim (1) of the lemma holds by similar arguments as those applied to \eqref{sandwich 1} in the proof of Lemma \ref{Sub_Lemma 1}, if $\mathrm{var}(q_n^{-1}\sum_{i=1}^{q_n}\mathbf{1}(a_{b_n}^{-1}S_{i,b_n}/\gamma_{i,b_n}\leqslant x)\to0$ as $n\to\infty$. The latter holds by an application of Lemma \ref{lem:sub_variance}.\smallskip

\textit{Proof of (2).} This result follows from (1) and the same quantile convergence argument as in the proof of Lemma \ref{Sub_Lemma 1}(2), using the continuity of $\mathbb{P}(\xi_{\kappa}/\zeta_{\kappa/2}\leqslant x)$ in a neighborhood of $c(1-\beta)$ and the fact that $R_n\xrightarrow{d}\xi_{\kappa}/\zeta_{\kappa/2}$ as $n\to\infty$.\hfill $\square$\bigskip

The following Proposition states the asymptotic validity of Algorithm \ref{algo} when $(X_t)_{t\in\mathbb{Z}}$ is regularly varying with tail index $\kappa\in(1,2)$.
\begin{proposition} \label{prop:validity_Null_and_alt}
    Let $(X_t)_{t\in\mathbb{Z}}$ satisfy Assumptions \ref{ass_RV}-\ref{ass:AC+MX} with tail index $\kappa \in (1,2)$. Assume that $b_n$ provided by Algorithm \ref{algo} satisfies that $b_n\rightarrow\infty$ and $b_n=o(n)$. Let $L_{n,b_n}(x)$ be defined as in \eqref{eq:def_L_n_b} and let
    \begin{equation}
         C_{n,b_n}(y)=\inf\{x:L_{n,b_n}(x) \geqslant y\}
    \end{equation}
    for some $y\in(0,1)$.
  Suppose that $\mathbb{E}[X_t]=0$.
    \begin{itemize}
        \item [\emph{(1.a)}] If $x$ is a continuity point of $\mathbb{P}(\xi_\kappa/\zeta_{\kappa/2}\leqslant \cdot)$, then $L_{n,b_n}(x)\xrightarrow{\mathbb{P}}\mathbb{P}(\xi_\kappa/\zeta_{\kappa/2}\leqslant x)$ as $n\to\infty$, where $(\xi_\kappa,\zeta_{\kappa/2})$ denotes the limiting  vector in Theorem \ref{theo:inf_var_jointCLT}. 
        \item [\emph{(1.b)}] If $\mathbb{P}(\xi_\kappa/\zeta_{\kappa/2}\leqslant x)$ is continuous in a neighborhood around $c(1-\beta):=\inf\{x\in\mathbb{R}:\mathbb{P}(\xi_{\kappa}/\zeta_{\kappa/2}\leqslant x)\geqslant1-\beta\}$, then $\mathbb{P}(T_{n}\leqslant C_{n,b_n}(1-\beta))\rightarrow1-\beta$ as $n\to\infty$.
    \end{itemize}
     Suppose instead that $\mathbb{E}[X_t]\neq0$.
     \begin{itemize}
        \item [\emph{(2)}] Then $\mathbb{P}(T_{n} \notin [C_{n,b_n}(\beta/2),C_{n,b_n}(1-\beta/2)])\to1$ as $n\to\infty$.
    \end{itemize}
\end{proposition}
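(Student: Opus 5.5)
Under $\mathbb{E}[X_t]=0$ we have $\mu=0$. In this case the subsample statistics $T_{i,b_n}$ of Algorithm \ref{algo} are invariant to the scaling $a_{b_n}$. They therefore coincide with the quantities $a_{b_n}^{-1}S_{i,b_n}/\gamma_{i,b_n}$ from the proofs of Lemmas \ref{Sub_Lemma 1}--\ref{Sub_lemma2}, with $\overline{X}_n$ and $\mu$ replaced by zero. Likewise $T_n=R_n$ with $\mu=0$.

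\textbf{Part (1.a).} It then follows from Lemma \ref{Sub_lemma2}(1) in this uncentered form. The argument simplifies because the correction terms $E_{i,\gamma}$ and $a_{b_n}^{-1}b_n(\overline{X}_n-\mu)/\tilde\gamma_{i,b_n}$ vanish identically. All that remains is the variance bound from Lemma \ref{lem:sub_variance}: $\mathrm{var}(q_n^{-1}\sum_i\mathbf{1}(T_{i,b_n}\leqslant x))\to0$. Combined with $\mathbb{P}(T_{1,b_n}\leqslant x)\to\mathbb{P}(\xi_\kappa/\zeta_{\kappa/2}\leqslant x)$ at continuity points, which holds by Proposition \ref{Master_Lemma}(2) applied with $n$ replaced by $b_n$ and stationarity, Chebyshev's inequality gives the claim.

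\textbf{Part (1.b).} I will first show that $C_{n,b_n}(1-\beta)\xrightarrow{\mathbb{P}}c(1-\beta)$. For small $\varepsilon$, the points $c(1-\beta)\pm\varepsilon$ are continuity points. By (1.a) and continuity (hence strict monotonicity near the quantile, which I would argue from continuity in a neighbourhood together with the quantile definition), $L_{n,b_n}(c-\varepsilon)<1-\beta<L_{n,b_n}(c+\varepsilon)$ with probability tending to one. Sandwiching $\mathbb{P}(T_n\leqslant C_{n,b_n})$ between $\mathbb{P}(T_n\leqslant c\pm\varepsilon)$, exactly as in the proof of Lemma \ref{Sub_Lemma 1}(2), and letting $n\to\infty$ and then $\varepsilon\downarrow0$, yields the result. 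The sandwich uses $T_n\xrightarrow{d}\xi_\kappa/\zeta_{\kappa/2}$ from Proposition \ref{Master_Lemma}(2). The main subtlety in this step is strict increase of the limit cdf at the quantile. If only continuity is available, I would instead use $\limsup$ and $\liminf$ bounds that still converge to $1-\beta$ thanks to continuity at $c(1-\beta)$.

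\textbf{Part (2): the alternative $\mu\neq0$.} By Proposition \ref{Master_Lemma}(3), $|T_n|\to\infty$ at rate $n/a_n$, since $T_n\approx (n/a_n)\mu/\zeta_{\kappa/2}$. The key step, and the main obstacle, is to show that the subsample quantiles are $O_\mathbb{P}$ of something of smaller order. I plan to bound them trivially: by Cauchy--Schwarz, $|T_{i,b_n}|\leqslant\sqrt{b_n}$, so both $C_{n,b_n}(\beta/2)$ and $C_{n,b_n}(1-\beta/2)$ lie in $[-\sqrt{b_n},\sqrt{b_n}]$. It then suffices that $|T_n|/\sqrt{b_n}\xrightarrow{\mathbb{P}}\infty$, and this may fail if $b_n$ is large.

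A sharper route is to note that, under $\mu\ne0$, $T_{i,b_n}$ itself behaves like $(b_n/a_{b_n})\mu/\zeta$. Arguing as in Lemma \ref{Sub_lemma2} with centering at $\mu$, the quantiles of $L_{n,b_n}$ are then $O_\mathbb{P}(b_n/a_{b_n})$. More concretely, I would show $q_n^{-1}\sum_i\mathbf{1}(|T_{i,b_n}|>M b_n/a_{b_n})$ is small for large $M$, using Lemma \ref{lem:sub_variance} and the joint tightness of $(a_{b_n}^{-1}\sum(X_t-\mu), a_{b_n}^{-1}\gamma_{i,b_n})$ from Theorem \ref{theo:inf_var_jointCLT}, with $\gamma_{i,b_n}$ bounded below in probability.

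Comparing rates, $(b_n/a_{b_n})/(n/a_n)\to0$ by the Potter-bound computation \eqref{eq:potter_bounds}. Hence $|T_n|$ exceeds both critical values in absolute value with probability tending to one, and the rejection probability tends to one.
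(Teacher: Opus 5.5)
Your proposal is correct and follows the paper's proof. Parts (1.a)--(1.b) are Lemma \ref{Sub_lemma2} with $\overline{X}_n$ and $\mu$ set to zero: the variance bound from Lemma \ref{lem:sub_variance}, convergence of $\mathbb{P}(T_{1,b_n}\leqslant x)$, and the quantile sandwich. Part (2) rests on the same comparison of the rates $n/a_n$ and $b_n/a_{b_n}$ via \eqref{eq:potter_bounds}. Your use of Lemma \ref{lem:sub_variance} to bound the fraction of subsamples with $|T_{i,b_n}|>Mb_n/a_{b_n}$ is a more careful way to control the empirical quantiles than the paper's pointwise claim $|T_{i,b_n}|=o_{\mathbb{P}}(|T_n|)$.

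The one point to correct is in (1.b). Continuity near $c(1-\beta)$ does not give strict increase: the limit cdf may stay at level $1-\beta$ beyond that neighbourhood and then jump. So neither your upper bound on $C_{n,b_n}(1-\beta)$ nor your $\limsup$ fallback follows from the stated hypothesis alone. You need $\mathbb{P}(\xi_\kappa/\zeta_{\kappa/2}\leqslant x)>1-\beta$ for $x>c(1-\beta)$, or continuity at the right end of any such flat stretch; the paper's quantile-convergence argument relies on the same condition implicitly.
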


\noindent\textbf{Proof of Proposition \ref{prop:validity_Null_and_alt}:} 
Since, $\mathbb{E}[X_t]=0$, points (1.a) and (1.b) hold by the arguments used for proving Lemma \ref{Sub_lemma2}, with $\overline{X}_n$ and $\mu$ replaced by zeros. To show point (2), we note that by Proposition \ref{Master_Lemma}, $|T_n|\to\infty$ in probability. Consequently, the result holds provided that $|T_{i,b_n}|=o_{\mathbb{P}}(|T_{n}|)$. 
To show this, we begin by recalling the form of the involved statistics, i.e.,
\begin{equation}\label{stats_proof}
    T_n=\dfrac{a_{n}^{-1}\sum_{t=1}^nX_t}{a_n^{-1}\left(\sum_{t=1}^nX_t^2\right)^{1/2}}=:\dfrac{S_n}{\gamma_{n}}\quad\text{and}\quad T_{i,b_n}=\dfrac{a_{b_n}^{-1}\sum_{t=i}^{i+b_n-1}X_t}{a_{b_n}^{-1}\left(\sum_{t=i}^{i+b_n-1}X_t^2\right)^{1/2}}=:\dfrac{S_{i,b_n}}{\gamma_{i,b_n}},
\end{equation}
for $n\geqslant 1$ and $i=1,\dots,q_n$. Note that $S_n,S_{i,b_n},\gamma_{n}$ and $\gamma_{i,b_n}$ are  defined here differently than in the previous proofs in this section.\smallskip

By standard calculations and the definition of $(a_n)_{n\geqslant1}$, i.e. the sequence satisfies $n\mathbb{P}(|X|>a_n)\to1$ as $n\to\infty$, we obtain for $n\to\infty$
\begin{align*}
    \gamma_{n,2}^2%&=a_n^{-2}\sum_{t=1}^n(X_t-\mathbb{E}[X_t]+\mathbb{E}[X_t])^2 \\
    &=a_n^{-2}\sum_{t=1}^n(X_t-\mathbb{E}[X_t])^2+\dfrac{n}{a_n^{2}}(\mathbb{E}[X_t])^2+2a_n^{-2}\mathbb{E}[X_t]\sum_{t=1}^n(X_t-\mathbb{E}[X_t]) \\
    &\sim a_n^{-2}\sum_{t=1}^n(X_t-\mathbb{E}[X_t])^2+\dfrac{n}{n^{2/\kappa}L(n)}(\mathbb{E}[X_t])^2+\dfrac{2\mathbb{E}[X_t]}{a_n}a_n^{-1}\sum_{t=1}^n(X_t-\mathbb{E}[X_t]):=J_1+J_2+J_3.
\end{align*}
Recall that $a_n^{-1}\left(a_n^{-1}\sum_{t=1}^n(X_t-\mathbb{E}[X_t])^2,\sum_{t=1}^n(X_t-\mathbb{E}[X_t])\right)\xrightarrow{d}(\zeta_{\kappa/2}^2,\xi_\kappa)$ as $n\to\infty$, by Theorem \ref{theo:inf_var_jointCLT} and a continuous mapping argument. A straightforward application of this result and that $a_n\to\infty$ as $n\to\infty$ yields that $J_1+J_3\xrightarrow{d}\zeta_{\kappa/2}$ as $n\to\infty$. Furthermore, from Lemma 2 in \citet[p.277]{Feller1971}, for sufficiently large $n$ and any $\delta>0$, $n^{-\delta}<L(n)<n^\delta$. Consequently, choosing $\delta$ such that $1+\delta<2/\kappa$,  $J_2\rightarrow0$ as $n\to\infty$, and we conclude that
\begin{equation}
    \gamma_{n}^2\xrightarrow{d}\zeta_{\kappa/2}^2,\quad\quad n\to\infty.
\end{equation}
By the same arguments, we can also show that $\gamma_{i,b_n}^2\xrightarrow{d}\zeta_{\kappa/2}^2$ as $n\to\infty$, for every $i=1,\dots,q_n$, by noting that $b_n\to\infty$ as $n\to\infty$. Hence, the standardized denominators of $T_n$ and $T_{i,b_n}$, respectively, are a.s. asymptotically finite. On the other hand, we have that both $|S_n|$ and $|S_{i,b_n}|$ in \eqref{stats_proof} diverges as $n\to\infty$. Specifically,
\begin{align*}
    \big|S_n\big|&=%\big|a_{n}^{-1}\sum_{t=1}^n(X_t-\mathbb{E}[X_t]+\mathbb{E}[X_t])\big|=
    \big|a_n^{-1}\sum_{t=1}^n(X_t-\mathbb{E}[X_t])+n\;a_n^{-1}\mathbb{E}[X_t]\big| =\big|n\;a_n^{-1}\mathbb{E}[X_t]+O_{\mathbb{P}}(1)\big|,
\end{align*}
and similarly $\big|S_{i,b_n}\big|=\big|b_n\;a_{b_n}^{-1}\mathbb{E}[X_t]+O_\mathbb{P}(1)\big|$ for $i=1,\dots,q_n$. These results imply that
\begin{equation*}
    \dfrac{|T_{i,b_n}|}{|T_n|}\sim\dfrac{b_n\;a_n}{n\;a_{b_n}}\sim\left(\dfrac{b_n}{n}\right)^{1-1/\kappa}\dfrac{L(n)}{L(b_n)},\quad\quad n\to\infty.
\end{equation*}
Now an application of the Potter bounds, similar to the proof of Lemma \ref{Sub_Lemma 1}, yields for $c>1$ and $1-1/\kappa-\varepsilon>0$ that 
\begin{equation*}
    \lim_{n\to\infty}\left(\dfrac{b_n}{n}\right)^{1-1/\kappa}\dfrac{L(n)}{L(b_n)}\leq \lim_{n\to\infty}c\left(\dfrac{b_n}{n}\right)^{1-1/\kappa-\varepsilon}=0.
\end{equation*}
This proves the proposition.\hfill$\square$

\bigskip

We end this section by the following result for the subsample-based critical values when the tail index $\kappa\in(0,1)$.
\begin{proposition} \label{subsample_special}
    Let $(X_t)_{t\in\mathbb{Z}}$ satisfy Assumptions \ref{ass_RV}--\ref{ass:AC+MX} with tail index $\kappa\in(0,1)$. Let $L_{n,b_n}(x)$ be defined as in \eqref{eq:def_L_n_b} with $(b_n)_{n\geqslant 1}$ chosen such that $b_n\to\infty$ and $b_n=o(n)$ as $n\to\infty$.
     \begin{itemize}
        \item [\emph{(1)}] If $x$ is a continuity point of $\mathbb{P}(\xi_\kappa/\zeta_{\kappa/2}\leqslant \cdot)$, then $L_{n,b_n}(x)\xrightarrow{\mathbb{P}}\mathbb{P}(\xi_\kappa/\zeta_{\kappa/2}\leqslant x)$ as $n\to\infty$.
        \item [\emph{(2)}] For $\beta\in(0,1)$, define the corresponding quantiles $c_{b_n}(1-\beta)=\inf\{x\in\mathbb{R}:L_{n,b_n}(x)\geqslant1-\beta\}$ and $c(1-\beta)=\inf\{x\in\mathbb{R}:\mathbb{P}(\xi_{\kappa}/\zeta_{\kappa/2}\leqslant x)\geqslant1-\beta\}$. \\
        If $\mathbb{P}(\xi_\kappa/\zeta_{\kappa/2}\leqslant x)$ is continuous in a neighborhood around $c(1-\beta)$, then it holds that $\mathbb{P}(T_{n}\leqslant c_{b_n}(1-\beta))\rightarrow1-\beta$, as $n\to\infty.$
    \end{itemize}
\end{proposition}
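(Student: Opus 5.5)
Since $\kappa\in(0,1)$, the mean need not exist, and Theorem \ref{theo:inf_var_jointCLT} applies with $\mu=0$. No centering appears anywhere: $L_{n,b_n}$ uses the uncentered subsample statistics $T_{i,b_n}$ from \eqref{block_stat}. I would therefore follow the proofs of Lemmas \ref{Sub_Lemma 1}--\ref{Sub_lemma2} with $\overline{X}_n$ and $\mu$ set to zero, as in the null case of Proposition \ref{prop:validity_Null_and_alt}. The centering error terms $E_{i,\gamma}$ and $a_{b_n}^{-1}b_n(\overline{X}_n-\mu)$ vanish identically. This matters, because the Potter-bound step \eqref{eq:potter_bounds} used $\kappa>1$ and would fail here.

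Write $T_{i,b_n}=(a_{b_n}^{-1}S_{i,b_n})/(a_{b_n}^{-1}\gamma_{i,b_n})$ with uncentered block sums. By stationarity, each $T_{i,b_n}$ has the law of $T_{b_n}$. Theorem \ref{theo:inf_var_jointCLT} and the continuous mapping theorem give $T_{b_n}\xrightarrow{d}\xi_\kappa/\zeta_{\kappa/2}$, using $\mathbb{P}(\zeta_{\kappa/2}>0)=1$. Hence, for a continuity point $x$,
\[
\mathbb{E}[L_{n,b_n}(x)]=\mathbb{P}(T_{b_n}\leqslant x)\to\mathbb{P}(\xi_\kappa/\zeta_{\kappa/2}\leqslant x).
\]
Part (1) then follows from Chebyshev's inequality once $\mathrm{var}(L_{n,b_n}(x))\to0$. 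This variance bound is the same ingredient as Lemma \ref{lem:sub_variance}.

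The variance step is the one that needs care. I would bound it the standard way:
\[
\mathrm{var}\Big(q_n^{-1}\sum_{i=1}^{q_n}\mathbf{1}(T_{i,b_n}\leqslant x)\Big)\leqslant \frac{2}{q_n}\sum_{h=0}^{q_n-1}\big|\mathrm{cov}\big(\mathbf{1}(T_{1,b_n}\leqslant x),\mathbf{1}(T_{1+h,b_n}\leqslant x)\big)\big|.
\]
Split the sum at $h=b_n$. Terms with $h<b_n$ contribute at most $2b_n/q_n\to0$, because the covariances are bounded by one and $b_n=o(n)$. For $h\geqslant b_n$ the indicators depend on blocks separated by $h-b_n+1$ time points, so each covariance is at most $4\alpha_{h-b_n+1}$. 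That part of the sum is then at most $8q_n^{-1}\sum_{k=1}^{q_n}\alpha_k$, which tends to zero by Cesàro since $\alpha_k\to0$. Condition \eqref{alpha_coefs_req} implies $\alpha_k\to0$, because $\ell_n\to\infty$. No dependence on $\kappa$ enters this step, so Lemma \ref{lem:sub_variance} can be quoted verbatim.

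For (2) I would repeat the quantile argument of Lemma \ref{Sub_Lemma 1}(2). Part (1), applied on a neighborhood of continuity points, gives $c_{b_n}(1-\beta)\xrightarrow{\mathbb{P}}c(1-\beta)$. Here uniqueness of the quantile follows from continuity and monotonicity near $c(1-\beta)$; if the distribution function is flat there, I would use bracketing by $c(1-\beta)\pm\varepsilon$ instead. Since $T_n\xrightarrow{d}\xi_\kappa/\zeta_{\kappa/2}$ by the same continuous mapping argument, sandwiching gives
\[
\mathbb{P}\big(T_n\leqslant c(1-\beta)-\varepsilon\big)+o(1)\leqslant \mathbb{P}\big(T_n\leqslant c_{b_n}(1-\beta)\big)\leqslant \mathbb{P}\big(T_n\leqslant c(1-\beta)+\varepsilon\big)+o(1).
\]
Let $n\to\infty$ and then $\varepsilon\downarrow0$. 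Continuity of the limit distribution function at $c(1-\beta)$ then yields the limit $1-\beta$. The main obstacle I expect is the variance step; everything else is a simplification of the earlier lemmas.
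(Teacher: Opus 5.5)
Your proposal is correct and follows the paper's proof essentially step for step. Stationarity gives $\mathbb{E}[L_{n,b_n}(x)]=\mathbb{P}(T_{1,b_n}\leqslant x)$, Theorem \ref{theo:inf_var_jointCLT} with continuous mapping identifies the limit, Chebyshev together with Lemma \ref{lem:sub_variance} removes the variance, and (2) is the same quantile sandwich as in Lemma \ref{Sub_lemma2}(2). Your variance sketch, which splits at $b_n$ and uses Ces\`aro, differs only cosmetically from the paper's split at $2b_n$ using monotonicity of $\alpha_k$.
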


\noindent \textbf{Proof of Proposition \ref{subsample_special}:} From stationarity of $(X_t)_{t\in\mathbb{Z}}$ we have that $\mathbb{E}[L_{n,b_n}(x)]=\mathbb{P}\left(T_{1,b_n}\leqslant x\right)$.
Moreover, Chebychev's inequality yields for every $\varepsilon>0$,
\begin{equation*}
    \mathbb{P}\left(\big|L_{n,b_n}(x)-\mathbb{E}[L_{n,b_n}(x)]\big|>\varepsilon\right)\leqslant \mathrm{var}(L_{n,b_n}(x))/\varepsilon^2.
\end{equation*}
Hence, if we can show that $\mathrm{var}(L_{n,b_n}(x))\rightarrow0$ as $n\to\infty$, then $L_{n,b_n}(x)\xrightarrow{\mathbb{P}}\mathbb{P}\left(T_{1,b_n}\leqslant x\right).$
Next, it holds by Theorem \ref{theo:inf_var_jointCLT} that $T_{1,b_n}\xrightarrow{d}\xi_\kappa/\zeta_{\kappa/2}$. Consequently, for every continuity point $x$ of $\mathbb{P}(\xi_\kappa/\zeta_{\kappa/2}\leqslant x)$, it holds by Portmanteau's theorem that
\begin{equation*}
    \lim_{n\to\infty}\mathbb{P}\left(T_{1,b_n}\leqslant x\right)=\mathbb{P}(\xi_\kappa/\zeta_{\kappa/2}\leqslant x).
\end{equation*}
Collecting terms yields the desired relation, i.e., $L_{n,b_n}(x)\xrightarrow{\mathbb{P}}\mathbb{P}(\xi_\kappa/\zeta_{\kappa/2}\leqslant x)$ for every continuity point $x$ of the limit. Hence, it suffices to show that $\mathrm{var}(L_{n,b_n}(x))\rightarrow0$. However, this follows by Lemma \ref{lem:sub_variance}, proving (1) of the proposition. Next, using (1), the conclusion (2) follows by arguments given in the proof of Lemma \ref{Sub_lemma2}(2).\hfill$\square$

\begin{lemma}\label{lem:sub_variance}
    Let $(X_t)_{t\in\mathbb{Z}}$ be a strongly mixing process of real-valued random variables, and let $g_n$ be a real-valued, deterministic function acting on $(X_t)_{t\in\mathbb{Z}}$, satisfying $0\leqslant g_n\leqslant 1$. Define
    \begin{equation*}
        Y_n=q_n^{-1}\sum_{i=1}^{q_n}g_n(X_{i},\dots,X_{i+b_n-1}),
    \end{equation*}
    where $q_n=n-b_n+1$ and $(b_n)_{n\geqslant1}$ is chosen such that $b_n\to\infty$ and $b_n/n\to0$ as $n\to\infty$. Then it holds that $\mathrm{var}(Y_n)\to0$ as $n\to\infty$. Furthermore, if $\mathbb{E}[g_n(X_i,\dots,X_{i+b_n-1})]\to0$ then $Y_n\xrightarrow{\mathbb{P}}0$ as $n\to\infty$.
\end{lemma}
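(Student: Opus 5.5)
We prove $\mathrm{var}(Y_n)\to0$ by writing the variance as a double sum of covariances and splitting the lags at a cutoff that grows slowly relative to $n$. Set $Z_i:=g_n(X_i,\dots,X_{i+b_n-1})$. The sequence $(Z_i)_{i\geqslant 1}$ is stationary, because $(X_t)_{t\in\mathbb{Z}}$ is stationary and $g_n$ is deterministic. Hence
\begin{equation*}
\mathrm{var}(Y_n)=q_n^{-2}\sum_{i=1}^{q_n}\sum_{j=1}^{q_n}\mathrm{cov}(Z_i,Z_j)\leqslant \frac{2}{q_n}\sum_{h=0}^{q_n-1}\big|\mathrm{cov}(Z_1,Z_{1+h})\big|.
\end{equation*}

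Since $0\leqslant Z_i\leqslant 1$, each covariance is bounded by $1/4$. If $h\geqslant b_n$, then $Z_1$ depends on $(X_t)_{t\leqslant b_n}$ and $Z_{1+h}$ depends on $(X_t)_{t\geqslant 1+h}$. These two $\sigma$-fields are separated by a gap of $h-b_n+1$. The definition of the strong mixing coefficients in \eqref{mix_coef}, combined with stationarity, then gives
\begin{equation*}
|\mathrm{cov}(Z_1,Z_{1+h})|\leqslant \alpha_{h-b_n+1}.
\end{equation*}
The functions $g_n$ take values in $[0,1]$, which lies inside the class $|g|\leqslant1$ used in \eqref{mix_coef}, so no extra constant is needed.

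Next we split the lag sum at $h=b_n+m_n$, where $m_n\to\infty$ and $m_n=o(n)$. Lags $h<b_n+m_n$ contribute at most $(b_n+m_n)/(2q_n)\to0$, because $b_n=o(n)$ and $q_n\sim n$. Lags $h\geqslant b_n+m_n$ contribute at most
\begin{equation*}
\frac{2}{q_n}\sum_{k=m_n}^{q_n}\alpha_k\leqslant 2\sup_{k\geqslant m_n}\alpha_k\to0,
\end{equation*}
using only $\alpha_k\to0$, i.e.\ strong mixing. No summability of the $\alpha_k$ is needed: this is a Cesàro-type argument. This step is the only delicate point, and it goes through with no rate assumption because we divide by $q_n$ and need only the average of the $\alpha_k$ to vanish. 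Hence $\mathrm{var}(Y_n)\to0$.

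For the second claim, $\mathbb{E}[Y_n]=\mathbb{E}[Z_1]\to0$ by assumption. Chebyshev's inequality gives, for every $\varepsilon>0$,
\begin{equation*}
\mathbb{P}(|Y_n|>\varepsilon)\leqslant \mathbb{P}\big(|Y_n-\mathbb{E}Y_n|>\varepsilon/2\big)\leqslant 4\,\mathrm{var}(Y_n)/\varepsilon^2\to0
\end{equation*}
for $n$ large, since eventually $|\mathbb{E}Y_n|<\varepsilon/2$. Therefore $Y_n\xrightarrow{\mathbb{P}}0$.
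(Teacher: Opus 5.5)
Your proof is correct and follows the paper's argument. Like you, the paper expands $\mathrm{var}(Y_n)$ into lag covariances, bounds the lags below $2b_n$ trivially by $O(b_n/q_n)$, bounds the remaining lags by $2\alpha_{b_n+1}\to0$, and finishes with Chebyshev; this is your argument with $m_n=b_n$, and your separate cutoff $m_n$ with $\sup_{k\geqslant m_n}\alpha_k$ merely dispenses with $b_n\to\infty$ and with monotonicity of $(\alpha_k)$.
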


\noindent \textbf{Proof of Lemma \ref{lem:sub_variance}:} To simplify notation, we write $g_n(X_i,\dots,X_{i+b_n-1})=g_n(\mathbf{X}_{i,i+b_n-1})$ for $i=1,\dots,q_n$. Using that $(X_t)_{t\in\mathbb{Z}}$ is stationary, for $n$ sufficiently large, such that $2b_n \leqslant q_n -1$, the variance of $Y_n$ is given by
\begin{align*}
    \mathrm{var}(Y_n)&=q_n^{-1}\mathrm{var}\big(g_n(\mathbf{X}_{0,b_n-1})\big)+2q_n^{-2}\sum_{i=1}^{2b_n-1}(q_n-i)\,\mathrm{cov}\big(g_n(\mathbf{X}_{0,b_n-1}),g_n(\mathbf{X}_{i,i+b_n-1})\big) \\[0.025in]
    &\quad+2q_n^{-2}\sum_{i=2b_n}^{q_n-1}(q_n-i)\,\mathrm{cov}\big(g_n(\mathbf{X}_{0,b_n-1}),g_n(\mathbf{X}_{i,i+b_n-1})\big)=:J_1+J_2+J_3.
\end{align*}
Now, using that $|g_n|\leqslant 1$, it holds that $|\mathrm{cov}\big(g_n(\mathbf{X}_{i,i+b_n-1}),g_n(\mathbf{X}_{j,j+b_n-1})\big)|\leqslant 1$ for any $(i,j)\in\mathbb{Z}\times \mathbb{Z}$. Consequently, we have $J_1\to0$ as $n\to\infty$, and also
\begin{align*}
    J_2&\leqslant 2\,\dfrac{2b_n-1}{q_n}-\left(\dfrac{2b_n}{q_n}\right)^2+\dfrac{2b_n}{q_n^2} \\[0.025in]
    &=2\,\dfrac{2-1/b_n}{(n-b_n+1)/b_n}-\left(\dfrac{2}{(n-b_n+1)/b_n}\right)^2+\dfrac{2}{(n-b_n+1)^2/b_n}\to0,\qquad n\to\infty,
\end{align*}
where we inserted $q_n=n-b_n+1$, and used the fact that $b_n\to\infty$, $n/b_n\to\infty$ as $n\to\infty$. Next, note that for each $i=2b_n,\dots,q_n-1$, it holds that $i-(b_n-1)\geqslant b_n +1 $, consequently, by definition of the strong mixing numbers in \eqref{mix_coef},
\begin{align*}
    J_3\leqslant 2q_n^{-1}\sum_{i=2b_n}^{q_n-1}\big|\mathrm{cov}\big(g_n(\mathbf{X}_{0,b_n-1}),g_n(\mathbf{X}_{i,i+b_n-1})\big)\big|\leqslant 2\,\alpha_{b_n+1}\to0,\qquad n\to\infty.
\end{align*}
This proves that $\mathrm{var}(Y_n)\to0$, as $n\to\infty$. By Chebyshev's inequality, $|Y_n-\mathbb{E}[g_n(\mathbf{X}_{i,i+b_n-1})]|\xrightarrow{\mathbb{P}}0$. If, in addition, $\mathbb{E}[g_n(\mathbf{X}_{i,i+b_n-1})]\to0$, it also holds that
\begin{equation*}
    0\leqslant Y_n\leqslant |Y_n-\mathbb{E}[g_n(\mathbf{X}_{i,i+b_n-1})]|+\mathbb{E}[g_n(\mathbf{X}_{i,i+b_n-1})]\xrightarrow{\mathbb{P}}0,\qquad n\to\infty.
\end{equation*}
This proves the lemma.\hfill $\square$
\subsection{Proof of Theorem \ref{cor:validity_subsampling}}
Under Assumption \ref{mix_coef_ass}, the result follows directly by \citet[Theorem 3.5.1]{politis2012subsampling} setting $\tau_n = \sqrt{n}$ and $t_n(X_1,\dots,X_n)=T_n/\sqrt{n}$, and noting that $T_n$ has a Gaussian limiting distribution. Under Assumptions \ref{ass_RV}--\ref{ass:AC+MX}, the result follows by Proposition \ref{prop:validity_Null_and_alt}. \hfill $\square$

\section{Results and details related to Section \ref{sec:infinite_mean}}

\subsection{Subsampling algorithm for critical value construction}
The following Algorithm provides critical values for a test of $\mathsf{H}_0$ based on the statistic $ \widetilde{T}_n$ in \eqref{eq:modified_statistic} against alternatives where the mean may be infinite or undefined.
\begin{algorithm} \label{algo_infinite_mean}
    Given a sample $(X_t)_{t=1,\dots,n}$ of size $n > 1$, choose some integer $b_n\in (0,n)$ and some nominal level $\eta\in(0,1)$. 
    \begin{enumerate}
        \item Compute the statistic given in \eqref{eq:modified_statistic},
        \begin{equation*}
        \widetilde{T}_n= \frac{\sum_{t=1}^nX_t}{(\sum_{t=1}^nX_t^2)^{1/2}}\frac{\sum_{t=1}^n|X_t|}{n}.    
        \end{equation*}
        \item With $q_n := n-b_n+ 1$, compute the subsample statistics
        \begin{equation}\label{eq:block_stat_infinite_mean}
        \widetilde{T}_{i,{b_n}} := \dfrac{\sum_{t=i}^{i+b_n-1} X_t}{(\sum_{t=i}^{i+b_n-1} X_t^2)^{1/2}}\frac{\sum_{t=i}^{i+b_n-1}|X_t|}{b_n},\qquad i=1,\dots,q_n.
        \end{equation}
        \item Define 
        \begin{equation}\label{eq:def_L_n_beq_infinite_mean}
        \widetilde{L}_{n,{b_n}}(x):= q_n^{-1}\sum_{i=1}^{q_n}\mathbf{1}\big(|\widetilde{T}_{i,b}|\leqslant x\big),
        \end{equation}
        and compute the critical value
        \begin{equation}\label{eq:algo_C_n_b_eq:block_stat_infinite_mean}
            \widetilde{C}_{n,{b_n}}(1-\eta)=\inf\{x:\widetilde{L}_{n,{b_n}}(x)\geqslant1-\eta\}.
        \end{equation}

    \end{enumerate}
    Then a two-sided test of level $\beta$ rejects $\mathsf{H}_0$ if $|\widetilde{T}_n| >  \widetilde{C}_{n,{b_n}}(1-\eta)$. 
\end{algorithm}

\subsection{Proof of \eqref{eq:modified_subsampling_validity}}\label{sec:proof_infinite_mean}
We have that
\begin{align*}
    \frac{|\widetilde{T}_{b_n}|}{|\widetilde{T}_n|}=\frac{|T_{b_n}|\gamma_{b_n,1}/b_n}{|T_n|\gamma_{n,1}/n}=\frac{|T_{b_n}|}{|T_n|}\frac{a_{b_n}^{-1}\gamma_{b_n,1}}{a_n^{-1}\gamma_{n,1}}\frac{a_{b_n}/{b_n}}{a_n/n}=O_{\mathbb{P}}(1)\frac{a_{b_n}/{b_n}}{a_n/n},
\end{align*}
where the last equality follows by Theorem \ref{theo:inf_var_jointCLT} and Lemma \ref{lem:gamma_1}. It suffices to show that 
\begin{equation}\label{eq:fraction_a_b_a_n}
    \frac{a_{b_n}/{b_n}}{a_n/n} =o(1).
\end{equation}
Since $|X_0|$ is regularly varying with index $\kappa\in(0,1)$, it follows by the definition of $a_n$, that $a_n\sim n^{1/\kappa}L(n)$ for some slowly varying function $L(n)$. Consequently,
\begin{eqnarray*}
    \frac{a_{b_n}/{b_n}}{a_n/n} \sim \left(\frac{{b_n}}{n}\right)^{(1-\kappa)/\kappa}\frac{L({b_n})}{L(n)}.
\end{eqnarray*}
By \citet[Theorem 1.5.6]{bingham1989regular} there exist constants $C>0$ and $0<\delta<(1-\kappa)/\kappa$ such that (for $n$ and, hence, ${b_n}$ sufficiently large)
\begin{eqnarray*}
    \frac{L({b_n})}{L(n)}\leqslant C \left(\frac{{b_n}}{n}\right)^{-\delta}.
\end{eqnarray*}
Using that $(1-\kappa)/\kappa-\delta >0$ and that ${b_n}=o(n)$, we conclude that \eqref{eq:fraction_a_b_a_n} holds.

\begin{lemma}\label{lem:gamma_1}
   Suppose that Assumptions \ref{ass_RV}-\ref{ass:AC+MX} hold with $\kappa\in(0,1)$. With the deterministic sequence $(a_n)_{n\geqslant1}$ defined by \eqref{eq:def_a_n} and $\overline{\gamma}_{n}$ defined by \eqref{eq:def_gamma_1}, it holds that
   \begin{equation}
       a_n^{-1}(S_n,\overline{\gamma}_{n,1})\overset{d}{\to}(\xi_\kappa,\overline{\zeta}_\kappa), \quad \text{as} \ n\to \infty,
   \end{equation}
   where $\xi_\kappa$ and $\overline{\zeta}_\kappa$ are $\kappa$-stable random variables with $\mathbb{P}(\overline{\zeta}_\kappa>0)=1$.
\end{lemma}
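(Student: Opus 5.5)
The plan is to mirror the proof of Theorem \ref{theo:inf_var_jointCLT} in Appendix \ref{sec:appendix_main_theorem}, replacing the pair $(S_n,\gamma_n^2)$ by $(S_n,\overline{\gamma}_n)$ and working with the hybrid characteristic function--Laplace transform
\begin{equation*}
\overline{\Psi}_n(u,\lambda):=\mathbb{E}\big[\exp\big(ia_n^{-1}uS_n-a_n^{-1}\lambda\overline{\gamma}_n\big)\big],\qquad (u,\lambda)\in\mathbb{R}\times\mathbb{R}_+ .
\end{equation*}
Since $\kappa\in(0,1)$, no centering is needed ($\mu=0$), and $|X_t|\geqslant0$ makes the Laplace part well defined. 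Convergence of $\overline{\Psi}_n$ to a limit that is continuous at the origin identifies the joint limit $(\xi_\kappa,\overline{\zeta}_\kappa)$.

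\textbf{Step 1: block decomposition.} The mixing part carries over essentially verbatim. Lemma \ref{lem:mixing_bound} holds with $a_n^{-2}\lambda\sum X_t^2$ replaced by $a_n^{-1}\lambda\sum|X_t|$, because the proof only uses boundedness by one of functionals of the past and future. Lemmas \ref{lem:product_inequality} and \ref{lem:large_small_blocks} are generic. With the large/small block structure and \eqref{alpha_coefs_req}, this gives the analogues of \eqref{eq:mix_step1}--\eqref{eq:mix_step3}:
\begin{equation*}
\overline{\Psi}_n(u,\lambda)=\big(\mathbb{E}[\mathrm{e}^{ia_n^{-1}uS_{r_n}-a_n^{-1}\lambda\overline{\gamma}_{r_n}}]\big)^{k_n}+o(1).
\end{equation*}
The negligibility of the small blocks again follows once the single-block limit is in hand, with $\ell_n$ in place of $r_n$.

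\textbf{Step 2: the single-block limit (main obstacle).} This step requires
\begin{equation*}
k_n\big(\mathbb{E}[\mathrm{e}^{ia_n^{-1}uS_{r_n}-a_n^{-1}\lambda\overline{\gamma}_{r_n}}]-1\big)\to\int_0^\infty\mathbb{E}\Big[\mathrm{e}^{iyu\sum_tQ_t-y\lambda\sum_t|Q_t|}-1\Big]\,d(-y^{-\kappa}).
\end{equation*}
I would obtain this by applying the cluster/anti-clustering argument of \citet[Proof of Theorem 3.3]{matsui2025self}, or the general result for sums of functionals of regularly varying blocks in \citet[Ch.~9]{mikosch2024extreme}, to the bivariate process $(X_t,|X_t|)$. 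This process is regularly varying with index $\kappa$ and spectral tail process $(\mathit{\Theta}_t,|\mathit{\Theta}_t|)$, and it satisfies \eqref{AC} because its norm is comparable to $|X_t|$.

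The key point is that for $\kappa<1$ no truncation or centering of small values is required. The contribution of $|X_t|\leqslant\epsilon a_n$ to $a_n^{-1}\sum_{t\leqslant r_n}|X_t|$ has expectation at most $k_n^{-1}\cdot O(\epsilon^{1-\kappa})$ by Karamata's theorem. This vanishing-small-values bound, combined with $|e^{ix-y}-e^{ix'-y'}|\leqslant|x-x'|+|y-y'|$, reduces the problem to the large jumps. For those, the anti-clustering condition gives the cluster representation, with $\mathbb{E}[(\sum_t|Q_t|)^\kappa]<\infty$ guaranteed by \citet[Lemma 9.1.7]{mikosch2024extreme}. I expect verifying this small-values step carefully, rather than citing it, to be the main technical work.

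\textbf{Step 3: identify the limit.} The limit transform is
\begin{equation*}
\exp\Big(\int_0^\infty\mathbb{E}\Big[\mathrm{e}^{iyu\sum_tQ_t-y\lambda\sum_t|Q_t|}-1\Big]\,d(-y^{-\kappa})\Big).
\end{equation*}
Setting $\lambda=0$ gives a $\kappa$-stable $\xi_\kappa$, exactly as in Section \ref{sec:chf_scale_skew}. Setting $u=0$ gives the Laplace transform
\begin{equation*}
\exp\big(-\Gamma(1-\kappa)\lambda^\kappa\,\mathbb{E}\big[(\textstyle\sum_t|Q_t|)^\kappa\big]\big),
\end{equation*}
using $\int_0^\infty(1-\mathrm{e}^{-cy})\,d(-y^{-\kappa})=\Gamma(1-\kappa)c^\kappa$. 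This is the Laplace transform of a totally skewed positive $\kappa$-stable law, which is supported on $(0,\infty)$ for $\kappa<1$. Hence $\mathbb{P}(\overline{\zeta}_\kappa>0)=1$, since $\sum_t|Q_t|\geqslant|Q_0|>0$ makes the scale strictly positive.
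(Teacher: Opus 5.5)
Your proposal is correct and follows essentially the same route as the paper. The paper defines the same hybrid transform $\widetilde{\Psi}_n(u,\lambda)$ and reduces the lemma to the block-mixing condition $\widetilde{\Psi}_n(u,\lambda)=\big(\mathbb{E}[\exp(ia_n^{-1}uS_{r_n}-a_n^{-1}\lambda\overline{\gamma}_{r_n})]\big)^{k_n}+o(1)$, proved exactly as in Theorem~\ref{theo:inf_var_jointCLT}, with the single-block limit implicitly taken from \citet{matsui2025self}; your Steps 2--3 (the Karamata small-values bound and the explicit Laplace transform $\exp(-\Gamma(1-\kappa)\lambda^\kappa\mathbb{E}[(\sum_t|Q_t|)^\kappa])$ yielding $\mathbb{P}(\overline{\zeta}_\kappa>0)=1$) spell out details the paper leaves unstated.
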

\textit{Proof:} 
The proof is very similar to that of Theorem \ref{theo:inf_var_jointCLT}. Define
\begin{equation}
    \widetilde{\Psi}_n(u,\lambda) :=\mathbb{E}[\exp(ia_n^{-1}uS_n-a_n^{-1}\lambda\overline{\gamma}_{n})],\qquad (u,\lambda)\in \mathbb{R}\times\mathbb{R}_+.
\end{equation}
Then it suffices to show that for the integer sequence $r_n \to \infty$ in Assumption \ref{ass:AC+MX} and $k_n:=\lfloor n/r_n\rfloor\to\infty$ as $n\to \infty$, it holds for all $(u,\lambda)\in \mathbb{R}\times\mathbb{R}_+$ that
\begin{align}\label{eq:mix_condition_gamma_1}
    \widetilde{\Psi}_n(u,\lambda)  =  \left(\mathbb{E}[\exp(ia_n^{-1}uS_{r_n}-a_n^{-1}\lambda\overline{\gamma}_{r_n})]\right)^{k_n}+o(1).
\end{align}
Condition \eqref{eq:mix_condition_gamma_1} follows by arguments similar to the ones used to prove \eqref{eq:mix_condition}.\hfill $\square$

\subsection{Simulation results for Algorithm \ref{algo_infinite_mean}}\label{sim:infinite_mean}

\setcounter{figure}{0}
\setcounter{table}{0}
\counterwithin{figure}{section}
\counterwithin{table}{section}

\renewcommand{\thefigure}{\thesection.\arabic{figure}}
\renewcommand{\thetable}{\thesection.\arabic{table}}

We consider testing $\mathsf{H}_0$ against alternatives that allow for $\mathbb{E}[|X_t|]=\infty$. Similar to Section \ref{sec:Monte_Carlo}, the DGPs are given by the AR(1) process in \eqref{eq:sim_AR} 
with $Z_t$ (symmetric) $\textrm{Stable}(\kappa,0,1,0)$-distributed, $\kappa >0$. We note that $\mathbb{E}[|X_t|]=\infty$ if $\kappa < 1$. For such values of $\kappa$ and with $\delta = 0$, we do not expect the DM test statistic to diverge (by Proposition \ref{prop:AR_HAC}), whereas a test based on Algorithm \ref{algo_infinite_mean} should be consistent. These theoretical insights are in line with Table \ref{tab:power_sub_infmean}, which contains the empirical rejection frequencies for both tests. Across all values of $\kappa$ and sample sizes $n$, the rejection frequencies of the DM test are below the 5\% nominal level. In contrast, the rejection frequencies of the test based on Algorithm \ref{algo_infinite_mean} are increasing in $n$. We also note that the rejection frequencies for this test are decreasing in $\kappa$. Intuitively, as the tail index $\kappa$ increases toward one, the unconditional distribution of $X_t$ becomes "closer" to obeying the null hypothesis that $\mathbb{E}[X_t]=0$, since $\delta = 0$.

We also consider testing $\mathsf{H}_0$ using Algorithm \ref{algo_infinite_mean} when $\kappa\in(1,2)$ and $\delta>0$ such that $\mathbb{E}[X_t]>0$. Similar to the findings in Section \ref{sec:algo_power}, the rejection frequencies are increasing in $n$ and $\kappa$. Comparing with the test based on Algorithm \ref{algo}, the test based on Algorithm \ref{algo_infinite_mean} appears to be more powerful when $\kappa$ is close to one, which is expected given that Algorithm \ref{algo_infinite_mean} delivers non-trivial rejection frequencies even for values of $\kappa$ below one. Detailed simulation results are available upon request.

\begin{table}[H]
\centering
\caption{Rejection frequencies (in percent) of the Diebold-Mariano test (Section \ref{sec:Monte_Carlo}) and the test in Algorithm \ref{algo_infinite_mean} of $\mathsf{H}_0$. The DGP $(X_t)_{t=1,\dots,n}$ is given by \eqref{eq:sim_AR} with $\delta = 0$ and iid $\textrm{Stable}(\kappa,0,1,0)$-distributed noise $Z_t$, $\kappa \in (0,1)$. The nominal level of the tests is $5\%$. The rejection frequencies are based on $M=10^4$ Monte-Carlo replications.} \label{tab:power_sub_infmean}
\begin{tabular}{c|cc|cc|cc|cc}
\hline
\hline
$n$: & \multicolumn{2}{c|}{1000} & \multicolumn{2}{c|}{2000} & \multicolumn{2}{c|}{5000} & \multicolumn{2}{c}{10000} \\
\hline
Test: & DM & Alg \ref{algo_infinite_mean} & DM & Alg \ref{algo_infinite_mean} & DM & Alg \ref{algo_infinite_mean} & DM & Alg \ref{algo_infinite_mean}\\
\hline
\hline
$\kappa=0.1$ & 0.1 & 8.0 & 0.0 & 73.6 & 0.0 & 90.9 & 0.0 & 96.4 \\
\hline
$\kappa=0.2$ & 0.2 & 5.5 & 0.3 & 52.6 & 0.2 & 72.9 & 0.1 & 83.7 \\
\hline
$\kappa=0.3$ & 0.7 & 3.9 & 0.5 & 39.7 & 0.4 & 55.9 & 0.3 & 66.6 \\
\hline
$\kappa=0.4$ & 1.1 & 3.1 & 1.1 & 27.8 & 0.7 & 41.3 & 0.5 & 50.3 \\
\hline
$\kappa=0.5$ & 1.8 & 2.2 & 1.5 & 19.8 & 1.0 & 29.6 & 1.0 & 36.4 \\
\hline
$\kappa=0.6$ & 1.9 & 1.8 & 1.9 & 14.6 & 1.5 & 21.0 & 1.4 & 25.3 \\
\hline
$\kappa=0.7$ & 3.1 & 1.2 & 2.4 & 10.0 & 2.1 & 15.1 & 1.9 & 17.5 \\
\hline
$\kappa=0.8$ & 3.1 & 0.9 & 3.0 & 7.3 & 2.7 & 9.6 & 2.4 & 10.8 \\
\hline
$\kappa=0.9$ & 3.9 & 0.7 & 3.4 & 5.5 & 2.9 & 7.0 & 2.5 & 7.9 \\
\hline
\end{tabular}
\end{table}
\newpage
\section{Results related to Section \ref{sec:multivariate}}\label{sec:appendix_multivariate}
\setcounter{assumption}{0}
\renewcommand{\theassumption}{\Alph{section}\arabic{assumption}}

This section provides multivariate versions of Theorems \ref{theo:rio}--\ref{theo:inf_var_jointCLT} and a proof of Theorem \ref{thm:SPA}. 

\subsection{Regularly varying multivariate processes and limit theory}
We start by extending the definitions of strong mixing and  regular variation to vector-valued processes: A stationary process $(\mathbf{X}_t)_{t\in\mathbb{Z}}$ with $\mathbf{X}_t \in \mathbb{R}^m$ has strong mixing numbers
\begin{equation}\label{mix_coef_multi}
\alpha_k=\sup_{|g_1|,|g_2|\leqslant 1}\left|\mathrm{cov}\left[g_1(\dots,\mathbf{X}_{-1},\mathbf{X}_{0}),g_2(\mathbf{X}_{k},\mathbf{X}_{k+1},\dots)\right]\right|,\qquad k \geqslant 0,
\end{equation} 
and is said to be strongly mixing if $\alpha_k\to0$ as $k\to\infty$. Likewise, $(\mathbf{X}_t)_{t\in\mathbb{Z}}$ is said to be regularly varying with index $\kappa>0$, if $\Vert \mathbf{X}_0\Vert $ is regularly varying with index $\kappa>0$ and there exists a stochastic process of $\mathbb{R}^m$-valued variables $(\mathbf{\Theta}_t)_{t\in\mathbb{Z}}$ such that for all $t,h\in\mathbb{Z}$ satisfying $t\leqslant0\leqslant t+h$,
\begin{equation}\label{RV_TS_multi}
    \mathbb{P}(\Vert \mathbf{X}_0\Vert ^{-1}(\mathbf{X}_{t},\dots,\mathbf{X}_{t+h})\in\cdot\;|\;\Vert\mathbf{X}_0\Vert >x)\xrightarrow{w}\mathbb{P}((\mathbf{\Theta}_{t},\dots,\mathbf{\Theta}_{t+h})\in\cdot),\quad\quad x\to\infty.
\end{equation}
With $\mathbf{X}_t\in \mathbb{R}^m$ for some fixed  integer $k\geqslant1$, we state the multivariate versions of Assumptions \ref{mix_coef_ass}--\ref{ass:AC+MX} in terms of the the stationary process $(\mathbf{X}_t)_{t\in\mathbb{Z}}$. 
\begin{assumption}\label{ass:mixing_multivariate}
    For some $\varepsilon>0$, $\mathbb{E}[\Vert \mathbf{X}_0\Vert^{2+\varepsilon}]<\infty$, the strong mixing numbers of $(\mathbf{X}_t)_{t\in\mathbb{Z}}$ satisfy $\sum_{k=1}^\infty\alpha_k^{\varepsilon/(2+\varepsilon)}<\infty$, and with $\bm{\mu} = \mathbb{E}[\mathbf{X}_0]$,
    \begin{equation}\label{eq:def_Sigma}
        \bm{\Sigma}= \mathbb{E}[(\mathbf{X}_0 - \bm{\mu})(\mathbf{X}_0 - \bm{\mu})']+\sum_{i,j=1 \ i\neq j}^\infty\mathbb{E}[(\mathbf{X}_i-\bm{\mu})(\mathbf{X}_j-\bm{\mu})'] \ \ \text{is positive definite}.
    \end{equation}
\end{assumption}
\begin{assumption}\label{ass_RV_multivariate}
    The process $(\mathbf{X}_t)_{t\in\mathbb{Z}}$ is regularly varying with index $\kappa\in(0,2)\setminus \{1\}$ in the sense of \eqref{RV_TS_multi}.
\end{assumption}
Similar to \eqref{eq:def_a_n} we have for the multivariate case that there exists a sequence $(\tilde{a}_n)_{n\geqslant1}$ with $\tilde{a}_n\to\infty$ satisfying
\begin{equation} \label{eq:def_a_n_multivariate}
    n\prob(\Vert \mathbf{X}\Vert >\tilde{a}_n)\rightarrow1,\qquad n\rightarrow\infty.
\end{equation}
\begin{assumption}\label{AC,MX_multivariate}
    There exists a deterministic sequence $(r_n)_{n\geqslant1}$ with $r_n\rightarrow\infty$ and $r_n/n\rightarrow 0$ such that
    \begin{equation}\label{AC_multivariate}
\lim_{k\rightarrow\infty}\limsup_{n\rightarrow\infty}n\sum_{j=k}^{r_n}\mathbb{E}\big[\min\{\Vert\tilde{a}_n^{-1}\mathbf{X}_j\Vert,1\}\;\min\{\Vert \tilde{a}_n^{-1}\mathbf{X}_0\Vert,1\}\big]=0.
    \end{equation}
    Moreover, the strong mixing numbers of $(\mathbf{X}_t)_{t\in\mathbb{Z}}$ satisfy
    \begin{equation}\label{alpha_coefs_req_multi}
        \dfrac{n}{r_n}\alpha(\ell_n)\rightarrow0,
    \end{equation}
    for some sequence $\ell_n\rightarrow\infty$ satisfying $\ell_n/r_n\rightarrow0$, as $n\to \infty$.
\end{assumption}
\begin{corollary}
    \label{thm:rio_multivariate}
    Under Assumption \ref{ass:mixing_multivariate},
    \begin{equation}
        n^{-1/2}\left( \mathbf{S}_n - \bm{\mu},\tilde{\gamma}_{n} \right) \overset{d}{\to} \left(\mathbf{Z}, \sqrt{\mathbb{E}[\Vert \mathbf{X}_0\Vert^2]} \right),\qquad n\to\infty,
    \end{equation}
    where $\mathbf{Z}$ is $N(\mathbf{0},\bm{\Sigma})$-distributed with $\bm{\Sigma}$ given by \eqref{eq:def_Sigma}.
\end{corollary}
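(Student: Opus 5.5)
The plan is to reduce the multivariate statement to the univariate Theorem \ref{theo:rio} by the Cramér--Wold device, and to handle the normalizing component $\tilde{\gamma}_n$ separately by the ergodic theorem. I read the centering in the first component as $\mathbf{S}_n-n\bm{\mu}$, and I read $\bm{\Sigma}$ as the long-run covariance matrix $\sum_{h\in\mathbb{Z}}\mathbb{E}[(\mathbf{X}_0-\bm{\mu})(\mathbf{X}_h-\bm{\mu})']$, which is what the double sum in \eqref{eq:def_Sigma} is meant to express.

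First, fix $\mathbf{c}\in\mathbb{R}^m\setminus\{\mathbf{0}\}$ and set $Y_t:=\mathbf{c}'\mathbf{X}_t$. The process $(Y_t)_{t\in\mathbb{Z}}$ is stationary, and its strong mixing coefficients are bounded by those of $(\mathbf{X}_t)_{t\in\mathbb{Z}}$ in \eqref{mix_coef_multi}. This holds because any bounded function of $(\dots,Y_{-1},Y_0)$ is a bounded function of $(\dots,\mathbf{X}_{-1},\mathbf{X}_0)$, and likewise for the future. The moments are also inherited: $\mathbb{E}|Y_0|^{2+\varepsilon}\leqslant\Vert\mathbf{c}\Vert^{2+\varepsilon}\,\mathbb{E}\Vert\mathbf{X}_0\Vert^{2+\varepsilon}<\infty$. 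So $(Y_t)$ satisfies Assumption \ref{mix_coef_ass} with the same $\varepsilon$.

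Next, I identify the long-run variance of $Y_t$ as
\begin{equation*}
\mathrm{var}(Y_0)+2\sum_{h\geqslant1}\mathrm{cov}(Y_0,Y_h)=\mathbf{c}'\bm{\Sigma}\mathbf{c}.
\end{equation*}
This follows by bilinearity, once the series of matrix autocovariances converges absolutely. Absolute convergence comes from Davydov's covariance inequality, $|\mathrm{cov}(X_{0,i},X_{h,j})|\leqslant C\,\alpha_h^{\varepsilon/(2+\varepsilon)}\big(\mathbb{E}\Vert\mathbf{X}_0\Vert^{2+\varepsilon}\big)^{2/(2+\varepsilon)}$, combined with the summability condition in Assumption \ref{ass:mixing_multivariate}. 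Positive definiteness of $\bm{\Sigma}$ then gives $0<\mathbf{c}'\bm{\Sigma}\mathbf{c}<\infty$, so Theorem \ref{theo:rio} applies. It yields $n^{-1/2}\mathbf{c}'(\mathbf{S}_n-n\bm{\mu})\xrightarrow{d}\mathrm{N}(0,\mathbf{c}'\bm{\Sigma}\mathbf{c})$, which is the law of $\mathbf{c}'\mathbf{Z}$. Since $\mathbf{c}$ is arbitrary, the Cramér--Wold device gives $n^{-1/2}(\mathbf{S}_n-n\bm{\mu})\xrightarrow{d}\mathbf{Z}$.

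For the second component, note that $(\Vert\mathbf{X}_t\Vert^2)_{t\in\mathbb{Z}}$ is a stationary, strongly mixing (hence ergodic) sequence of integrable variables. The ergodic theorem therefore gives $n^{-1}\tilde{\gamma}_n^2\to\mathbb{E}\Vert\mathbf{X}_0\Vert^2$ almost surely. By continuity of the square root, $n^{-1/2}\tilde{\gamma}_n\to\sqrt{\mathbb{E}\Vert\mathbf{X}_0\Vert^2}$. Because this limit is deterministic, joint convergence of the pair follows from Slutsky's lemma: a vector converging in distribution together with a component converging in probability to a constant converges jointly.

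The only step requiring care is the identification of the Gaussian limit covariance, that is, checking that the scalar long-run variance of $\mathbf{c}'\mathbf{X}_t$ is exactly $\mathbf{c}'\bm{\Sigma}\mathbf{c}$ and is strictly positive. This is where the Davydov bound and the positive-definiteness hypothesis enter. Everything else is a direct transfer of the univariate result.
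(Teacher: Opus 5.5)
Your proposal is correct and follows the same route as the paper's proof: Cram\'er--Wold applied to $\mathbf{c}'(\mathbf{S}_n-n\bm{\mu})$ through the univariate Theorem \ref{theo:rio}, the ergodic theorem for $n^{-1}\tilde{\gamma}_n^2$, and Slutsky's lemma for the joint convergence. You also supply details the paper leaves implicit, namely that $\mathbf{c}'\mathbf{X}_t$ inherits the mixing and moment conditions and that its long-run variance equals $\mathbf{c}'\bm{\Sigma}\mathbf{c}>0$ (via Davydov's inequality and positive definiteness); your readings of the centering as $n\bm{\mu}$ and of $\bm{\Sigma}$ as the long-run covariance matrix are the intended ones.
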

\textit{Proof:} By the ergodic theorem $n^{-1/2}\tilde{\gamma}_{n} \overset{\mathbb{P}}{\to}\sqrt{\mathbb{E}[\Vert \mathbf{X}_0\Vert^2]}$ as  $n\to\infty$. For any non-zero constant vector $\lambda\in\mathbb{R}^m$ $n^{-1/2}\lambda'\left( \mathbf{S}_n - \bm{\mu}\right) \overset{d}{\to} \lambda'\mathbf{Z}$ by Theorem \ref{theo:rio}. The result then follows by the Cramér-Wold device and Slutsky's lemma. \hfill$\square$
\begin{theorem}\label{theo:mikosch_multivariate}
    Let $(\mathbf{X}_t)_{t\in\mathbb{Z}}$ satisfy Assumptions \ref{ass_RV_multivariate}--\ref{AC,MX_multivariate}. If the tail index $\kappa\in(1,2)$, assume that $\mathbb{E}[\mathbf{X}_0]=0$. With $\tilde{a}_n\to\infty$ satisfying \eqref{eq:def_a_n_multivariate}, it holds that
    \begin{equation}\label{eq:thm_joint_conv_multivariate}
        \tilde{a}_n^{-1}(\mathbf{S}_n,\tilde{\gamma}_{n})\xrightarrow{d}(\bm{\xi}_\kappa,\tilde{\zeta}_{\kappa/2}), \qquad n\to\infty,
    \end{equation}
    where $\bm{\xi}_\kappa\in \mathbb{R}^m$ is a $\kappa$-stable random vector and $\tilde{\zeta}_{\kappa/2}^2>0$ is a $\kappa/2$-stable random variable.
\end{theorem}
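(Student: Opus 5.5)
The plan is to mirror the proof of Theorem \ref{theo:inf_var_jointCLT} given in Appendix \ref{sec:appendix_main_theorem}. The convergence in \eqref{eq:thm_joint_conv_multivariate} is reduced to the convergence of the multivariate hybrid characteristic function--Laplace transform
\begin{equation*}
\Psi_n(\mathbf{u},\lambda):=\mathbb{E}\big[\exp\big(i\tilde a_n^{-1}\mathbf{u}'\mathbf{S}_n-\tilde a_n^{-2}\lambda\tilde\gamma_n^2\big)\big],\qquad (\mathbf{u},\lambda)\in\mathbb{R}^m\times\mathbb{R}_+ .
\end{equation*}
Pointwise convergence of this map to a limit that is continuous at the origin characterizes the joint law of $(\bm\xi_\kappa,\tilde\zeta_{\kappa/2}^2)$. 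Joint convergence with $\tilde\zeta_{\kappa/2}=\sqrt{\tilde\zeta^2_{\kappa/2}}$ then follows by the continuous mapping theorem. The argument splits into two parts: a block-independence (mixing) part, which is purely a property of strong mixing, and an ``independent blocks'' part, which uses regular variation and anti-clustering as in \cite{matsui2025self}.

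For the mixing part, I would take $r_n,\ell_n$ from Assumption \ref{AC,MX_multivariate} and $k_n=n/r_n$, and use the same large/small block decomposition as in Appendix \ref{sec:appendix_main_theorem}. The goal is to show
\begin{equation*}
\Psi_n(\mathbf{u},\lambda)=\Big(\mathbb{E}\big[\exp\big(i\tilde a_n^{-1}\mathbf{u}'\mathbf{S}_{r_n}-\tilde a_n^{-2}\lambda\tilde\gamma_{r_n}^2\big)\big]\Big)^{k_n}+o(1).
\end{equation*}
Every ingredient carries over verbatim. The summands $\mathbf{u}'\mathbf{X}_t$ and $\Vert\mathbf{X}_t\Vert^2$ are measurable functions of $\mathbf{X}_t$, so Lemma \ref{lem:mixing_bound} holds with the vector mixing coefficients \eqref{mix_coef_multi}. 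Lemma \ref{lem:product_inequality} and Lemma \ref{lem:large_small_blocks} are abstract and need no change. Steps \eqref{eq:mix_step1}--\eqref{eq:mix_step3} therefore go through, with the bounds $4k_n\alpha_{\ell_n}\to0$ supplied by \eqref{alpha_coefs_req_multi}. The negligibility of the small blocks uses the same trick as before: the anti-clustering condition \eqref{AC_multivariate} still holds with $r_n$ replaced by $\ell_n$, which gives $(\cdot)^{k_n}=(\cdot)^{\tilde k_n\ell_n/r_n}\to1$.

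For the independent-blocks part, I need $k_n\big(\mathbb{E}[\mathrm{e}^{i\tilde a_n^{-1}\mathbf{u}'\mathbf{S}_{r_n}-\tilde a_n^{-2}\lambda\tilde\gamma^2_{r_n}}]-1\big)\to\log\Psi(\mathbf{u},\lambda)$, where $\Psi$ is the multivariate analogue of \eqref{eq:hybrid_chf_Laplace}. Write $\mathbf{Q}_t=\mathbf{\Theta}_t/(\sum_j\Vert\mathbf{\Theta}_j\Vert^\kappa)^{1/\kappa}$. The limit would be
\begin{equation*}
\log\Psi(\mathbf{u},\lambda)=\int_0^\infty\mathbb{E}\Big[\mathrm{e}^{iy\mathbf{u}'\sum_t\mathbf{Q}_t-y^2\lambda\sum_t\Vert\mathbf{Q}_t\Vert^2}-1-iy\mathbf{u}'\textstyle\sum_t\mathbf{Q}_t\mathbf{1}\{\kappa\in(1,2)\}\Big]d(-y^{-\kappa}).
\end{equation*}
This is exactly where the proof of Theorem 3.3 in \cite{matsui2025self} is invoked in the univariate case. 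I would repeat that argument for the $\mathbb{R}^{m+1}$-valued functional $(\mathbf{u}'\mathbf{X}_t,\Vert\mathbf{X}_t\Vert^2)$. It proceeds by telescoping the block sum and truncating at level $\epsilon\tilde a_n$. The tail-process limit \eqref{RV_TS_multi}, together with \eqref{AC_multivariate}, controls the large values. For $\kappa\in(1,2)$ the small values are controlled by centering, using $\mathbb{E}[\mathbf{X}_0]=0$, and by $n\tilde a_n^{-2}\mathbb{E}[\Vert\mathbf{X}_0\Vert^2\mathbf{1}(\Vert\mathbf{X}_0\Vert\le\epsilon\tilde a_n)]=O(\epsilon^{2-\kappa})$ via Karamata. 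These steps only use norms, so they extend from $|\cdot|$ to $\Vert\cdot\Vert$. An alternative route is to apply the univariate result to the projections, but this does not work directly: $\mathbf{u}'\mathbf{X}_t$ need not be regularly varying with normalization $\tilde a_n$, and the Laplace part involves $\Vert\mathbf{X}_t\Vert^2$ rather than $(\mathbf{u}'\mathbf{X}_t)^2$. So the analytic argument has to be redone directly, and I expect this to be the main obstacle, in particular the small-jump control for $\kappa\in(1,2)$.

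Finally, I would identify the limit. Setting $\lambda=0$ gives a Lévy--Khintchine exponent of the form $\int(\mathrm e^{i\mathbf{u}'\mathbf{x}}-1-i\mathbf{u}'\mathbf{x}\mathbf 1)\nu(d\mathbf x)$, where $\nu$ is homogeneous of order $-\kappa$. Hence $\bm\xi_\kappa$ is $\kappa$-stable; finiteness follows from $\mathbb{E}[(\sum_t\Vert\mathbf{Q}_t\Vert)^\kappa]<\infty$, as in Section \ref{sec:chf_scale_skew}. Setting $\mathbf{u}=0$ gives the Laplace exponent $-c\lambda^{\kappa/2}\mathbb{E}[(\sum_t\Vert\mathbf{Q}_t\Vert^2)^{\kappa/2}]$ after the substitution $z=y^2$. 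This shows $\tilde\zeta^2_{\kappa/2}$ is positive $\kappa/2$-stable. It is strictly positive because $\sum_t\Vert\mathbf{Q}_t\Vert^2\ge\Vert\mathbf{Q}_0\Vert^2>0$, so its Lévy measure is infinite and the law has no atom at $0$.
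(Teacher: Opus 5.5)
Your proposal follows the paper's proof. The paper likewise reduces \eqref{eq:thm_joint_conv_multivariate} to the hybrid transform $\Psi_n(\mathbf{u},\lambda)$ and obtains the block factorization \eqref{eq:mix_condition_multivariate} by repeating the large/small-block argument from the proof of Theorem \ref{theo:inf_var_jointCLT}, while treating the independent-blocks limit of \citet{matsui2025self} as carrying over to $\mathbb{R}^m$ without further comment.

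You go further than the paper by flagging that this last step must be redone with norms and by identifying the limit. One refinement: for $\kappa\in(1,2)$, the small-value part also needs the $\min$-form anti-clustering condition \eqref{AC_multivariate} to control the within-block cross terms at lags $\geqslant k$; Karamata together with $\mathbb{E}[\mathbf{X}_0]=0$ handles only the diagonal and the finitely many lags below $k$.
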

\textit{Proof:} The proof is a straightforward adaptation of the proof of Theorem \ref{theo:inf_var_jointCLT}: Define
\begin{equation}
    \Psi_n(\mathbf{u},\lambda) :=\mathbb{E}[\exp(i\tilde{a}_n^{-1}\mathbf{u}'\mathbf{S}_n-\tilde{a}_n^{-2}\lambda\tilde{\gamma}_{n}^2)],\quad (\mathbf{u},\lambda)\in \mathbb{R}^m\times\mathbb{R}_+.
\end{equation}
Then it suffices to show that for the integer sequence $r_n \to \infty$ in Assumption \ref{AC,MX_multivariate} and $k_n:=\lfloor n/r_n\rfloor\to\infty$, it holds that for all $(\mathbf{u},\lambda)\in \mathbb{R}^m\times\mathbb{R}_+$
\begin{align}\label{eq:mix_condition_multivariate}
    \Psi_n(\mathbf{u},\lambda)  =  \left(\mathbb{E}[\exp(i\tilde{a}_n^{-1}\mathbf{u}'\mathbf{S}_{r_n}-\tilde{a}_n^{-2}\lambda\tilde{\gamma}_{r_n}^2)]\right)^{k_n}+o(1).
\end{align}
Condition \eqref{eq:mix_condition_multivariate} follows by arguments similar to those used for showing \eqref{eq:mix_condition}. \hfill$\square$
\subsection{Proof of Theorem \ref{thm:SPA}}
\textit{Proof of Part 1.a:} Follows by Theorem \ref{thm:rio_multivariate} and Theorem 1 of \cite{hansen2005SPA}. \\
\textit{Proof of Part 1.b:} The proof follows the structure of the proof of Theorem 1 of \cite{hansen2005SPA} carefully taking into account the stable limit result of Theorem \ref{theo:mikosch_multivariate}. Initially, note that by Theorem \ref{theo:mikosch_multivariate} and arguments identical to those used to prove Theorem \ref{theo:inf_var_jointCLT},
\begin{equation}\label{eq:multi_joint}
    \tilde{a}_n^{-1}\left(\mathbf{S}_n -n\mathbb{E}[\mathbf{X}_0] , \sqrt{\sum_{t=1}^n\Vert \mathbf{X}_t -\mathbb{E}[\mathbf{X}_0] \Vert^2}\right)\overset{d}{\to} \left(\bm{\xi}_{\kappa}, \tilde{\zeta}_{\kappa/2} \right), \qquad n\to\infty.
\end{equation}
Then using \eqref{eq:multi_joint} and that $\tilde{a}_n\to \infty$ with   $n/\tilde{a}_n^2=o(1)$ (as in \eqref{eq:rate_a_n}),
\begin{align*}
    \tilde{\gamma}_{n}^2 &= \tilde{a}_n^{-2}  \sum_{t=1}^n\Vert \mathbf{X}_t\Vert^2  = \tilde{a}_n^{-2} \sum_{t=1}^n\Vert \mathbf{X}_t -\mathbb{E}[\mathbf{X}_0] \Vert^2 + \tilde{a}_n^{-2}n\Vert \mathbb{E}[\mathbf{X}_0] +\tilde{a}_n^{-2}  \mathbb{E}[\mathbf{X}_0]'\sum_{t=1}^n\left(\mathbf{X}_t -\mathbb{E}[\mathbf{X}_0]\right) \\
     & = \tilde{a}_n^{-2} \sum_{t=1}^n\Vert \mathbf{X}_t -\mathbb{E}[\mathbf{X}_0] \Vert^2 + o_{\mathbb{P}}(1),
\end{align*}
such that
\begin{equation}\label{eq:multi_conv_gamma}
     \tilde{\gamma}_{n} \overset{d}{\to}  \tilde{\zeta}_{\kappa/2}, \qquad n\to\infty.
\end{equation}
Define $\mathbf{W}_n=(W_{n,1},\dots W_{n,m})'$ with $ W_{n,j}=\tilde{a}_n^{-1}S_{n,j}\mathbf{1}_{(\mathbb{E}[X_{0,j}]<0)}$,
and  $\mathbf{U}_n=(U_{n,1},\dots U_{n,m})'$ with $ U_{n,j}=\tilde{a}_n^{-1}S_{n,j}\mathbf{1}_{(\mathbb{E}[X_{0,j}]=0)}$, such that $\tilde{a}_n^{-1}\mathbf{S}_n=\mathbf{W}_n+\mathbf{U}_n$.
By \eqref{eq:multi_joint},
\begin{equation*}
    \mathbf{W}_n-n\tilde{a}_n^{-1}\mathbb{E}[\mathbf{X}_0]\overset{d}{\to}[\xi_{\kappa,1}\mathbf{1}_{(\mathbb{E}[X_{0,1}]<0)},\dots,\xi_{\kappa,m}\mathbf{1}_{(\mathbb{E}[X_{0,m}]<0)}]', \qquad n\to\infty,
\end{equation*}
such that
\begin{equation}\label{eq:multi_W_conv}
    \mathbf{W}_n \overset{\mathbb{P}}{\to}-[\infty\times\mathbf{1}_{(\mathbb{E}[X_{0,1}]<0)},\dots,\infty\times\mathbf{1}_{(\mathbb{E}[X_{0,m}]<0)}]', \qquad n\to\infty,
\end{equation}
where, by convention, $\infty\times0=0$, and likewise
\begin{equation}\label{eq:multi_U_conv}
    \mathbf{U}_n \overset{d}{\to} [\xi_{\kappa,1}\mathbf{1}_{(\mathbb{E}[X_{0,1}]=0)},\dots,\xi_{\kappa,m}\mathbf{1}_{(\mathbb{E}[X_{0,m}]=0)}]' :=\tilde{\mathbf{\xi}}_{\kappa}, \qquad n\to\infty.
\end{equation}
The test statistic in \eqref{eq:def_V_SPA} is given by $ V_n^{\mathrm{SPA}} = \varphi(\tilde{a}_n^{-1}\mathbf{S}_n,\tilde{a}_n^{-1}\tilde{\gamma}_{n})$ where $\varphi:\mathbb{R}^m\times(0,\infty)\to[0,\infty)$ is given by $\varphi(\mathbf{u},v) =\max\left(v^{-1}\max_{j=1,\dots,m}u_j,0\right)$.
Define $\mathbf{u}^+=(u_1^+,\dots,u_m^+)'$ with $u_j^+=\max(u_j,0)$, $j=1,\dots,m$. Then, it holds that
\begin{equation}\label{eq:multi_property_varphi}
    \varphi(\mathbf{u},v)=\varphi(\mathbf{u}^+,v),
\end{equation}
and by \eqref{eq:multi_W_conv}--\eqref{eq:multi_U_conv},
\begin{equation}\label{eq:multi_U_W_positive}
    (\mathbf{U}_n+{\mathbf{W}_n})^+=\mathbf{U}_n^+ +o_{\mathbb{P}}(1).
\end{equation}
Combining \eqref{eq:multi_conv_gamma}, \eqref{eq:multi_property_varphi}, and \eqref{eq:multi_U_W_positive},
we have that
\begin{align*}
    V_n^{\mathrm{SPA}} %&=\varphi(\tilde{a}_n^{-1}\mathbf{S}_n,\tilde{a}_n^{-1}\tilde{\gamma}_{n}) \\ 
    & = \varphi(\mathbf{U}_n,\tilde{a}_n^{-1}\tilde{\gamma}_{n}) + o_{\mathbb{P}}(1) \overset{d}{\to} \varphi(\tilde{\bm{\xi}}_\kappa,\tilde{\zeta}_{\kappa/2}) ={\max\left[\max_{j=1,\dots,m}\tilde{\xi}_{\kappa,j},0\right]}/\tilde{\zeta}_{\kappa,2}, \qquad n\to\infty,
\end{align*}
where the convergence arises from the continuous mapping theorem. \\
\textit{Proof of Part 2:} For the finite variance case, the result holds by Theorem \ref{thm:rio_multivariate} and Theorem 1 of \cite{hansen2005SPA}. For the case of $\kappa \in (1,2)$, by \eqref{eq:multi_joint} it holds that for some $j\in\{1,\dots,m\}$ where $\mathbb{E}[X_{0,j}]>0$, as $n\to\infty$,
\begin{equation*}
    U_{n,j}=\tilde{a}_n^{-1}S_{n,j}=\tilde{a}_n^{-1}(S_{n,j}-n\mathbb{E}[X_{0,j}])+n\tilde{a}_n^{-1}\mathbb{E}[X_{0,j}]= O_{\mathbb{P}}(1) + n\tilde{a}_n^{-1}\mathbb{E}[X_{0,j}]\overset{\mathbb{P}}{\to}\infty,
\end{equation*}
using that $\tilde{a}_n=o(n)$. The result now follows by noting that $\varphi(\mathbf{u}, v)\to \infty$ if $u_j\to\infty$ for some $j\in\{1,\dots,m\}$. \hfill$\square$

\section{Examples of data-generating processes}\label{sec:DGPs}
Section \ref{sec:AR_details} considered linear autoregressive processes as an example of processes satisfying Assumptions \ref{mix_coef_ass}--\ref{ass:AC+MX}. In this section, we consider additional examples.
\subsection{Affine stochastic recurrence equations} \label{sec:SRE}
Consider the process for $X_t \in \mathbb{R}$  given by
\begin{equation}
    X_t = A_t X_{t-1} + B_t,\quad t\in\mathbb{Z}, \label{eq:SRE}
\end{equation}
where $((A_t,B_t))_{t\in\mathbb{Z}}$ is an iid process with  $A_t,B_t\in\mathbb{R}$. This type of process, known as a so-called affine stochastic recurrence equation (SRE), has been extensively studied in recent books by \cite{buraczewski2016stochastic} and \cite{mikosch2024extreme}. The process is general in the sense that it nests a wide range of well-known time series processes, such as ARCH processes, double autoregressive (DAR) processes, and nonlinear GARCH processes. 
With $(A,B)$ denoting generic elements of $((A_t,B_t))_{t\in\mathbb{Z}}$, we have the following result\footnote{Several of the conditions can be relaxed or modified in order to adapt to the aforementioned processes.}.
\begin{proposition}
Suppose that
\begin{enumerate}
    \item $(A,B)$ has a Lebesgue density,
    \item $\mathbb{P}(Ax + B = x) < 1$ for all $x\in \mathbb{R}$
    \item $\mathbb{P}(A<0)>0$,
    \item there exists a constant $\kappa>0$ such that 
    \begin{equation*}
        \mathbb{E}[|A|^{\kappa}]=1, \quad \mathbb{E}[|B|^{\kappa}]<\infty, \quad \mathbb{E}[|A|^{\kappa}\max\{\log(|A|),0\}]<\infty.
    \end{equation*}

\end{enumerate}
 Then there exists a (with probability one) unique stationary causal solution $(X_t)_{t\in \mathbb{Z}}$ to \eqref{eq:SRE}. This solution has the following properties:
 \begin{enumerate}[(i)]
     \item $(X_t)_{t\in \mathbb{Z}}$ has strongly mixing coefficients with geometric decay, that is, $\alpha_k\leqslant c \rho^k$ for some constants $c\in(0,\infty)$ and $\rho\in(0,1)$, $k=0,1,\dots$,
     \item $(X_t)_{t\in \mathbb{Z}}$ is regularly varying with index $\kappa>0$ in the sense of \eqref{RV_TS}. Its spectral process $(\Theta_t)_{t\in\mathbb{Z}}$, satisfies $\mathbb{P}(\Theta_0 = 1) = \mathbb{P}(\Theta_0 = -1) = 1/2$, $\Theta_t = \Theta_0 \prod_{i=1}^t A_i $ for $t>0$, and
     \begin{equation*}
         \mathbb{P}\left( (\Theta_{-h},\dots,\Theta_{h}) \in \cdot  \right) = \mathbb{E} \left[ \frac{|\Theta_h|^\kappa}{\mathbb{E}[|\Theta_h|^\kappa]}\mathbf{1}\left\{ \frac{(\Theta_{0},\dots\Theta_{2h})}{|\Theta_h|} \in \cdot \right\} \right], \ h>0, 
     \end{equation*}
     \item $(X_t)_{t\in \mathbb{Z}}$ obeys the anti-clustering condition \eqref{AC} for any sequence $r_n$ satisfying $r_n=o(a_n^2/n)$ and $\log(n) = o(r_n)$ as $n\to\infty$. For any such choice of $r_n$ there exists another sequence $\ell_n \to \infty$ satisfying $\ell_n = o(r_n)$ such that \eqref{alpha_coefs_req} holds.
     
 \end{enumerate}
\end{proposition}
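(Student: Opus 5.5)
We will verify the three properties (i)--(iii) for the stationary solution of \eqref{eq:SRE} by reducing each to known results in \cite{buraczewski2016stochastic} and \cite{mikosch2024extreme}, and then checking the mixing rate condition \eqref{alpha_coefs_req} exactly as in the proof of Proposition \ref{prop:AR1 - RV}.

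\textbf{Existence and mixing.} Condition 4 with Jensen's inequality (convexity of $s\mapsto \log\mathbb{E}[|A|^s]$, which equals $0$ at $s=0$ and $s=\kappa$, and is not identically zero since $A$ has a density) gives $\mathbb{E}[\log|A|]<0$. Combined with $\mathbb{E}[\log^+|B|]<\infty$, this yields a unique causal stationary solution $X_t=\sum_{j\geqslant0}(\prod_{i=0}^{j-1}A_{t-i})B_{t-j}$. For (i), we view $(X_t)$ as a Markov chain. The Lebesgue density of $(A,B)$ makes the chain $\psi$-irreducible and aperiodic, and small compact sets exist. The drift function $V(x)=1+|x|^{s}$ with $s\in(0,\kappa)$ small, chosen so that $\mathbb{E}[|A|^s]<1$, gives a geometric Foster--Lyapunov drift. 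Hence the chain is geometrically ergodic, and started from stationarity it is $\beta$-mixing, and therefore $\alpha$-mixing, with geometric rate (Proposition 2.2.4 of \cite{buraczewski2016stochastic}).

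\textbf{Regular variation and the tail process.} Kesten--Goldie's theorem (Theorem 2.4.7 of \cite{buraczewski2016stochastic}) applies under conditions 2 and 4, because the density implies that $\log|A|$ is non-arithmetic. It gives $\mathbb{P}(X>x)\sim c_+x^{-\kappa}$ and $\mathbb{P}(X<-x)\sim c_-x^{-\kappa}$ with $c_++c_->0$. Condition 3 ($\mathbb{P}(A<0)>0$) yields $c_+=c_-$, so $p_\pm=1/2$.

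The forward spectral tail process follows from $X_t=\prod_{i=1}^tA_iX_0+R_t$, where $R_t$ is independent of $X_0$ and has lighter contribution on $\{|X_0|>x\}$. Therefore $\Theta_t=\Theta_0\prod_{i=1}^tA_i$. The backward part, and the stated formula, follow from the time-change formula of \cite{basrak2009regularly} (Theorem 3.1 there), which is exactly the displayed expression. Joint regular variation of finite-dimensional vectors is then inherited via Breiman's lemma.

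\textbf{Anti-clustering and the mixing rate.} For (iii), the anti-clustering condition \eqref{AC} for SREs with $r_n=o(a_n^2/n)$ is established in \cite{mikosch2024extreme} (Chapter 10, Proposition 10.1.14-type results for Markov chains satisfying a drift condition). We would cite it, checking that the drift with $\mathbb{E}[|A|^{s}]<1$ for some $s<\kappa$ is what it requires.

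Since $a_n\sim cn^{1/\kappa}$, we have $a_n^2/n\sim n^{2/\kappa-1}\to\infty$ for $\kappa<2$, so sequences with $\log n=o(r_n)$ and $r_n=o(a_n^2/n)$ exist. Taking $\ell_n=d\log n$ with $d>1/|\log\rho|$, the geometric bound from (i) gives
\[
\frac{n}{r_n}\,\alpha_{\ell_n}\leqslant c\,\frac{n^{1-d|\log\rho|}}{r_n}\to0 .
\]
Moreover $\ell_n=o(r_n)$, as in the AR(1) proof.

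\textbf{Main obstacle.} The delicate step is the anti-clustering condition. If the cited result does not directly cover the $\min\{\cdot,1\}$ form of \eqref{AC}, I would bound the relevant sum by
\[
n\sum_{t=k}^{r_n}\mathbb{E}\Bigl[\Bigl(\prod_{i=1}^t|A_i|\,|X_0|/a_n\wedge1\Bigr)\bigl(|X_0|/a_n\wedge1\bigr)\Bigr]
\]
plus a remainder term involving $R_t$. For the first part, conditioning on $X_0$ and using $\mathbb{E}[\prod_{i=1}^t|A_i|^{s}]=(\mathbb{E}[|A|^s])^t$ with $s<\kappa$ yields a summable geometric tail in $t$. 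The remainder contributes at most $nr_n\,\mathbb{E}[|X_0|/a_n\wedge1]^2$-type terms, which are of order $r_n n/a_n^2\to0$.
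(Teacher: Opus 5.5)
Your proposal is correct and follows essentially the paper's route: existence from $\mathbb{E}[\log|A|]<0$; geometric mixing via irreducibility, a drift with $\mathbb{E}[|A|^s]<1$, and Proposition 2.2.4 of \cite{buraczewski2016stochastic}; regular variation and the spectral tail process (which the paper simply cites from \cite{mikosch2024extreme}, and where your time-change argument gives exactly the display because $\mathbb{E}[|\Theta_h|^\kappa]=(\mathbb{E}[|A|^\kappa])^h=1$, i.e.\ $\Theta_{-h}\neq0$ a.s.); anti-clustering by citation; and \eqref{alpha_coefs_req} with a logarithmic $\ell_n$. In your fallback anti-clustering sketch, take $\max\{\kappa-1,0\}<s<\kappa$ with $s\leqslant 1$ so that the first term is $O(1)\sum_{t\geqslant k}(\mathbb{E}[|A|^s])^t$, and note that for $\kappa<1$ the remainder is of order $r_n/n$ rather than $nr_n/a_n^2$ (this needs a tail bound on $R_t$ uniform in $t$, e.g.\ by stochastically dominating $|R_t|$ with the stationary solution of $Y_t=|A_t|Y_{t-1}+|B_t|$), so it vanishes because $r_n=o(n)$ in Assumption~\ref{ass:AC+MX}, not because $r_n=o(a_n^2/n)$.
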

\textbf{Proof:}
 Since $A$ has a Lebesgue density, $\mathbb{P}(A=0)=0$. Moreover, it holds that $\mathbb{E}[\log(|A|)] < \kappa^{-1}\log(\mathbb{E}[|A|^\kappa])=0$ and $\mathbb{E}[\log^+|B|]\leqslant \kappa^{-1}\mathbb{E}[|B|^{\kappa}]<\infty$. Using \citet[Theorem 5.6.2]{mikosch2024extreme} it follows that the SRE admits an almost surely unique, stationary solution. In terms of the mixing numbers, let $\mathbb{P}_0$ denote the stationary distribution of $X_t$. By $1.$ and $3.$ and using \citet[Proposition 2.2.1 and Lemma 2.2.2]{buraczewski2016stochastic}, it holds that  the Markov chain $(X_t)_{t\geqslant0}$ is $\mathbb{P}_0$-irreducible. Moreover, using $4.$ there exists $\varepsilon > 0$ such that $\mathbb{E}[|A|^\varepsilon]<1$ and $\mathbb{E}[|B|^\varepsilon]<\infty$, and using \citet[Proposition 2.2.4]{buraczewski2016stochastic} $(X_t)_{t\in\mathbb{Z}}$ has strong mixing coefficients with geometric decay. The regular variation of $(X_t)_{t\in\mathbb{Z}}$ follows by \citet[Proposition 5.6.10]{mikosch2024extreme} with the specific structure of the spectral process obtained from the arguments given by \citet[pp. 266-267]{mikosch2024extreme}. The anti-clustering condition follows by \citet[proof of Theorem 9.3.6]{mikosch2024extreme}. The geometric decay on the strong mixing coefficients and the conditions on $r_n$ imply that the condition \eqref{alpha_coefs_req} holds for $\ell_n \to \infty$ satisfying $\ell_n=o(r_n)$ and $\log(n/r_n)=o(\ell_n)$. 
  \subsection{Stochastic volatility processes}
For $X_t\in\mathbb{R}$ the class of stochastic volatility (SV) processes is given by 
\begin{equation}
    X_t = \sigma_t z_t,\quad t\in\mathbb{Z}, \label{eq:X_SV}
\end{equation}
where, typically, $((\sigma_t,z_t))_{t\in \mathbb{Z}}$ is stationary and  $(\sigma_t)_{t\leqslant0}$ and $(z_t)_{t\geqslant 0}$ are independent. A special case is the stochastic autoregressive volatility process with $X_t$ given by \eqref{eq:X_SV} and
\begin{equation} \label{eq:sigma_SV}
    \log(\sigma_t) = \omega + \phi \log(\sigma_{t-1}) + (\gamma + \beta \log(\sigma_{t-1}))u_t,
\end{equation}
with $\omega, \phi, \gamma, \beta \in \mathbb{R}$, $\gamma \neq 0$ and
\begin{equation}\label{eq:innovations_SV}
    (z_t)_{t\in\mathbb{Z}} \ \text{and} \ (u_t)_{t\in\mathbb{Z}} \ \text{are mutually independent iid processes},
\end{equation}
see, e.g., \citet[Section 5]{Carrasco_Chen_2002}. Note here that the log-volatility process $\log(\sigma_t)$ in \eqref{eq:sigma_SV} obeys an affine SRE of the form \eqref{eq:SRE} with
\begin{equation} \label{eq:logSV_SRE}
 A_t = (\phi + \beta u_t) \quad \text{and}\quad B_t = \omega + \gamma u_t.     
\end{equation}
We have the following result.
\begin{proposition}\label{prop:logSV}
    Let $X_t$ and $\sigma_t$ satisfy \eqref{eq:X_SV}-\eqref{eq:innovations_SV}, and suppose that
    \begin{enumerate}
        \item the distribution of $u_t$ has a Lebesgue density positive on $\mathbb{R}$, with $\mathbb{E}[|u_t|]<\infty$,
        \item $|\phi|<1$ and $\mathbb{E}[|\phi + \beta u_t|]<1$.
    \end{enumerate}
    Then there exists a (with probability one) unique stationary causal solution, $(X_t)_{t\in\mathbb{Z}}$, to \eqref{eq:X_SV}. This solution is strongly mixing with geometric decay. \\
    Suppose in addition that 
    \begin{enumerate}  \setcounter{enumi}{2}
        \item the distribution of $z_t$ is regularly varying with index $\kappa>0$ in the sense of \eqref{RV_marg},
        \item $\mathbb{E}[\sigma_t^p]<\infty$ for some $p>\kappa$.
    \end{enumerate}
    Then 
    \begin{enumerate}[(i)]
         \item $(X_t)_{t\in \mathbb{Z}}$ is regularly varying with index $\kappa>0$ in the sense of \eqref{RV_TS}. Its spectral process $(\Theta_t)_{t\in\mathbb{Z}}$ satisfies $\mathbb{P}(\Theta_0 = 1) = q_+ $ and $\mathbb{P}(\Theta_0 = -1) = q_-$ with the constants $q_+,q_-\geqslant0$ provided by \eqref{RV_marg}, and $\mathbb{P}(\Theta_t = 0)=1$ for $t\neq 0$,
           \item $(X_t)_{t\in \mathbb{Z}}$ obeys the anti-clustering condition \eqref{AC} for a suitable sequence $r_n \to \infty$ satisfying $r_n=o(n)$ as $n\to\infty$. For this choice of sequence there exists another sequence $\ell_n \to \infty$ satisfying $\ell_n = o(r_n)$ such that \eqref{alpha_coefs_req} holds.
    \end{enumerate}
\end{proposition}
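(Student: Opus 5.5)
The plan is to treat the volatility process as an affine SRE and then transfer its properties to $X_t=\sigma_t z_t$. For existence and mixing, I use \eqref{eq:logSV_SRE}: $\log\sigma_t$ solves $Y_t=A_tY_{t-1}+B_t$ with $A_t=\phi+\beta u_t$ and $B_t=\omega+\gamma u_t$. Condition 2 gives $\mathbb{E}[\log|A|]\leqslant\log\mathbb{E}[|A|]<0$, and condition 1 gives $\mathbb{E}[\log^+|B|]<\infty$. Hence \citet[Theorem 5.6.2]{mikosch2024extreme} yields an a.s.\ unique stationary causal solution $(\log\sigma_t)$.

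Because $u_t$ has a positive Lebesgue density and $\gamma\neq0$, the Markov chain $(\log\sigma_t)$ is irreducible and aperiodic. Since $\mathbb{E}[|A|]<1$ and $\mathbb{E}[|B|]<\infty$, a drift condition with $V(y)=1+|y|$ holds. \citet[Proposition 2.2.4]{buraczewski2016stochastic} (or the argument of \citet[Section 5]{Carrasco_Chen_2002}) then gives geometric $\beta$-mixing, hence $\alpha$-mixing. Next, $(z_t)$ is iid and independent of $(u_t)$, so the pair $((\sigma_t,z_t))$ is a Markov chain whose transition factorizes. Its mixing coefficients are bounded by those of $(\sigma_t)$, and $X_t$ is a measurable function of $(\sigma_t,z_t)$. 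So $(X_t)$ is strongly mixing with geometric rate.

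For (i), Breiman's lemma together with condition 4 ($\mathbb{E}[\sigma^p]<\infty$, $p>\kappa$) and the independence of $\sigma_0$ and $z_0$ gives
\[
\mathbb{P}(\pm X_0>x)\sim\mathbb{E}[\sigma_0^\kappa]\,\mathbb{P}(\pm z_0>x).
\]
Thus $X_0$ is regularly varying with the tail balance $q_\pm$ of $z_0$. For the tail process, fix $h$ and $t\neq0$. I bound
\[
\mathbb{P}(|X_t|>\epsilon x,\,|X_0|>x)\leqslant\mathbb{P}(\sigma_t\sigma_0|z_t||z_0|>\epsilon x^2)+\dots
\]
More directly, I condition on $(\sigma_s)$. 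Since $z_0$ and $z_t$ are independent given the volatilities, the joint exceedance probability is $\mathbb{E}[\bar F(x/\sigma_0)\bar F(\epsilon x/\sigma_t)]$. By Potter bounds this is $O(\bar F(x)^2)\,\mathbb{E}[\sigma_0^{\kappa+\delta}\sigma_t^{\kappa+\delta}]$, and this expectation is finite by Hölder's inequality and condition 4 once $\delta$ is small. This is $o(\bar F(x))$, so $\Theta_t=0$ for $t\neq0$, and $\Theta_0=\mathrm{sign}(X_0)$ in the limit has law $q_\pm$.

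For (ii), the same computation gives, for $t\geqslant1$,
\[
n\,\mathbb{E}\bigl[(|X_t|/a_n\wedge1)(|X_0|/a_n\wedge1)\bigr]\leqslant C\,n\,(\mathbb{P}(|X_0|>a_n))^{2}\,\times(\text{slowly varying})\approx C/n^{1-\delta'},
\]
uniformly in $t$, with the truncated-moment pieces handled by Karamata's theorem. Summing over $t\leqslant r_n$ gives a bound of order $r_n n^{-1+\delta'}$. So \eqref{AC} holds for any $r_n=n^{c}$ with $c<1$ small enough. The main obstacle is making this uniform in $t$, which requires the Potter/Hölder bound on $\mathbb{E}[\sigma_0^{\kappa+\delta}\sigma_t^{\kappa+\delta}]$ to be uniform in $t$. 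Stationarity and Cauchy–Schwarz give this, provided $2(\kappa+\delta)\leqslant p$. If $p<2\kappa$, I would instead split on the event $\{\sigma_t\leqslant a_n^{\eta}\}$ and its complement and use Markov's inequality with $\mathbb{E}[\sigma^p]$. Finally, with $r_n=n^{c}$ and $\ell_n=d\log n$, geometric mixing gives $(n/r_n)\alpha_{\ell_n}\leqslant Cn^{1-c}\rho^{d\log n}\to0$ for $d$ large, exactly as in the proof of Proposition~\ref{prop:AR1 - RV}. This verifies \eqref{alpha_coefs_req}.
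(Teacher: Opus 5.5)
Your skeleton matches the paper's. You use the SRE representation of $\log\sigma_t$ with Theorem 5.6.2 of \citet{mikosch2024extreme}, and transfer geometric mixing of the volatility to $X_t$ through the independence of $(z_t)$. For (i)--(ii) the paper simply cites \citet[Proposition 5.3.1 and Section 9.3.1]{mikosch2024extreme}. Where you compute (i)--(ii) directly, two steps do not hold as written, because condition 4 only gives $\mathbb{E}[\sigma_t^p]<\infty$ for some $p>\kappa$.

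In (i), finiteness of $\mathbb{E}[\sigma_0^{\kappa+\delta}\sigma_t^{\kappa+\delta}]$ does not follow from H\"older's inequality: Cauchy--Schwarz needs $p\geqslant 2(\kappa+\delta)$. The moment can in fact diverge. For example, $\beta=0$ with Laplace $u_t$ makes $\sigma_t$ Pareto-tailed, and for a tail index in $(\kappa,\kappa(1+\phi))$ it is infinite at $t=1$. You only need $o(\bar F(x))$, though. Split off $\{\sigma_0>x/x_0\}$, which costs $O(x^{-p})=o(\bar F(x))$; this truncation is needed anyway for Potter's bounds to apply. On the complement, $\bar F(x/\sigma_0)/\bar F(x)\leqslant C(1+\sigma_0^{\kappa+\delta})$ with $\kappa+\delta\leqslant p$, while $\bar F(\epsilon x/\sigma_t)\to0$ a.s. Dominated convergence then gives $\mathbb{P}(|X_t|>\epsilon x,|X_0|>x)=o(\bar F(x))$.

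In (ii) there are two problems. First, the order $n\,\mathbb{P}(|X_0|>a_n)^2$ is wrong for $\kappa\in(1,2)$. Karamata's theorem gives a truncated mean of order $a\,\mathbb{P}(|X_0|>a)$ only when $\kappa<1$; for $\kappa>1$ one has $\mathbb{E}[(|X_0|/a_n)\wedge1]\sim\mathbb{E}|X_0|/a_n$. So even for far-apart, nearly independent $\sigma_0,\sigma_t$, $n$ times the summand is of order $n/a_n^2\approx n^{1-2/\kappa}$, and one needs $r_n=o(a_n^2/n)$. For $\kappa=1.9$ this rules out, e.g., $r_n=n^{1/2}$. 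Second, uniformity in $t$ again rests on $p\geqslant 2(\kappa+\delta)$, and the fallback for smaller $p$ is only gestured at.

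A short argument covers all $p>\kappa$ and both ranges of $\kappa$. For $t\neq0$ the summand equals $\mathbb{E}[g_n(\sigma_0)g_n(\sigma_t)]$ with $g_n(s)=\mathbb{E}[(s|z_0|/a_n)\wedge1]$.
\begin{itemize}
\item Fix $q\in(\kappa/2,\min\{1,\kappa\})$. Since $y\wedge1\leqslant y^q\wedge1$ and $y\mapsto y\wedge1$ is concave, $g_n(s)\leqslant C\min\{s/a_n,1\}^q$ with $C=\max\{\mathbb{E}|z_0|^q,1\}<\infty$.
\item Cauchy--Schwarz and stationarity give, uniformly in $t$, $\mathbb{E}[g_n(\sigma_0)g_n(\sigma_t)]\leqslant C^2\,\mathbb{E}[\min\{\sigma_0/a_n,1\}^{2q}]\leqslant C^2\,\mathbb{E}[\sigma_0^r]\,a_n^{-r}$ for any $r\in(\kappa,\min\{2q,p\}]$.
\item Hence $n\sum_{t=1}^{r_n}(\cdots)\leqslant C^2\mathbb{E}[\sigma_0^r]\,r_n\,n\,a_n^{-r}\to0$ for $r_n=n^c$ with $0<c<r/\kappa-1$; Potter's bounds absorb the slowly varying part of $a_n$.
\end{itemize}
Your choice $\ell_n=d\log n$ then gives \eqref{alpha_coefs_req} exactly as in the AR(1) case. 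The proposition only asserts that some $r_n$ exists, so your conclusion is true, but the bound you wrote does not establish it.
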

\textbf{Proof:} It holds that $\log(\sigma_t)$ obeys an SRE with $A_t$ and $B_t$ given by \eqref{eq:logSV_SRE}. Moreover, $\mathbb{P}(A=0)=0$, $\mathbb{E}[\log(|A|)]\leqslant \log(\mathbb{E}[|\phi + \beta u_t|])<0$ and $\mathbb{E}[\log^+(|B|)]\leqslant \mathbb{E}[|\omega + \gamma u_t|]\leqslant |\omega| + |\gamma| \mathbb{E}[|u_t|] < \infty$. Consequently, using \citet[Theorem 5.6.2]{mikosch2024extreme} we have that there exists an almost surely unique, stationary, causal solution $(\log(\sigma_t))_{t\in\mathbb{Z}}$ to \eqref{eq:logSV_SRE}. It  follows that there exists an almost surely unique, stationary causal solution $(X_t)_{t\in\mathbb{Z}}$, to \eqref{eq:X_SV}. By \citet[Proposition 2]{Carrasco_Chen_2002} it holds that $(\log(\sigma_t))_{t\in\mathbb{Z}}$ is strongly mixing with geometric decay. Likewise, since $\sigma_t$ is a measurable function of $\log(\sigma_t)$, $(\sigma_t)_{t\in\mathbb{Z}}$ is strongly mixing with the same decay. By arguments given by \citet[pp. 325--326]{mikosch2024extreme}, the same mixing properties hold for $(X_t)_{t\in\mathbb{Z}}$. Regular variation of $(X_t)_{t\in\mathbb{Z}}$ follows by \citet[Proposition 5.3.1]{mikosch2024extreme}. The properties in \eqref{AC}--\eqref{alpha_coefs_req} follow by arguments in \citet[Section 9.3.1]{mikosch2024extreme}.

\begin{remark}
    Condition $5.$ of Proposition \ref{prop:logSV} requires that $\sigma_t$ has a finite moment of some order $p$ exceeding the tail index $\kappa>0$. For $\beta\neq 0$ this moment condition does not necessarily hold as $\log(\sigma_t)$ itself may be regularly varying; e.g., \citet[Theorem 5.6.9]{mikosch2024extreme}. In the case where $\beta =0$, $\log(\sigma_t)$ is a stationary AR(1) process (under condition $2.$), and, for instance, with $u_t$ Gaussian, it holds that $\sigma_t$ is log-normal with moments of any order  being finite.
\end{remark}

\end{document}